\renewcommand*\env@matrix[1][\arraystretch]{%
  \edef\arraystretch{#1}%
  \hskip -\arraycolsep
  \let\@ifnextchar\new@ifnextchar
  \array{*\c@MaxMatrixCols c}}
\newtheorem{theorem}{Theorem}[section]
\newtheorem{lemma}[theorem]{Lemma}
\newtheorem{corollary}[theorem]{Corollary}
\newtheorem{proposition}[theorem]{Proposition}
\newtheorem{definition}{Definition}[section]
\newtheorem{remark}[definition]{Remark}
\newcommand{\etal}{\textrm{et al.}}
\newcommand{\ignore}[1]{}
\title{Homeostasis in Networks with Multiple Input Nodes and Robustness in Bacterial Chemotaxis}
\renewcommand*{\thefootnote}{\arabic{footnote}}
\author{
\renewcommand*{\thefootnote}{\arabic{footnote}}
Jo\~ao Luiz de Oliveira Madeira\footnotemark[1]
\and
\renewcommand*{\thefootnote}{\arabic{footnote}}
Fernando Antoneli\footnotemark[2]
}
\date{}
\begin{document}

\maketitle

\renewcommand*{\thefootnote}{\fnsymbol{footnote}}

\footnotetext{
*\parbox[t]{5in}{
Fernando Antoneli

\href{mailto:fernando.antoneli@unifesp.br}%
{fernando.antoneli@unifesp.br} 

\smallskip

Jo\~ao Luiz de Oliveira Madeira

\href{mailto:joaoluizoliveiramadeira@gmail.com}%
{jldom20@bath.ac.uk} 

\medskip}
}

\renewcommand*{\thefootnote}{\arabic{footnote}}

\footnotetext[1]
{University of Bath,
Department of Mathematical Sciences,
Bath, BA2 7AY, UK}

\footnotetext[2]{
Universidade Federal de S\~ao Paulo,
CEBIME -- EPM,
S\~ao Paulo, 04039-032, Brazil}

\begin{abstract} \footnotesize
A biological system achieve \emph{homeostasis} when there is a regulated quantity that is maintained within a narrow range of values. Here we consider homeostasis as a phenomenon of network dynamics. In this context, we improve a general theory for the analysis of homeostasis in network dynamical systems with distinguished input and output nodes, called `input-output networks'. The theory allows one to define `homeostasis types' of a given network in a `model independent' fashion, in the sense that the classification depends on the network topology rather than on the specific model equations. Each `homeostasis type' represents a possible mechanism for generating homeostasis and is associated with a suitable `subnetwork motif' of the original network. Our contribution is an extension of the theory to the case of networks with multiple input nodes. To showcase our theory, we apply it to bacterial chemotaxis, a paradigm for homeostasis in biochemical systems. By considering a representative model of \emph{Escherichia coli} chemotaxis, we verify that the corresponding abstract network has multiple input nodes. Thus showing that our extension of the theory allows for the inclusion of an important class of models that were previously out of reach. Moreover, from our abstract point of view, the occurrence of homeostasis in the studied model is caused by a new mechanism, called \emph{input counterweight homeostasis}. This new homeostasis mechanism was discovered in the course of our investigation and is generated by a balancing between the several input nodes of the network -- therefore, it requires the existence of at least two input nodes to occur. Finally, the framework developed here allows one to formalize a notion of `robustness' of homeostasis based on the concept of `genericity' from the theory dynamical systems. We discuss how this kind of robustness of homeostasis appears in the chemotaxis model.

\smallskip

\noindent
{\bf Keywords:} Homeostasis, Coupled Systems, Combinatorial Matrix Theory, Input-Output Networks, Biochemical Networks, Perfect Adaptation

\end{abstract}

\newpage

\setcounter{tocdepth}{2}
\tableofcontents

\section{Introduction}

The idea that an organism should keep its internal parameters within a strict range despite changes in the external environment to guarantee its survival, was first introduced by Claude Bernard in the $19^{\textrm{th}}$ century \cite{modell15}. This property is present in many biological systems, and it was called \textit{homeostasis} by Walter Cannon in 1929 \cite{modell15}.

Although the notion of homeostasis has appeared in the study of physiology of multicelular organisms, it has a much broader scope today. From a mathematical point of view, homeostasis can be defined in several distinct contexts. For example, one may consider homeostasis with respect to genetic variants or polymorphisms (i.e., in systems such that the phenotype is robust to genetic variation) \cite{nr14}. This is related to the relationship between the enzymatic activity variation promoted by the polymorphism and the impact of it on the relevant phenotype. Likewise, one may consider homeostasis in discrete stochastic biochemical systems by observing the adaptability of the system over finite time intervals \cite{enciso2016transient}. Or one can study collective behavior, rather than the individual one, in response to an external parameter \cite{marquez2011stochastic}.

Consider a biological system model as a dynamical system with input parameter $I$ which varies over an open interval $]I_{1}, I_{2}[$. Consider an output observable variable such that for each $I \in ]I_{1}, I_{2}[$, there is a stable equilibrium where the value of the observable is $x_{o}(I)$. In this situation, it is reasonable to say that the system would exhibit \emph{homeostasis} if after changing the input variable $\mathcal{I}$, the value of the observable $x_o(\mathcal{I})$ at the equilibrium remains approximately constant. 

Another concept related to homeostasis is \emph{Adaptation}, which is widely studied in synthetic biology and control engineering (cf.~\cite{ma09,ang13,tang16,araujo18,aoki19}). Adaptation is defined as the ability of the system to reset $x_o(\mathcal{I})$ to its pre-stimulated output level (its {\em set point}) after responding to an external change on the stimulus $\mathcal{I}$. Hence, adaptation is essentially equivalent to homeostasis. There are two formulations usually considered in the research on adaptation: (1) the strict condition of \emph{perfect adaptation}, where the observable $x_o(\mathcal{I})$ is required to be constant over a range of external stimuli $\mathcal{I}$; (2) the more general condition of \emph{near-perfect adaptation}, where the observable $x_o(\mathcal{I})$ is required to be within a narrow interval of values over a range of external stimuli $\mathcal{I}$.

A similar formulation proposed by Golubitsky and Stewart \cite{gs17, gs18} -- motivated by previous work on biochemical networks \cite{reed17} (see also \cite{nrbu04,best09,nr14,nbr15,nbr18}) -- employs methods from singularity theory to define the notion of `infinitesimal homeostasis'. Their framework can also be applied to other biological systems, such as gene expression modeled in a deterministic way \cite{antoneli18}. According to this approach, a system exhibits \emph{infinitesimal homeostasis} if $\frac{dx_{o}}{d I}(I_{0}) = 0$ for some input value $I_{0}$, where $x_{o}$ is the function that associates to each input parameter a unique value of the observable at the equilibrium. Infinitesimal homeostasis generalizes the notion of perfect adaptation (or perfect homeostasis), since condition (1) is equivalent to $\frac{dx_{o}}{d I} \equiv 0$. On the other hand, many systems exhibiting near-perfect adaptation (or near-perfect homeostasis) do not have any $I_0$ where $\frac{dx_{o}}{d I}(I_{0}) = 0$ (e.g. product inhibition \cite{reed17}). We shall not pursue the study of near-perfect homeostasis here.

In this paper, we shall focus on the occurrence of infinitesimal homeostasis in networks of dynamical systems with distinguished input and output nodes. More precisely, Wang \etal~\cite{wang20} introduced the notion of `abstract input-output network' and obtained a method to classify infinitesimal homeostasis in networks with a single input node and a single input parameter affecting this input node. They introduced a notion of `infinitesimal homeostasis types' corresponding to the `mechanisms' that are responsible for the occurrence of infinitesimal homeostasis in a network. These `homeostasis types' were further divided in two `homeostasis classes', called \emph{appendage} and \emph{structural}, which correspond respectively to feedback and feed-forward mechanisms. Although the results of~\cite{wang20} completely covered the classification of infinitesimal homeostasis of networks with one input and one output nodes, they can not be applied to systems with more than one input node. This limitation excludes several important mathematical models. One such exclusion, that are known to exhibit perfect homeostasis, is the mathematical model of bacterial chemotaxis systems.

Generally speaking, \emph{bacterial chemotaxis} refers to the ability of bacteria to sense changes in their extracellular environment and to bias their motility towards favorable stimuli (attractants) and away from unfavorable stimuli (repellents). There exist a number of different but related bacterial chemosensory systems, of which the most widely studied is the \textit{Escherichia coli} chemotaxis \cite{B04}. Extensive mathematical modeling has described different aspects of the chemotaxis pathway and has mainly focused on explaining the initial response to addition of attractant, as well as robustness of perfect homeostasis. For instance, \cite{Alon99,Barkai97} showed that perfect homeostasis is robust (insensitive to parameter variations in the pathway), if the kinetics of receptor methylation depends only on the activity of receptors and not explicitly on the receptor methylation level or external chemical concentration. Their idea was later extended by others, providing conditions for perfect homeostasis \cite{mello03,Yi00}, as well as robustness to noise by the network architecture~\cite{kollmann05,hansen08}.

Our goal in this paper is to extend the theory of \cite{wang20} to input-output networks with multiple input nodes and a single input parameter affecting these input nodes. More precisely, in our extended version of the theory, the same structure of the classification is maintained, except that there is a new `homeostasis class'.  This new class, called \emph{input counterweight}, does not exist in the single input node case. The corresponding homeostasis mechanism is generated by a balancing between the several input nodes of the network. That explains why one needs at least two input nodes to detect it. Finally, we show that our generalization allows us to analyze a representative model for chemotaxis that has good agreement with experimental findings, including occurrence of perfect homeostasis \cite{clausznitzer10,tindall08a,tindall08b}.

The main idea in our approach is to reduce to the single input-output case. For each input node, we  consider the corresponding single input-output network (by `forgetting' the action of the input parameter on the other input nodes) and apply the results of~\cite{wang20} to each of these networks. The novelty consists in showing that the properties of the individual single input-output networks are compatible in a certain sense. This is sufficient to obtain a classification for the original multiple input-output network. The approach introduced here can be further extended to the case of input-output networks with multiple input nodes and multiple input parameters.
This will be pursed in another publication.

\subsection*{Mathematical Modeling of Chemotaxis}

The mathematical modeling of chemotaxis can be roughly divided into two types: single cell models and bacterial population models~\cite{tindall08a,tindall08b}. Single cell models consider the activation of the flagellar motor by detection of attractants and repellents in the extracellular medium. The flagellar motor activity of bacteria is regulated by a signal transduction pathway, which integrates changes of environmental chemical concentrations into a behavioral response. Assuming mass-action kinetics, the reactions in the signal transduction pathway can be modeled mathematically by ODEs. The bacterial population models describe evolution of bacterial density by parabolic PDEs involving an anti-diffusion `chemotaxis term' proportional to the gradient of the chemoattractant, thus allowing movement up-the-gradient, the most prominent feature of chemotaxis. The most extensively studied bacterial population models are the Patlak-Keller-Segel (PKS) type models. 
We will only consider single cell models here.

Understanding the response of \textit{E. coli} cells to external attractants has been the subject of experimental work and mathematical models for nearly 40 years. In fact, many models of the chemotaxis have been formulated and developed to provide a comprehensive description of the cellular processes and include details of receptor methylation, ligand-receptor binding and its subsequent effect on the biochemical signaling cascade, along with a description of motor driving CheY/CheY-P levels, the main output of the chemotaxis system  (see~\cite{tindall08a} for a survey).

However, including such detail has often led to very complex mathematical models consisting of tens of governing differential equations, making mathematical analysis of the underlying cellular response difficult, if not in many cases, impossible. A model proposed by Clausznitzer \etal~\cite{clausznitzer10} has sought to provide a comprehensive description of the \textit{E. coli} response, by coupling a simplified statistical mechanical description of receptor methylation and ligand binding, with the signaling cascade dynamics. The model consists of five nonlinear ordinary differential equations (ODEs) and is parameterized using data from the literature. The authors were able to show that the model is in good agreement with experimental findings. However, being a fifth-order nonlinear ODE model, it is difficult to treat analytically. More recently, Edgington and Tindall \cite{edgington18} undertook a comprehensive mathematical analysis of a number of simplified forms of the model of~\cite{clausznitzer10} and proposed a fourth-order reduction of this model that has been used previously in the theoretical literature~\cite{edgington15}.

In the following we shall consider the model proposed by \cite{edgington15,edgington18}. It has four variables for the concentrations of CheA/CheA-P ($a_p$), CheY/CheY-P ($y_p$), CheB/CheB-P ($b_p$) and the receptor methylation ($m$) and is given by the following system of ODEs (in non-dimensional form):
\begin{equation} \label{original_e_coli}
\begin{aligned}
& \frac{d m}{d t} = \gamma_{R}\, (1 - \phi(m,L)) - \gamma_{B} \, \phi(m,L) \, b_{p}^{2} \\
& \frac{d a_{p}}{d t} = \phi(m,L) \, k_{1} \, (1 - a_{p}) - k_{2} \, (1 - y_{p})\, a_{p} - k_{3} \, (1 - b_{p}) \, a_{p} \\
& \frac{d y_{p}}{d t} = \alpha_{1} \, k_{2} \, (1 - y_{p})a_{p} - k_{4} \, y_{p} \\
& \frac{d b_{p}}{d t} = \alpha_{2} \, k_{3} \, (1 - b_{p}) \, a_{p} - k_{5} \, b_{p}
\end{aligned}
\end{equation}
where $\gamma_B$, $\gamma_R$, $k_1,\ldots,k_5$ are non-dimensional parameters, the extracellular ligand concentration $L$ is the \emph{external parameter} and the function $\phi$ is determined by a  Monod–Wyman–Changeux (MWC) description of receptor clustering
\begin{equation} \label{definition_phi}
    \phi (m,L) = \frac{1}{1 + e^{F(m,L)}} 
    \qquad\text{with}\qquad 
    F(m,L) = N\left[ 1 - \frac{m}{2} + \log \left( \frac{1 + \frac{L}{K_{a}^{\textrm{off}}}}{1 + \frac{L}{K_{a}^{\textrm{on}}}}\right) \right]
\end{equation}

The key observation of~\cite{edgington18} is that system \eqref{original_e_coli} has a unique asymptotically stable equilibrium $X^*=(m^*,a_p^*,y_p^*,b_p^*)$, with $a_p^*$, $y_p^*$ and $b_p^*$ positive and $m^{*}$ is a real number, for the non-dimensional parameters obtained from the parameter values originally used in~\cite{clausznitzer10}. Furthermore, \cite{edgington18} were able to show that some pairs of parameters might yield oscillatory behavior, but in regions of parameter space outside that observed experimentally. This was done by carrying out the stability analysis for pair-wise parameter variations, whereby for each case the occurrence of at least two non-zero imaginary parts was recorded as indicating possible oscillatory dynamics. The steady-state $X^*$ can be easily found by numerical integration for parameter values that are experimentally valid, although some combinations of parameters produce a large stiffness coefficient~\cite{clausznitzer10}.

More importantly, the stability of $X^*$ persists as $L$ is varied in the range $(0,+\infty)$. By standard arguments (see subsection 2.1) this implies that there is a well-defined smooth mapping $L\mapsto X^*(L)=\big(m^*(L),a_p^*(L),y_p^*(L),b_p^*(L)\big)$. Since the values of $a_p^*$, $y_p^*$ and $b_p^*$ are independent of $L$ (see~\cite{edgington18}), it follows that the individual component functions $a_p^*(L)$, $y_p^*(L)$ and $b_p^*(L)$ are actually constant functions with respect to $L$.

\begin{figure}[!ht]
\centering
\includegraphics[width=\linewidth,trim=0cm 1cm 0cm 2.5cm,clip=true]{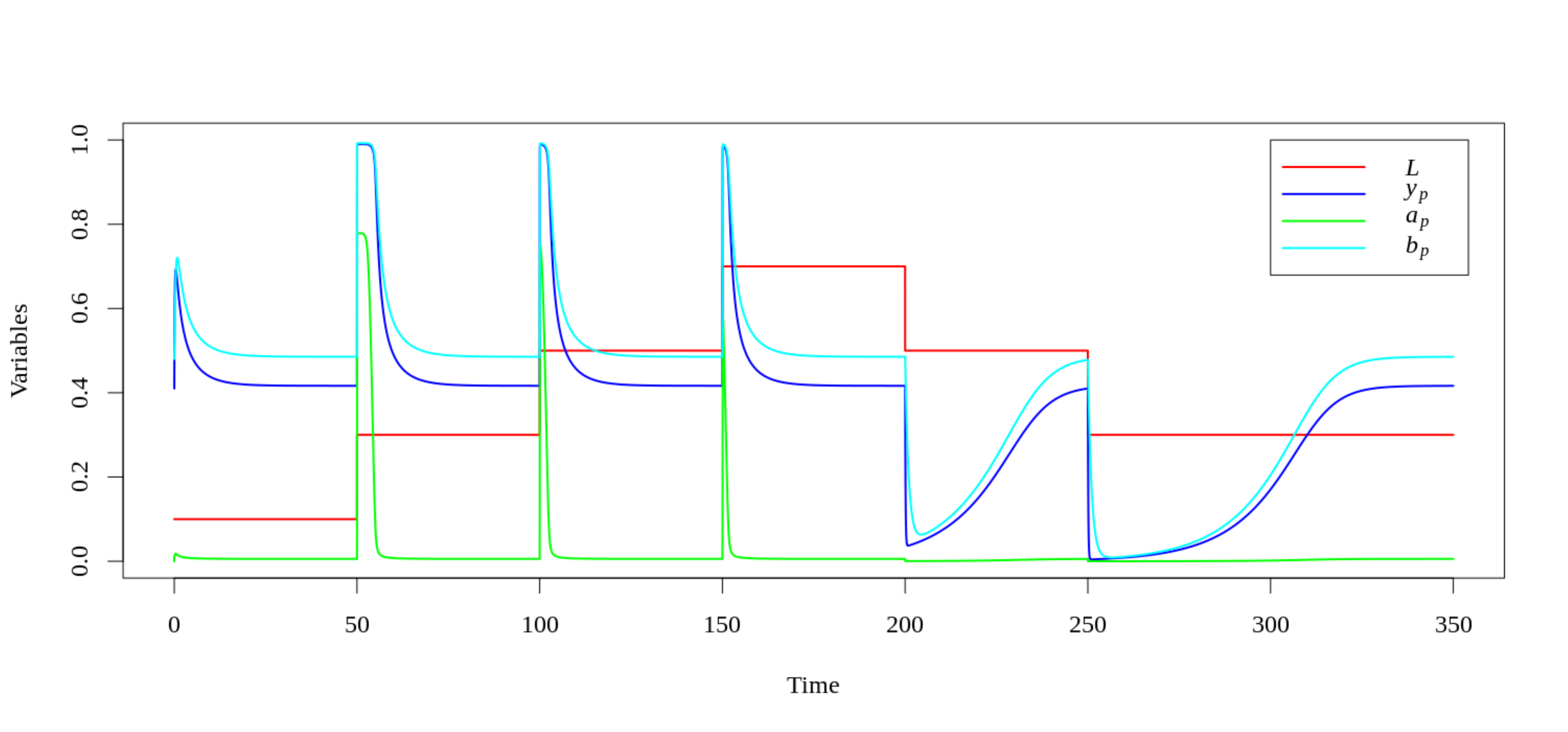}
\caption{\label{e_coli_plot1} Time series of the model \eqref{original_e_coli}, showing perfect homeostasis of the three variables $a_p$ (green), $b_p$ (cyan), $y_p$ (blue) at the non-dimensional equilibrium given by $a_p^* = 5.5 \times 10^{-3}$, $b_p^* = 0.48$, $y_p^* = 0.41$. Input parameter $L$ is given by a step function (red curve). The parameters were set to non-dimensional values of \cite[Table~2]{edgington18}.  Time series were computed using the software \textsc{XPPAut} \cite{bard02}.}
\end{figure}

In Figure~\ref{e_coli_plot1} we show the time series of the three variables $a_p$, $y_p$ and $b_p$ and how they are perturbed when the parameter $L$ is varied by sudden jumps. After a transient, which depends on the contraction rates at the equilibrium point, each variable returns to its corresponding steady-state value. Moreover, when the parameters $\gamma_B$, $\gamma_R$, $k_1,\ldots,k_5$ are changed the equilibrium $X^*$ also changes, but the new functions $a_p^*(L)$, $y_p^*(L)$ and $b_p^*(L)$ remain constant with respect to $L$, possibly with different values.  This is exactly the robust perfect homeostasis that was shown to occur in other models of chemotaxis \cite{Barkai97,Alon99,mello03,Yi00,kollmann05,hansen08}.

We shall consider this model in Section \ref{sec:e_coli}, in the light of the newly developed theory. We show that the network corresponding to the model equations \ref{original_e_coli} has two input nodes and one output node and thus cannot be analyzed using the methods of \cite{wang20}. In fact this can be seen directly from the equations \ref{original_e_coli}, since both the equation for $m$ and $a_p$ depend on the external parameter $L$.

\subsection*{Singularity Approach to Homeostasis} 

\ignore{
The result shown in Figure \ref{e_coli_plot1} is expected, since the model \ref{original_e_coli} is closely related to the original models \cite{Alon99,Barkai97} where robust perfect homeostasis was discovered in the first place. Nevertheless, it raises an interesting question: \emph{How come that several distinct mathematical models, with varying degrees of detail and distinct parameters are capable of exhibiting the same behavior, as far as homeostasis (or adaptation) is concerned?} From the biological point of view it is a trivial question, since all these models are supposed to describe the same phenomenon, and hence, should yield the same predictions. Nevertheless, it is not a trivial \emph{mathematical} issue. One could argue that these models are related among themselves by certain mathematical procedures, for instance, `reduction of variables', `averaging', etc., but it is not easy to show \emph{a priori}, that any of these procedures preserve the property in question. Of course, after the procedure is applied, one can always check \emph{a posteriori} if the property in question was lost or not.
}

Undoubtedly, an important concern when analyzing a theoretical model given by a dynamical system is whether the properties of interest are preserved under perturbations (of a certain type). For instance, in systems biology and control engineering phenomena are usually modeled by a parametric family of ODEs and a relevant concept `robustness' means that a property is \emph{robust} if it is preserved by changes in the parameters of the model. This is exactly the notion of robustness that is employed in the literature on perfect homeostasis of chemotaxis \cite{Alon99,Barkai97,Yi00,mello03,clausznitzer10,edgington18}.

In singularity theory one is concerned with parametric families of maps or dynamical systems and the role of `robustness' is played by the notion of \emph{genericity} -- structure preservation by all small perturbations of the whole parametric family (in the appropriate function space). In this case the small perturbations are as general as possible, and not restricted to variation of parameters within a fixed family.

It is clear that perfect homeostasis is not a robust in the wider sense above.  Nevertheless, we show in Section \ref{sec:e_coli} that perfect homeostasis is robust in the model equations \eqref{original_e_coli} for a much larger set of perturbations than the obtained by changes in the parameters of the model. Moreover, our analysis suggests that weaker versions of it, namely `infinitesimal homeostasis' and `near perfect homeostasis', are generic for the model equations \eqref{original_e_coli}.

\paragraph{Structure of the paper.}
The remainder of this paper is divided in three parts.  In Section \ref{sec:homeostasis} we present the theory of infinitesimal homeostasis for networks with multiple input nodes.  In Section \ref{mathematical_arguments_and_proofs} we provide detailed proofs of the theorems of Section \ref{sec:homeostasis}. In Section \ref{sec:e_coli} we apply the general theory to study homeostasis in the model equations \eqref{original_e_coli} for bacterial chemotaxis. Sections \ref{mathematical_arguments_and_proofs} and \ref{sec:e_coli} can be read independently from each other.

\section{Homeostasis in Coupled Systems}
\label{sec:homeostasis}

In this section we define the basic objects of the theory: multiple input nodes input-output networks, network admissible systems of differential equations, input-output functions and infinitesimal homeostasis points. Then we introduce the generalised homeostasis matrix and show how to use it to find infinitesimal homeostasis points and to classify homeostasis types. Finally, we relate the classification of homeostasis types with the topology of the network by associating `subnetwork motifs' to the irreducible factors of the determinant of the generalised homeostasis matrix and present an algorithm to find all these factors in terms of the `subnetwork motifs' constructed before.

\subsection{A Dynamical Systems Formalism for Homeostasis}

Golubitsky and Stewart proposed a mathematical method for the study of homeostasis based on dynamical systems theory~\cite{gs17,gs18} (see the review \cite{gsahy20}). 
In this framework, one considers a system of differential equations 
\begin{equation} \label{general_dynamics}
  \dot{X} = F(X, \mathcal{I})
\end{equation}
where $X = (x_{1}, \cdots, x_{k}) \in \mathbb{R}^{k}$ and parameter $\mathcal{I}\in\mathbb{R}$ represents the external input to the system.

Suppose that $(X^*, \mathcal{I}^*)$ is a linearly stable equilibrium of \eqref{general_dynamics}. By the implicit function theorem, there is a function $\tilde{X}(\mathcal{I})$ defined in a neighborhood of $\mathcal{I}^*$ such that $\tilde{X}(\mathcal{I}^*) = X^*$ and $F(\tilde{X}(\mathcal{I}), \mathcal{I}) \equiv 0$. The simplest case is when there is a variable, let's say $x_{k}$, whose output is of interest when $\mathcal{I}$ varies. Define the associated \emph{input-output function} as $z(\mathcal{I})=\tilde{x}_k(\mathcal{I})$.

The input-output function allows one to formulate two of the most used definitions that capture the notion of homeostasis \cite{ma09,ang13,tang16}.

\begin{definition} \normalfont
Let $z(\mathcal{I})$ be the input-output function associated to a system of differential equations \eqref{general_dynamics}. We say that the corresponding system \eqref{general_dynamics} exhibits
\begin{enumerate}[(a)]
\item \emph{Perfect Homeostasis (Adaptation)} on the interval $]\mathcal{I}_{1}, \mathcal{I}_{2}[$ if
\begin{equation} \label{definition_perfect_adaptation}
  \frac{d z}{d \mathcal{I}} (\mathcal{I}) = 0 
  \qquad\text{for all} \; \mathcal{I} \in ]\mathcal{I}_{1}, \mathcal{I}_{2}[
\end{equation}
That is, $z$ is constant on $]\mathcal{I}_{1}, \mathcal{I}_{2}[$.
\item \emph{Near-perfect Homeostasis (Adaptation)} relative to the point $\mathcal{I}_0$ on the interval $]\mathcal{I}_{1}, \mathcal{I}_{2}[$ if for a fixed $\delta$
\begin{equation} \label{definition_near_perfect_adaptation}
  | z(\mathcal{I}) - z(\mathcal{I}_0) | \leqslant \delta
  \qquad\text{for all} \; \mathcal{I} \in ]\mathcal{I}_{1}, \mathcal{I}_{2}[
\end{equation}
That is, $z$ stays within $z(\mathcal{I}_0)\pm\delta$ on $]\mathcal{I}_{1}, \mathcal{I}_{2}[$.
\end{enumerate}
\end{definition}

It is clear that perfect homeostasis implies near-perfect homeostasis, but the converse does not hold. Inspired by Reed \etal \cite{nijhout14,best09}, Golubitsky and Stewart \cite{gs17,gs18} introduced another definition of homeostasis that is essentially intermediate between perfect and near-perfect homeostasis. Moreover, this new definition allows the tools from singularity theory to bear on the study of homeostasis.

\begin{definition} \rm
Let $z(\mathcal{I})$ be the input-output function associated to a system of differential equations \eqref{general_dynamics}. We say that the corresponding system \eqref{general_dynamics} displays \emph{Infinitesimal Homeostasis} at the point $\mathcal{I}_0$ on the interval $]\mathcal{I}_{1}, \mathcal{I}_{2}[$ if
\begin{equation} \label{definition_homeostasis}
  \frac{d z}{d \mathcal{I}} (\mathcal{I}_{0}) = 0
\end{equation}
\end{definition}

It is obvious that perfect homeostasis implies infinitesimal homeostasis. Moreover, it follows from Taylor's theorem that infinitesimal homeostasis implies near-perfect homeostasis in a neighborhood of $\mathcal{I}_0$. It is easy to see that the converse to both implications is not generally valid (see \cite{reed17}).

When combined with coupled systems theory~\cite{gs06} the formalism of \cite{gs17,gs18,gsahy20} becomes very effective in the analysis of model equations.

\subsection{Multiple Input-Node Input-Output Networks}

A \emph{multiple input-node input-output network} is a network $\mathcal{G}$ with $n$ distinguished \emph{input nodes} $\iota=\{\iota_{1}, \iota_{2}, \ldots, \iota_{n}\}$, all of them associated to the same input parameter $\mathcal{I}$, one distinguished \emph{output node} $o$, and $N$ \emph{regulatory nodes} $\rho=\{\rho_1,\ldots,\rho_N\}$.
The associated network systems of differential equations have the form
\begin{equation} \label{admissible_systems_ODE_multiple_input_nodes}
\begin{aligned}
\dot{x}_{\iota} & = f_{\iota}(x_{\iota}, x_{\rho}, x_{o}, \mathcal{I}) \\
\dot{x}_{\rho} & = f_{\rho}(x_{\iota}, x_{\rho}, x_{o})\\
\dot{x}_{o} & = f_{o}(x_{\iota}, x_{\rho}, x_{o})
\end{aligned}
\end{equation}
where $\mathcal{I}\in\mathbb{R}$ is an \emph{external input parameter} and $X=(x_{\iota},x_{\rho},x_o)\in\mathbb{R}^n\times\mathbb{R}^N\times\mathbb{R}$ is the vector of state variables associated to the network nodes.

We write a vector field associated with the system \eqref{admissible_systems_ODE_multiple_input_nodes} as
\[
F(X,\mathcal{I})=(f_{\iota}(X,\mathcal{I}),f_\rho(X),f_o(X))
\]
and call it an \emph{admissible vector field} for the network $\mathcal{G}$.

Let $f_{j,x_\ell}$ denote the partial derivative of the $j^{th}$ node function $f_j$ with respect to the $\ell^{th}$ node variable $x_\ell$. We make the following assumptions about the vector field $F$ throughout:
\begin{enumerate}[(a)]
\item The vector field $F$ is smooth and has an asymptotically stable equilibrium at $(X^*,\mathcal{I}^*)$. Therefore, by the implicit function theorem, there is a function $\tilde{X}(\mathcal{I})$ defined in a neighborhood of $\mathcal{I}^*$ such that $\tilde{X}(\mathcal{I}^*) = X^*$ and $F(\tilde{X}(\mathcal{I}), \mathcal{I}) \equiv 0$. 
\item The partial derivative $f_{j,x_\ell}$ can be non-zero only if the network $\mathcal{G}$ has an arrow $\ell\to j$, otherwise $f_{j,x_\ell} \equiv 0$.
\item Only the input node coordinate functions $f_{\iota_m}$ depend on the external input parameter $\mathcal{I}$ and the partial derivative of $f_{\iota_m,\mathcal{I}}$ generically satisfies
\begin{equation} \label{e:f_iota_I}
 f_{\iota_m,\mathcal{I}} \neq 0.
\end{equation}
\end{enumerate}

\begin{definition} \rm \label{IOF}
Let $\mathcal{G}$ be an input-output network with $n$ input nodes and $F$ be a family admissible vector field with equilibrium point $\tilde{X}(\mathcal{I})=\big(x_{\iota}(\mathcal{I}),x_{\rho}(\mathcal{I}),x_o(\mathcal{I})\big)$. The mapping $\mathcal{I} \mapsto x_o(\mathcal{I})$ is called the \emph{input-output function} of the network $\mathcal{G}$, relative to the family of equilibria $\tilde{X}(\mathcal{I})$.
\end{definition}

\subsection{Infinitesimal Homeostasis by Cramer's Rule}

As noted previously \cite{gs17,gsahy20,reed17,wang20}, a straightforward application of Cramer's rule gives a formula for determining infinitesimal homeostasis points. This has a straightforward generalization to multiple input networks.

Let $J$ be the $(n+N+1)\times (n+N+1)$ Jacobian matrix of an admissible vector field $F=(f_{\iota},f_{\sigma},f_{o})$, that is,
\begin{equation} \label{jacobian}
J = \begin{pmatrix}
  f_{\iota, x_\iota}   &  f_{\iota, x_\rho} & f_{\iota, x_o} \\
  f_{\rho, x_\iota}   &  f_{\rho, x_\rho} & f_{\rho, x_o} \\
  f_{o, x_\iota} &  f_{o, x_\rho} & f_{o, x_o} 
\end{pmatrix}
\end{equation}
The $(n+N+1)\times (n+N+1)$ matrix $\langle H \rangle$ obtained from $J$ by replacing the last column by $(-f_{\iota,\mathcal{I}},0,0)^t$, is called \emph{generalized homeostasis matrix}:
\begin{equation} \label{weighted_homeostasis_matrix_definition}
\langle H \rangle = 
\begin{pmatrix}
f_{\iota, x_\iota} &  f_{\iota, x_\rho} & -f_{\iota, \mathcal{I}} \\
f_{\rho, x_\iota}&  f_{\rho, x_\rho} & 0 \\
f_{o, x_\iota} &  f_{o, x_\rho} & 0
\end{pmatrix}
\end{equation}
Here all partial derivatives $f_{\ell,x_j}$ are evaluated at $\big(\tilde{X}(\mathcal{I}),\mathcal{I}\big)$.

\begin{lemma} \label{cramer_rule}
The input-output function $x_o(\mathcal{I})$ satisfies
\begin{equation} \label{xo'}
x_o'(\mathcal{I}) = \frac{\det\!\big(\langle H \rangle\big) }{\det(J)}
\end{equation}
Here $\det(J)$ and $\det\!\big(\langle H \rangle\big)$ are evaluated at $\big(\tilde{X}(\mathcal{I}),\mathcal{I}\big)$. Hence, $\mathcal{I}_0$ is a point of infinitesimal homeostasis if and only if
\begin{equation} \label{xo'_reduced}
\det\!\big(\langle H \rangle\big) = 0
\end{equation}
at the equilibrium $\big(\tilde{X}(\mathcal{I}_0),\mathcal{I}_0\big)$.
\end{lemma}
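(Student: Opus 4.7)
The plan is to reduce the lemma to a single application of Cramer's rule, after producing a linear system for $\tilde{X}'(\mathcal{I})$ by implicit differentiation of the defining identity $F(\tilde{X}(\mathcal{I}), \mathcal{I}) \equiv 0$.

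First, I would differentiate that identity componentwise with respect to $\mathcal{I}$. By the chain rule this yields
\[
J \cdot \tilde{X}'(\mathcal{I}) + F_{\mathcal{I}}\big(\tilde{X}(\mathcal{I}),\mathcal{I}\big) = 0,
\]
where $J$ is the Jacobian in \eqref{jacobian}. Assumption (c) on admissible vector fields says that only the input-node component functions $f_{\iota_m}$ depend on $\mathcal{I}$, so $F_{\mathcal{I}} = (f_{\iota,\mathcal{I}},\, 0,\, 0)^{t}$ in the block decomposition corresponding to the partition $(x_\iota, x_\rho, x_o)$. Thus the implicit differentiation gives the linear system
\[
J \cdot \tilde{X}'(\mathcal{I}) = \bigl(-f_{\iota,\mathcal{I}},\, 0,\, 0\bigr)^{t}.
\]

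Second, I would invoke the hypothesis that $(X^*,\mathcal{I}^*)$ is an asymptotically stable equilibrium, hence $J$ has eigenvalues with negative real parts and is in particular invertible along the curve $\tilde{X}(\mathcal{I})$, at least on a neighborhood of $\mathcal{I}^*$. Since $x_o$ is the last coordinate of $\tilde{X}$, Cramer's rule expresses $x_o'(\mathcal{I})$ as the ratio of two determinants: the denominator is $\det(J)$, and the numerator is the determinant of the matrix obtained from $J$ by replacing its last column (the column of partials with respect to $x_o$) with the right-hand side $(-f_{\iota,\mathcal{I}},\,0,\,0)^{t}$. Comparing with \eqref{weighted_homeostasis_matrix_definition}, this is precisely $\det\!\big(\langle H\rangle\big)$, giving \eqref{xo'}.

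Third, to deduce the characterization of infinitesimal homeostasis points, I would observe that since $\det(J) \neq 0$ at $\big(\tilde{X}(\mathcal{I}_0),\mathcal{I}_0\big)$, equation \eqref{xo'} implies $x_o'(\mathcal{I}_0) = 0$ if and only if $\det\!\big(\langle H\rangle\big) = 0$ at that point, which is \eqref{xo'_reduced}.

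This is a routine computation and I do not anticipate a genuine obstacle; the only matter requiring care is bookkeeping of the block structure so that the column being replaced in Cramer's rule (the last one, corresponding to $x_o$) really does produce the matrix $\langle H\rangle$ as written, rather than some permuted variant. Once the ordering of variables is fixed as in \eqref{admissible_systems_ODE_multiple_input_nodes}, this matches \eqref{weighted_homeostasis_matrix_definition} on the nose.
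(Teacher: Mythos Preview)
Your proposal is correct and follows essentially the same approach as the paper's own proof: implicit differentiation of $F(\tilde{X}(\mathcal{I}),\mathcal{I})\equiv 0$ to obtain the linear system $J\,\tilde{X}'(\mathcal{I}) = (-f_{\iota,\mathcal{I}},0,0)^{t}$, invertibility of $J$ from linear stability, and then Cramer's rule applied to the last column to read off $x_o'(\mathcal{I})$. Your additional care about the block ordering is fine but not strictly needed beyond what the paper does.
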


\begin{proof}
Implicit differentiation of the equation $f(\tilde{X}(\mathcal{I}),\mathcal{I})=0$ with respect to $\mathcal{I}$ yields the linear system
\begin{equation} \label{imp_diff}
J \begin{pmatrix} x_i' \\ x_\rho' \\ x_o'\end{pmatrix} =
-\begin{pmatrix}f_{\iota, \mathcal{I}} \\ 0 \\ 0 \end{pmatrix}
\end{equation}
Since $\tilde{X}(\mathcal{I})$ is assumed to be a linearly stable equilibrium, it follows that $\det(J)\neq 0$. On applying Cramer's rule to \eqref{imp_diff} we can solve for $x_o'(\mathcal{I})$ obtaining \eqref{xo'}.
\end{proof}

By expanding $\det(\langle H \rangle)$ with respect to the last column and each $\iota_k$ (input) row one obtains 
\begin{equation} \label{xo'_reduced_expand}
\det\!\big(\langle H \rangle\big) = \sum_{m=1}^n \pm f_{\iota_m,\mathcal{I}} \det(H_{\iota_m})
\end{equation}
Note that when there is a single input node, i.e. $n=1$, Lemma \ref{cramer_rule} gives the corresponding result obtained in~\cite{wang20}. In this case, there is only one matrix $H_{\iota_m}=H$, called the \emph{homeostasis matrix}, that played a fundamental role in the theory developed in \cite{wang20}. Hence, it is expected that the matrices $H_{\iota_m}$ should play a similar role in the generalization of \cite{wang20} to the multiple input node case.

\begin{definition} \label{definition_text_parcels_homeostasis_matrix} \rm
Let $\mathcal{G}$ be an input-output network with $n$ input nodes and
$f$ be an admissible vector field, with a family of equilibrium points $\tilde{X}(\mathcal{I})$. The \emph{partial homeostasis matrix} $H_{\iota_m}$ of $f$ is obtained from the Jacobian matrix $J$ of $F$ by dropping the last column and the $\iota_m$ row (see formula \eqref{definition_parcels_homeostasis_matrix}).
\end{definition}

\subsection{Classes and Types of Homeostasis}

The classification of homeostasis types proceeds as in \cite{wang20}. The first step is to apply Frobenius-K\"onig theory \cite{schneider77,br91} to the generalized homeostasis matrix $\langle H \rangle$. More precisely, Frobenius-K\"onig theory implies that there exist (constant) permutation matrices $P$ and $Q$ such that
\begin{equation} \label{weighted_homeostasis_normal_form}
    P \langle H \rangle Q = \begin{pmatrix} B_{1} & * & \cdots & * & * \\
    0 & B_{2} & \cdots & * & * \\
    \vdots & \vdots & \ddots & \vdots & \vdots \\
    0 & 0 & \cdots & B_{s} & * \\
    0 & 0 & \cdots & 0 & C
    \end{pmatrix}
\end{equation}
where each diagonal block $B_{1}$, \ldots, $B_{s}$ and $C$ is fully indecomposable (in the sense of \cite{br91}), that is, $\det(B_{1})$, \ldots, $\det(B_{s})$ and $\det(C)$ are irreducible polynomials. As $P$ and $Q$ are constant permutation matrices, we have that
\begin{equation} \label{FK_factorization}
    \det\!\big(\langle H \rangle\big) = \pm \det(B_{1}) \cdots \det(B_{s}) \cdot \det(C)
\end{equation}
In order to simplify nomenclature, we will call $B_{1}$, \ldots, $B_{s}$ and $C$ \emph{irreducible homeostasis blocks}, although in the literature the term irreducible matrix may have a different meaning (see \cite{schneider77}).

A direct comparison of factorization \eqref{FK_factorization} with expansion 
\eqref{xo'_reduced_expand} suggests that the irreducible factors $\det(B_j)$ are the common factors of $\det(H_{\iota_m})$ and $\det(C)$ is a weighted alternating sum of $f_{\iota_{1},\mathcal{I}}, \ldots, f_{\iota_{n},\mathcal{I}}$. Indeed, as we show in Section \ref{mathematical_arguments_and_proofs}, the matrix $C$ in \eqref{weighted_homeostasis_normal_form} contains all the functions $f_{\iota_{\ell}, \mathcal{I}}$ as entries, that is, it is a homogeneous polynomial of degree $1$ on $f_{\iota_{1}, \mathcal{I}}, \ldots, f_{\iota_{n}, \mathcal{I}}$, whereas the matrices $B_{1}, \ldots, B_{s}$ do not contain any of them.

The next step is to classify the irreducible homeostasis blocks of $\langle H \rangle$ according to their number number of self-couplings. Indeed, we show that each block $B_{j}$ of order $k_{j}$ has exactly $k_{j}$ or $k_{j}-1$ self-couplings (see Section \ref{mathematical_arguments_and_proofs}). But, unlike \cite{wang20}, in the multiple input nodes case we find \emph{three} classes of irreducible homeostasis blocks that may occur in core networks with multiple input nodes.

\begin{definition} \label{definition_appendage_and_structural} \rm
Let $B_{j}$ be an irreducible homeostasis block of order $k_{j}$ which does not contain any partial derivatives $f_{\iota_{m}, \mathcal{I}}$ with respect to $\mathcal{I}$. We say that the homeostasis class of $B_{j}$ is \emph{appendage} if $B_{j}$ has $k_{j}$ self-couplings and \emph{structural} if $B_{j}$ has $k_{j} - 1$ self-couplings.
\end{definition}
 
\begin{definition} \label{types_homeostasis} \rm
We say $\mathcal{G}$ exhibits \emph{appendage homeostasis} if there is an appendage irreducible homeostasis block $B_{j}$ such that $\det(B_{j}) = 0$. In an analogous way, we say $\mathcal{G}$ exhibits \emph{structural homeostasis} if there is an structural irreducible homeostasis block $B_{j}$ such that $\det(B_{j}) = 0$.
\end{definition}

As shown in Wang \etal \cite{wang20}, appendage and structural homeostasis occur in core networks with one input node. Nevertheless, networks with multiple input nodes also exhibit a new class of homeostasis that is not found in networks with only one input node.

\begin{definition} \rm
\label{definition_input_counterweight_homeostasis_block}
Let $C$ be an irreducible homeostasis block whose determinant $\det(C)$ is a homogeneous polynomial of degree $1$ on the variables $f_{\iota_{1}, \mathcal{I}}, \ldots, f_{\iota_{n}, \mathcal{I}}$. We say that the homeostasis class of $C$ is \emph{input counterweight}. Moreover, we say that $\mathcal{G}$ exhibits \emph{input counterweight homeostasis} when $\det(C) = 0$.
\end{definition}

The final step in our theory is to associate a subnetwork motif' of $\mathcal{G}$ to each homeostasis block of $\langle H \rangle$ in such a way that each class of homeostasis corresponds to a distinguished class of subnetworks. Because of the appearance of a third homeostasis class, the extension of the results of \cite{wang20} to the multiple input nodes require several new ideas. 

\subsection{Network Topology and Homeostasis} \label{combinatorial_characterization_homeostasis}

We start with the basic combinatorial definitions needed to understand the construction of `subnetwork motifs' associated with the homeostasis blocks in networks with multiple input nodes.
We notice that an example of how to apply the definitions and results of this subsection to an abstract network can be found in figure \ref{figure_definitions}.

Recall that	a node $\rho$ is \emph{downstream} from a node $\tau$ if there is a directed path from $\tau$ to $\rho$ and \emph{upstream} if there is a directed path from $\rho$ to $\tau$. We always assume that every node is downstream and upstream from itself.

\begin{remark} \rm \label{rmk:nontriviality}
From now on we assume that all networks satisfy the following condition: the output node is downstream from all input nodes. This seemingly innocuous assumption, that is implicit in every study about input-output networks, is absolutely necessary to ensure that all results are true and non-trivial (see Appendix~\ref{ap:AppB}).
\end{remark}

\begin{definition} \label{defining_core_networks} \rm
Let $\mathcal{G}$ be a network with input nodes $\iota_{1}, \ldots, \iota_{n}$ and output node $o$. We call $\mathcal{G}$ a \emph{core network} if every node in $\mathcal{G}$ is upstream from $o$ and downstream from at least one input node. Analogously, we define the core subnetwork $\mathcal{G}_{m}$ between $\iota_{m}$ and $o$ as the subnetwork composed by nodes downstream $\iota_{m}$ and upstream $o$.
\end{definition}

The main result about core networks (extending~\cite[Thm 2.4]{wang20}) is that infinitesimal homeostasis in $\mathcal{G}_c$ is `the same' as in $\mathcal{G}$ (see Theorem~\ref{theorem_homeostasis_points_are_the_same_core_and_original_networks} for details). Therefore, without loss of generality, one can consider only core networks.

We classify the nodes in a core network $\mathcal{G}$ according to their role in the topology of the network (see figure \ref{figure_definitions}).

\begin{definition} \label{combinatorial_definitions_in_network_multiple_input_nodes} \rm
Let $\mathcal{G}$ be a core network with input nodes $\iota_{1}, \iota_{2}, \ldots, \iota_{n}$ and output node $o$.
\begin{enumerate}[(a)]
\item A directed path connecting nodes $\rho$ and $\tau$ is called a \emph{simple path} if it visits each node on the path at most once.
\item An \emph{$\iota_{m}o$-simple path} is a simple path connecting the input node $\iota_{m}$ to the output node $o$.
\item A node is \emph{$\iota_{m}$-simple} if it lies on an $\iota_{m}o$-simple path.
\item A node is \emph{$\iota_{m}$-appendage} if it is downstream from $\iota_{m}$ and it is not an $\iota_{m}$-simple node.
\item A node is \emph{absolutely simple} if it is an $\iota_{m}$-simple node, for every $m = 1, \ldots, n$.
\item A node is \emph{absolutely appendage} if it is an $\iota_{m}$-appendage node, for every $m = 1, \ldots, n$.
\end{enumerate}
\end{definition}

\begin{figure}[!htp]
\centering
\begin{tabular}{@{}ll}
(a) & (b) \\
\includegraphics[width=0.475\linewidth,trim=5.5cm 3cm 5.5cm 3cm,clip=true]%
{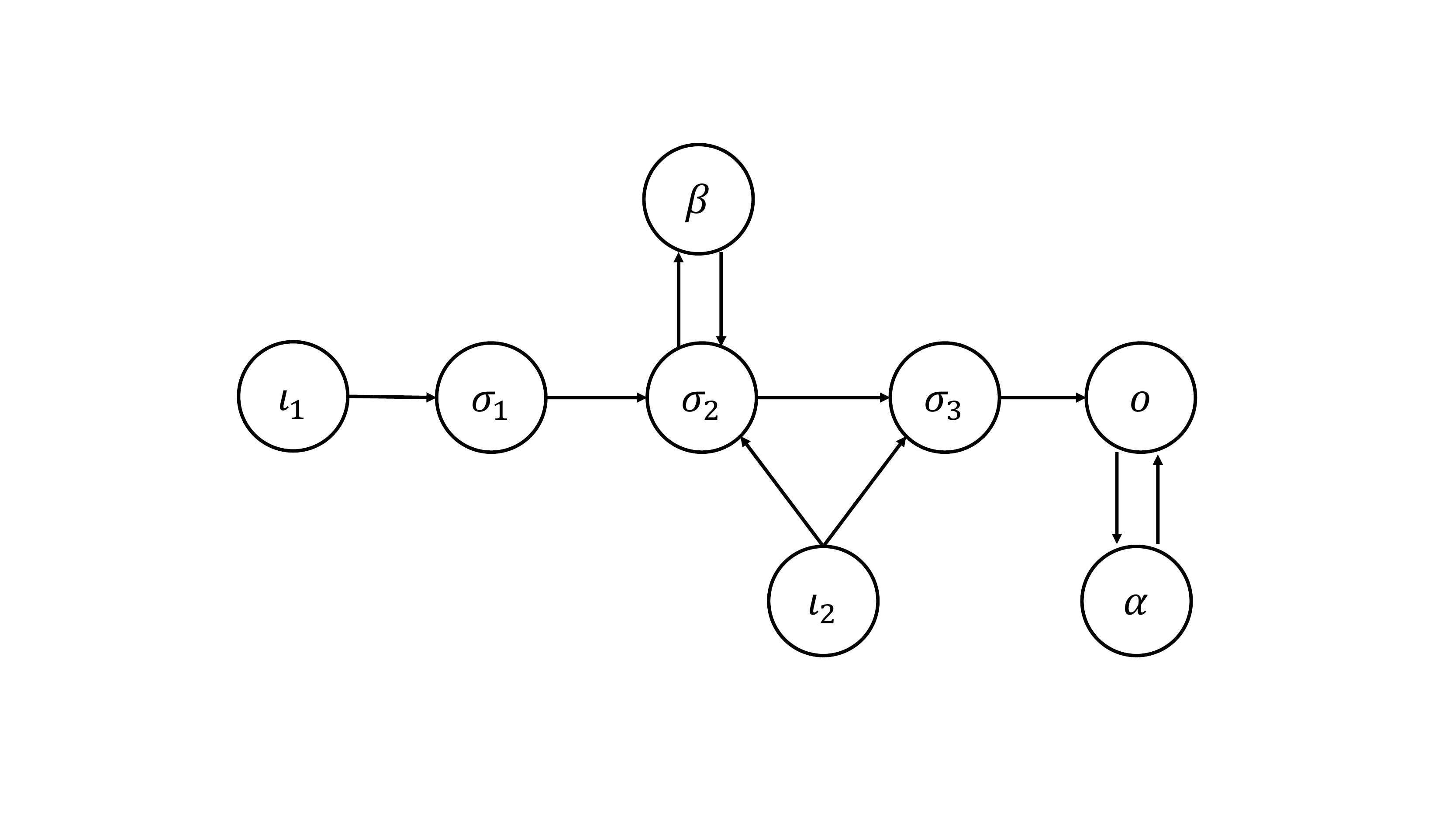} &
\includegraphics[width=0.475\linewidth,trim=4.5cm 2cm 5cm 1.7cm,clip=true]%
{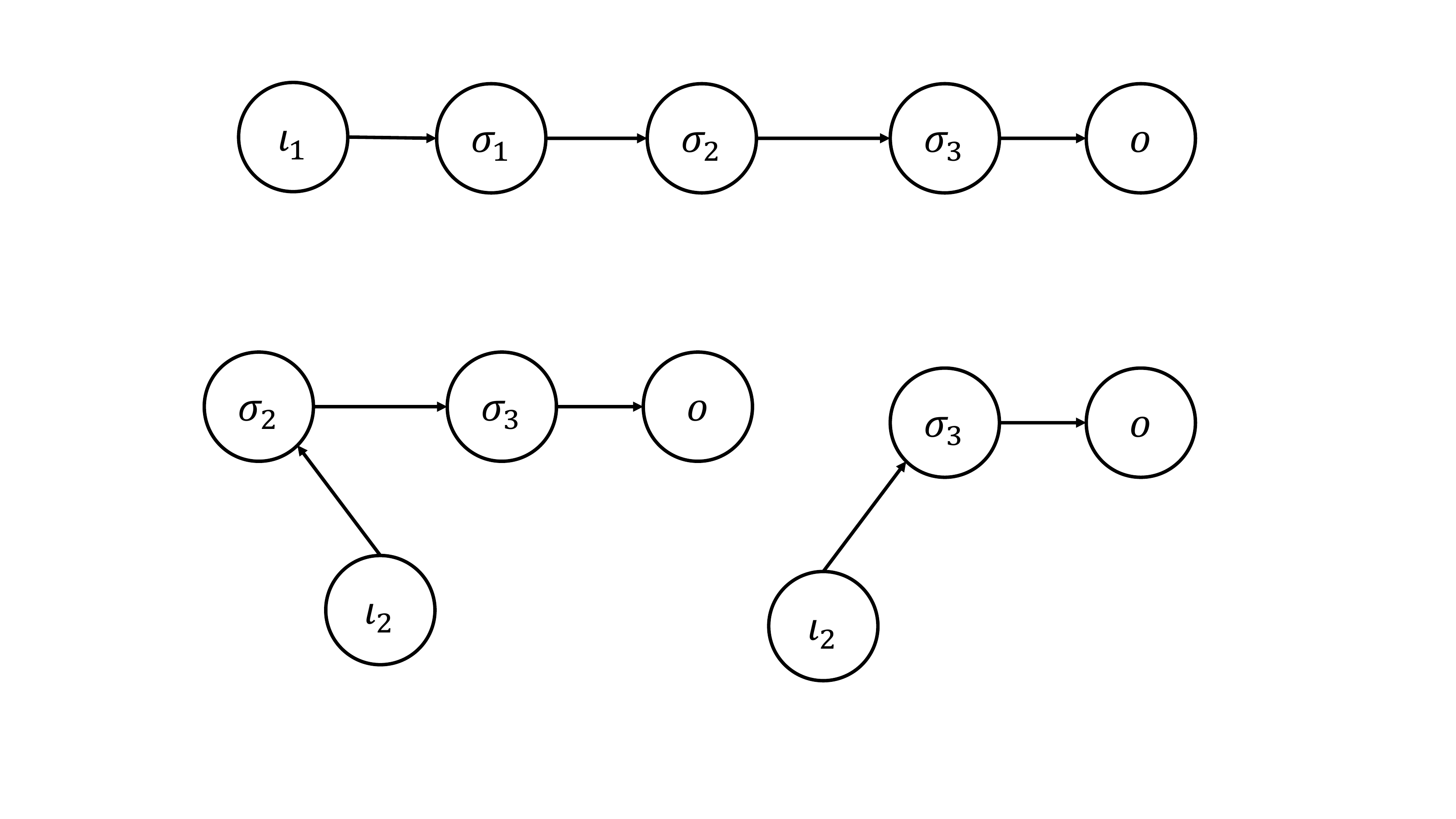} \\
(c) & (d) \\
\includegraphics[width=0.475\linewidth,trim=5.5cm 3cm 5.5cm 3cm,clip=true]%
{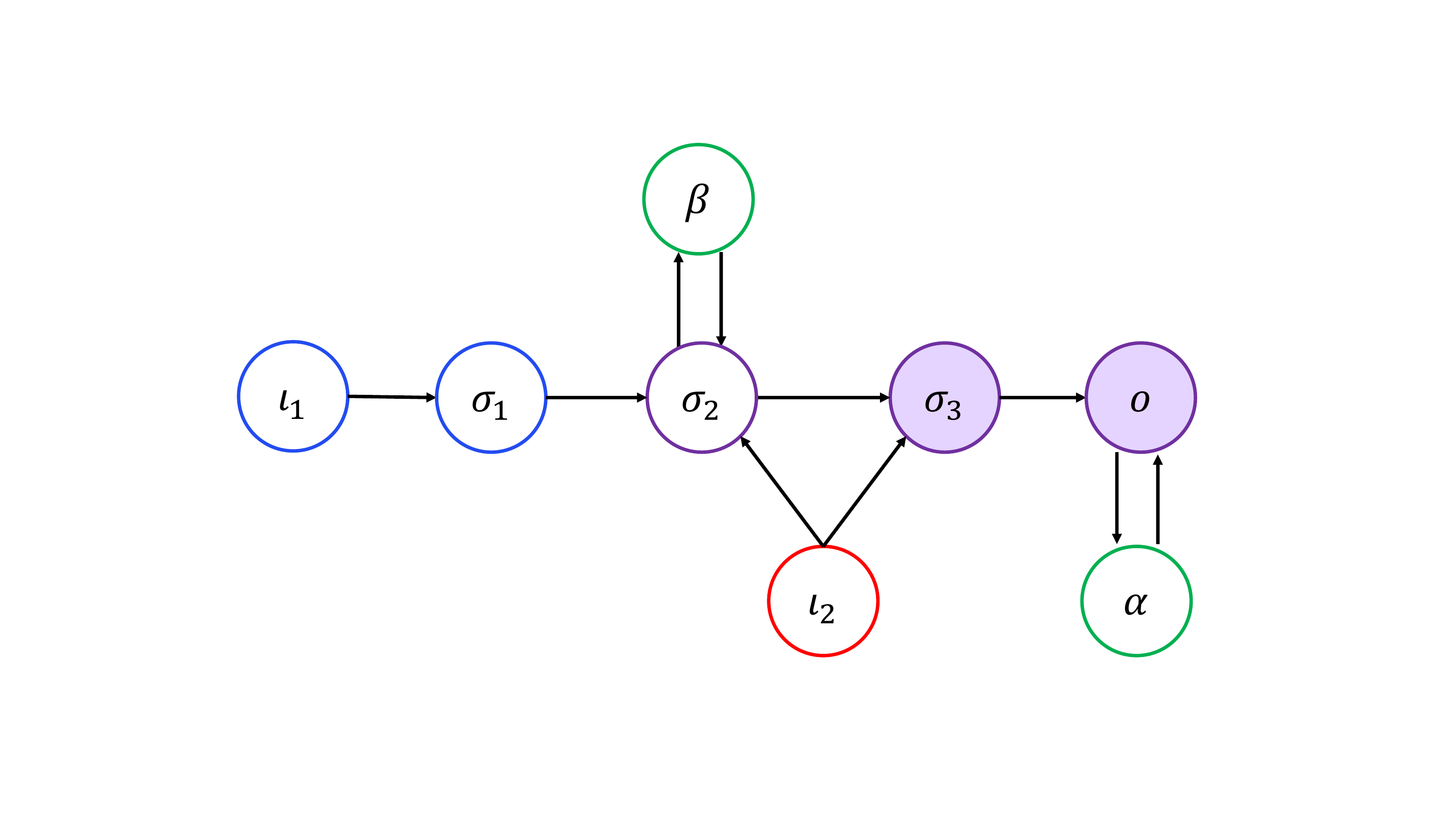} &
\includegraphics[width=0.475\linewidth,trim=4.5cm 2cm 5cm 1.7cm,clip=true]%
{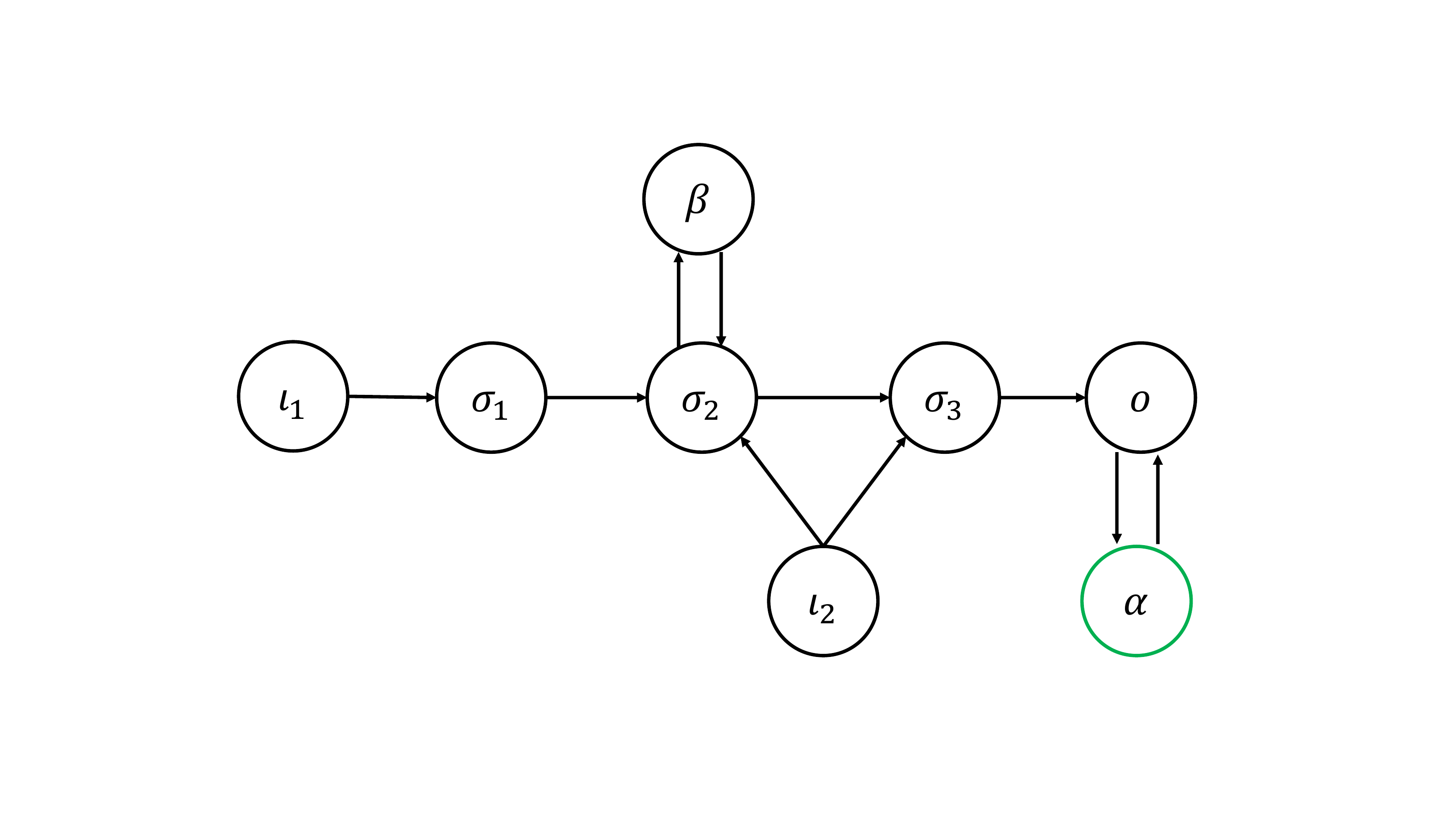} \\
(e) & (f) \\
\includegraphics[width=0.475\linewidth,trim=5.5cm 3cm 5.5cm 3cm,clip=true]%
{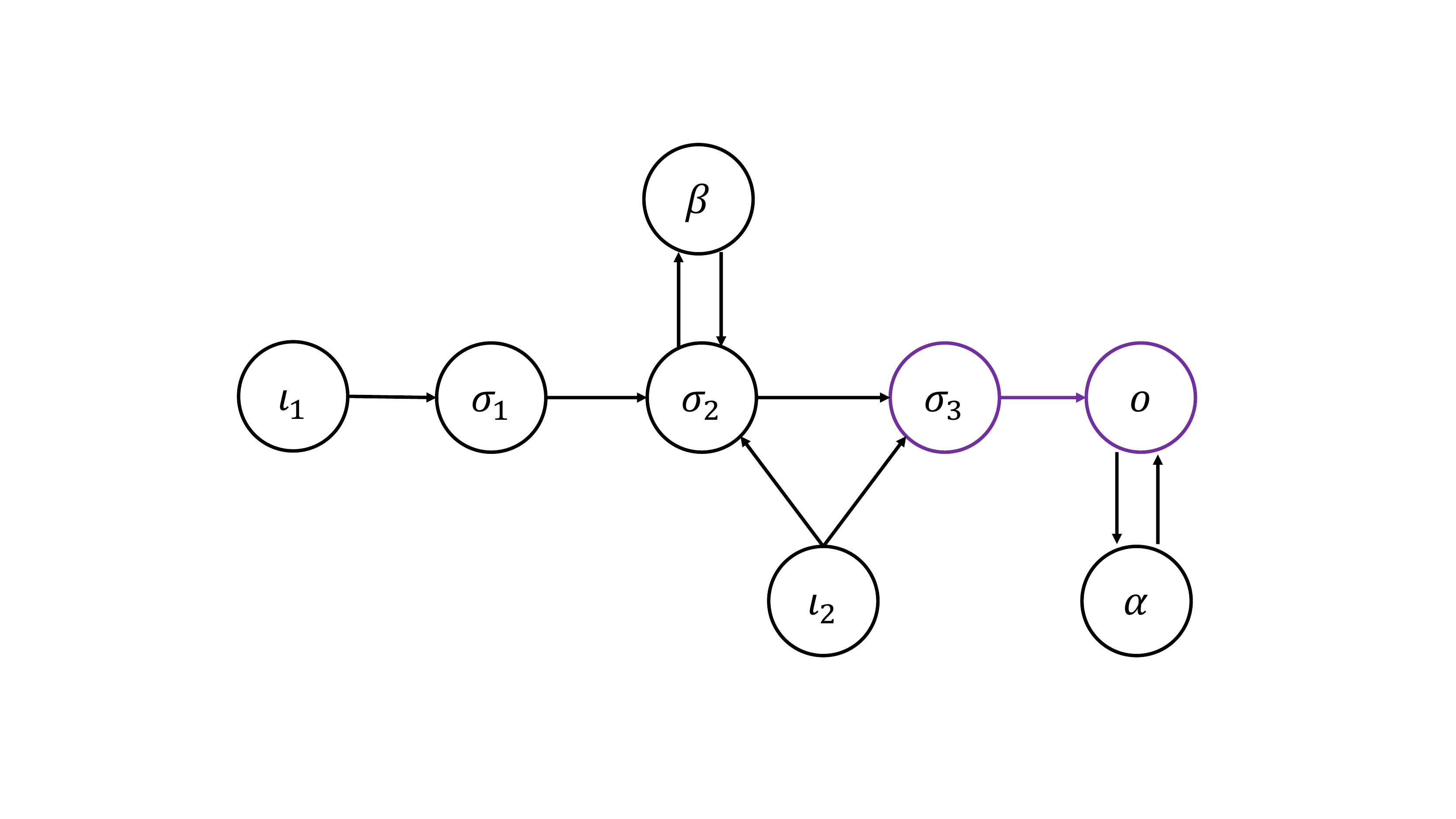} &
\includegraphics[width=0.475\linewidth,trim=4.5cm 2cm 5cm 1.7cm,clip=true]%
{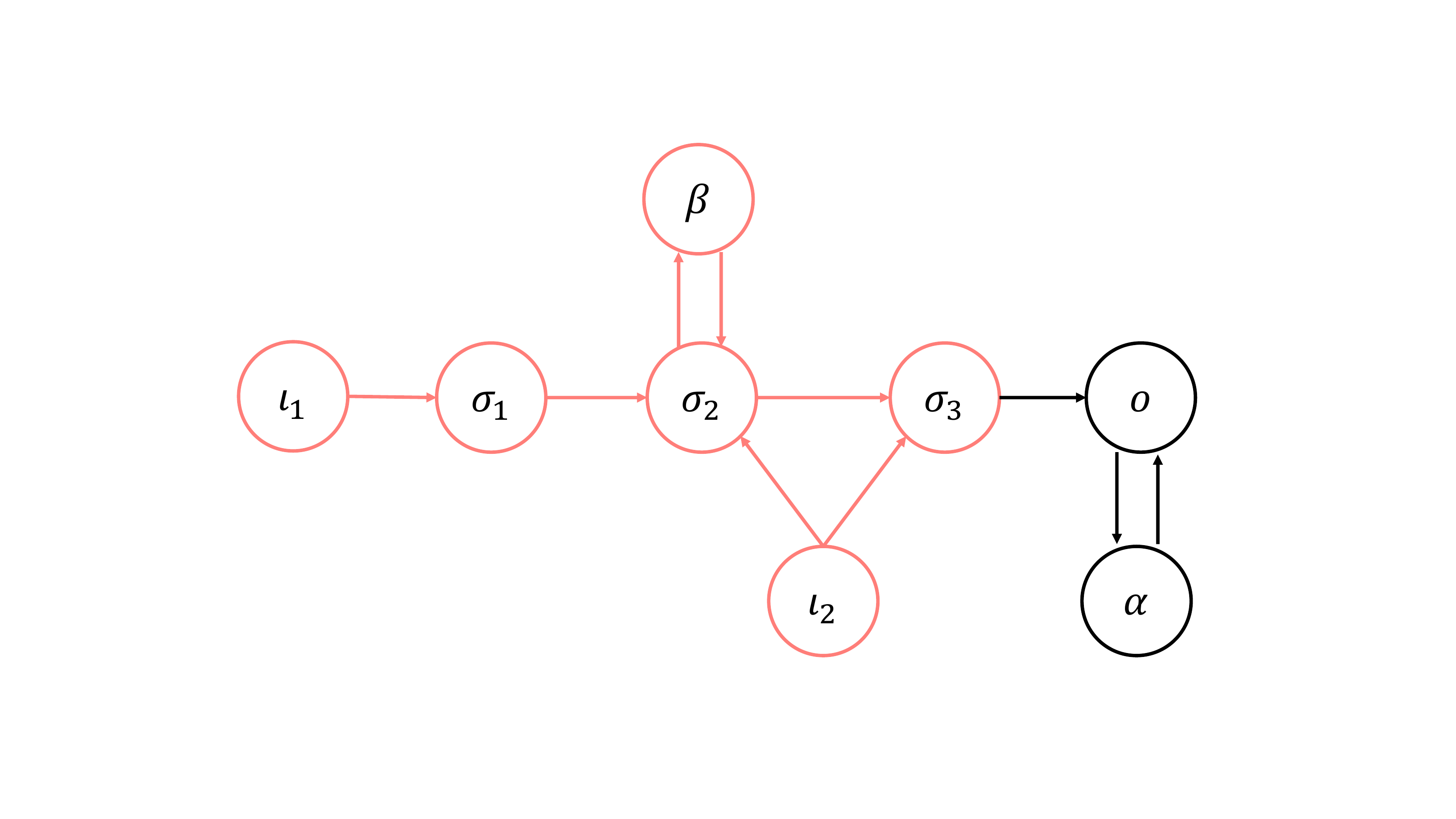}
\end{tabular}
\renewcommand{\figurename}{Figure}
\caption{(a) A core network $\mathcal{G}$ with input nodes $\iota_{1}$ and $\iota_{2}$, and output node $o$ (see definition \ref{defining_core_networks}). (b) Simple paths between the input nodes and the output node. (c) Nodes $\iota_{1}$ and $\sigma_{1}$ (both in blue) are $\iota_{1}$-simple, but they are not downstream $\iota_{2}$; node $\iota_{2}$ is $\iota_{2}$-simple, but it is not downstream $\iota_{2}$. Nodes highlighted in purple are either absolutely simple ($\sigma_{2}$) or absolutely super-simple ($\sigma_{3}$ and $o$). Both $\alpha$ and $\beta$ (in green) are absolutely appendage nodes (definitions \ref{combinatorial_definitions_in_network_multiple_input_nodes} and \ref{useful_definitions_structural_multiple_input_nodes}). The algorithm of subsection \ref{algorithm_core_network_multiple_input_nodes}, implies that $\mathcal{G}$ supports three classes of homeostasis: appendage (d); structural (e); and input counterweight (f).}
\label{figure_definitions}
\end{figure}

Wang \etal \cite{wang20} introduced the concept of path equivalent classes in appendage subnetworks of networks with only one input node. As we need this definition in other contexts, we generalize it to every subnetwork of $\mathcal{G}$.

\begin{definition} \rm \label{definition_path_component}
Let $\mathcal{K}$ be a nonempty subnetwork of $\mathcal{G}$. We say that nodes $\rho_{i}, \rho_{j}$ of $\mathcal{K}$ are \textit{path equivalent in $\mathcal{K}$} if there are paths in $\mathcal{K}$ from $\rho_{i}$ to $\rho_{j}$ and from $\rho_{j}$ to $\rho_{i}$. A $\mathcal{K}$\textit{-path component} is a path equivalence class in $\mathcal{K}$.
\end{definition}

\begin{definition} \rm \label{complement_subnetwork_to_simple_paths} Let $\mathcal{G}$ be a core subnetwork with multiple input nodes $\iota_{1}, \ldots, \iota_{n}$ and output node $o$ and let $\mathcal{G}_{m}$ be the core subnetwork between $\iota_{m}$ and $o$.
\begin{enumerate} [(a)]
\item The $\mathcal{G}_{m}$\textit{-complementary subnetwork} of an $\iota_{m}o$-simple path $S$ is the subnetwork $C_{m}S$ consisting of all nodes of $\mathcal{G}_{m}$ not on $S$ and all arrows in $\mathcal{G}_{m}$ connecting those nodes.
\item The $\mathcal{G}$\textit{-complementary subnetwork} of an $\iota_{m}o$-simple path $S$ is the subnetwork $CS$ consisting of all nodes of $\mathcal{G}$ not on $S$ and all arrows in $\mathcal{G}$ connecting those nodes.
\end{enumerate}
\end{definition}

We start with the `subnetwork motifs' associated with appendage homeostasis.

\begin{definition} \label{useful_definitions_appendage_multiple_input_nodes} \rm
Let $\mathcal{G}$ be a core subnetwork with multiple input nodes $\iota_{1}, \ldots, \iota_{n}$ and output node $o$ and let $\mathcal{G}_{m}$ be the core subnetwork between $\iota_{m}$ and $o$.
\begin{enumerate}[(a)]
\item For every $m = 1, \ldots, n$, we define the $\iota_{m}$-appendage subnetwork $\mathcal{A}_{\mathcal{G}_{m}}$ as the subnetwork of $\mathcal{G}$ composed by all $\iota_{m}$-appendage nodes and all arrows in $\mathcal{G}$ connecting $\iota_{m}$-appendage nodes.
\item The appendage subnetwork $\mathcal{A}_{\mathcal{G}}$ is the subnetwork of $\mathcal{G}$ composed by all absolutely appendage nodes and all arrows in $\mathcal{G}$ connecting absolutely appendage nodes, i.e.,
\begin{equation*}
    \mathcal{A}_{\mathcal{G}} = \mathcal{A}_{\mathcal{G}_{1}} \cap \cdots \cap \mathcal{A}_{\mathcal{G}_{n}}
\end{equation*}
\end{enumerate}
\end{definition}
 
By Definition \ref{definition_path_component}, each path component of a network is a path equivalence class of this network. 
Therefore, we can partition $\mathcal{A}_{\mathcal{G}}$ in different $\mathcal{A}_{\mathcal{G}}$-path components. 
We still need another concept to associate a component of this partition with the appendage homeostasis blocks.

\begin{definition}  \label{no_cycle} \rm
Let $\mathcal{A}_{i}$ be an $\mathcal{A}_{\mathcal{G}}$-path component.
We say that $\mathcal{A}_{i}$ satisfies the \emph{generalized no cycle condition} if the following holds: for every $m = 1, \ldots, n$, for every $\iota_{m}o$-simple path $S_{m}$, nodes in $\mathcal{A}_{i}$ are not $CS_{m}$-path equivalent to any node in $CS_{m} \setminus \mathcal{A}_{i}$. 
\end{definition}

The condition in Definition \ref{no_cycle} is the correct generalization of the `no cycle condition' of Wang \etal \cite{wang20}. Finally, it is shown in Section \ref{mathematical_arguments_and_proofs}, that each appendage homeostasis block corresponds exactly to an $\mathcal{A}_{\mathcal{G}}$-path component $\mathcal{A}_{i}$ satisfying the generalized no cycle condition is an irreducible appendage homeostasis block (see Theorems \ref{characterization_appendage_homeostasis_general_G}  and \ref{reverse_theorem_appendage_block_general_G}). Moreover, this is equivalent to the assertion that each appendage homeostasis block is an appendage homeostasis block of each core subnetwork $\mathcal{G}_{m}$ (see Theorems \ref{characterization_appendage_homeostasis_each_G_m} and \ref{reverse_theorem_appendage_block_each_G_m}).

The topological characterization of appendage homeostasis in networks with multiple input nodes is similar to the topological characterization of appendage homeostasis in single input node networks. This is not the case for the other homeostasis classes. 
Indeed, in single input node  networks there are only two classes of homeostasis, appendage and structural, while in networks with multiple input nodes there is also the input counterweight homeostasis. 
Moreover, single input node networks always support structural homeostasis, which is not always the case with networks with multiple input nodes (see Section \ref{mathematical_arguments_and_proofs}). 

Now we consider the `subnetwork motifs' associated with structural homeostasis.

\begin{definition}
\label{useful_definitions_structural_multiple_input_nodes} \rm
Let $\mathcal{G}$ be a core subnetwork with multiple input nodes $\iota_{1}, \ldots, \iota_{n}$ and output node $o$.
\begin{enumerate}[(a)]
\item An $\iota_{m}$-\textit{super-simple node} is an $\iota_{m}$-simple node that lies on every $\iota_{m}o$-simple path.
\item An \textit{absolutely super-simple node} is an absolutely simple node that lies on every $\iota_{m}o$-simple path, for every $m = 1, \ldots, n$. In particular, an absolutely super simple-node is an $\iota_{m}$-super-simple node, for every $m = 1, \ldots, n$.
\end{enumerate}
\end{definition}

It is straightforward that every core network $\mathcal{G}$ with multiple input nodes $\iota_{1}, \ldots, \iota_{n}$ and output node $o$ has at least one absolutely super-simple node: the output node $o$. However, in contrast to core networks with only one input node where both the input the output nodes are super-simple, core networks with multiple input nodes may exhibit the output node as their only super-simple node (see Figure \ref{example_network_only_one_super_simple_node}). In fact, as we will see later, this is the reason why abstract core networks with multiple input nodes do not necessarily support structural homeostasis.

\begin{figure}[!ht]
\centering
\includegraphics[width=\linewidth,trim=0cm 5cm 0cm 5cm,clip=true]%
{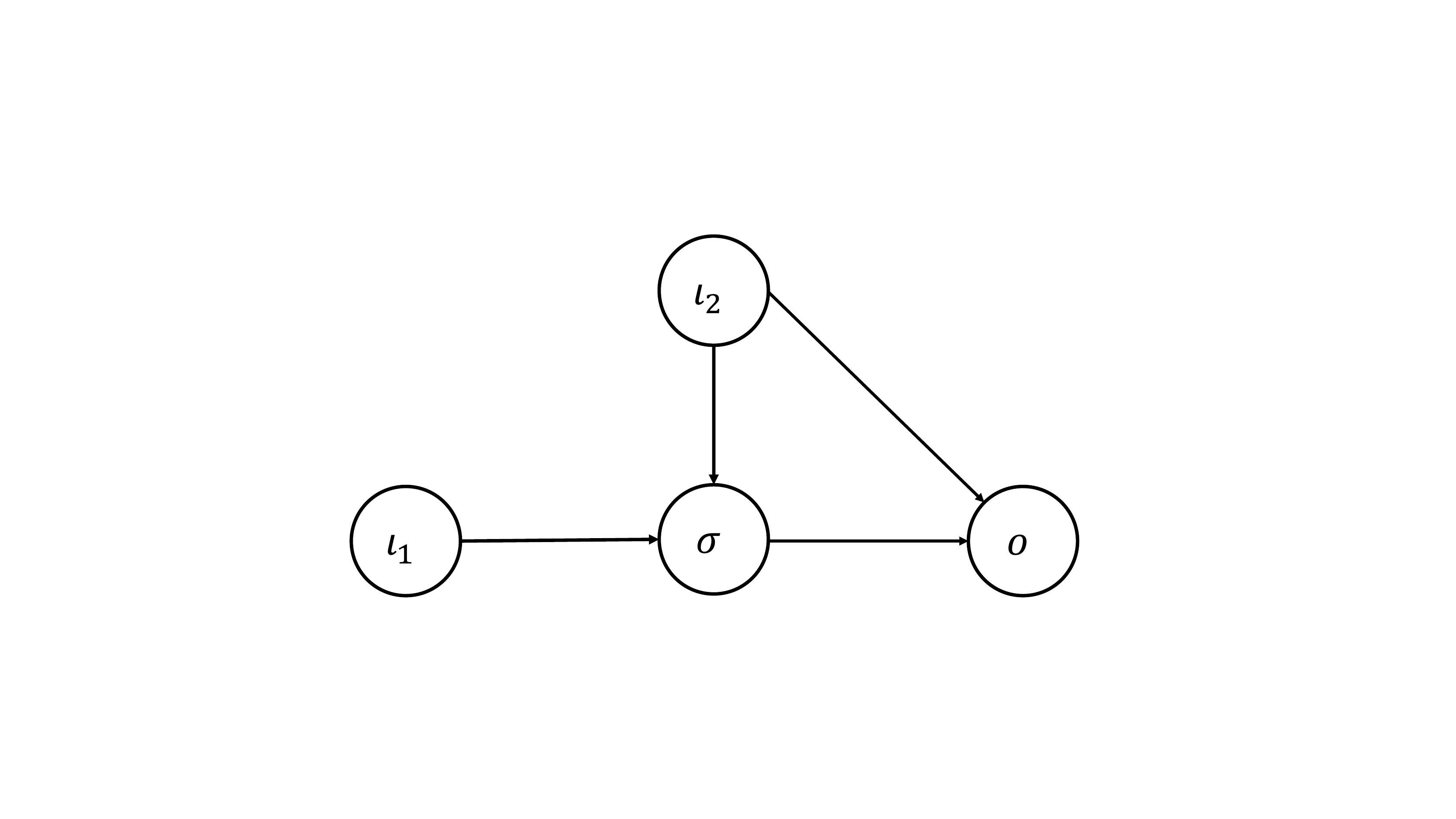}
\renewcommand{\figurename}{Figure}
\caption{A core network with input nodes $\iota_{1}$ and $\iota_{2}$ and output node $o$. The only absolutely super-simple node is $o$. Node $\sigma$ is an absolutely simple node, but it is not between two absolutely super-simple nodes, as it would be expected for core networks with only one input node (see Lemma \ref{relation_i_m_simple_node_and_i_m_super_simple_node}).}
\label{example_network_only_one_super_simple_node}
\end{figure}

Similarly to what happens in networks with single input node there is a natural way of ordering the absolutely super-simple. Indeed, the absolutely super-simple nodes can be uniquely ordered by $\rho_{1} > \rho_{2} > \cdots > \rho_{p} > o$, where $a > b$ when $b$ is downstream $a$ by all $\iota_{m}o$-simple paths (see Lemma \ref{order_absolutely_super_simple_nodes}). Through this ordering, we say that two absolutely super-simple nodes $\rho_{k}, \rho_{k+1}$ are \emph{adjacent} when $\rho_{k+1}$ is the first absolutely super-simple node which appears after $\rho_{k}$.

\begin{definition} \rm
\label{definition_adjacent_absolutely_super_simple_}
Let $\rho_{k}>\rho_{k+1}$ be adjacent absolutely super-simple nodes.
\begin{enumerate}[(a)]
\item An absolutely simple node $\rho$ is \emph{between} $\rho_{k}$ and $\rho_{k+1}$ if there exists an $\iota_{m}o$-simple path that includes $\rho_{k}$ to $\rho$ to $\rho_{k+1}$ in that order, for some $m\in\{1,\ldots,n\}$.
\item The \emph{absolutely super-simple subnetwork}, denoted $\mathcal{L}(\rho_{k}, \rho_{k+1})$, is the subnetwork whose nodes are absolutely simple nodes between $\rho_{k}$ and $\rho_{k+1}$ and whose arrows are arrows of $\mathcal{G}$ connecting nodes in $\mathcal{L}(\rho_{k}, \rho_{k+1})$.
\end{enumerate}
\end{definition}

In addition to the $\mathcal{A}_\mathcal{G}$-path components that satisfy the generalized no cycle condition there may be $\mathcal{A}_{\mathcal{G}}$-path components $\mathcal{B}_{i}$ that do not satisfy this property. More precisely, for every $m = 1, \ldots, n$, there is an $\iota_{m}o$-simple path $S_{m}$ such that nodes in $\mathcal{B}_{i}$ are $CS_{m}$-path equivalent to an absolutely simple node in $CS_{m} \setminus \mathcal{B}_{i}$ which belongs to an absolutely super-simple subnetwork $\mathcal{L}(\rho_{k}, \rho_{k+1})$, where $\rho_{k}, \rho_{k+1}$ are adjacent absolutely super-simple nodes. There is an unique correspondence between each $\mathcal{A}_{\mathcal{G}}$-path component $\mathcal{B}_{i}$ and the absolutely super-simple subnetwork to which $\mathcal{B}_{i}$ is $CS_{m}$-path equivalent, for some $\iota_{m}o$-simple path $S_{m}$ (see Lemma \ref{unique_correspondence_between_B_i_and_absolutely_super_simple_subnetwork}). The union of the $\mathcal{A}_{\mathcal{G}}$-path components $\mathcal{B}_{i}$ and the corresponding absolutely super-simple subnetworks generate the primary subnetworks associated to structural homeostasis.

\begin{definition} \label{absolutely_super_simple_structural_subnetwork} \rm
Let $\rho_{k}$ and $\rho_{k+1}$ be adjacent absolutely super-simple nodes in $\mathcal{G}$. The \textit{absolutely super-simple structural subnetwork} $\mathcal{L}'(\rho_{k}, \rho_{k+1})$ is the input-output subnetwork consisting of nodes in $\mathcal{L}(\rho_{k}, \rho_{k+1}) \cup \mathcal{B}$, where $\mathcal{B}$ consists of all absolutely appendage nodes that are $CS_{m}$-path equivalent to nodes in $\mathcal{L}(\rho_{k}, \rho_{k+1})$ for some $\iota_{m}o$-simple path $S_{m}$, for some $m \in \{1,\ldots,n\}$, i.e., $\mathcal{B}$ consists of all $\mathcal{A}_{\mathcal{G}}$-path components $\mathcal{B}_{i}$ that are $CS_{m}$-path equivalent to nodes in $\mathcal{L}(\rho_{k}, \rho_{k+1})$ for some $S_{m}$, for some $m \in \{1,\ldots,n\}$. Arrows of $\mathcal{L}'(\rho_{k}, \rho_{k+1})$ are arrows of $\mathcal{G}$ that connect nodes in $\mathcal{L}'(\rho_{k}, \rho_{k+1})$. Note that $\rho_{k}$ is the input node and that $\rho_{k+1}$ is the output node of $\mathcal{L}'(\rho_{k}, \rho_{k+1})$.
\end{definition}

Each absolutely super-simple structural subnetwork $\mathcal{L}'(\rho_{k}, \rho_{k+1})$ is a single node input-output network with $\rho_{k}$ as the input node and $\rho_{k+1}$ as the output node. Therefore, the homeostasis matrix $H(\mathcal{L}'(\rho_{k}, \rho_{k+1}))$ is well defined. Indeed, we show that the homeostasis matrix of each absolutely super-simple structural subnetwork corresponds to an irreducible structural homeostasis block and, conversely, each irreducible structural homeostasis block is given by the homeostasis matrix of an absolutely super-simple structural subnetwork (see Theorems \ref{structural_homeostasis_G} and \ref{structural_homeostasis_G_reverse}).

Finally, we define the `subnetwork motif' associated with input counterweight homeostasis.

\begin{definition} \rm \label{definition_input_counterweight_subnetwork}
Let the absolutely super-simple nodes of $\mathcal{G}$ be $\rho_{1} > \cdots > \rho_{s} > o$. The \emph{input counterweight subnetwork} $\mathcal{W}_{\mathcal{G}}$ of $\mathcal{G}$ is the subnetwork composed by: (1) the input nodes $\iota_{1}, \ldots, \iota_{n}$, (2) the absolutely super-simple node $\rho_{1}$, (3) nodes $\tau$ for which there exists an $m \in \{1,\ldots,n\}$ such that there is an $\iota_{m}o$-simple path that passes at $\iota_{m}$, $\tau$ and $\rho_{1}$ in that order, (4) the nodes that are not absolutely appendage nor absolutely simple, and (5) nodes in $\mathcal{C}$, where $\mathcal{C}$ consists of all absolutely appendage nodes that are $CS_{m}$-path equivalent to nodes that are not absolutely appendage and that are not between two absolutely super-simple nodes, for some $\iota_{m}o$-simple path $S_{m}$ ($m \in \{1,\ldots,n\}$). Arrows of $\mathcal{W}_{\mathcal{G}}$ are the arrows of $\mathcal{G}$ that connect nodes of $\mathcal{W}_{\mathcal{G}}$.
\end{definition}

\subsection{Enumerating Homeostasis Subnetworks} 
\label{algorithm_core_network_multiple_input_nodes}

The classification of homeostasis types obtained in this paper allows us to we write down an algorithm for enumerating subnetworks corresponding to the $r = p + q + 1$ homeostasis blocks {\color{red}(see figure \ref{figure_definitions})}.

\noindent
\textbf{Step 1:} Determine the $\mathcal{A}_\mathcal{G}$-path components $\mathcal{A}_{1}, \ldots, \mathcal{A}_{p}$ satisfying the ge\-ne\-ra\-li\-zed no cycle condition. By Theorems \ref{characterization_appendage_homeostasis_general_G} and \ref{reverse_theorem_appendage_block_general_G}, these are the appendage homeostasis subnetworks of $\mathcal{G}$, and their corresponding Jacobian matrix $J_{\mathcal{A}_{i}}$ is an irreducible appendage homeostasis block that appears in the normal form of $\langle H \rangle$. Moreover, there are $p$ independent defining conditions for appendage homeostasis based on the determinants $\det (J_{\mathcal{A}_{i}}) = 0$, for $i = 1, \ldots, p$.

\medskip

\noindent
\textbf{Step 2:} Determine the absolutely super-simple nodes of  $\mathcal{G}$. If the only absolutely super-simple node of $\mathcal{G}$ is $o$, then $\mathcal{G}$ does not support structural homeostasis. On the other hand, if there is more than one absolutely super-simple node, consider their natural order $\rho_{1} > \cdots > \rho_{q} > \rho_{q+1} = o$ and determine the corresponding absolutely super-simple structural subnetwork $\mathcal{L}'(\rho_{k}, \rho_{k+1})$. By Theorems \ref{structural_homeostasis_G} and \ref{structural_homeostasis_G_reverse}, the corresponding homeostasis matrix $H(\mathcal{L}'(\rho_{k}, \rho_{k+1}))$ is an irreducible structural homeostasis block that appears in the normal form of $\langle H \rangle$. Moreover, there are $q$ independent defining conditions for structural homeostasis based on the determinants $\det\!\big(H(\mathcal{L}'(\rho_{k}, \rho_{k+1}))\big) = 0$, for $k = 1, \ldots, q$.

\medskip

\noindent
\textbf{Step 3:} Determine the input counterweight subnetwork $\mathcal{W}_{\mathcal{G}}$ of $\mathcal{G}$. Then, the generalized homeostasis matrix of $\langle H \rangle(\mathcal{W}_{\mathcal{G}})$ is, up to permutation of rows or columns, the input counterweight homeostasis block $C$ that appears in the normal form of $\langle H \rangle$. Furthermore, there is one defining condition for input counterweight homeostasis based on the determinant $\det\!\big( \langle H \rangle (\mathcal{W}_{\mathcal{G}})\big) = 0$.

\section{Structure of Infinitesimal Homeostasis} \label{mathematical_arguments_and_proofs}

In this section we provide the proofs of all results behind the classification of homeostasis types and the algorithm in subsection \ref{algorithm_core_network_multiple_input_nodes}. We follow Wang \etal \cite{wang20} and provide the appropriate generalizations of each corresponding result. Nevertheless, it is important to remark that there are new difficulties that arise in the multiple input node context that do not have a single input node counterpart. Completely new arguments were required to overcome these difficulties.

\subsection{Core Networks}

We extended the results of Wang \etal \cite{wang20} to a core network associated to a multiple input nodes input-output network.
Recall Definition~\ref{defining_core_networks} of a core network.

The linearly stable equilibrium $(X_{0}, \mathcal{I}_{0})$ of \eqref{admissible_systems_ODE_multiple_input_nodes} satisfies the system of equations that can be explicitly written as
\begin{equation} \label{admissible_systems_ODE-equilibrium}
\begin{aligned}
    & f_{\iota_{1}}(x_{\iota_{1}}, x_{\iota_{2}}, \ldots, x_{\iota_{n}}, x_{\rho}, x_{o}, \mathcal{I}) = 0 \\
    & f_{\iota_{2}}(x_{\iota_{1}}, x_{\iota_{2}}, \ldots, x_{\iota_{n}}, x_{\rho}, x_{o}, \mathcal{I}) = 0 \\
    & \vdots \\
    & f_{\iota_{n}}(x_{\iota_{1}}, x_{\iota_{2}}, \ldots, x_{\iota_{n}}, x_{\rho}, x_{o}, \mathcal{I}) = 0 \\
    & f_{\rho}(x_{\iota_{1}}, x_{\iota_{2}}, \ldots, x_{\iota_{n}}, x_{\rho}, x_{o}) = 0 \\
    & f_{o}(x_{\iota_{1}}, x_{\iota_{2}}, \ldots, x_{\iota_{n}}, x_{\rho}, x_{o}) = 0
\end{aligned}
\end{equation}

Following Wang \etal\cite{wang20}, we partition the regulatory nodes $\rho$ in three classes depending if they are upstream from the output node or/and downstream from at least one input node. 

More precisely, consider the partition of the nodes in $\mathcal{G}$ in three classes as follows:  
\begin{enumerate}[(1)]
\item those nodes $\sigma$ that are both upstream from $o$ and downstream from at least one input node $\iota_{m}$,
\item those nodes $d$ that are not downstream from any input node $\iota_{m}$,
\item those nodes $u$ which are downstream from at least one input node $\iota_{m}$, but not upstream from $o$.
\end{enumerate}
Figure \ref{networks_multiple_input_nodes} shows the connections which can be found in $\mathcal{G}$.

We can now rewrite equations in \eqref{admissible_systems_ODE-equilibrium} as
\begin{equation} \label{admissible_systems_ODE-equilibrium_rewritten}
\begin{aligned}
    & f_{\iota_{1}}(x_{\iota_{1}}, x_{\iota_{2}}, \ldots, x_{\iota_{n}}, x_{\sigma}, x_{u}, x_{d}, x_{o}, \mathcal{I}) = 0 \\
    & f_{\iota_{2}}(x_{\iota_{1}}, x_{\iota_{2}}, \ldots, x_{\iota_{n}}, x_{\sigma}, x_{u}, x_{d}, x_{o}, \mathcal{I}) = 0 \\
    & \vdots \\
    & f_{\iota_{n}}(x_{\iota_{1}}, x_{\iota_{2}}, \ldots, x_{\iota_{n}}, x_{\sigma}, x_{u}, x_{d}, x_{o}, \mathcal{I}) = 0 \\
    & f_{\sigma}(x_{\iota_{1}}, x_{\iota_{2}}, \ldots, x_{\iota_{n}}, x_{\sigma}, x_{u}, x_{d}, x_{o}) = 0 \\
    & f_{u}(x_{\iota_{1}}, x_{\iota_{2}}, \ldots, x_{\iota_{n}}, x_{\sigma}, x_{u}, x_{d}, x_{o}) = 0 \\
    & f_{d}(x_{\iota_{1}}, x_{\iota_{2}}, \ldots, x_{\iota_{n}}, x_{\sigma}, x_{u}, x_{d}, x_{o}) = 0 \\
    & f_{o}(x_{\iota_{1}}, x_{\iota_{2}}, \ldots, x_{\iota_{n}}, x_{\sigma}, x_{u}, x_{d}, x_{o}) = 0
\end{aligned}
\end{equation}
As already proved by Wang \etal \cite[Lem 2.1]{wang20}, \eqref{admissible_systems_ODE-equilibrium_rewritten} may be simplified to
\begin{equation} \label{admissible_systems_ODE-equilibrium_simplified}
\begin{aligned}
    & f_{\iota_{1}}(x_{\iota_{1}}, x_{\iota_{2}}, \ldots, x_{\iota_{n}}, x_{\sigma}, x_{d}, x_{o}, \mathcal{I}) = 0 \\
    & f_{\iota_{2}}(x_{\iota_{1}}, x_{\iota_{2}}, \ldots, x_{\iota_{n}}, x_{\sigma}, x_{d}, x_{o}, \mathcal{I}) = 0 \\
    & \vdots \\
    & f_{\iota_{n}}(x_{\iota_{1}}, x_{\iota_{2}}, \ldots, x_{\iota_{n}}, x_{\sigma}, x_{d}, x_{o}, \mathcal{I}) = 0 \\
    & f_{\sigma}(x_{\iota_{1}}, x_{\iota_{2}}, \ldots, x_{\iota_{n}}, x_{\sigma}, x_{d}, x_{o}) = 0 \\
    & f_{u}(x_{\iota_{1}}, x_{\iota_{2}}, \ldots, x_{\iota_{n}}, x_{\sigma}, x_{u}, x_{d}, x_{o}) = 0 \\
    & f_{d}(x_{d}) = 0 \\
    & f_{o}(x_{\iota_{1}}, x_{\iota_{2}}, \ldots, x_{\iota_{n}}, x_{\sigma}, x_{d}, x_{o}) = 0
\end{aligned}
\end{equation}
Note that if we fix $x_{d}$ at some value, it is trivial to obtain an admissible system to $\mathcal{G}_{c}$ from \eqref{admissible_systems_ODE-equilibrium_simplified}.

\begin{figure}[!ht]
\centering
\includegraphics[width=\linewidth,trim=0cm 1.5cm 0cm 1.5cm,clip=true]{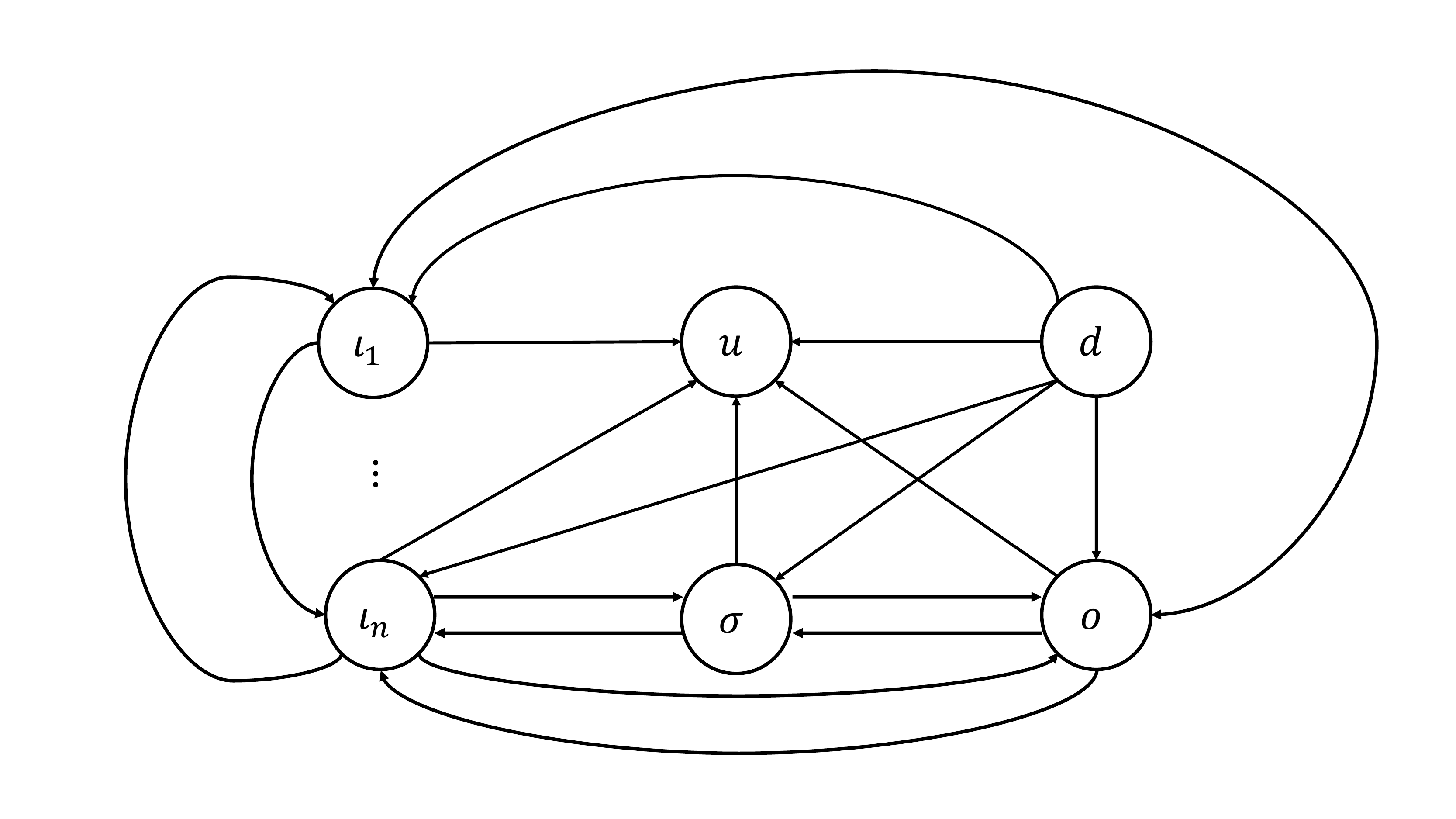}
\renewcommand{\figurename}{Figure}
\caption{The possible connections in $\mathcal{G}$. The corresponding core network $\mathcal{G}_{c}$ is composed by the input nodes $\iota_{1}, \ldots, \iota_{n}$, the nodes $\sigma$ that are upstream from the output node and downstream of at least one input node, and the output node $o$.}
\label{networks_multiple_input_nodes}
\end{figure}

\begin{lemma} \label{if_original_equilibrium_stable_so_is_core}
Suppose $X_{0} = (x_{\iota_{1}}^{*}, x_{\iota_{2}}^{*}, \ldots, x_{\iota_{n}}^{*}, x_{\sigma}^{*}, x_{u}^{*}, x_{d}^{*}, x_{o}^{*})$ is a linear stable equilibrium of \eqref{admissible_systems_ODE-equilibrium_simplified}. Then the core admissible system (obtained by freezing $x_{d}$ at $x_{d}^{*}$)
\begin{equation} \label{core_admissible_system}
\begin{aligned}
    & \dot{x}_{\iota_{1}} = f_{\iota_{1}}(x_{\iota_{1}}, x_{\iota_{2}}, \ldots, x_{\iota_{n}}, x_{\sigma}, x_{d}^{*}, x_{o}, \mathcal{I})\\
    & \dot{x}_{\iota_{2}} = f_{\iota_{2}}(x_{\iota_{1}}, x_{\iota_{2}}, \ldots, x_{\iota_{n}}, x_{\sigma}, x_{d}^{*}, x_{o}, \mathcal{I})\\
    & \vdots \\
    & \dot{x}_{\iota_{n}} = f_{\iota_{n}}(x_{\iota_{1}}, x_{\iota_{2}}, \ldots, x_{\iota_{n}}, x_{\sigma}, x_{d}^{*}, x_{o}, \mathcal{I})\\
    & \dot{x}_{\sigma} = f_{\sigma}(x_{\iota_{1}}, x_{\iota_{2}}, \ldots, x_{\iota_{n}}, x_{\sigma}, x_{d}^{*}, x_{o})\\
    & \dot{x}_{o} = f_{o}(x_{\iota_{1}}, x_{\iota_{2}}, \ldots, x_{\iota_{n}}, x_{\sigma}, x_{d}^{*}, x_{o})
\end{aligned}
\end{equation}
has a linear stable equilibrium in $Y_{0} = (x_{\iota_{1}}^{*}, x_{\iota_{2}}^{*}, \ldots, x_{\iota_{n}}^{*}, x_{\sigma}^{*}, x_{o}^{*})$.
\end{lemma}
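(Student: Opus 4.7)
The plan is to show two things: first that $Y_0$ is genuinely an equilibrium of the reduced (core) system, and second that linear stability transfers from $X_0$ to $Y_0$ because the Jacobian of the full system has a block-triangular form exposing the core Jacobian on its diagonal.

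First I would verify the equilibrium condition. Because the $u$-nodes are \emph{not} upstream from $o$, no arrow goes from any $u$ into an $\iota_m$, into any $\sigma$, or into $o$; hence $f_{\iota_m}$, $f_\sigma$ and $f_o$ are independent of $x_u$, as already reflected in the simplified system \eqref{admissible_systems_ODE-equilibrium_simplified}. Evaluating the $\iota$, $\sigma$ and $o$ components at $X_0$ therefore yields zero regardless of whether we include $x_u^*$ or not, and by construction we are freezing $x_d$ at $x_d^*$. So $Y_0$ is an equilibrium of \eqref{core_admissible_system}.

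The key step is the stability transfer. Ordering the state variables as $(x_\iota, x_\sigma, x_o, x_u, x_d)$, the Jacobian of the full simplified system at $X_0$ has the block form
\begin{equation*}
J_{\mathrm{full}} \;=\;
\begin{pmatrix}
J_{\mathrm{core}} & 0 & \ast \\
\ast & f_{u,x_u} & \ast \\
0 & 0 & f_{d,x_d}
\end{pmatrix},
\end{equation*}
where $J_{\mathrm{core}}$ is precisely the Jacobian of \eqref{core_admissible_system} at $Y_0$. The two zero blocks are the content of the argument: the top-right zero encodes that $f_\iota,f_\sigma,f_o$ do not depend on $x_u$ (as above), and the bottom-row zero encodes that $f_d$ depends only on $x_d$, because any arrow into a $d$-node from an input-downstream node would make that $d$-node itself downstream from an input, contradicting class (2) of the partition. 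Since $J_{\mathrm{full}}$ is block lower-triangular (up to permutation), its spectrum is the disjoint union of the spectra of $J_{\mathrm{core}}$, $f_{u,x_u}$ and $f_{d,x_d}$.

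Finally, because $X_0$ is linearly stable, every eigenvalue of $J_{\mathrm{full}}$ has strictly negative real part; in particular every eigenvalue of $J_{\mathrm{core}}$ does, which is exactly the linear stability of $Y_0$ for \eqref{core_admissible_system}. The only potential obstacle is the topological argument that produces the zero blocks, but this is already baked into the partition $(\iota,\sigma,u,d,o)$ and is the same reasoning that Wang \etal~\cite{wang20} used to pass from \eqref{admissible_systems_ODE-equilibrium_rewritten} to \eqref{admissible_systems_ODE-equilibrium_simplified}; no new idea is needed in the multiple-input-node setting because the $u$- and $d$-nodes interact with the core identically to the single-input case.
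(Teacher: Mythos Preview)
Your proposal is correct and follows essentially the same route as the paper: both arguments observe that the full Jacobian is, after a suitable permutation of the node groups, block lower-triangular with diagonal blocks $J_{\mathrm{core}}$, $f_{u,x_u}$ and $f_{d,x_d}$, so that the spectrum of the core Jacobian is contained in that of the full one and linear stability descends. Your write-up is in fact slightly more explicit than the paper's, since you spell out the topological reasons for the two zero blocks and flag the permutation needed to see the triangular structure, whereas the paper simply displays the matrix in the ordering $(\iota,\sigma,d,u,o)$ and asserts the eigenvalue decomposition.
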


\begin{proof}
It is trivial that $Y_{0}$ is an equilibrium of \eqref{core_admissible_system}. We start by verifying that $Y_{0}$ is linearly stable. Indeed, the Jacobian matrix $J$ of \eqref{admissible_systems_ODE-equilibrium_simplified} evaluated at $X_{0}$ is

\begin{equation} \label{jacobian_original_system}
    J = \begin{pmatrix} f_{\iota_{1}, x_{\iota_{1}}} & \cdots & f_{\iota_{1}, x_{\iota_{n}}} & f_{\iota_{1}, x_{\sigma}} & f_{\iota_{1}, x_{d}} & 0 & f_{\iota_{1}, x_{o}} \\
    \vdots & \ddots & \vdots & \vdots & \vdots & \vdots & \vdots \\
    f_{\iota_{n}, x_{\iota_{1}}} & \cdots & f_{\iota_{n}, x_{\iota_{n}}} & f_{\iota_{n}, x_{\sigma}} & f_{\iota_{n}, x_{d}} & 0 & f_{\iota_{n}, x_{o}} \\
    f_{\sigma, x_{\iota_{1}}} & \cdots & f_{\sigma, x_{\iota_{n}}} & f_{\sigma, x_{\sigma}} & f_{\sigma, x_{d}} & 0 & f_{\sigma, x_{o}} \\
    0 & \cdots & 0 & 0 & f_{d, x_{d}} & 0 & 0 \\
    f_{u, x_{\iota_{1}}} & \cdots & f_{u, x_{\iota_{n}}} & f_{u, x_{\sigma}} & f_{u, x_{d}} & f_{u, x_{u}} & f_{u, x_{o}} \\
    f_{o, x_{\iota_{1}}} & \cdots & f_{o, x_{\iota_{n}}} & f_{o, x_{\sigma}} & f_{o, x_{d}} & 0 & f_{o, x_{o}}
    \end{pmatrix}
\end{equation}
and therefore the eigenvalues of $J$ are the same eigenvalues of $f_{d, x_{d}}$, $f_{u, x_{u}}$ and of the matrix $J_{c}$, where
\begin{equation} \label{jacobian_core_system}
    J_{c} = \begin{pmatrix} f_{\iota_{1}, x_{\iota_{1}}} & \cdots & f_{\iota_{1}, x_{\iota_{n}}} & f_{\iota_{1}, x_{\sigma}} & f_{\iota_{1}, x_{o}} \\
    \vdots & \ddots & \vdots & \vdots & \vdots \\
    f_{\iota_{n}, x_{\iota_{1}}} & \cdots & f_{\iota_{n}, x_{\iota_{n}}} & f_{\iota_{n}, x_{\sigma}} & f_{\iota_{n}, x_{o}} \\
    f_{\sigma, x_{\iota_{1}}} & \cdots & f_{\sigma, x_{\iota_{n}}} & f_{\sigma, x_{\sigma}} & f_{\sigma, x_{o}} \\
    f_{o, x_{\iota_{1}}} & \cdots & f_{o, x_{\iota_{n}}} & f_{o, x_{\sigma}} & f_{o, x_{o}}
    \end{pmatrix}
\end{equation}
Notice now that $J_{c}$ is the Jacobian matrix of \eqref{core_admissible_system} calculated at $Y_{0}$, and therefore, if $X_{0}$ is a linearly stable equilibrium, then so it is $Y_{0}$.
\end{proof}

\begin{theorem} \label{theorem_homeostasis_points_are_the_same_core_and_original_networks}
Let $x_{o}(\mathcal{I})$ be the input-output function of the admissible system \eqref{admissible_systems_ODE_multiple_input_nodes} and let $x^{c}_{o}(\mathcal{I})$ be the input-output function of the associated core admissible system \eqref{core_admissible_system}. Then $x^{c}_{o}$ has a point of infinitesimal homeostasis at $\mathcal{I}_{0}$ if and only if $x_{o}$ has a point of infinitesimal homeostasis at $\mathcal{I}_{0}$.
\end{theorem}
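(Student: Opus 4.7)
The plan is to apply Cramer's rule (Lemma \ref{cramer_rule}) to both the original and the core admissible systems, and to show directly that the same defining equation $\det(\langle H\rangle)=0$ appears on both sides up to a nonzero multiplicative factor. More precisely, writing $J$ and $\langle H\rangle$ for the Jacobian and generalized homeostasis matrix of \eqref{admissible_systems_ODE-equilibrium_simplified} at $(X_{0},\mathcal{I}_{0})$, and $J_{c}$ and $\langle H\rangle_{c}$ for the corresponding objects of the core system \eqref{core_admissible_system} at $(Y_{0},\mathcal{I}_{0})$, the goal is to establish
\[
\frac{\det(\langle H\rangle)}{\det(J)} \;=\; \frac{\det(\langle H\rangle_{c})}{\det(J_{c})}.
\]
Since by Lemma \ref{cramer_rule} the left-hand side equals $x_{o}'(\mathcal{I}_{0})$ and the right-hand side equals $(x_{o}^{c})'(\mathcal{I}_{0})$, the two vanish simultaneously and the theorem follows.

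To establish this identity I would exploit the block structure already visible in \eqref{jacobian_original_system}: the row of $J$ indexed by $x_{d}$ contains only the entry $f_{d,x_{d}}$ (since $d$-nodes are not downstream of any input or regulatory node of $\mathcal{G}_{c}$), and the column of $J$ indexed by $x_{u}$ contains only the entry $f_{u,x_{u}}$ (since no node of $\mathcal{G}_{c}$ receives an arrow from a $u$-node). First I would permute rows and columns to bring $J$ into block lower-triangular form with diagonal blocks $J_{c}$, $f_{d,x_{d}}$ and $f_{u,x_{u}}$, which yields
\[
\det(J) \;=\; \varepsilon\,\det(J_{c})\,\det(f_{d,x_{d}})\,\det(f_{u,x_{u}}),
\]
where $\varepsilon=\pm 1$ is the sign of the permutation. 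Now observe that the generalized homeostasis matrix $\langle H\rangle$ is obtained from $J$ by replacing the $x_{o}$-column with $(-f_{\iota_{1},\mathcal{I}},\ldots,-f_{\iota_{n},\mathcal{I}},0,\ldots,0)^{t}$; in particular, the $x_{d}$-row and $x_{u}$-column are \emph{unchanged} by this replacement (because $d$- and $u$-nodes are not input nodes, so the new entries in those rows are $0$, exactly as in $J$). Applying the same permutation to $\langle H\rangle$ therefore gives the analogous block decomposition
\[
\det(\langle H\rangle) \;=\; \varepsilon\,\det(\langle H\rangle_{c})\,\det(f_{d,x_{d}})\,\det(f_{u,x_{u}}),
\]
with \emph{the same} sign $\varepsilon$. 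Dividing the two identities yields the desired equality of ratios.

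It remains to verify that the common factors $\det(f_{d,x_{d}})$ and $\det(f_{u,x_{u}})$ are nonzero, so that the cancellation is legitimate and $\det(J_{c})\neq 0$ (as required for Cramer's rule applied to the core system). This is where linear stability enters: by the block-triangular decomposition of $J$, the spectrum of $J$ equals the union of the spectra of $J_{c}$, $f_{d,x_{d}}$ and $f_{u,x_{u}}$. Since by hypothesis $X_{0}$ is linearly stable, all these eigenvalues have negative real part, so all three matrices are invertible; this also reproves Lemma \ref{if_original_equilibrium_stable_so_is_core} in the direction needed here.

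The only real bookkeeping hurdle is the sign tracking. The argument crucially uses that the permutation bringing $J$ into block form and the permutation bringing $\langle H\rangle$ into block form can be taken to be identical, which in turn relies on the fact that the column-replacement defining $\langle H\rangle$ does not touch the $x_{d}$-row or the $x_{u}$-column. Once this is explicitly verified, the signs cancel in the ratio and no further computation of $\varepsilon$ is needed.
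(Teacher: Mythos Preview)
Your proposal is correct and follows essentially the same route as the paper: apply Lemma~\ref{cramer_rule} to both systems, factor $\det(J)$ and $\det(\langle H\rangle)$ as $\det(f_{d,x_{d}})\det(f_{u,x_{u}})$ times their core counterparts via the block-triangular structure, and cancel. The paper writes the two factorizations \eqref{determinant_jacobian_original_and_core_network} and \eqref{relation_homeostasis_matrix_original_core} directly without tracking the permutation sign, whereas you make explicit that the \emph{same} permutation handles both matrices so that $\varepsilon$ cancels in the ratio; this is a nice clarification but not a different argument.
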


\begin{proof}
By Lemma \ref{cramer_rule} the input-output function of the  admissible system \eqref{admissible_systems_ODE_multiple_input_nodes} is given by
\begin{equation} \label{homeostasis_equation_part_ii}
    \frac{d x_o}{d \mathcal{I}} = \frac{\det\!\big(\langle H \rangle\big)}{\det(J)}
\end{equation}
where
\begin{equation} \label{definition_homeostasis_matrix_original_network}
    \langle H \rangle = \begin{pmatrix} f_{\iota_{1}, x_{\iota_{1}}} & \cdots & f_{\iota_{1}, x_{\iota_{n}}} & f_{\iota_{1}, x_{\sigma}} & f_{\iota_{1}, x_{d}} & 0 & - f_{\iota_{1}, \mathcal{I}} \\
    \vdots & \ddots & \vdots & \vdots & \vdots & \vdots & \vdots \\
    f_{\iota_{n}, x_{\iota_{1}}} & \cdots & f_{\iota_{n}, x_{\iota_{n}}} & f_{\iota_{n}, x_{\sigma}} & f_{\iota_{n}, x_{d}} & 0 & - f_{\iota_{n}, \mathcal{I}} \\
    f_{\sigma, x_{\iota_{1}}} & \cdots & f_{\sigma, x_{\iota_{n}}} & f_{\sigma, x_{\sigma}} & f_{\sigma, x_{d}} & 0 & 0 \\
    0 & \cdots & 0 & 0 & f_{d, x_{d}} & 0 & 0 \\
    f_{u, x_{\iota_{1}}} & \cdots & f_{u, x_{\iota_{n}}} & f_{u, x_{\sigma}} & f_{u, x_{d}} & f_{u, x_{u}} & 0 \\
    f_{o, x_{\iota_{1}}} & \cdots & f_{o, x_{\iota_{n}}} & f_{o, x_{\sigma}} & f_{o, x_{d}} & 0 & 0
    \end{pmatrix}
\end{equation}
Likewise, by Lemma \ref{cramer_rule} the input-output function of the core admissible system \eqref{core_admissible_system} is given by
\begin{equation} \label{core_homeostasis_equation_part_ii}
    \frac{d x_{o}^{c}}{d \mathcal{I}} = \frac{\det\!\big(\langle H_{c} \rangle\big)}{\det(J_{c})}
\end{equation}
where
\begin{equation} \label{definition_homeostasis_matrix_core_network}
    \langle H_{c} \rangle = \begin{pmatrix} f_{\iota_{1}, x_{\iota_{1}}} & \cdots & f_{\iota_{1}, x_{\iota_{n}}} & f_{\iota_{1}, x_{\sigma}} & - f_{\iota_{1}, \mathcal{I}} \\
    \vdots & \ddots & \vdots & \vdots & \vdots \\
    f_{\iota_{n}, x_{\iota_{1}}} & \cdots & f_{\iota_{n}, x_{\iota_{n}}} & f_{\iota_{n}, x_{\sigma}} & - f_{\iota_{n}, \mathcal{I}} \\
    f_{\sigma, x_{\iota_{1}}} & \cdots & f_{\sigma, x_{\iota_{n}}} & f_{\sigma, x_{\sigma}} & 0 \\
    f_{o, x_{\iota_{1}}} & \cdots & f_{o, x_{\iota_{n}}} & f_{o, x_{\sigma}} & 0
    \end{pmatrix}
\end{equation}
From Lemma \ref{if_original_equilibrium_stable_so_is_core}, we have
\begin{equation} \label{determinant_jacobian_original_and_core_network}
    \det(J) = \det(f_{d, x_{d}}) \cdot \det(f_{u, x_{u}}) \cdot \det(J_{c})
\end{equation}
From \eqref{definition_homeostasis_matrix_original_network} and
\eqref{definition_homeostasis_matrix_core_network} we get
\begin{equation} \label{relation_homeostasis_matrix_original_core}
    \det\!\big(\langle H \rangle\big) = \det(f_{d, x_{d}}) \cdot \det(f_{u, x_{u}}) \cdot \det\!\big(\langle H_{c} \rangle\big)
\end{equation}
Hence
\begin{equation}
    \frac{d x_o}{d \mathcal{I}} = \frac{\det\!\big(\langle H \rangle\big)}{\det(J)} = \frac{\det(f_{d, x_{d}}) \cdot \det(f_{u, x_{u}}) \cdot \det\!\big(\langle H_{c} \rangle\big)}{ \det(f_{d, x_{d}}) \cdot \det (f_{u, x_{u}}) \cdot \det(J_{c})} = \frac{\det\!\big(\langle H_{c} \rangle\big)}{\det(J_{c})} = \frac{d x_{o}^{c}}{d \mathcal{I}}
\end{equation}
and so the theorem is proved.
\end{proof}

Theorem \ref{theorem_homeostasis_points_are_the_same_core_and_original_networks} allows us to analyze the core subnetwork in the search for infinitesimal homeostasis points. Therefore, given a network $\mathcal{G}$ with output node $o$ and input nodes $\iota_{1}, \iota_{2}, \ldots, \iota_{n}$, all of them associated to the same input parameter $\mathcal{I}$, we can detach from $\mathcal{G}$ nodes $d$ that are not downstream from any input node and nodes $u$ which are not upstream from $o$ to obtain the core subnetwork $\mathcal{G}_{c}$. We can also analyze $\mathcal{G}_{c}$ from a different point of view.

Given a network $\mathcal{G}$ described above, denote by $V$ the set of nodes of $\mathcal{G}$ and by $E$ the set of arrows of $\mathcal{G}$. Consider the core subnetwork $\mathcal{G}_{c}$ and $V_{c}$ and $E_{c}$ the sets of nodes and of arrows of $\mathcal{G}_{c}$, respectively. For every $m =1,\ldots,n$, consider the subnetwork $\mathcal{G}_{m}$ generated by $\iota_{m}, o$ and all nodes downstream from $\iota_{m}$ and upstream from $o$.

\begin{proposition} \label{core_subnetwork_as_union_individual_ core_networks}
Given an input-output network $\mathcal{G}$ with multiple input nodes \linebreak $\iota_{1}, \iota_{2}, \ldots, \iota_{n}$, all of them associated to the same input parameter $\mathcal{I}$ and output node $o$, then its core subnetwork $\mathcal{G}_{c}$ is the union of all core networks $\mathcal{G}_{m}$ from $\iota_{m}$ to $o$
\begin{equation}
     \mathcal{G}_{c} = \mathcal{G}_{1} \cup \cdots \cup \mathcal{G}_{n}
\end{equation}
i.e., considering networks as directed graphs, we define $V_{m}$ and $E_{m}$ as the sets of nodes and arrows of $\mathcal{G}_{m}$, respectively, for every $m=1,\ldots,n$, and $V_{c}$ and $E_{c}$ as the sets of nodes and arrows of $\mathcal{G}_{c}$, respectively. Then:
\begin{equation}
    V_{c} = V_{1} \cup \cdots \cup V_{n} \textrm{ and } E_{c} = E_{1} \cup \cdots \cup E_{n}
\end{equation}
\end{proposition}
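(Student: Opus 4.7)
The proposition has two equalities to establish, one for vertices and one for arrows, and I would attack them in that order, with the arrow equality being slightly more delicate.

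For the vertex equality $V_c = V_1 \cup \cdots \cup V_n$, the plan is to unfold the definitions. By Definition~\ref{defining_core_networks}, a node $v$ belongs to $V_c$ precisely when $v$ is upstream from $o$ \emph{and} downstream from at least one input node $\iota_m$. But ``downstream from at least one $\iota_m$'' is the existential quantifier ``there exists $m$ such that $v$ is downstream from $\iota_m$'', and this quantifier commutes with the conjunction ``upstream from $o$'' (which is independent of $m$). Thus $v \in V_c$ iff there exists $m$ with ($v$ upstream from $o$ and $v$ downstream from $\iota_m$), i.e. iff $v \in V_m$ for some $m$. This gives the vertex equality essentially for free.

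For the arrow equality, the forward inclusion $E_1 \cup \cdots \cup E_n \subseteq E_c$ is immediate: if an arrow $e$ of $\mathcal{G}$ connects two nodes of $V_m$, those nodes lie in $V_c$ by the vertex equality, so $e$ is an arrow of $\mathcal{G}_c$. The reverse inclusion $E_c \subseteq E_1 \cup \cdots \cup E_n$ is the one requiring an argument. Given an arrow $e \colon u \to v$ in $E_c$, I know $u, v \in V_c$, so by the vertex equality there exist indices $m_1, m_2$ with $u \in V_{m_1}$ and $v \in V_{m_2}$. The key observation is that these indices can be chosen to coincide: since $u$ is downstream from $\iota_{m_1}$ and there is an arrow $u \to v$, concatenating a directed path $\iota_{m_1} \leadsto u$ with the arrow $u \to v$ exhibits $v$ as downstream from $\iota_{m_1}$ as well. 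Since $v$ is also upstream from $o$ (because $v \in V_c$), this places $v$ in $V_{m_1}$. Combined with $u \in V_{m_1}$, both endpoints of $e$ lie in $V_{m_1}$, so $e \in E_{m_1}$.

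The main (and really only) obstacle is the directionality subtlety in the last step: one must propagate the ``downstream from $\iota_{m}$'' property along the arrow in the correct direction, and check that the other condition (``upstream from $o$'') is automatically inherited by $v$ from its membership in $V_c$. Once this is noted, the argument is purely combinatorial and uses nothing beyond transitivity of the downstream/upstream relations and the definitions of $V_c$ and $V_m$. No appeal to admissibility or to the dynamical results of the preceding subsections is required.
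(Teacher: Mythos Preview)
Your proposal is correct and follows the same approach as the paper, which dispatches the result in a single sentence (``The proof is straightforward as every node in $\mathcal{G}_{c}$ is upstream $o$ and downstream an input node $\iota_{m}$''). You have simply filled in the details the paper omits, in particular the arrow-equality argument showing both endpoints of an arrow in $E_c$ can be placed in a common $V_{m}$; this is entirely sound and nothing further is needed.
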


\begin{proof}
The proof is straightforward as every node in $\mathcal{G}_{c}$ is upstream $o$ and downstream an input node $\iota_{m}$.
\end{proof}

\begin{figure}[!ht]
\centering
\includegraphics[width=\linewidth]{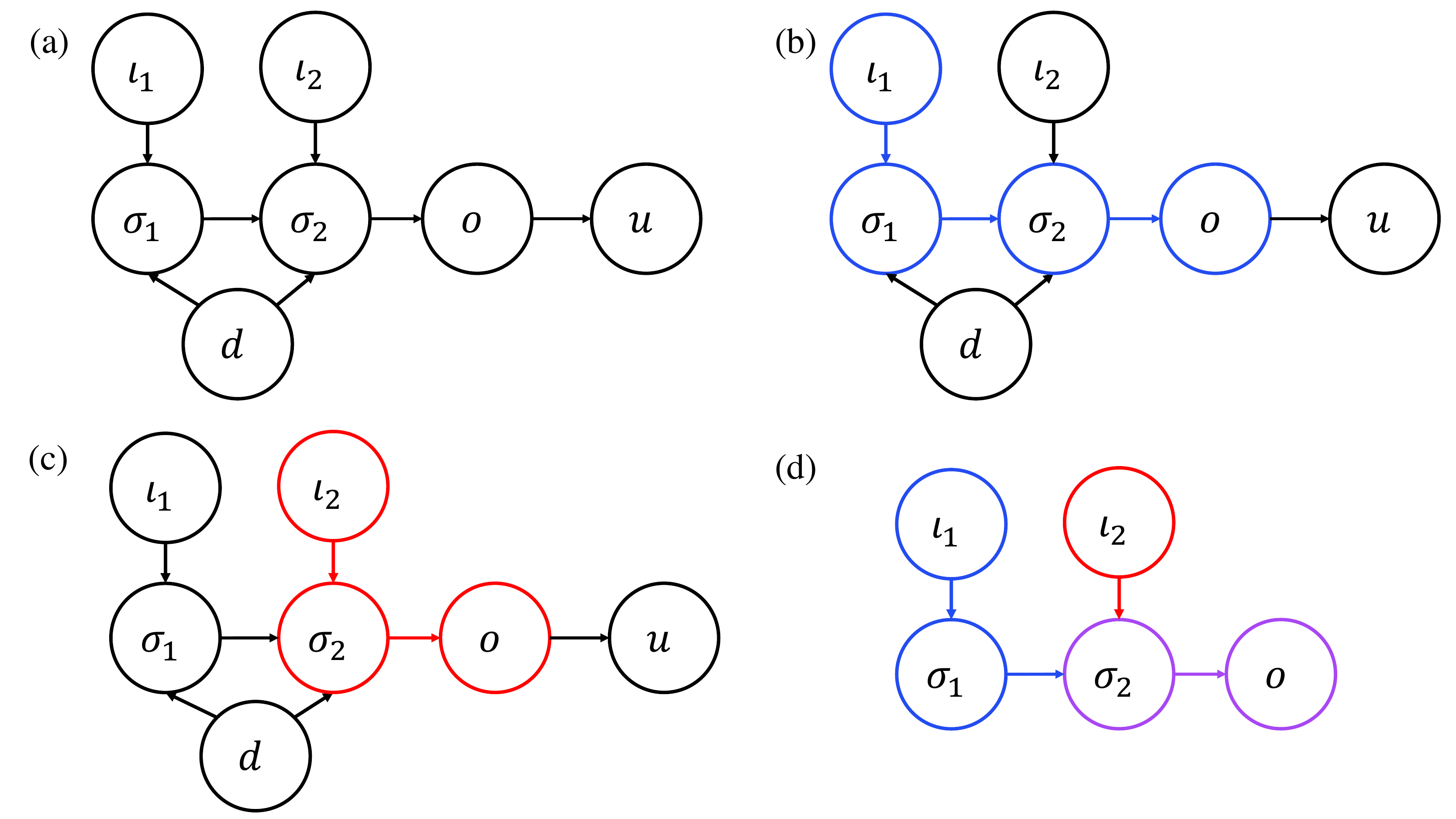}
\renewcommand{\figurename}{Figure}
\caption{(a) An abstract network $\mathcal{G}$ with output node $o$ and input nodes $\iota_{1}$ and $\iota_{2}$. (b) Nodes and arrows that belong to the core subnetwork $\mathcal{G}_{1}$ between $\iota_{1}$ and $o$ are highlighted in blue. (c) Nodes and arrows that belong to the core subnetwork $\mathcal{G}_{2}$ between $\iota_{2}$ and $o$ are highlighted in red. (d) The core network $\mathcal{G}_{c}$ is obtained by the union between $\mathcal{G}_{1}$ and $\mathcal{G}_{2}$. Nodes and arrows of $\mathcal{G}_{c}$ that belong to both $\mathcal{G}_{1}$ and $\mathcal{G}_{2}$ are highlighted in purple. Nodes and arrows which belong to $\mathcal{G}_{1}$ but not to $\mathcal{G}_{2}$ are highlighted in blue and nodes and arrows which belong to $\mathcal{G}_{2}$ but not to $\mathcal{G}_{1}$ are highlighted in red.}
\label{example_networks_multiple_input_nodes}
\end{figure}

For example, consider the abstract network $\mathcal{G}$ in Figure \ref{example_networks_multiple_input_nodes}(a). In this example, the subnetwork $\mathcal{G}_{1}$ composed by nodes downstream $\iota_{1}$ and upstream $o$ is
\begin{equation}
    \iota_{1} \rightarrow \sigma_{1} \rightarrow \sigma_{2} \rightarrow o
\end{equation}
On the other hand, the subnetwork $\mathcal{G}_{2}$ is
\begin{equation}
    \iota_{2} \rightarrow \sigma_{2} \rightarrow o
\end{equation}
By Proposition \ref{core_subnetwork_as_union_individual_ core_networks}, we conclude that the corresponding core network is the one shown in Figure \ref{example_networks_multiple_input_nodes}(d).

\ignore{
\begin{figure}[!ht]
\centering
\includegraphics[scale = 0.35,trim=0cm 2.5cm 0cm 2.5cm,clip=true]%
{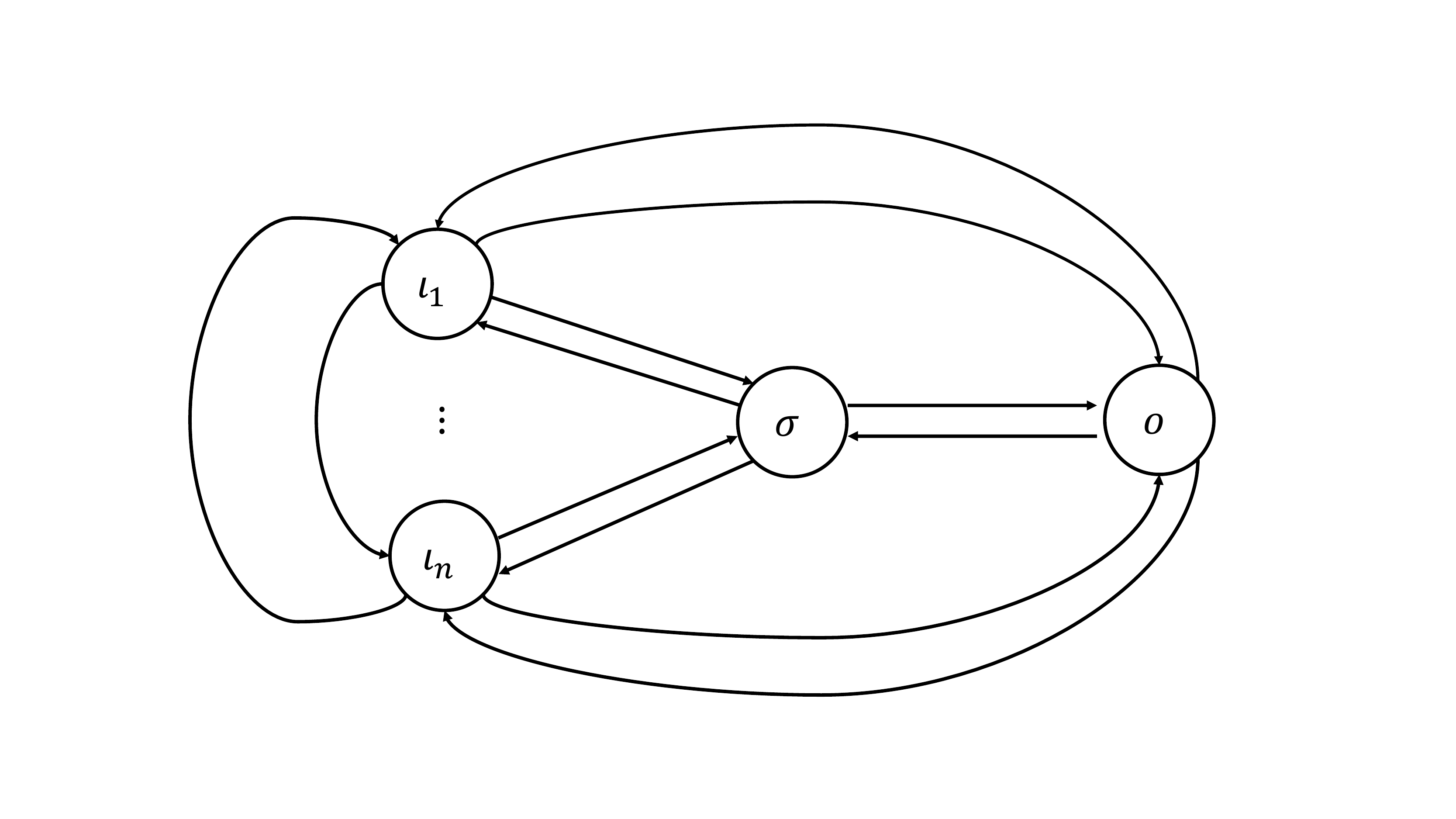}
\renewcommand{\figurename}{Figure}
\caption{A general core network with input nodes $\iota_{1}, \iota_{2}, \cdots, \iota_{n}$, output node $o$ and regulatory nodes $\sigma$.}
\label{abstract_core_network_multiple_input_nodes}
\end{figure}
}

\subsection{Generalized Homeostasis Matrix}

Consider a core network $\mathcal{G}$ with input nodes $\iota_{1}, \iota_{2}, \cdots, \iota_{n}$, output node $o$ and regulatory nodes $\sigma$ which are upstream $o$ and downstream at least one of the input nodes. The generalized homeostasis matrix of $\mathcal{G}$ is given by
\begin{equation} \label{homeostasis_matrix_core_network_multiple_inputs}
    \langle H \rangle = \begin{pmatrix} f_{\iota_{1}, x_{\iota_{1}}} & \cdots & f_{\iota_{1}, x_{\iota_{n}}} & f_{\iota_{1}, x_{\sigma}} & - f_{\iota_{1}, \mathcal{I}} \\
    \vdots & \ddots & \vdots & \vdots & \vdots \\
    f_{\iota_{n}, x_{\iota_{1}}} & \cdots & f_{\iota_{n}, x_{\iota_{n}}} & f_{\iota_{n}, x_{\sigma}} & - f_{\iota_{n}, \mathcal{I}} \\
    f_{\sigma, x_{\iota_{1}}} & \cdots & f_{\sigma, x_{\iota_{n}}} & f_{\sigma, x_{\sigma}} & 0 \\
    f_{o, x_{\iota_{1}}} & \cdots & f_{o, x_{\iota_{n}}} & f_{o, x_{\sigma}} & 0
    \end{pmatrix}
\end{equation}
Expanding the determinant $\det\!\big(\langle H \rangle\big)$ with respect to the last column gives
\begin{equation} \label{homeostasis_matrix_core_network_multiple_inputs_as_a_sum}
    \det\!\big(\langle H \rangle\big) = \sum_{m = 1}^{n} \pm f_{\iota_{m}, \mathcal{I}}\det(H_{\iota_{m}})
\end{equation}
where, according to Definition \ref{definition_text_parcels_homeostasis_matrix},
\begin{equation} \label{definition_parcels_homeostasis_matrix}
    H_{\iota_{m}} = \begin{pmatrix} f_{\iota_{1}, x_{\iota_{1}}} & \cdots & f_{\iota_{1}, x_{\iota_{n}}} & f_{\iota_{1}, x_{\sigma}}\\
    \vdots & \ddots & \vdots & \vdots \\ f_{\iota_{m-1}, x_{\iota_{1}}} & \cdots & f_{\iota_{m-1}, x_{\iota_{n}}} & f_{\iota_{m-1}, x_{\sigma}} \\ f_{\iota_{m+1}, x_{\iota_{1}}} & \cdots & f_{\iota_{m+1}, x_{\iota_{n}}} & f_{\iota_{m+1}, x_{\sigma}} \\  \vdots & \ddots & \vdots & \vdots \\
    f_{\iota_{n}, x_{\iota_{1}}} & \cdots & f_{\iota_{n}, x_{\iota_{n}}} & f_{\iota_{n}, x_{\sigma}} \\
    f_{\sigma, x_{\iota_{1}}} & \cdots & f_{\sigma, x_{\iota_{n}}} & f_{\sigma, x_{\sigma}} \\
    f_{o, x_{\iota_{1}}} & \cdots & f_{o, x_{\iota_{n}}} & f_{o, x_{\sigma}} \end{pmatrix}
\end{equation}

\begin{remark} \rm
Note that when the network $\mathcal{G}$ has only one input node $\iota_1=\iota$, \eqref{definition_parcels_homeostasis_matrix} is exactly the homeostasis matrix defined in Wang \etal \cite{wang20}.
\end{remark}

Let $\mathcal{G}_{m}$ be the subnetwork of $\mathcal{G}$ consisting of $\iota_m$, $o$ and the nodes downstream $\iota_m$ and upstream $o$, i.e., the core subnetwork between $\iota_{m}$ and $o$. 
Denote by $H_{\iota_{m}}^{c}$ the homeostasis matrix of $\mathcal{G}_{m}$, considered as an input-output network, i.e.,
\begin{equation} \label{definition_core_parcels_homeostasis_matrix_reordering}
    \det(H_{\iota_{m}}^{c}) = \begin{vmatrix} f_{\rho, x_{\iota_{m}}} & f_{\rho, x_{\sigma}} \\
    f_{o, x_{\iota_{m}}} & f_{o, x_{\rho}} \end{vmatrix}
\end{equation}
By the definition of $\mathcal{G}$, there may be nodes in $\mathcal{G}$ that are not downstream $\iota_{m}$ (but downstream other input nodes).

The \emph{vestigial subnetwork (with respect to $\iota_m$)} $\mathcal{D}_{m}$ of $\mathcal{G}$ consists of nodes $d_{m}$ that are not downstream $\iota_{m}$ and the arrows that connect these nodes, that is, $\mathcal{D}_{m}$ = $\mathcal{G} \setminus \mathcal{G}_{m}$. As nodes in $\mathcal{D}_{m}$ must not be downstream from $\iota_{m}$ and, consequently not downstream from any node $\rho$, we have, by \eqref{definition_parcels_homeostasis_matrix}, after a permutation of rows and columns that
\begin{equation} \label{definition_parcels_homeostasis_matrix_reordering}
    \det(H_{\iota_{m}}) = \pm \begin{vmatrix} f_{\rho, x_{\iota_{m}}} & f_{\rho, x_{\rho}} & f_{\rho, x_{\mathcal{D}_{m}}} \\
    f_{o, x_{\iota_{m}}} & f_{o, x_{\rho}} & f_{o, x_{\mathcal{D}_{m}}} \\
    0 & 0 & f_{\mathcal{D}_{m}, x_{\mathcal{D}_{m}}}
     \end{vmatrix}
\end{equation}
The Jacobian $J_{\mathcal{D}_{m}}$ of the subnetwork $\mathcal{D}_{m}$ is defined as $\, J_{\mathcal{D}_{m}} \equiv \left[f_{\mathcal{D}_{m}, x_{\mathcal{D}_{m}}}\right]$. Therefore, by \eqref{definition_core_parcels_homeostasis_matrix_reordering} and \eqref{definition_parcels_homeostasis_matrix_reordering} we have
\begin{equation} \label{factorization_det_H_m}
    \det(H_{\iota_m}) = \pm f_{\iota_{m}, \mathcal{I}} \det (J_{\mathcal{D}_{m}}) \det(H_{\iota_m}^{c})
\end{equation}

In order to simplify notation, for $m=1,\ldots,n$ such that $\mathcal{G}$ and $\mathcal{G}_{m}$ have the same nodes, i.e., $\mathcal{G} = \mathcal{G}_{m} \Rightarrow \mathcal{D}_{m} = \mathcal{G} \setminus \mathcal{G}_{m} = \varnothing$, define $J_{\mathcal{D}_{m}} \equiv [1]$. With this convention, it follows from Eqs. \eqref{homeostasis_matrix_core_network_multiple_inputs_as_a_sum} and \eqref{factorization_det_H_m} that
\begin{equation} \label{simplification_of_homeostasis_matrix_core_network_multiple_inputs_as_a_sum}
    \det\!\big(\langle H \rangle\big) = \sum_{m = 1}^{n} \pm f_{\iota_{m}, \mathcal{I}} \det (J_{\mathcal{D}_{m}}) \det(H_{\iota_m}^{c})
\end{equation}

\subsection{Structure of the Homeostasis Blocks}

Before proceeding with the topological characterization of homeostasis types we need to show that, for every $m=1,\ldots,n$, $\det(J_{\mathcal{D}_{m}})$ does not share common factors with $\det\!\big(\langle H \rangle\big)$. We use Frobenius-K\"onig theory to factorize $\det(J_{\mathcal{D}_{m}})$ and then we show that none of these factors of $\det(J_{\mathcal{D}_{m}})$ is a factor of $\det\!\big(\langle H \rangle\big)$.

By Frobenius-K\"onig theory, for every $m=1,\ldots,n$, there are permutation matrices $P_{m}$ and $Q_{m}$ such that
\begin{equation}
    P_{m}J_{\mathcal{D}_{m}}Q_{m} = \begin{pmatrix} D_{m, 1} & * & * & \cdots & * \\
    0 & D_{m, 2} & * & \cdots & * \\
    0 & 0 & D_{m, 3} & \cdots & * \\
    \vdots & \vdots & \vdots & \ddots & \vdots \\
    0 & 0 & 0 & \cdots & D_{m, n_{m}}
    \end{pmatrix}
\end{equation}
where $\det D_{m, j}$ is an irreducible polynomial, for every $1 \leq j \leq n_{m}$. We study the number of self-couplings in each matrix $D_{m, j}$. For this purpose, consider that each matrix $D_{m, j}$ is a square matrix of order $k_{j}$ and suppose that $J_{\mathcal{D}_{m}}$ is a square matrix of order $N$, i.e., that the subnetwork $\mathcal{D}_{m}$ has $N$ nodes. It is straightforward that
\begin{equation}
    N = \sum_{j = 1}^{n_{m}} k_{j}
\end{equation}

\begin{lemma} \label{number_self_couplings}
Each matrix $D_{m, j}$ has exactly $k_{j}$ self-couplings.
\end{lemma}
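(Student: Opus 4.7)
The plan is to exploit the fact that every node in the network carries an implicit self-loop. Since each $d\in\mathcal{D}_m$ is by convention downstream from itself, the admissibility condition \emph{permits} the diagonal entry $f_{d,x_d}$ of $J_{\mathcal{D}_m}$ to be nonzero, and generically each of these $N$ diagonal entries \emph{is} nonzero. Consequently the main diagonal of $J_{\mathcal{D}_m}$ supplies a transversal: a choice of $N$ nonzero entries, one per row and one per column. This is the fact I would use to pin down the shape of the Frobenius--K\"onig decomposition.

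Next, I would argue that, because of this nonzero diagonal transversal, the permutations $P_m,Q_m$ bringing $J_{\mathcal{D}_m}$ to block upper triangular form can be taken to satisfy $P_m=Q_m^{-1}$, and each diagonal block $D_{m,j}$ is a principal submatrix of $J_{\mathcal{D}_m}$ indexed by a subset $S_j\subseteq\mathcal{D}_m$ with $|S_j|=k_j$. Indeed, if $P_m$ and $Q_m^{-1}$ induced different partitions of the indices, then some diagonal entry $f_{d,x_d}$ would be displaced into one of the strict upper-triangular zero blocks of $P_mJ_{\mathcal{D}_m}Q_m$ — impossible, since this entry is nonzero. Equivalently, the nonzero diagonal provides a perfect matching in the bipartite graph of $J_{\mathcal{D}_m}$ and aligns the Frobenius--K\"onig decomposition with the decomposition of the digraph of $\mathcal{D}_m$ into strongly connected components; full indecomposability of $D_{m,j}$ then corresponds exactly to strong connectivity of $S_j$.

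With this established, the conclusion is immediate: the $k_j$ diagonal entries of $D_{m,j}$ are precisely the self-couplings $f_{d,x_d}$ for $d\in S_j$, while every off-diagonal entry of $D_{m,j}$ is of the form $f_{d,x_{d'}}$ with $d\neq d'$ and is therefore not a self-coupling. Hence $D_{m,j}$ contains exactly $k_j$ self-couplings, as claimed. The main obstacle is the middle step — justifying $P_m=Q_m^{-1}$ without simply asserting it; the argument is classical (perfect matching plus strong connectivity characterize fully indecomposable blocks of a matrix with nonzero diagonal) but must be written carefully, because the general Frobenius--K\"onig normal form used earlier on $\langle H\rangle$ in \eqref{weighted_homeostasis_normal_form} does \emph{not} require simultaneous row/column permutations and the present lemma relies precisely on the extra structure available for $J_{\mathcal{D}_m}$.
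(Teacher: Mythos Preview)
Your argument is correct and rests on the same key observation as the paper: the main diagonal of $J_{\mathcal{D}_m}$ is a nonzero transversal. The routes diverge slightly in presentation. The paper argues via determinants: the product $f_{\rho_1,x_{\rho_1}}\cdots f_{\rho_N,x_{\rho_N}}$ is a nonzero summand of $\det J_{\mathcal{D}_m}$, and by Frobenius--K\"onig (citing \cite[Lem~4.6]{wang20}) this summand must split as a product of one summand from each $\det D_{m,j}$; hence the self-couplings appearing in each block occupy distinct rows and columns, and pigeonhole gives exactly $k_j$ of them. Only \emph{after} the lemma does the paper remark that one may then take the blocks to be principal submatrices. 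You reverse the order: you argue directly that the nonzero diagonal forces $P_m=Q_m^{-1}$ (equivalently, that the fully indecomposable blocks are the strongly connected components of the digraph of $\mathcal{D}_m$), whence each $D_{m,j}$ is principal and the count is immediate. Your structural viewpoint is arguably cleaner and more self-contained, since it avoids the external citation; the paper's determinant-and-pigeonhole version is shorter on the page but outsources the crux. Either way the content is the same classical fact about matrices with nonzero diagonal.
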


\begin{proof}
Suppose that $\mathcal{D}_{m}$ is composed by nodes $\rho_{1}, \rho_{2}, \ldots,\rho_{N}$. Therefore, as every node is self-coupled, then one of the summands of $\det J_{\mathcal{D}_{m}}$ is
\begin{equation} \label{summand_with_all_self_couplings}
    f_{\rho_{1}, x_{\rho_{1}}} \cdot f_{\rho_{2}, x_{\rho_{2}}} \cdot \, \cdots \, \cdot f_{\rho_{N}, x_{\rho_{N}}}
\end{equation}
As proved in Wang \etal \cite[Lem 4.6]{wang20}, by Frobenius-König theory, \eqref{summand_with_all_self_couplings} is the product of a summand of each of the determinants $\det D_{m, j}$, which means that in all of these matrices two different self-couplings cannot share the same line or column. By the pigeon hole principle, each matrix $D_{m, j}$ has exactly $k_{j}$ self-couplings.
\end{proof}

Note that as the self-couplings must have different rows and columns from each other, we may choose permutation matrices $P_{m}$ and $Q_{m}$ such that the product $P_{m}J_{\mathcal{D}_{m}}Q_{m}$ have all the self-couplings in the main diagonal, i.e, we may assume that for every $j=1,\ldots,n_{m}$, $D_{m, j}$ has the form:
\begin{equation} \label{jacobian_subnetwork_equal_number_self-couplings}
    D_{m, j} = \begin{pmatrix} f_{\tau_{1}, x_{\tau_{1}}} & f_{\tau_{1}, x_{\tau_{2}}} & \cdots & f_{\tau_{1}, x_{\tau_{k_{j}}}} \\
    f_{\tau_{2}, x_{\tau_{1}}} & f_{\tau_{2}, x_{\tau_{2}}} & \cdots & f_{\tau_{2}, x_{\tau_{k_{j}}}} \\
    \vdots & \vdots & \ddots & \vdots \\
    f_{\tau_{k_{j}}, x_{\tau_{1}}} & f_{\tau_{k_{j}}, x_{\tau_{2}}} & \cdots & f_{\tau_{k_{j}}, x_{\tau_{k_{j}}}}
    \end{pmatrix}
\end{equation}
where $\tau_{1}, \tau_{2}, \ldots, \tau_{k_{j}}$ are nodes of $\mathcal{D}_{m}$, i.e., $D_{m, j}$ is the Jacobian of the subnetwork $\mathcal{D}_{m, j}$ of $\mathcal{D}_{m}$ composed by nodes $\tau_{1}, \tau_{2}, \ldots, \tau_{k_{j}}$.

\begin{lemma} \label{aux_lemma_block_lower_triangular}
Let $\mathcal{K}$ be a proper subnetwork of $\mathcal{D}_{m}$. If nodes in $\mathcal{K}$ are not $\mathcal{D}_{m}$-path equivalent to any node in $\mathcal{D}_{m} \setminus \mathcal{K}$, then upon relabeling nodes, $J_{\mathcal{D}_{m}}$ is block lower triangular.
\end{lemma}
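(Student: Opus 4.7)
The proof should proceed by identifying a good ordering of the nodes of $\mathcal{D}_{m}$ via its $\mathcal{D}_{m}$-path components (the strongly connected components of $\mathcal{D}_{m}$ viewed as a directed graph, in the sense of Definition~\ref{definition_path_component}). The first step is to observe that any two nodes in the same $\mathcal{D}_{m}$-path component are by definition $\mathcal{D}_{m}$-path equivalent; the hypothesis therefore forces each $\mathcal{D}_{m}$-path component to lie entirely in $\mathcal{K}$ or entirely in $\mathcal{D}_{m}\setminus\mathcal{K}$, so that $\mathcal{K}$ is a union of $\mathcal{D}_{m}$-path components.

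Next, I would form the condensation of $\mathcal{D}_{m}$ -- the directed acyclic graph obtained by collapsing each $\mathcal{D}_{m}$-path component to a single vertex -- pick any topological ordering of it, and relabel the nodes of $\mathcal{D}_{m}$ consistently: list all nodes of the first path component (in any internal order), then all nodes of the second, and so on. Under this relabeling, a potentially nonzero entry $f_{j,x_{\ell}}$ with row $j$ preceding column $\ell$ would require an arrow $\ell\to j$ running from a later path component to an earlier one, contradicting the topological sort. Hence every such entry vanishes, and $J_{\mathcal{D}_{m}}$ is block lower triangular, its diagonal blocks being the Jacobians of the individual $\mathcal{D}_{m}$-path components. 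Because $\mathcal{K}$ is a union of these path components, $\det(J_{\mathcal{K}})$ appears as a product of certain diagonal block determinants and is, in particular, a factor of $\det(J_{\mathcal{D}_{m}})$.

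\textbf{Main obstacle.} All of the combinatorial content is concentrated in the first step: turning the pointwise path-equivalence hypothesis into the structural statement that no $\mathcal{D}_{m}$-path component is split by the partition $\mathcal{K}\sqcup(\mathcal{D}_{m}\setminus\mathcal{K})$. Once this is in hand, the rest is the standard translation between topological orderings on the condensation DAG and block-triangular normal forms -- a tool already used elsewhere in the paper via Frobenius--K\"onig theory -- and amounts essentially to bookkeeping on row/column permutations.
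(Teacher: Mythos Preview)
Your proposal is correct. The paper does not supply its own argument here but simply defers to Wang et~al.~\cite{wang20}, Lemma~5.3; your route---show that the hypothesis forces $\mathcal{K}$ to be a union of $\mathcal{D}_m$-path components, then topologically sort the condensation to obtain a block lower triangular form with the path-component Jacobians on the diagonal---is the standard proof of this kind of statement and is almost certainly what that reference does. Your additional remark that $\det(J_{\mathcal{K}})$ then equals the product of the corresponding diagonal block determinants (hence divides $\det(J_{\mathcal{D}_m})$) is exactly what the downstream application in Theorem~\ref{characterization_dmj} needs.
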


\begin{proof}
The proof is exactly the same as in Wang \etal \cite[Lem 5.3]{wang20}.
\end{proof}

The following theorem fully characterizes the subnetworks $\mathcal{D}_{m, j}$.

\begin{theorem} \label{characterization_dmj}
Let $\mathcal{D}_{m, j}$ be the subnetwork of $\mathcal{D}_{m}$ associated to the matrix $D_{m, j}$. Then:
\begin{enumerate}[(a)]
    \item Nodes in $\mathcal{D}_{m, j}$ are not $\mathcal{D}_{m}$-path equivalent to any node in $\mathcal{D}_{m} \setminus \mathcal{D}_{m, j}$
    \item $\mathcal{D}_{m, j}$ is a path component of $\mathcal{D}_{m}$
\end{enumerate}
\end{theorem}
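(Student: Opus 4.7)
The plan is to prove (a) directly from the block lower-triangular Frobenius--K\"onig normal form of $J_{\mathcal{D}_m}$, and to derive (b) by combining (a) with Lemma~\ref{aux_lemma_block_lower_triangular} applied inside $\mathcal{D}_{m,j}$. The key observation, used repeatedly, is that a nonzero entry $f_{i,x_k}$ of $J_{\mathcal{D}_m}$ corresponds exactly to an arrow $k\to i$ in $\mathcal{D}_m$, so the combinatorial notion of path in $\mathcal{D}_m$ is dictated by the sparsity pattern of $J_{\mathcal{D}_m}$.

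For part (a), I would fix the permutations $P_m,Q_m$ so that $P_m J_{\mathcal{D}_m}Q_m$ is block lower triangular with diagonal blocks $D_{m,1},\ldots,D_{m,n_m}$, where (as explained after Lemma~\ref{number_self_couplings}) the rows and columns of $D_{m,j}$ are both indexed by the node set of $\mathcal{D}_{m,j}$. In this form any nonzero off-diagonal entry $f_{i,x_k}$ with $i\in\mathcal{D}_{m,j}$ and $k\in\mathcal{D}_{m,\ell}$ forces $j\geq\ell$; equivalently, every arrow of $\mathcal{D}_m$ either stays inside a single block or goes from a lower-indexed block to a higher-indexed one. Consequently, any directed path in $\mathcal{D}_m$ only visits blocks of non-decreasing index. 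Hence if $\tau\in\mathcal{D}_{m,j}$ and $\tau'\in\mathcal{D}_{m,k}$ with $k\neq j$, at least one of the two paths required for $\mathcal{D}_m$-path equivalence cannot exist, and (a) follows.

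For part (b) I would argue by contradiction. First, I would note the useful consequence of (a): any directed path in $\mathcal{D}_m$ starting and ending in $\mathcal{D}_{m,j}$ must stay inside $\mathcal{D}_{m,j}$, because once it exits to a strictly higher-indexed block it cannot return. Therefore $\mathcal{D}_m$-path equivalence restricted to $\mathcal{D}_{m,j}$ coincides with $\mathcal{D}_{m,j}$-path equivalence, and $\mathcal{D}_{m,j}$ is a disjoint union of its own path components. Suppose $\mathcal{D}_{m,j}$ splits into two or more path components, and pick one of them, say $\mathcal{P}\subsetneq\mathcal{D}_{m,j}$. By (a), nodes of $\mathcal{P}$ are not $\mathcal{D}_m$-path equivalent to anything in $\mathcal{D}_m\setminus\mathcal{D}_{m,j}$, and by construction they are not path equivalent to anything in $\mathcal{D}_{m,j}\setminus\mathcal{P}$; so $\mathcal{P}$ is a proper subnetwork of $\mathcal{D}_m$ whose nodes are not $\mathcal{D}_m$-path equivalent to any node outside $\mathcal{P}$. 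Applying Lemma~\ref{aux_lemma_block_lower_triangular} to $\mathcal{K}=\mathcal{P}$, one obtains a further block lower triangular refinement of $J_{\mathcal{D}_m}$ in which $\mathcal{P}$ is its own block; since $\mathcal{P}\subsetneq\mathcal{D}_{m,j}$, this refinement strictly splits the block $D_{m,j}$, contradicting the fact that $\det(D_{m,j})$ is irreducible (equivalently, that $D_{m,j}$ is fully indecomposable). Hence $\mathcal{D}_{m,j}$ is a single path component of $\mathcal{D}_m$.

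The only delicate step is the remark used at the start of (b)---that paths in $\mathcal{D}_m$ between two nodes of $\mathcal{D}_{m,j}$ cannot use shortcuts through other blocks---and this is precisely the content of (a) combined with the block triangular structure. Everything else is bookkeeping: the translation between nonzero partial derivatives and arrows, and the invocation of Lemma~\ref{aux_lemma_block_lower_triangular} together with the irreducibility of $\det(D_{m,j})$ guaranteed by the Frobenius--K\"onig decomposition.
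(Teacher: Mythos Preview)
Your proof is correct and, for part (a), takes a genuinely different and more elementary route than the paper. The paper argues (a) by contradiction: assuming some node of $\mathcal{D}_{m,j}$ is $\mathcal{D}_m$-path equivalent to a node outside, it forms the enlarged set $\mathcal{D}_{m,j}\cup\mathcal{B}$, applies Lemma~\ref{aux_lemma_block_lower_triangular} to isolate $J_{\mathcal{D}_{m,j}\cup\mathcal{B}}$ as a diagonal block, and then invokes \cite[Thm~5.4]{wang20} as a black box to conclude that $\det(D_{m,j})$ cannot divide $\det(J_{\mathcal{D}_{m,j}\cup\mathcal{B}})$. Your direct argument---once the normal form has all self-couplings on the diagonal, rows and columns are indexed by the same nodes, so the block-triangular pattern forces every inter-block arrow to point in a single direction---bypasses that external citation entirely and makes the graph-theoretic content transparent. (A small cosmetic point: the paper writes the normal form as block \emph{upper} triangular, so arrows run from higher-indexed blocks to lower-indexed ones; your lower-triangular convention just reverses the ordering and does not affect the argument.)

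For part (b) the two proofs carry the same idea; the paper's version is marginally more direct because it applies the block-triangular argument to $J_{\mathcal{D}_{m,j}}=D_{m,j}$ itself rather than to the full $J_{\mathcal{D}_m}$, which yields the factorisation of $\det(D_{m,j})$ immediately. Your version---invoking Lemma~\ref{aux_lemma_block_lower_triangular} on $J_{\mathcal{D}_m}$ and then asserting that the resulting form ``refines'' the original block structure---is correct, but that refinement step would benefit from one more sentence of justification: since each original block $\mathcal{D}_{m,j'}$ (by your part (a)) and also $\mathcal{P}$ are unions of $\mathcal{D}_m$-path-equivalence classes, a topological ordering of those classes gives a common block-triangular refinement, and this is what actually splits $D_{m,j}$.
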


\begin{proof}
$(a)$ Suppose that there are nodes in $\mathcal{D}_{m, j}$ which are $\mathcal{D}_{m}$-path equivalent to nodes $\mathcal{D}_{m} \setminus \mathcal{D}_{m, j}$. Let $\mathcal{B} \subset \mathcal{D}_{m} \setminus \mathcal{D}_{m, j}$ be the non-empty set of nodes that are $\mathcal{D}_{m}$-path equivalent to nodes in $\mathcal{D}_{m, j}$. Notice that $\mathcal{D}_{m}$-path equivalence is an equivalence relation, and therefore, nodes in $\mathcal{D}_{m, j}$ and $\mathcal{B}$ are not $\mathcal{D}_{m}$-path equivalent to nodes in $(\mathcal{D}_{m} \setminus \mathcal{D}_{m, j}) \setminus \mathcal{B} = \mathcal{D}_{m} \setminus (\mathcal{D}_{m, j} \cup \mathcal{B})$. We now have two possibilities. If $\mathcal{D}_{m} \setminus (\mathcal{D}_{m, j} \cup \mathcal{B}) = \emptyset$, then $\mathcal{D}_{m} = \mathcal{D}_{m, j} \cup \mathcal{B}$. However, $\det(J_{\mathcal{D}_{m, j}}) = \det(D_{m, j})$ is not a factor of $\det(J_{\mathcal{D}_{m, j} \cup \mathcal{B}})$, as proved by Wang \etal \cite[Thm 5.4]{wang20}, which is a contradiction. On the other hand, if $\mathcal{D}_{m} \setminus (\mathcal{D}_{m, j} \cup \mathcal{B}) \neq \emptyset$, then, by Lemma \ref{aux_lemma_block_lower_triangular}, $J_{\mathcal{D}_{m}}$ is a block lower triangular matrix of the form:
\begin{equation}
    J_{\mathcal{D}_{m}} = \begin{pmatrix} U & 0 & 0 \\
    * & J_{\mathcal{D}_{m, j} \cup \mathcal{B}} & 0 \\
    * & * & V
    \end{pmatrix}
\end{equation}
where
\begin{equation}
    J_{\mathcal{D}_{m, j} \cup \mathcal{B}} = \begin{pmatrix} f_{\mathcal{D}_{m, j}, x_{\mathcal{D}_{m, j}}} & f_{\mathcal{D}_{m, j}, x_{\mathcal{B}}} \\ f_{\mathcal{B}, x_{\mathcal{D}_{m, j}}} & f_{\mathcal{B}, x_{\mathcal{B}}}
    \end{pmatrix}
\end{equation}
Again, $\det(J_{\mathcal{D}_{m, j}}) = \det(D_{m, j})$ is not a factor of $\det(J_{\mathcal{D}_{m, j} \cup \mathcal{B}})$, which is a contradiction. Therefore, we conclude that nodes in $\mathcal{D}_{m, j}$ are not $\mathcal{D}_{m}$-path equivalent to nodes in $\mathcal{D}_{m} \setminus \mathcal{D}_{m, j}$. \\
$(b)$ It is enough to show that $\mathcal{D}_{m, j}$ is path connected, as, by item $(a)$ of this lemma, if a $\mathcal{D}_{m}$-path component contains nodes in $\mathcal{D}_{m} \setminus \mathcal{D}_{m, j}$, then this path component does not contain any node in $\mathcal{D}_{m, j}$. Suppose that $\mathcal{D}_{m, j}$ is not path connected. In that case, we may split in two subnetworks $\mathcal{A}$ and $\mathcal{B}$ such that
\begin{equation}
    J_{\mathcal{D}_{m, j}} = \begin{pmatrix} J_{\mathcal{A}} & 0 \\
    * & J_{\mathcal{B}}
    \end{pmatrix}
\end{equation}
This is however a contradiction as by hypothesis $\det D_{m, j}$ is irreducible. Therefore $\mathcal{D}_{m, j}$ is a $\mathcal{D}_{m}$-path component.
\end{proof}

\begin{proposition} \label{every_block_D_m_j_has_simple_node}
For every $j = 1, \ldots, n_{m}$, $\mathcal{D}_{m, j}$ contains an $\iota_{k}$-simple node, for some $k = 1, 2, \ldots, n$, $k \neq m$.
\end{proposition}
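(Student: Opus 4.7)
The plan is to prove the statement by an extremal argument that directly exhibits an $\iota_k$-simple node inside $\mathcal{D}_{m,j}$. The starting observation is that since $\mathcal{D}_{m,j}\subseteq \mathcal{D}_m=\mathcal{G}\setminus\mathcal{G}_m$ and $\mathcal{G}$ is a core network, every node $\rho\in\mathcal{D}_{m,j}$ is upstream from $o$ and downstream from at least one input node, but not downstream from $\iota_m$. Hence there exists $k\neq m$ with $\rho$ downstream from $\iota_k$, and thus walks from $\iota_k$ to $o$ passing through $\rho$ exist in $\mathcal{G}$.

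Next, consider the collection of triples $(k,\rho,W)$ with $k\neq m$, $\rho\in\mathcal{D}_{m,j}$, and $W=P_1\cdot P_2$ where $P_1$ is a simple directed path $\iota_k\to\rho$ and $P_2$ is a simple directed path $\rho\to o$. Pick a triple $(k^*,\rho^*,W^*)$ minimizing the total length $|W^*|=|P_1^*|+|P_2^*|$. The core claim is that $W^*$ is itself a simple path; this would immediately imply that $\rho^*\in\mathcal{D}_{m,j}$ is $\iota_{k^*}$-simple, proving the proposition.

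To establish simplicity of $W^*$ I argue by contradiction. If some node $\nu\neq\rho^*$ repeats in $W^*$, then since $P_1^*$ and $P_2^*$ are individually simple, $\nu$ must occur exactly once in $P_1^*$ (strictly before $\rho^*$) and once in $P_2^*$ (strictly after $\rho^*$). The concatenation of the prefix of $P_1^*$ ending at $\nu$ with the suffix of $P_2^*$ beginning at $\nu$ is then a strictly shorter walk from $\iota_{k^*}$ to $o$ passing through $\nu$. The key step, and the main obstacle, is verifying that $\nu\in\mathcal{D}_{m,j}$, so that this shorter walk contradicts the minimality of $W^*$. For this, observe that the segment of $W^*$ between the two occurrences of $\nu$ is a directed cycle in $\mathcal{G}$ containing $\rho^*$ as an intermediate vertex; no vertex $\xi$ on this cycle can lie in $\mathcal{G}_m$, because otherwise $\rho^*$ would be downstream of $\xi$ along the cycle and hence downstream of $\iota_m$, contradicting $\rho^*\in\mathcal{D}_m$. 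Therefore the cycle lies entirely in $\mathcal{D}_m$, making $\nu$ and $\rho^*$ $\mathcal{D}_m$-path equivalent; by Theorem \ref{characterization_dmj} this forces $\nu\in\mathcal{D}_{m,j}$, yielding the desired contradiction and completing the argument.
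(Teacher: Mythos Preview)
Your argument is correct. Both your proof and the paper's rely on the same two ingredients: (i) any directed cycle through a node of $\mathcal{D}_m$ must lie entirely in $\mathcal{D}_m$ (since otherwise that node would be downstream from $\iota_m$), and (ii) Theorem~\ref{characterization_dmj}, which says $\mathcal{D}_{m,j}$ is a $\mathcal{D}_m$-path component. The difference is in packaging. The paper fixes a single node $\rho\in\mathcal{D}_{m,j}$ and a walk $\iota_k\to\cdots\to\rho\to\cdots\to o$, then does a case analysis: if $\rho$ is already $\iota_k$-simple we are done, and if not one locates the \emph{first} repeated node $\tau$ (or $\iota_k$ itself) along the walk, observes that $\tau$ is $\iota_k$-simple, and uses the cycle between $\tau$ and $\rho$ to conclude $\tau\in\mathcal{D}_{m,j}$. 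Your extremal argument replaces this case analysis by minimizing the total walk length over all choices of $k$, $\rho$, and decompositions $W=P_1\cdot P_2$, and then shows any repetition in the minimizer would allow a strictly shorter admissible triple. This is a cleaner and more uniform formulation---no separate treatment of ``$\rho$ is simple'' versus ``$\rho$ is appendage''---at the modest cost of being slightly less constructive about which node ends up being exhibited.
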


\begin{proof}
Consider a subnetwork $\mathcal{D}_{m, j}$. Nodes in $\mathcal{D}_{m, j}$ are not downstream from $\iota_{m}$, and therefore, as $\mathcal{G}$ is a core network, for every node $\rho$ in $\mathcal{D}_{m, j}$, there is $k \in\{1,\ldots,n\}$, $k \neq m$, such that $\rho$ is downstream from $\iota_{k}$ and $\iota_{k}$ is not downstream from $\iota_{m}$. If $\rho$ is an $\iota_{k}$-simple node, the corollary is proved. On the other hand, suppose that $\rho$ is an $\iota_{k}$-appendage node. By definition of $\iota_{k}$-appendage node, there is a path $\iota_{k} \rightarrow \sigma_{1} \rightarrow \cdots \rightarrow \sigma_{p} \rightarrow \rho \rightarrow \sigma_{p+1} \rightarrow \cdots \rightarrow o$ such that at least one node $\tau$ in this path appears before and after $\rho$. Moreover, nodes in the path between $\tau$ and $\rho$ (and vice-versa) must not be downstream from $\iota_{m}$, and therefore, they are paths of $\mathcal{D}_{m}$. If $\iota_{k}$ satisfies this condition, then $\rho$ is $\mathcal{D}_{m}$-path equivalent to $\iota_{k}$, which is an $\iota_{k}$-simple node. If $\iota_{k}$ does not satisfy, consider the smallest $r$ such that $\sigma_{r}$ satisfies the condition. Then $\sigma_{r}$ is an $\iota_{k}$-simple node and $\rho$ is $\mathcal{D}_{m}$-path equivalent to $\sigma_{r}$. In both cases, there is an $\iota_{k}$-simple node which belongs to the same $\mathcal{D}_{m}$-path equivalence class of $\rho$. By theorem $\ref{characterization_dmj}$, as $\mathcal{D}_{m, j}$ is a path component of $\mathcal{D}_{m}$, this $\iota_{k}$-simple node belongs to $\mathcal{D}_{m, j}$.
\end{proof}

\begin{proposition} \label{det_J_Dm_does_not_share_common_factors_det_H}
For every $m = 1, \ldots, n$ such that $\mathcal{D}_{m}$ is not empty, factors of $\det(J_{\mathcal{D}_{m}})$ are not factors of $\det\!\big(\langle H \rangle\big)$.
\end{proposition}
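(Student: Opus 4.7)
The plan is to factor $\det(J_{\mathcal{D}_m})$ via Frobenius--K\"onig into irreducible blocks $\det(D_{m,1}), \ldots, \det(D_{m,n_m})$ and show that each such block fails to divide $\det\!\big(\langle H \rangle\big)$. Starting from the expansion \eqref{simplification_of_homeostasis_matrix_core_network_multiple_inputs_as_a_sum},
\begin{equation*}
\det\!\big(\langle H \rangle\big) = \sum_{k=1}^{n} \pm\, f_{\iota_{k}, \mathcal{I}} \det(J_{\mathcal{D}_{k}}) \det(H^{c}_{\iota_{k}}),
\end{equation*}
I would treat $f_{\iota_{1}, \mathcal{I}}, \ldots, f_{\iota_{n}, \mathcal{I}}$ as algebraically independent variables (consistent with the genericity framework of the paper). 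Since $\det(D_{m,j})$ only involves partial derivatives of node equations with respect to node variables, divisibility $\det(D_{m,j}) \mid \det\!\big(\langle H \rangle\big)$ forces $\det(D_{m,j}) \mid \det(J_{\mathcal{D}_{k}}) \det(H^{c}_{\iota_{k}})$ for every $k$. It therefore suffices to exhibit one $k \neq m$ for which this divisibility fails.

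Fix $j$ and use Proposition \ref{every_block_D_m_j_has_simple_node} to pick $k \neq m$ and an $\iota_{k}$-simple node $\tau \in \mathcal{D}_{m,j}$. Since $\det(D_{m,j})$ is irreducible, it must divide one of the two factors $\det(J_{\mathcal{D}_{k}})$ or $\det(H^{c}_{\iota_{k}})$. The first case is straightforward: since $\tau$ lies on an $\iota_{k}o$-simple path, $\tau \in \mathcal{G}_{k}$, hence $\tau \notin \mathcal{D}_{k}$; by Lemma \ref{number_self_couplings} the self-coupling $f_{\tau, x_{\tau}}$ is a diagonal entry of $D_{m,j}$ (and hence a variable occurring in $\det(D_{m,j})$), but this variable is absent from the entries of $J_{\mathcal{D}_{k}}$, which only involve nodes in $\mathcal{D}_{k}$.

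The hard part is the second case, $\det(D_{m,j}) \mid \det(H^{c}_{\iota_{k}})$. I would resolve it by appealing to the single-input-node classification of Wang \etal~\cite{wang20} applied to the core subnetwork $\mathcal{G}_{k}$, which is a bona fide single input-output network. That classification asserts that every irreducible factor of $\det(H^{c}_{\iota_{k}})$ is either an appendage block or a structural block of $\mathcal{G}_{k}$, and that these are distinguished by their self-coupling count. If $\det(D_{m,j})$ were an appendage block of $\mathcal{G}_{k}$, every node of $\mathcal{D}_{m,j}$ would be $\iota_{k}$-appendage in $\mathcal{G}_{k}$, contradicting the presence of the $\iota_{k}$-simple node $\tau$. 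If $\det(D_{m,j})$ were a structural block, it would carry only $k_{j}-1$ self-couplings (per Definition \ref{definition_appendage_and_structural}), whereas Lemma \ref{number_self_couplings} guarantees exactly $k_{j}$ self-couplings in $D_{m,j}$.

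Combining the two cases, for the chosen $k \neq m$ the divisibility $\det(D_{m,j}) \mid \det(J_{\mathcal{D}_{k}}) \det(H^{c}_{\iota_{k}})$ fails, giving $\det(D_{m,j}) \nmid \det\!\big(\langle H \rangle\big)$, and the same reasoning applies to every irreducible block. The principal obstacle is the appeal to the Wang \etal\ theorem in the final step: one must carefully verify that the combinatorial invariants (self-coupling count and the $\iota_{k}$-simple versus $\iota_{k}$-appendage dichotomy) transfer cleanly from the abstract block structure of $H^{c}_{\iota_{k}}$ to the restricted subnetwork $\mathcal{D}_{m,j}$, which is supplied as a path component of $\mathcal{D}_m$ rather than intrinsically as a subnetwork of $\mathcal{G}_k$.
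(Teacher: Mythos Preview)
Your proposal is correct and follows essentially the same route as the paper: reduce via the homogeneous-polynomial lemma to divisibility of each coefficient $\det(J_{\mathcal{D}_k})\det(H^c_{\iota_k})$, pick a $k$ via Proposition~\ref{every_block_D_m_j_has_simple_node} yielding an $\iota_k$-simple node in $\mathcal{D}_{m,j}$, rule out $\det(D_{m,j})\mid\det(J_{\mathcal{D}_k})$ by the variable $f_{\tau,x_\tau}$ being absent, and rule out $\det(D_{m,j})\mid\det(H^c_{\iota_k})$ by the self-coupling count forcing the block to be appendage in the sense of \cite{wang20}, contradicting the presence of an $\iota_k$-simple node. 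The only cosmetic difference is that the paper goes straight from ``$k_j$ self-couplings'' to ``appendage block'' via \cite[Lem~5.2]{wang20} rather than splitting into the appendage/structural sub-cases as you do, and the paper does not dwell on your ``principal obstacle'' --- it simply cites \cite[Lem~5.2]{wang20}, so your caution there is if anything more scrupulous than the original.
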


\begin{proof}
We prove this statement by contradiction. Suppose that there is $m$ such that $\mathcal{D}_{m}$ is not empty and there is an irreducible factor $\det(D_{m, j})$ of $\det(J_{\mathcal{D}_{m}})$ which is also a factor of $\det \!\big(\langle H \rangle\big)$. By equation \eqref{simplification_of_homeostasis_matrix_core_network_multiple_inputs_as_a_sum}, $\det\!\big(\langle H \rangle\big)$ can be seen as an homogeneous polynomial with variables $f_{\iota_{p}, \mathcal{I}}$ and respective coefficients $\pm \det(J_{\mathcal{D}_{p}}) \det(H_{\iota_p}^{c})$, for every $p = 1, \ldots, n$. By Lemma \ref{lemma_commum_factors_det_H}, $\det(D_{m,j})$ must be a factor of $\det(J_{\mathcal{D}_{p}}) \det(H_{\iota_p}^{c})$, for every $p = 1, \ldots, n$. Consider now a node $\rho$ in $\mathcal{D}_{m, j}$. As $\mathcal{G}$ is a core network, there is $k\in\{1,\ldots,n\}$, $k \neq m$, such that $\rho$ is downstream from $\iota_{k}$ and $\iota_{k}$ is not downstream from $\iota_{m}$. As $\det(D_{m, j})$ is irreducible, we conclude that $\det(D_{m, j})$ must be a factor of $\det(J_{\mathcal{D}_{k}})$ or of  $\det(H_{\iota_k}^{c})$. As $\rho$ is in the core network between $\iota_{k}$ and $o$, $\det(D_{m, j})$ must be a factor of  $\det(H_{\iota_k}^{c})$. We have already proved in Lemma \ref{number_self_couplings} that the number of self-couplings in $D_{m, j}$ is equal to the order of $D_{m, j}$. As shown in Wang \etal \cite[Lem 5.2]{wang20}, this means that $\det(D_{m, j})$ is an appendage block and therefore all nodes in $\mathcal{D}_{m, j}$ should be $\iota_{k}$-appendage. However, by Proposition \ref{every_block_D_m_j_has_simple_node}, $\mathcal{D}_{m, j}$ has an $\iota_{k}$-simple node, which is a contradiction.
\end{proof}

\subsection{Topological Characterization of Homeostasis Blocks} \label{topological_characterization}

Our aim now is to associate the factorization of $\det\!\big(\langle H \rangle\big)$ with the network topology. In order to do this, recall that, by equation \eqref{simplification_of_homeostasis_matrix_core_network_multiple_inputs_as_a_sum}, we have
\begin{equation*}
     \det\!\big(\langle H \rangle\big) = \sum_{m = 1}^{n} \pm f_{\iota_{m}, \mathcal{I}} \det(J_{\mathcal{D}_{m}}) \det(H_{\iota_m}^{c})
\end{equation*}
and hence we can consider the expression of $\det\!\big(\langle H \rangle\big)$ as an homogeneous polynomial of degree $1$ on variables $f_{\iota_{m}, \mathcal{I}}$ and respective coefficients $\pm \det(J_{\mathcal{D}_{m}}) \det(H_{\iota_m}^{c})$ for all $m = 1, \ldots, n$.

By Lemma \ref{lemma_commum_factors_det_H}, to factorize $\det\!\big(\langle H \rangle\big)$, we must search for common factors of the coefficients $\pm \det(J_{\mathcal{D}_{m}}) \det(H_{\iota_{m}}^{c})$. On the other hand, Proposition \ref{det_J_Dm_does_not_share_common_factors_det_H} implies that, for all $m=1,\ldots,n$, $\det(J_{\mathcal{D}_{m}})$ does not share common factors with any term $\det(J_{\mathcal{D}_{p}}) \det(H_{\iota_{p}}^{c})$, for all $p\in\{1,\ldots,n\}$, $p \neq m$. Therefore, in order to look for common factors between the coefficients, we must search for common factors of $\det(H_{\iota_{m}}^{c})$.

Bearing all the facts above in mind and applying Frobenius-König theory to $\det(H_{\iota_{m}}^{c})$, there are square matrices $B_{1}, B_{2}, \ldots, B_{s}$ such that $\det(B_{j})$ is irreducible for $j = 1, \ldots, \eta$ and for every $m = 1, \ldots, n$, there exists a square matrix $C_{\iota_{m}}$ such that
\begin{equation} \label{common_factor_H_iota_m}
    \det(H_{\iota_{m}}^{c}) = \pm \left(\det(B_{1}) \cdot \det(B_{2}) \cdot \, \cdots \, \cdot \det(B_{s})\right) \cdot \det(C_{\iota_{m}})
\end{equation}
where $\det(C_{\iota_{1}})$,\ldots, $\det(C_{\iota_{n}})$ do not share common factors. In case $\det(H_{\iota_{1}}^{c})$, \ldots, $\det(H_{\iota_{n}}^{c})$ do not share common factors, we can consider that $H_{\iota_{m}}^{c} = C_{\iota_{m}}$.

From Eqs. \eqref{simplification_of_homeostasis_matrix_core_network_multiple_inputs_as_a_sum} and \eqref{common_factor_H_iota_m} it follows that
\begin{equation} \label{factoring_det_weighted_H}
    \det\!\big(\langle H \rangle\big) = \left(\det(B_{1}) \cdot \det(B_{2})  \cdots  \det(B_{s})\right) \left( \sum_{m = 1}^{n} \pm f_{\iota_{m}, \mathcal{I}} \det(J_{\mathcal{D}_{m}}) \det(C_{\iota_{m}}) \right)
\end{equation}
By Corollary \ref{corollary_irreductible_coefficients_irreductible_polynomial}, the expression
\begin{equation} \label{term_input_counter_weight_homeostasis}
    \sum_{m = 1}^{n} \pm f_{\iota_{m}, \mathcal{I}} \det(J_{\mathcal{D}_{m}}) \det(C_{\iota_{m}}) \equiv \det(C)
\end{equation}
is irreducible. Substituting \eqref{term_input_counter_weight_homeostasis} into \eqref{factoring_det_weighted_H} gives
\begin{equation}
    \det\!\big(\langle H \rangle\big) = \det(B_{1}) \cdot \det(B_{2})  \cdots \det(B_{s}) \cdot \det(C)
\end{equation}
where each of the matrices $B_{1}, \ldots, B_{s}$ and $C$ is an irreducible homeostasis block. As explained in Subsection \ref{combinatorial_characterization_homeostasis}, $C$ is the \textit{input counterweight homeostasis block}. Observe that the difference between the matrices $B_{1}, B_{2}, \ldots, B_{s}$ and $C$ is that the terms $f_{\iota_{m}, \mathcal{I}}$ do not appear in any of the matrices $B_{j}$.

\begin{corollary}
Every core network $\mathcal{G}$ with multiple input nodes supports input counterweight homeostasis.
\end{corollary}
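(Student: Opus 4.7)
The plan is to read off the claim directly from the factorization already established in equations \eqref{factoring_det_weighted_H} and \eqref{term_input_counter_weight_homeostasis}. By construction, the input counterweight block is the unique block in the Frobenius--K\"onig decomposition of $\langle H \rangle$ that carries the partial derivatives $f_{\iota_m,\mathcal{I}}$, and its determinant is
\[
\det(C) = \sum_{m=1}^{n} \pm\, f_{\iota_m,\mathcal{I}}\, \det(J_{\mathcal{D}_m})\, \det(C_{\iota_m}).
\]
So what I really have to verify is that $\det(C)$ is a \emph{nontrivial} irreducible polynomial in the entries of the Jacobian, because only then does $C$ contribute a genuine irreducible factor to $\det\!\big(\langle H \rangle\big)$, giving the network an input counterweight block on which to fail.

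First, irreducibility of $\det(C)$ as written is exactly the conclusion of Corollary \ref{corollary_irreductible_coefficients_irreductible_polynomial}, applied after Proposition \ref{det_J_Dm_does_not_share_common_factors_det_H} has ruled out hidden common factors coming from the vestigial subnetworks $\mathcal{D}_m$. So the only thing left to check is that the right-hand side is not identically zero as a polynomial in the partial derivatives.

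Second, non-triviality follows because the variables $f_{\iota_1,\mathcal{I}},\ldots,f_{\iota_n,\mathcal{I}}$ are algebraically independent symbols in the polynomial ring of admissible partial derivatives (admissibility condition \eqref{e:f_iota_I} and the fact that these symbols appear nowhere else in the Jacobian). It therefore suffices to exhibit a single index $m$ for which the coefficient $\det(J_{\mathcal{D}_m})\,\det(C_{\iota_m})$ is nonzero as a polynomial. For any $m$, the factor $\det(C_{\iota_m})$ is obtained from $\det(H_{\iota_m}^c)$ by dividing out common irreducible factors, and $\det(H_{\iota_m}^c)$ is the determinant of the homeostasis matrix of the single-input core subnetwork $\mathcal{G}_m$, which is a nonzero polynomial by \cite{wang20}. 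The factor $\det(J_{\mathcal{D}_m})$ is also nonzero (it is either $1$ by convention when $\mathcal{D}_m = \varnothing$, or the determinant of a Jacobian whose main-diagonal self-coupling monomial from Lemma \ref{number_self_couplings} survives). Hence every coefficient is nonzero, and $\det(C)$ is a genuinely non-constant irreducible polynomial.

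The only mildly delicate point, and where I would be most careful writing the proof in full, is checking that after dividing out the common factors $\det(B_1)\cdots\det(B_s)$ in \eqref{common_factor_H_iota_m} the quotient $\det(C_{\iota_m})$ does not collapse to a unit; but this is immediate because $\det(H_{\iota_m}^c)$ is a nonzero polynomial and the $\det(B_j)$ are its irreducible factors, so the quotient is a well-defined nonzero polynomial by unique factorization in the polynomial ring. Putting these observations together, $C$ appears as a bona fide irreducible block of $\langle H \rangle$, and it carries the multilinear input-parameter structure required by Definition \ref{definition_input_counterweight_homeostasis_block}. Therefore $\mathcal{G}$ supports input counterweight homeostasis.
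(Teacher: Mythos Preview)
Your proposal is correct and follows the same approach as the paper: both rely on the factorization \eqref{factoring_det_weighted_H}--\eqref{term_input_counter_weight_homeostasis} to exhibit $\det(C)$ as an irreducible homogeneous degree-$1$ factor of $\det\!\big(\langle H\rangle\big)$ in the variables $f_{\iota_m,\mathcal{I}}$. The paper's own proof is a single sentence pointing to this factorization; you go further and explicitly verify non-triviality of $\det(C)$ (arguing that each coefficient $\det(J_{\mathcal{D}_m})\det(C_{\iota_m})$ is a nonzero polynomial), which the paper defers to Appendix~\ref{ap:AppB} rather than treating in the corollary itself. This extra care is justified and does not change the underlying argument.
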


\begin{proof}
The proof is straightforward as $\det\!\big(\langle H \rangle\big)$ is always a multiple of an irreducible homogeneous polynomial of degree $1$ on variables $f_{\iota_{1}, \mathcal{I}}, \ldots, f_{\iota_{n}, \mathcal{I}}$
\end{proof}

\begin{remark} \rm
Although input counterweight homeostasis does not occur in networks with only one input node, it is interesting to note that in these networks there is a corresponding matrix $C$. In fact, from equation \eqref{weighted_homeostasis_matrix_definition} and considering networks with only one input node $\iota$, we have
\begin{equation}
    C = [- f_{\iota, \mathcal{I}}] \qquad\Rightarrow\qquad 
    \det(C) = - f_{\iota, \mathcal{I}} \neq 0
\end{equation}
As, by hypothesis, $f_{\iota, \mathcal{I}} \neq 0$, the input counterweight homeostasis is never present in networks with only one input node.
\end{remark}

In Wang \etal \cite{wang20} the classification of the irreducible homeostasis blocks was based on the number of self-couplings. The same arguments can be used in our context. Initially, as $H_{\iota_{m}}^{c}$ is the homeostasis matrix of $\mathcal{G}_{m}$, we conclude that $B_{1}, \ldots, B_{s}$ are irreducible homeostasis blocks of each of the core networks $\mathcal{G}_{m}$ with one input node. Thus, we can conclude that, for every $j=1,\ldots,\eta$, $B_{j}$ has exactly $k_{j}$ self-couplings or $k_{j} - 1$ self-couplings, where $k_{j}$ is the order of $B_{j}$. In order to maintain the same terminology employed in \cite{wang20}, we call $B_{j}$ an \textit{appendage homeostasis block} when $B_{j}$ has exactly $k_{j}$ self-couplings, and a \textit{structural homeostasis block} otherwise (see Definition \ref{definition_appendage_and_structural}). 
 
\subsubsection{Appendage Homeostasis}

Recall that each appendage homeostasis block $B_{j}$ of order $k_{j}$ has exactly $k_{j}$ self-couplings. In an analogous way of equation \eqref{jacobian_subnetwork_equal_number_self-couplings}, $B_{j}$ must be the Jacobian matrix of a subnetwork $\mathcal{K}_j$.

\begin{theorem} \label{characterization_appendage_homeostasis_each_G_m}
Let $\mathcal{K}_j$ be a subnetwork of $\mathcal{G}$ associated with an appendage homeostasis block
$B_{j}$. Then the following statements are valid:
\begin{enumerate}[(a)]
    \item Each node in $\mathcal{K}_j$ is an $\iota_m$-appendage node, for every $m = 1, \ldots, n$.
    \item For every $\iota_{m}o$-simple path $S$, nodes in $\mathcal{K}_j$ are not $C_{m}S$-path equivalent to any node in $C_{m}S \setminus \mathcal{K}_j$, for all $m = 1, \ldots, n$.
    \item $\mathcal{K}_{j}$ is a path component of  $\mathcal{A}_{\mathcal{G}_{m}}$, for all $m = 1, \ldots, n$.
\end{enumerate}
\end{theorem}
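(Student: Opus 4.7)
The plan is to reduce the multi-input claim to the single-input classification established in Wang \etal~\cite{wang20}, applied separately to each core subnetwork $\mathcal{G}_m$. The key leverage comes from the factorization \eqref{common_factor_H_iota_m}, which shows that each appendage block $B_j$ appears as a common irreducible factor of $\det(H_{\iota_m}^c)$ for every $m = 1, \ldots, n$. Since the classification ``appendage versus structural'' is determined by counting self-couplings in $B_j$ (an invariant of the block itself, by Lemma \ref{number_self_couplings}), the appendage status of $B_j$ transfers verbatim to its appearance as an irreducible block of each $H_{\iota_m}^c$. Hence, for each $m$, the subnetwork $\mathcal{K}_j$ is the subnetwork of $\mathcal{G}_m$ associated with an appendage irreducible homeostasis block of the single-input network $\mathcal{G}_m$.

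Second, I would invoke the single-input topological characterization of~\cite{wang20}: every appendage homeostasis block of a single-input network $\mathcal{G}_m$ is the Jacobian matrix of a path component of $\mathcal{A}_{\mathcal{G}_m}$ that satisfies the no-cycle condition with respect to $\iota_m o$-simple paths. Applying this result to $\mathcal{G}_m$ for each $m$ directly yields (c). Part (a) is an immediate consequence of (c), because being a node of a path component of $\mathcal{A}_{\mathcal{G}_m}$ is by Definition \ref{useful_definitions_appendage_multiple_input_nodes} equivalent to being an $\iota_m$-appendage node. Part (b) is exactly the single-input no-cycle condition, transported to $\mathcal{G}_m$ and stated for every $\iota_m o$-simple path $S$; the characterization of~\cite{wang20} guarantees this holds for each $m$.

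The main obstacle will be the first step: namely, justifying that the \emph{same} subnetwork $\mathcal{K}_j$ underlies the block $B_j$ when it is viewed inside each $H_{\iota_m}^c$. A priori the Frobenius--K\"onig permutations $P_m, Q_m$ that put $H_{\iota_m}^c$ into upper block triangular form could depend on $m$, so one must argue that the rows and columns of $B_j$ — indexed by regulatory nodes of $\mathcal{G}$ — are grouped together into a common submatrix across all the decompositions. This is where irreducibility and the fact that the entries of each $B_j$ are fixed partial derivatives $f_{\tau, x_\rho}$ of the admissible vector field do the work: by unique factorization in the polynomial ring generated by the $f_{\tau,x_\rho}$, an irreducible common factor of $\det(H_{\iota_1}^c), \ldots, \det(H_{\iota_n}^c)$ corresponds to a uniquely determined index set of nodes, so the submatrix $B_j$ is well-defined independently of $m$.

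Finally, I would record as a brief consistency check that Proposition \ref{det_J_Dm_does_not_share_common_factors_det_H} rules out any confusion between the appendage blocks of $H_{\iota_m}^c$ and irreducible factors of $\det(J_{\mathcal{D}_m})$: those latter factors never survive as factors of $\det\!\big(\langle H \rangle\big)$, so the $B_j$ are genuinely shared pieces of the $\det(H_{\iota_m}^c)$ and not artifacts of the vestigial subnetworks. With this in hand, the three conclusions (a), (b), (c) are all obtained as direct transfers of the single-input theorem of~\cite{wang20} applied to each $\mathcal{G}_m$.
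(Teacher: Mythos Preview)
Your proposal is correct and follows essentially the same route as the paper: observe that $B_j$ is a common irreducible factor of every $\det(H_{\iota_m}^c)$ and hence an appendage block of each single-input core network $\mathcal{G}_m$, then invoke the single-input classification of \cite{wang20} (specifically Lemma~5.2 and Theorem~5.4 there) to obtain (a), (b), (c). The paper's proof is a two-sentence version of exactly this argument; your additional care about the well-definedness of $\mathcal{K}_j$ across the different Frobenius--K\"onig decompositions and the role of Proposition~\ref{det_J_Dm_does_not_share_common_factors_det_H} fills in details the paper leaves implicit.
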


\begin{proof}
The diagonal block $B_{j}$ must be an appendage homeostasis block of each core subnetwork $\mathcal{G}_{m}$. By \cite[Lem 5.2 and Thm 5.4]{wang20}, the theorem follows.
\end{proof}

Now we can characterize $\mathcal{K}_j$ with respect to the whole core network $\mathcal{G}$.

\begin{theorem} \label{characterization_appendage_homeostasis_general_G}
Let $\mathcal{K}_j$ be a subnetwork of $\mathcal{G}$ associated with an appendage homeostasis block
$B_{j}$. Then the following statements are valid:
\begin{enumerate}[(a)]
\item Each node in $\mathcal{K}_j$ is an absolutely appendage node.
\item For every $\iota_{m}o$-simple path $S$, nodes in $\mathcal{K}_j$ are not $CS$-path equivalent to any node in $CS \setminus \mathcal{K}_j$, for all $m = 1, \ldots, n$.
\item $\mathcal{K}_{j}$ is a path component of  $\mathcal{A}_{\mathcal{G}}$.
\end{enumerate}
\end{theorem}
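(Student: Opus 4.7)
The plan is to bootstrap the multiple-input statement from the per-subnetwork results already proved in Theorem~\ref{characterization_appendage_homeostasis_each_G_m}. That theorem delivers items (a)--(c) relative to each core subnetwork $\mathcal{G}_m$ for every $m = 1,\ldots,n$, so my task reduces to verifying that passing from each $\mathcal{G}_m$ to the union $\mathcal{G} = \mathcal{G}_1 \cup \cdots \cup \mathcal{G}_n$ preserves the relevant topological features. Part (a) is then immediate from Definition~\ref{combinatorial_definitions_in_network_multiple_input_nodes}(f): being absolutely appendage is by definition being $\iota_m$-appendage for every $m$, and the latter is exactly Theorem~\ref{characterization_appendage_homeostasis_each_G_m}(a).

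For part (b), I would argue by contradiction. Suppose some $\rho \in \mathcal{K}_j$ is $CS$-path equivalent to $\tau \in CS \setminus \mathcal{K}_j$ for an $\iota_m o$-simple path $S$. The key step is to show that the witnessing loop actually lies inside $C_m S = \mathcal{G}_m \setminus S$, so that Theorem~\ref{characterization_appendage_homeostasis_each_G_m}(b) produces the contradiction. Since $\rho$ is absolutely appendage, it is downstream from every input node; the directed path $\rho \to \cdots \to \tau$ therefore makes each of its intermediate nodes (and $\tau$ itself) downstream from $\iota_m$. In the other direction, any node $\sigma$ appearing on the loop reaches $\tau$ forward, then reaches $\rho$ via the return branch in $CS$, and finally reaches $o$ because $\rho$ lies in the core network. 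Consequently every node on the loop sits in $\mathcal{G}_m$, the loop lies entirely inside $C_m S$, and $\rho, \tau$ are $C_m S$-path equivalent, contradicting Theorem~\ref{characterization_appendage_homeostasis_each_G_m}(b).

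For part (c), the containment $\mathcal{K}_j \subseteq \mathcal{A}_\mathcal{G}$ follows from (a) combined with Definition~\ref{useful_definitions_appendage_multiple_input_nodes}(b). To see $\mathcal{K}_j$ is path connected inside $\mathcal{A}_\mathcal{G}$, I use that any two nodes of $\mathcal{K}_j$ admit mutual directed paths in $\mathcal{A}_{\mathcal{G}_m}$, and since $\mathcal{K}_j$ is a full path component of $\mathcal{A}_{\mathcal{G}_m}$, all intermediate nodes of these paths already lie in $\mathcal{K}_j \subseteq \mathcal{A}_\mathcal{G}$; the same paths therefore survive in $\mathcal{A}_\mathcal{G}$. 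For maximality, if $\tau \in \mathcal{A}_\mathcal{G}$ is $\mathcal{A}_\mathcal{G}$-path equivalent to some $\rho \in \mathcal{K}_j$, then since $\mathcal{A}_\mathcal{G} \subseteq \mathcal{A}_{\mathcal{G}_m}$ as subnetworks (both nodes and arrows), the same witnessing paths establish $\mathcal{A}_{\mathcal{G}_m}$-path equivalence, forcing $\tau \in \mathcal{K}_j$ by Theorem~\ref{characterization_appendage_homeostasis_each_G_m}(c).

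The main obstacle I anticipate is the upstream/downstream bookkeeping in part (b): one must carefully verify that every node on both branches of the candidate $CS$-loop is simultaneously downstream from $\iota_m$ and upstream from $o$, so that the node belongs to $\mathcal{G}_m$ rather than merely to $\mathcal{G}$. The composition of directed paths that delivers the upstream-from-$o$ property uses the return branch in an essential way, and must be stated so that it applies uniformly to intermediate nodes on both the forward and reverse segments of the loop. Everything else is either definitional unpacking or direct inheritance from the single-input theorem already in hand.
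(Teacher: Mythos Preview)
Your approach is correct and essentially identical to the paper's proof, which also bootstraps each part from Theorem~\ref{characterization_appendage_homeostasis_each_G_m} via the same contradiction and containment arguments. One small simplification: the upstream-from-$o$ bookkeeping you flag as the main obstacle is unnecessary, since $\mathcal{G}$ is a core network and every node is already upstream from $o$; the only thing to check in (b) is that all loop nodes are downstream from $\iota_m$, which follows immediately because they are all downstream from $\rho$ (either directly on the $\rho\to\tau$ branch, or via $\rho\to\tau\to\sigma$ on the return branch).
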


\begin{proof}
$(a)$ Each node in $\mathcal{K}_{j}$ is $\iota_{m}$-appendage, for all $m = 1, \ldots, n$, which means that each node in $\mathcal{K}_{j}$ is absolutely appendage (see definition \ref{combinatorial_definitions_in_network_multiple_input_nodes}). \\
$(b)$ Suppose that for some $m=1,\ldots,n$, there is an $\iota_{m}o$-simple path such that there is a node $\rho$ in $\mathcal{K}_{j}$ that is $CS$-path equivalent to a node $\tau$ in $CS \setminus \mathcal{K}_{j}$, which means that there are paths in $CS$ such that $\rho \rightarrow \tau$ and $\tau \rightarrow \rho$. By item $(a)$ above, $\rho$ is $\iota_{p}$-appendage for all $p = 1, \ldots, n$, and, in particular, $\rho$ is $\iota_{m}$-appendage $\Rightarrow$ every node downstream from $\rho$ is downstream from $\iota_{m}$ and belongs to $\mathcal{G}_{m} \Rightarrow$ there are paths $\rho \rightarrow \tau$ and $\tau \rightarrow \rho$ in $C_{m}S$, and therefore $\rho$ is $C_{m}S$-path equivalent to $\tau \in C_{m}S \setminus \mathcal{K}_j$, which is a contradiction by Theorem \ref{characterization_appendage_homeostasis_each_G_m}. \\
$(c)$ By statement $(a)$ above, $\mathcal{K}_{j} \subseteq \mathcal{A}_{\mathcal{G}_{m}}$, for all $m = 1, \ldots, n$, which means that all nodes in $\mathcal{K}_{j}$ belong to the same $\mathcal{A}_{\mathcal{G}}$-path equivalence class. Consider the path component $\mathcal{T}$ of $\mathcal{A}_{\mathcal{G}}$ such that $\mathcal{K}_{j} \subseteq \mathcal{T}$. We have $\mathcal{T} \subseteq \mathcal{A}_{\mathcal{G}} \Rightarrow \mathcal{T} \subseteq \mathcal{A}_{\mathcal{G}_{m}}$, for all $m = 1, \ldots, n$. As $\mathcal{K}_{j}$ is a path component of $\mathcal{A}_{\mathcal{G}_{m}}$, for all $m = 1, \ldots, n$, we conclude that $\mathcal{K}_{j} = \mathcal{T}$.
\end{proof}

Wang \etal \cite{wang20} proved that the conditions of Theorem \ref{characterization_appendage_homeostasis_each_G_m} are not only necessary, but also sufficient to determine subnetworks associated to appendage homeostasis in networks with only one input node. Theorem \ref{reverse_theorem_appendage_block_general_G} generalizes this result to networks with multiple input nodes.

\begin{proposition} \label{reverse_theorem_appendage_block_each_G_m}
Suppose $\mathcal{K}_{j}$ is a subnetwork of $\mathcal{G}$ such that
\begin{enumerate}[(a)]
    \item $\mathcal{K}_{j}$ is an $\mathcal{A}_{\mathcal{G}_{m}}$-path component, for all $m = 1, \ldots, n$.
    \item For every $\iota_{m}o$-simple path $S$, nodes in $\mathcal{K}_j$ are not $C_{m}S$-path equivalent to any node in $C_{m}S \setminus \mathcal{K}_j$, for all $m = 1, \ldots, n$.
\end{enumerate}
then $\det(J_{\mathcal{K}_{j}})$ is an irreducible factor of $\det \langle H \rangle$.
\end{proposition}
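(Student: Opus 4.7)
The plan is to reduce the claim to the single-input-node setting of Wang \etal\ \cite{wang20} applied separately to each of the $n$ core subnetworks $\mathcal{G}_{m}$. First I would observe that hypothesis~(a) asserts precisely that $\mathcal{K}_{j}$ is an $\mathcal{A}_{\mathcal{G}_{m}}$-path component for every $m$, while hypothesis~(b), when fixed at a particular $m$, is the no-cycle condition of \cite{wang20} phrased inside the single-input network $\mathcal{G}_{m}$. Applying the converse direction of the Wang \etal\ single-input theorem to each $\mathcal{G}_{m}$ then yields that $\det(J_{\mathcal{K}_{j}})$ is an irreducible factor of $\det(H_{\iota_{m}}^{c})$ for every $m = 1, \ldots, n$.

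The second step is to feed this into the expansion
\begin{equation*}
  \det\!\big(\langle H \rangle\big) = \sum_{m=1}^{n} \pm\, f_{\iota_{m}, \mathcal{I}}\, \det(J_{\mathcal{D}_{m}})\, \det(H_{\iota_{m}}^{c})
\end{equation*}
from~\eqref{simplification_of_homeostasis_matrix_core_network_multiple_inputs_as_a_sum}. Since $\det(J_{\mathcal{K}_{j}})$ divides every $\det(H_{\iota_{m}}^{c})$, it divides each summand, and hence the entire sum. Equivalently, $\det(J_{\mathcal{K}_{j}})$ may be identified with one of the common irreducible factors $\det(B_{\ell})$ produced by the Frobenius--K\"onig factorization~\eqref{common_factor_H_iota_m}, and therefore appears explicitly in~\eqref{factoring_det_weighted_H}. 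Irreducibility in the ambient polynomial ring is already guaranteed by Step~1: the entries of $J_{\mathcal{K}_{j}}$ only involve partial derivatives at nodes of $\mathcal{K}_{j} \subseteq \mathcal{G}_{m}$, so irreducibility in the subring of partial derivatives for $\mathcal{G}_{m}$ is enough to conclude irreducibility in the larger polynomial ring for $\mathcal{G}$.

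The main obstacle I anticipate is the careful translation between the multiple-input-node hypotheses as stated in $\mathcal{G}$ and their single-input-node counterparts in each $\mathcal{G}_{m}$: one must check that the notions ``$\iota_{m}$-appendage node'', ``$\iota_{m}o$-simple path'' and ``$C_{m}S$-path equivalence'' give the same objects whether computed inside $\mathcal{G}$ or inside $\mathcal{G}_{m}$. This is ultimately because every node and arrow relevant to these notions for the input $\iota_{m}$ already lies in $\mathcal{G}_{m}$, by Proposition~\ref{core_subnetwork_as_union_individual_ core_networks}. Once this bookkeeping is settled, the result follows from the single-input-node theorem of \cite{wang20} combined with the factorization of $\det\!\big(\langle H \rangle\big)$ provided by~\eqref{simplification_of_homeostasis_matrix_core_network_multiple_inputs_as_a_sum} and~\eqref{common_factor_H_iota_m}.
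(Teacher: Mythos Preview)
Your proposal is correct and follows essentially the same route as the paper: invoke the single-input result \cite[Thm~7.1]{wang20} in each $\mathcal{G}_{m}$ to get that $\det(J_{\mathcal{K}_{j}})$ is an irreducible factor of every $\det(H_{\iota_{m}}^{c})$, then use the expansion~\eqref{simplification_of_homeostasis_matrix_core_network_multiple_inputs_as_a_sum} to conclude it divides $\det\!\big(\langle H\rangle\big)$. The paper's proof is terser (two sentences) and does not spell out the bookkeeping or the irreducibility-in-the-larger-ring point, but the argument is the same.
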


\begin{proof}
By \cite[Thm 7.1]{wang20}, statements $(a)$ and $(b)$ mean that $\det(J_{\mathcal{K}_{j}})$ is an irreducible factor of $\det H_{\iota_{m}}^{c}$, for every $m = 1, \ldots, n$. Therefore, by \eqref{simplification_of_homeostasis_matrix_core_network_multiple_inputs_as_a_sum}, $\det(J_{\mathcal{K}_{j}})$ is an irreducible factor of $\det\!\big(\langle H \rangle\big)$.
\end{proof}

\begin{theorem} \label{reverse_theorem_appendage_block_general_G}
Suppose $\mathcal{K}_{j}$ is a subnetwork of $\mathcal{G}$ such that
\begin{enumerate}[(a)]
    \item $\mathcal{K}_{j}$ is an $\mathcal{A}_{\mathcal{G}}$-path component.
    \item For every $\iota_{m}o$-simple path $S$, nodes in $\mathcal{K}_j$ are not $CS$-path equivalent to any node in $CS \setminus \mathcal{K}_j$, for all $m = 1, \ldots, n$.
\end{enumerate}
then $\det(J_{\mathcal{K}_{j}})$ is an irreducible factor of $\det\!\big(\langle H \rangle\big)$.
\end{theorem}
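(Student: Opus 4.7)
The plan is to deduce Theorem \ref{reverse_theorem_appendage_block_general_G} from Proposition \ref{reverse_theorem_appendage_block_each_G_m} by verifying that the two hypotheses stated here imply both hypotheses of the proposition, for every $m \in \{1,\ldots,n\}$. Once that reduction is in place, the proposition immediately yields that $\det(J_{\mathcal{K}_j})$ is an irreducible factor of $\det\!\big(\langle H \rangle\big)$.

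To verify hypothesis (a) of the proposition, namely that $\mathcal{K}_j$ is an $\mathcal{A}_{\mathcal{G}_m}$-path component, I would first observe that $\mathcal{K}_j \subseteq \mathcal{A}_\mathcal{G} = \mathcal{A}_{\mathcal{G}_1} \cap \cdots \cap \mathcal{A}_{\mathcal{G}_n} \subseteq \mathcal{A}_{\mathcal{G}_m}$, and that any two nodes which are $\mathcal{A}_\mathcal{G}$-path equivalent are a fortiori $\mathcal{A}_{\mathcal{G}_m}$-path equivalent, so $\mathcal{K}_j$ is contained in a single $\mathcal{A}_{\mathcal{G}_m}$-path component. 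The nontrivial point is maximality. Suppose for contradiction that some $\tau \in \mathcal{A}_{\mathcal{G}_m} \setminus \mathcal{K}_j$ is $\mathcal{A}_{\mathcal{G}_m}$-path equivalent to some $\rho \in \mathcal{K}_j$; the witnessing paths use only $\iota_m$-appendage nodes, and no such node can lie on any $\iota_m o$-simple path. Fixing any such path $S$ (which exists because $\mathcal{G}$ is core, so $\iota_m$ is upstream of $o$), we deduce that $\tau$ and the witnessing paths lie in $CS$, producing a $CS$-path equivalence between $\rho \in \mathcal{K}_j$ and $\tau \in CS \setminus \mathcal{K}_j$, contradicting hypothesis (b) of the theorem.

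For hypothesis (b) of the proposition, I would suppose for some $m$, some $\iota_m o$-simple path $S$, and some $\tau \in C_m S \setminus \mathcal{K}_j$, there were a $C_m S$-path equivalence between $\tau$ and some $\rho \in \mathcal{K}_j$. Because $\mathcal{G}_m$ is a subnetwork of $\mathcal{G}$, the complement $C_m S$ embeds into $CS$, so the same witnessing paths realize a $CS$-path equivalence between $\rho$ and $\tau \in CS \setminus \mathcal{K}_j$, again contradicting hypothesis (b) of the theorem.

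The main obstacle is the maximality step above: a priori $\mathcal{K}_j$ could be properly contained in a larger $\mathcal{A}_{\mathcal{G}_m}$-path component, since $\mathcal{A}_\mathcal{G}$ is generally a strict subnetwork of $\mathcal{A}_{\mathcal{G}_m}$. The geometric observation that resolves this is that $\iota_m$-appendage nodes avoid every $\iota_m o$-simple path, which lets one translate any hypothetical enlargement inside $\mathcal{A}_{\mathcal{G}_m}$ into a forbidden $CS$-path equivalence detectable by hypothesis (b). Everything else in the argument is essentially bookkeeping that paths in $\mathcal{A}_{\mathcal{G}_m}$ or in $C_m S$ can be viewed as paths in $CS$.
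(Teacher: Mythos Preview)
Your proof is correct and follows essentially the same approach as the paper: reduce to Proposition \ref{reverse_theorem_appendage_block_each_G_m} by checking its two hypotheses, using the inclusion $C_mS \subseteq CS$ for hypothesis (b), and for maximality of $\mathcal{K}_j$ in $\mathcal{A}_{\mathcal{G}_m}$ using the observation that $\iota_m$-appendage nodes avoid every $\iota_m o$-simple path. The only cosmetic difference is that the paper derives the contradiction in the maximality step via a $C_mS$-path equivalence (invoking the already-established hypothesis (b) of the proposition), whereas you derive it via a $CS$-path equivalence (invoking hypothesis (b) of the theorem directly); the underlying geometric content is identical.
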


\begin{proof}
We begin proving that assertion $(b)$ above implies assertion $(b)$ of Proposition \ref{reverse_theorem_appendage_block_each_G_m}. In fact, for every $m = 1, \ldots, n$ and every $\iota_{m}o$-simple path $S$, $C_{m}S \subseteq CS$, and therefore \ref{reverse_theorem_appendage_block_general_G}$(b) \Rightarrow$ \ref{reverse_theorem_appendage_block_each_G_m}$(b)$. On the other hand, if $\mathcal{K}_{j}$ is an $\mathcal{A}_{\mathcal{G}}$-path component, then, as $\mathcal{A}_{\mathcal{G}} = \mathcal{A}_{\mathcal{G}_{1}} \cap \cdots \cap \mathcal{A}_{\mathcal{G}_{n}}$, the nodes in $\mathcal{K}_{j}$ belong to the same $\mathcal{A}_{\mathcal{G}_{m}}$-path equivalence class, for all $m = 1, \ldots, n$. Consider the $\mathcal{A}_{\mathcal{G}_{m}}$-path component $\mathcal{T}_{m}$ such that $\mathcal{K}_{j} \subseteq \mathcal{T}_{m}$. If there is a $m$ such that $\mathcal{K}_{j} \neq \mathcal{T}_{m}$, then for this $m$, for every $\iota_{m}o$-simple path $S$, as $\mathcal{T}_{m} \subseteq C_{m}S$, nodes in $\mathcal{K}_{j}$ are $C_{m}S$-path equivalent to nodes in $C_{m}S \setminus \mathcal{K}_{j}$, which is a contradiction. Therefore, $\mathcal{K}_{j}$ is a path component of $\mathcal{A}_{\mathcal{G}_{m}}$, for all $m = 1, \ldots, n$. As both assumptions of Proposition \ref{reverse_theorem_appendage_block_each_G_m} are satisfied, we conclude that $\det(J_{\mathcal{K}_{j}})$ is an irreducible factor of $\det\!\big( \langle H \rangle\big)$.
\end{proof}

\subsubsection{Structural Homeostasis}

In order to characterize structural homeostasis in networks with multiple input nodes we start with the absolutely super-simple nodes. In networks with one input node, the super-simple nodes may be ordered by simple paths. We can then apply this result to the subnetworks $\mathcal{G}_{m}$.

\begin{lemma} \label{order_i_m_super_simple_nodes}
For every $m = 1, \ldots, n$, the $\iota_{m}$-super simple nodes in $\mathcal{G}_{m}$ can be uniquely ordered by $\iota_{m} > \rho_{m, 1} > \rho_{m, 2} > \cdots > \rho_{m, p_{m}} > o$, where $a > b$ when $b$ is downstream from $a$ by all $\iota_{m}o$-simple paths.
\end{lemma}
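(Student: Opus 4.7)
The plan is to reduce the statement to the single input node case already established in Wang \etal \cite{wang20}. By Definition \ref{defining_core_networks}, the subnetwork $\mathcal{G}_m$ is itself a core network with a single input node $\iota_m$ and single output node $o$. Moreover, the notions of $\iota_m o$-simple path and $\iota_m$-super-simple node (Definitions \ref{combinatorial_definitions_in_network_multiple_input_nodes} and \ref{useful_definitions_structural_multiple_input_nodes}) depend only on the subnetwork $\mathcal{G}_m$ and coincide with the usual notions of simple path and super-simple node for the single input node network $\mathcal{G}_m$. Therefore the lemma is an immediate application of the corresponding ordering result in Wang \etal \cite{wang20}.

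If one prefers a direct argument, the plan is as follows. Let $a$ and $b$ be two distinct $\iota_m$-super-simple nodes. First I would establish \emph{comparability}: pick any $\iota_m o$-simple path $P$; by definition of super-simple, both $a$ and $b$ must lie on $P$, and since $P$ is simple they appear in a definite order along $P$; declare $a > b$ if $a$ precedes $b$ on $P$. Transitivity and antisymmetry on a single fixed path are immediate from the fact that $P$ is a simple path.

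The main obstacle is \emph{well-definedness}, that is, showing that the order does not depend on the choice of $\iota_m o$-simple path. Suppose for contradiction that along some $\iota_m o$-simple path $P_1$ the node $a$ appears before $b$, while along another $\iota_m o$-simple path $P_2$ the node $b$ appears before $a$. I would then splice the two paths: follow $P_1$ from $\iota_m$ up to the first occurrence of $a$, and then continue along $P_2$ from $a$ to $o$ (using the tail of $P_2$ beyond $a$). This yields a walk from $\iota_m$ to $o$ in $\mathcal{G}_m$ from which a simple path $P^*$ can be extracted. By the choice of splicing, the prefix of $P_1$ up to $a$ does not contain $b$ (because $b$ lies strictly after $a$ on $P_1$), and the suffix of $P_2$ after $a$ does not contain $b$ (because $b$ lies strictly before $a$ on $P_2$). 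Hence $P^*$ is an $\iota_m o$-simple path that avoids $b$, contradicting the fact that $b$ is $\iota_m$-super-simple.

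Finally, I would conclude that the relation $>$ is a strict total order on the finite set of $\iota_m$-super-simple nodes, giving the unique chain $\iota_m > \rho_{m,1} > \cdots > \rho_{m,p_m} > o$, where the endpoints $\iota_m$ and $o$ appear because they trivially lie on every $\iota_m o$-simple path and hence are themselves $\iota_m$-super-simple.
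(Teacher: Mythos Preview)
Your proposal is correct and the primary approach---reducing to the single input node case and invoking the ordering result from Wang \etal~\cite{wang20}---is exactly what the paper does (citing \cite[Lem~6.1]{wang20}). The supplementary direct splicing argument you give is also correct and is essentially the content of that cited lemma, so it adds a self-contained justification but does not diverge from the paper's route.
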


\begin{proof}
For every $m = 1, \ldots, n$, $\mathcal{G}_{m}$ is a core subnetwork with only one input node $\iota_{m}$. Hence, we can apply the result about ordering obtained by Wang \etal \cite[Lem 6.1]{wang20} to conclude that the $\iota_{m}$-super-simple nodes may be uniquely ordered by all $\iota_{m}o$-simple paths.
\end{proof}

We now extend Lemma \ref{order_i_m_super_simple_nodes} to the absolutely super-simple nodes of $\mathcal{G}$.

\begin{lemma} \label{order_absolutely_super_simple_nodes}
The absolutely super-simple nodes in $\mathcal{G}$ can be uniquely ordered by $\rho_{1} > \rho_{2} > \cdots > \rho_{p} > o$, where $a > b$ when $b$ is downstream from $a$ by all $\iota_{m}o$-simple paths.
\end{lemma}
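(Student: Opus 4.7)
The plan is to reduce to Lemma \ref{order_i_m_super_simple_nodes} by observing that each absolutely super-simple node is $\iota_m$-super-simple in $\mathcal{G}_m$ for every $m$, and then to show that the $n$ orderings obtained in this way all coincide on the set of absolutely super-simple nodes.

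\textbf{Step 1 (extracting orderings from each $\mathcal{G}_m$).} By Definition \ref{useful_definitions_structural_multiple_input_nodes}(b), every absolutely super-simple node of $\mathcal{G}$ is an $\iota_m$-super-simple node of $\mathcal{G}_m$ for each $m = 1, \ldots, n$. Therefore Lemma \ref{order_i_m_super_simple_nodes}, applied to each $\mathcal{G}_m$, yields a total order $>_m$ on the set of absolutely super-simple nodes (together with $o$) defined by: $a >_m b$ iff $b$ is downstream from $a$ along every $\iota_m o$-simple path. Since $o$ is downstream from every absolutely super-simple node on every simple path, $o$ is the minimum of each $>_m$.

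\textbf{Step 2 (the orderings agree).} The heart of the proof is to show $>_m \,=\, >_{m'}$ for all $m, m'$. Suppose for contradiction that there exist absolutely super-simple nodes $a, b$ with $a >_m b$ and $b >_{m'} a$ for some indices $m, m'$. Pick any $\iota_m o$-simple path $P$ and any $\iota_{m'} o$-simple path $P'$; both contain $a$ and $b$ (by absolute super-simpleness). On $P$, the node $a$ precedes $b$; on $P'$, the node $b$ precedes $a$. Now form the walk obtained by concatenating the prefix of $P'$ from $\iota_{m'}$ to $b$ with the suffix of $P$ from $b$ to $o$. The prefix of $P'$ does not visit $a$ because $a$ appears after $b$ on the simple path $P'$; the suffix of $P$ does not visit $a$ because $a$ appears before $b$ on the simple path $P$. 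Hence the concatenated walk from $\iota_{m'}$ to $o$ avoids $a$ entirely. Extracting a simple subpath in the usual way produces an $\iota_{m'} o$-simple path that does not pass through $a$, contradicting the assumption that $a$ is $\iota_{m'}$-super-simple.

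\textbf{Step 3 (conclusion).} Steps 1 and 2 show that there is a single total order $>$ on the set of absolutely super-simple nodes, independent of the choice of $m$, with $o$ as its minimum. Writing the absolutely super-simple nodes (other than $o$) as $\rho_1 > \rho_2 > \cdots > \rho_p$ yields the claimed unique ordering $\rho_1 > \rho_2 > \cdots > \rho_p > o$, where $a > b$ precisely when $b$ is downstream from $a$ along every $\iota_m o$-simple path, for every $m = 1,\ldots,n$.

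The main obstacle is Step 2: the construction must produce a genuine simple path avoiding $a$. The concatenation of two simple paths is only a walk and may revisit nodes; what saves the argument is that $a$ appears on neither piece, so the standard reduction from a walk to a simple subpath preserves the exclusion of $a$, which is all that is needed to contradict super-simpleness.
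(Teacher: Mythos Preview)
Your proof is correct and follows essentially the same approach as the paper: both reduce to Lemma~\ref{order_i_m_super_simple_nodes} and then derive a contradiction by splicing together a prefix of one simple path and a suffix of another at a common node to produce a simple path that avoids one of the two absolutely super-simple nodes. The only cosmetic difference is that you splice at $b$ to contradict the $\iota_{m'}$-super-simpleness of $a$, whereas the paper splices at the other node to contradict the super-simpleness of the other; you are also slightly more explicit than the paper in noting that the concatenation is a walk requiring reduction to a simple subpath.
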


\begin{proof}
By Lemma \ref{order_i_m_super_simple_nodes}, for each $m=1,\ldots,n$, we can order the absolutely super-simple according to the $\iota_{m}o$-simple paths. Suppose there are $m_{1}, m_{2} \in\{1,\ldots,n\}$, $m_{1} \neq m_{2}$, such that there are absolutely super-simple nodes $\rho_{i}, \rho_{j}$, $\rho_{i} \neq \rho_{j}$, such that $\rho_{i} > \rho_{j}$ according to $\iota_{m_{1}}o$-simple paths and $\rho_{j} > \rho_{i}$ according to $\iota_{m_{2}}o$-simple paths. This means that there are an $\iota_{m_{1}}o$-simple path $S_{1}$: $\iota_{m_{1}} \rightarrow \cdots \rightarrow \rho_{i} \rightarrow \cdots \rightarrow \rho_{j} \rightarrow \cdots \rightarrow o$ and an $\iota_{m_{2}}o$-simple path $S_{2}$: $\iota_{m_{2}} \rightarrow \cdots \rightarrow \rho_{j} \rightarrow \cdots \rightarrow \rho_{i} \rightarrow \cdots \rightarrow o$. By definition of $S_{1}$, there is an $\iota_{1}\rho_{i}$-simple path which does not pass by $\rho_{j}$, and by definition of $S_{2}$ there is an $\rho_{i}o$-simple path which does not pass by $\rho_{j}$. Therefore, we conclude that there must be an $\iota_{1}o$-simple path which does not pass by $\rho_{j}$, contradicting the fact that $\rho_{j}$ is an absolutely super-simple node.
\end{proof}

\begin{corollary}
If $\mathcal{G}$ is a core network with input nodes $\iota_{1}, \ldots, \iota_{n}$ and output node $o$, then at most one input node of $\mathcal{G}$ is an absolutely super-simple node.
\end{corollary}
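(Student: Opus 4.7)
I would argue by contradiction, using the ordering statement of Lemma \ref{order_absolutely_super_simple_nodes}. Suppose two distinct input nodes, say $\iota_1$ and $\iota_2$, are both absolutely super-simple. By Definition \ref{useful_definitions_structural_multiple_input_nodes}(b), $\iota_2$ lies on every $\iota_1 o$-simple path and $\iota_1$ lies on every $\iota_2 o$-simple path. In particular, any $\iota_1 o$-simple path has the form $\iota_1 \to \cdots \to \iota_2 \to \cdots \to o$ (the start of the path is $\iota_1$, and since a simple path visits each node at most once and $\iota_2 \neq \iota_1$ must appear somewhere on the path, it must appear strictly after $\iota_1$). Symmetrically every $\iota_2 o$-simple path has the form $\iota_2 \to \cdots \to \iota_1 \to \cdots \to o$.

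Now apply the unique ordering of absolutely super-simple nodes from Lemma \ref{order_absolutely_super_simple_nodes}, which in particular asserts that the orderings induced by the $\iota_m o$-simple paths for different values of $m$ all coincide. The first observation above shows that, read along any $\iota_1 o$-simple path, $\iota_2$ is strictly downstream of $\iota_1$, so $\iota_1 > \iota_2$ in the ordering. The second observation, read along any $\iota_2 o$-simple path, gives $\iota_2 > \iota_1$. These two statements contradict the antisymmetry of the unique total order, so at most one input node can be absolutely super-simple. The only step that requires any care is justifying the two orderings, but this is exactly the content of Lemma \ref{order_absolutely_super_simple_nodes} applied to the pair $\{\iota_1,\iota_2\}$, so there is no real obstacle.
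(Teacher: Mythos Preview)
Your proof is correct and follows essentially the same approach as the paper's: argue by contradiction, observe that along $\iota_1 o$-simple paths one gets $\iota_1 > \iota_2$ while along $\iota_2 o$-simple paths one gets $\iota_2 > \iota_1$, and invoke Lemma \ref{order_absolutely_super_simple_nodes} to obtain a contradiction. Your version is slightly more explicit about why the input node must come first on its own simple paths, but the argument is the same.
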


\begin{proof}
Suppose that $\mathcal{G}$ has two input nodes $\iota_{i}$ and $\iota_{j}$ which are absolutely super-simple nodes. Notice that if we order the super-simple nodes according to the $\iota_{i}$-simple paths, we have $\iota_{i} > \iota_{j}$, and if we order them according to the $\iota_{j}$-simple paths, we have $\iota_{j} > \iota_{i}$. The different orderings contradict Lemma \ref{order_absolutely_super_simple_nodes}.
\end{proof}

Let $\rho_{k}>\rho_{k+1}$ be adjacent $\iota_{m}$-super-simple nodes for some $m \in\{1,\ldots,n\}$.
An $\iota_{m}$-simple node $\rho$ is \textit{between} $\rho_{k}$ and $\rho_{k+1}$ if there exists an $\iota_{m}o$-simple path that includes $\rho_{k}$ to $\rho$ to $\rho_{k+1}$ in that order.

\begin{lemma} \label{relation_i_m_simple_node_and_i_m_super_simple_node}
Every $\iota_{m}$-simple node, which is not $\iota_{m}$-super-simple, lies uniquely between two adjacent $\iota_{m}$-super-simple nodes.
\end{lemma}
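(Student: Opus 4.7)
The plan is to reduce the statement to the single-input case. Since $\mathcal{G}_m$ is itself a core network with the single input node $\iota_m$ and output node $o$, and since the notions of $\iota_m$-simple, $\iota_m$-super-simple and $\iota_m o$-simple path depend only on $\mathcal{G}_m$, this lemma is really a statement about a single-input core network. In principle we could just cite the corresponding result from Wang \etal \cite{wang20}, but the argument is short enough that I would sketch it explicitly using Lemma \ref{order_i_m_super_simple_nodes}.

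For existence, let $\rho$ be an $\iota_m$-simple node which is not $\iota_m$-super-simple. By definition there is an $\iota_m o$-simple path $P$ that contains $\rho$. By Lemma \ref{order_i_m_super_simple_nodes}, the $\iota_m$-super-simple nodes are linearly ordered as $\iota_m = \rho_{m,0} > \rho_{m,1} > \cdots > \rho_{m,p_m} > \rho_{m,p_m+1} = o$, and by the very definition of $\iota_m$-super-simple, each of them lies on $P$; by the ordering they appear on $P$ in the indicated order. Since $\rho$ is not $\iota_m$-super-simple, $\rho$ is distinct from every $\rho_{m,j}$, hence on $P$ it occupies a position strictly between some consecutive pair $\rho_{m,k}$ and $\rho_{m,k+1}$. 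This already exhibits $\rho$ as lying between two adjacent $\iota_m$-super-simple nodes.

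For uniqueness, suppose $\rho$ also lies between a different adjacent pair $\rho_{m,k'}$ and $\rho_{m,k'+1}$, witnessed by an $\iota_m o$-simple path $P'$. Assume without loss of generality $k < k'$, so that $k+1 \le k'$. Then on $P$ the node $\rho_{m,k+1}$ appears strictly after $\rho$, whereas on $P'$ the node $\rho_{m,k+1}$ appears at or before $\rho_{m,k'}$ and hence strictly before $\rho$. Concatenating the initial segment of $P$ from $\iota_m$ to $\rho$ with the terminal segment of $P'$ from $\rho$ to $o$ yields a directed walk from $\iota_m$ to $o$ which never visits $\rho_{m,k+1}$. Extracting a simple path from this walk in the standard way (short-circuiting repeated vertices) produces an $\iota_m o$-simple path that avoids $\rho_{m,k+1}$, contradicting the fact that $\rho_{m,k+1}$ is $\iota_m$-super-simple. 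Hence the adjacent pair is unique.

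The main delicate point is the last step: one must check that short-circuiting the concatenated walk to obtain a simple path cannot reintroduce $\rho_{m,k+1}$, but this is immediate because the standard extraction only deletes vertices, it never adds new ones. The rest of the argument is essentially combinatorial bookkeeping built on Lemma \ref{order_i_m_super_simple_nodes}.
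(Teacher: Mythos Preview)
Your proposal is correct and takes essentially the same approach as the paper: both reduce to the single-input case by observing that $\mathcal{G}_m$ is a core network with one input node, at which point the paper simply cites \cite[Lem~6.2]{wang20}. You go a bit further by unpacking that citation and giving the explicit combinatorial argument via Lemma~\ref{order_i_m_super_simple_nodes}, which is a reasonable and correct elaboration.
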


\begin{proof}
Since $\mathcal{G}_{m}$ can be seen as the core subnetwork between $\iota_{m}$ and $o$, i.e., $\mathcal{G}_{m}$ is a core subnetwork with one input node, the result follows from \cite[Lem 6.2]{wang20}.
\end{proof}

Observe that each network $\mathcal{A}_{\mathcal{G}_{m}}$ can be partitioned in $(p_{m} + q_{m})$ $\mathcal{A}_{\mathcal{G}_{m}}$-path components $\mathcal{A}_{m, 1}, \ldots, \mathcal{A}_{m, p_{m}}, \mathcal{B}_{m, 1}, \ldots, \mathcal{B}_{m, q_{m}}$, where nodes in components $\mathcal{A}_{m, i}$ are not $C_{m}S_{m}$-path equivalent to any node in $C_{m}S_{m} \setminus \mathcal{A}_{m, i}$, for every $\iota_{m}o$-simple path $S_{m}$. Whereas this is not the case for nodes in the $\mathcal{A}_{\mathcal{G}_{m}}$-path components $\mathcal{B}_{m, i}$, i.e., for every $\mathcal{B}_{m, i}$, there is an $\iota_{m}o$-simple path $S_{m}$ such that nodes in $\mathcal{B}_{m, i}$ are $C_{m}S_{m}$-path equivalent to nodes in $C_{m}S_{m} \setminus \mathcal{B}_{m, i}$. Clearly, for every $m=1,\ldots,n$ one has that
\begin{equation} \label{partition_A_G_m_two_types_path_omponents}
    \mathcal{A}_{\mathcal{G}_{m}} = \left( \mathcal{A}_{m, 1} \dot{\cup} \cdots \dot{\cup} \mathcal{A}_{m, p_{m}}\right) \dot{\cup} \left( \mathcal{B}_{m, 1} \dot{\cup} \cdots \dot{\cup} \mathcal{B}_{m, q_{m}}\right)
\end{equation}

\begin{lemma} \label{A_G_m_path_component_and_i_m_structural_subnetwork}
Consider the $\mathcal{A}_{\mathcal{G}_{m}}$-path component $\mathcal{B}_{m, i}$ and suppose there is an $\iota_{m}o$-simple path $S_{m}$ such that nodes in $\mathcal{B}_{m, i}$ are $C_{m}S_{m}$-path equivalent to nodes in $C_{m}S_{m} \setminus \mathcal{B}_{m, i}$. Then there is at least an $\iota_{m}$-simple node $\rho$ in $C_{m}S_{m} \setminus \mathcal{B}_{m, i}$ such that nodes in $\mathcal{B}_{m, i}$ are $C_{m}S_{m}$-path equivalent to $\rho$. Moreover, the $\iota_{m}$-simple nodes in $C_{m}S_{m} \setminus \mathcal{B}_{m, i}$ that are $C_{m}S_{m}$-path equivalent to $\mathcal{B}_{m, i}$, including $\rho$, are not $\iota_{m}$-super-simple and are contained in a unique $\iota_{m}$-super-simple subnetwork.
\end{lemma}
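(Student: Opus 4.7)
The plan is to split the statement into three sub-claims and verify them in order: (i) existence of an $\iota_{m}$-simple witness $\rho \in C_{m}S_{m}\setminus\mathcal{B}_{m,i}$ that is $C_{m}S_{m}$-path equivalent to $\mathcal{B}_{m,i}$; (ii) no such witness can be $\iota_{m}$-super-simple; and (iii) all such witnesses lie between the same adjacent pair of $\iota_{m}$-super-simple nodes, hence inside a single $\iota_{m}$-super-simple subnetwork. Step (iii) will be the main obstacle and is the only place where a genuinely new idea (relative to the single-input theory) is needed.

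For step (i) I would exploit that $\mathcal{G}_{m}$ is a core network, so every node of $\mathcal{G}_{m}$ is either $\iota_{m}$-simple or $\iota_{m}$-appendage. Let $\tau\in C_{m}S_{m}\setminus\mathcal{B}_{m,i}$ be $C_{m}S_{m}$-path equivalent to some $\beta\in\mathcal{B}_{m,i}$. If $\tau$ is $\iota_{m}$-simple, set $\rho:=\tau$. Otherwise $\tau\in\mathcal{A}_{\mathcal{G}_{m}}$, and the two $C_{m}S_{m}$-paths witnessing the equivalence cannot consist only of $\iota_{m}$-appendage nodes, for in that case $\tau$ and $\beta$ would be $\mathcal{A}_{\mathcal{G}_{m}}$-path equivalent, forcing $\tau\in\mathcal{B}_{m,i}$ and contradicting $\tau\notin\mathcal{B}_{m,i}$. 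Hence some $\iota_{m}$-simple node $\sigma$ appears on one of these paths; composing suitable subpaths through $\sigma$ then exhibits $\sigma$ as $C_{m}S_{m}$-path equivalent to $\beta$, and one takes $\rho:=\sigma$.

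Step (ii) is immediate: an $\iota_{m}$-super-simple node lies on every $\iota_{m}o$-simple path, in particular on $S_{m}$, and hence is absent from $C_{m}S_{m}$; thus no witness $\rho\in C_{m}S_{m}$ can be $\iota_{m}$-super-simple, and Lemma~\ref{relation_i_m_simple_node_and_i_m_super_simple_node} places $\rho$ between a unique adjacent pair of $\iota_{m}$-super-simple nodes.

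The hard part is step (iii), which I would handle by a walk-splicing argument. Let $\rho_{1},\rho_{2}$ be two $\iota_{m}$-simple witnesses and order the $\iota_{m}$-super-simple nodes as $\eta_{1}>\eta_{2}>\cdots>\eta_{p}>o$ via Lemma~\ref{order_i_m_super_simple_nodes}, so that $\rho_{k}$ lies between some adjacent pair $\eta_{i_{k}}>\eta_{i_{k}+1}$. By transitivity through $\mathcal{B}_{m,i}$ there is a $C_{m}S_{m}$-path $P_{12}$ from $\rho_{1}$ to $\rho_{2}$. Suppose for contradiction that $i_{1}<i_{2}$. Let $Q_{1}$ be the initial segment from $\iota_{m}$ to $\rho_{1}$ of any $\iota_{m}o$-simple path through $\rho_{1}$; because $\rho_{1}$ appears strictly before $\eta_{i_{1}+1}$ in every such simple path, $Q_{1}$ avoids $\eta_{i_{1}+1}$. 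Analogously, let $Q_{2}$ be the terminal segment from $\rho_{2}$ to $o$ of any $\iota_{m}o$-simple path through $\rho_{2}$; since $i_{1}+1\leq i_{2}$, the node $\eta_{i_{1}+1}$ appears before $\rho_{2}$ in every such simple path, so $Q_{2}$ also avoids $\eta_{i_{1}+1}$. The crucial input is that $\eta_{i_{1}+1}$ is super-simple, so it lies on $S_{m}$ and therefore on no $C_{m}S_{m}$-path; in particular $P_{12}$ avoids it. Concatenating $Q_{1}$, $P_{12}$ and $Q_{2}$ produces an $\iota_{m}o$-walk missing $\eta_{i_{1}+1}$; the standard walk-to-simple-path reduction extracts from it an $\iota_{m}o$-simple path missing $\eta_{i_{1}+1}$, contradicting super-simplicity of $\eta_{i_{1}+1}$. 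Hence $i_{1}=i_{2}$, which completes the argument.
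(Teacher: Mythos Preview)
Your argument is correct. The paper's own ``proof'' is simply the one-line citation ``This is proved in \cite[Lem 6.3]{wang20}'', which is legitimate because the lemma concerns only the single-input core network $\mathcal{G}_{m}$ and is verbatim the corresponding single-input statement from Wang \etal\ Your three steps reprove that cited lemma from scratch, and the splicing argument you give in step~(iii) is essentially the same one used there.

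One small correction to your framing: you write that step~(iii) ``is the only place where a genuinely new idea (relative to the single-input theory) is needed'', but in fact no new idea is needed anywhere---the entire statement lives inside $\mathcal{G}_{m}$, which is a single-input network, so this lemma \emph{is} the single-input theory. The genuinely multi-input content begins only with the subsequent Lemma~\ref{unique_correspondence_between_B_i_and_absolutely_super_simple_subnetwork}, where the $\mathcal{A}_{\mathcal{G}}$-path components $\mathcal{B}_{i}$ (rather than $\mathcal{B}_{m,i}$) are treated and one must reconcile the different $\mathcal{G}_{m}$'s.
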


\begin{proof}
This is proved in \cite[Lem 6.3]{wang20}.
\end{proof}

Let $\rho_{k}> \rho_{k+1}$ be adjacent $\iota_{m}$-super-simple nodes, for some $m \in\{1,\ldots,n\}$.
The $\iota_{m}$\textit{-super-simple subnetwork}, denoted $\mathcal{L}_{m}(\rho_{k}, \rho_{k+1})$, is the subnetwork whose nodes are $\iota_{m}$-simple nodes between $\rho_{k}$ and $\rho_{k+1}$ and whose arrows are arrows of $\mathcal{G}_{m}$ connecting nodes in $\mathcal{L}_{m}(\rho_{k}, \rho_{k+1})$.

Let $\rho_{k}$ and $\rho_{k+1}$ be adjacent $\iota_{m}$-super-simple nodes in $\mathcal{G}_{m}$. The $\iota_{m}$\textit{-super-simple structural subnetwork} $\mathcal{L}'_{m}(\rho_{k}, \rho_{k+1})$ is the input-output subnetwork consisting of nodes in $\mathcal{L}_{m}(\rho_{k}, \rho_{k+1}) \cup \mathcal{B}_{m}$, where $\mathcal{B}_{m}$ consists of all appendage nodes that are $C_{m}S_{m}$-path equivalent to nodes in $\mathcal{L}_{m}(\rho_{k}, \rho_{k+1})$ for some $\iota_{m}o$-simple path $S_{m}$. Arrows of $\mathcal{L}'_{m}(\rho_{k}, \rho_{k+1})$ are arrows of $\mathcal{G}_{m}$ that connect nodes in $\mathcal{L}'_{m}(\rho_{k}, \rho_{k+1})$.

As $\mathcal{G}_{m}$ is a core subnetwork with only one input node, the homeostasis matrix of each $\iota_{m}$-super-simple structural subnetwork $\mathcal{L}'_{m}(\rho_{k}, \rho_{k+1})$ ($H(\mathcal{L}'_{m}(\rho_{k}, \rho_{k+1}))$) is an irreducible structural homeostasis block of the homeostasis matrix of $\mathcal{G}_{m}$ ($H_{\iota_{m}}^{c}$) (see \cite[Thm 6.11]{wang20}). We have already proved in Proposition \ref{det_J_Dm_does_not_share_common_factors_det_H} that the irreducible structural homeostasis blocks of $\langle H \rangle$ are also structural homeostasis blocks of each of the matrices $H_{\iota_{m}}^{c}$. Therefore, we now study under which conditions there is an super-simple structural subnetwork shared by all the core subnetworks $\mathcal{G}_{m}$.

\begin{lemma} \label{basic_properties_super_simple_subnetwork}
Let $\rho_{k} > \rho_{k+1}$ be two adjacent absolutely super-simple nodes. Then the following properties are valid:
\begin{enumerate} [(a)]
\item $\rho_{k}$ and  $\rho_{k+1}$ are adjacent $\iota_{m}$-super-simple nodes, for every $m=1,\ldots,n$.
\item Every $\iota_{m}$-simple node in $\mathcal{L}_{m}(\rho_{k}, \rho_{k+1})$ is an absolutely simple node, for every $m=1,\ldots,n$.
\item $\mathcal{L}_{m}(\rho_{k}, \rho_{k+1}) = \mathcal{L}(\rho_{k}, \rho_{k+1})$, for every $m=1,\ldots,n$.
\end{enumerate}
\end{lemma}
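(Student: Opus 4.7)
The plan is to prove (a), (b), (c) sequentially, with the key technical work concentrated in a concatenation-and-shortcut argument that reappears in both (a) and (b).

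For (a), that $\rho_k$ and $\rho_{k+1}$ are $\iota_m$-super-simple is immediate from Definition \ref{useful_definitions_structural_multiple_input_nodes}(b), and $\rho_k > \rho_{k+1}$ in the $\iota_m$-super-simple order follows by consistency of orderings (Lemma \ref{order_i_m_super_simple_nodes} combined with Lemma \ref{order_absolutely_super_simple_nodes}). The heart of the matter is adjacency: I rule out an $\iota_m$-super-simple $\tau$ with $\rho_k > \tau > \rho_{k+1}$ in the $\iota_m$-order by showing any such $\tau$ would be absolutely super-simple. Fix $p$ and an $\iota_p o$-simple path $S_p$ (which meets $\rho_k, \rho_{k+1}$ by absolute super-simplicity). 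Supposing $\tau \notin S_p$, take an $\iota_m o$-simple path $T$ (which passes through $\rho_k, \tau, \rho_{k+1}$ in that order) and form the walk
\[
W = T[\iota_m, \rho_k] \,\cdot\, S_p[\rho_k, \rho_{k+1}] \,\cdot\, T[\rho_{k+1}, o].
\]
Since $\tau$ lies strictly between $\rho_k$ and $\rho_{k+1}$ on $T$ and is absent from $S_p$, the walk $W$ misses $\tau$; any simple $\iota_m o$-subpath of $W$ is then an $\iota_m o$-simple path avoiding $\tau$, contradicting $\iota_m$-super-simplicity of $\tau$. Hence $\tau \in S_p$ for every such $S_p$ and every $p$, so $\tau$ is absolutely super-simple, contradicting adjacency of $\rho_k, \rho_{k+1}$.

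For (b), fix $\rho \in \mathcal{L}_m(\rho_k, \rho_{k+1})$ with witness $\iota_m o$-simple path $T$ through $\rho_k, \rho, \rho_{k+1}$, and fix an arbitrary $p$; the aim is to produce an $\iota_p o$-simple path through $\rho_k, \rho, \rho_{k+1}$ in that order. Choose any $\iota_p o$-simple path $S_p$ (which contains $\rho_k$ and $\rho_{k+1}$) and form
\[
W = S_p[\iota_p, \rho_k] \,\cdot\, T[\rho_k, \rho_{k+1}] \,\cdot\, S_p[\rho_{k+1}, o].
\]
I claim $W$ is already simple. Each of the three segments is itself simple, and $S_p[\iota_p, \rho_k] \cap S_p[\rho_{k+1}, o] = \emptyset$ by simplicity of $S_p$. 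The only remaining possible overlap is a node $\nu$ in the interior of $T[\rho_k, \rho_{k+1}]$ coinciding with a node of $S_p[\iota_p, \rho_k]$ or of $S_p[\rho_{k+1}, o]$. Shortcutting $W$ at such a $\nu$ (from its first to its second occurrence) deletes the segment of $W$ containing $\rho_k$ or $\rho_{k+1}$, producing a walk from $\iota_p$ to $o$ that avoids that absolutely super-simple node; extracting a simple subpath of this walk would yield an $\iota_p o$-simple path missing an absolutely super-simple node, which is impossible. Therefore no such $\nu$ exists, $W$ is a simple $\iota_p o$-path through $\rho_k, \rho, \rho_{k+1}$, and $\rho$ is $\iota_p$-simple. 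Since $p$ was arbitrary, $\rho$ is absolutely simple.

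For (c), the inclusion $\mathcal{L}_m(\rho_k, \rho_{k+1}) \subseteq \mathcal{L}(\rho_k, \rho_{k+1})$ is immediate from (b): a node of $\mathcal{L}_m(\rho_k, \rho_{k+1})$ is absolutely simple by (b) and lies between $\rho_k, \rho_{k+1}$ on an $\iota_m o$-simple path, which witnesses the ``for some $m$'' clause of Definition \ref{definition_adjacent_absolutely_super_simple_}(a). For the reverse, take $\rho \in \mathcal{L}(\rho_k, \rho_{k+1})$; the definition yields a $q$ such that some $\iota_q o$-simple path passes through $\rho_k, \rho, \rho_{k+1}$ in order, so $\rho \in \mathcal{L}_q(\rho_k, \rho_{k+1})$. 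Applying the argument of (b) with $q$ in place of $m$ and specializing to $p = m$ produces an $\iota_m o$-simple path through $\rho_k, \rho, \rho_{k+1}$ in order, giving $\rho \in \mathcal{L}_m(\rho_k, \rho_{k+1})$. The arrow sets of $\mathcal{L}_m$ and $\mathcal{L}$ coincide because both inherit all $\mathcal{G}$-arrows between their (now common) nodes. The main obstacle is the walk-to-path extraction in (b): naive shortcutting of $W$ could, a priori, skip $\rho$ and derail the argument; the difficulty dissolves once one observes that any such problematic shortcut would simultaneously skip an absolutely super-simple node ($\rho_k$ or $\rho_{k+1}$), which is ruled out. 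This single observation forces $W$ to be simple from the outset and drives both (b) and the reverse inclusion in (c).
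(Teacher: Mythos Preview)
Your proof is correct and follows essentially the same splicing-at-super-simple-nodes strategy as the paper. Your treatment is in fact more careful than the paper's in two places: you give an explicit argument for why the concatenated walk in (b) is simple (the paper asserts this with only a vague ``by the argument above''), and you correctly observe that the inclusion $\mathcal{L}(\rho_k,\rho_{k+1}) \subseteq \mathcal{L}_m(\rho_k,\rho_{k+1})$ in (c) requires reusing the construction from (b), whereas the paper's one-line justification (``absolutely simple implies $\iota_m$-simple'') does not by itself place the node between $\rho_k$ and $\rho_{k+1}$ on an $\iota_m o$-simple path.
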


\begin{proof}
$(a)$ Suppose there is $m \in\{1,\ldots,n\}$ such that there is an $\iota_{m}$-super-simple node $\rho$ which satisfies $\rho_{k} > \rho > \rho_{k+1}$. Then, every $\iota_{m}o$-simple path passes by $\rho_{k}$, $\rho$ and $\rho_{k+1}$, in that order. Suppose now there is $j \in \{1,\ldots,n\}$, $j \neq m$, such that there is an $\iota_{j}o$-simple path which does not pass by $\rho$. That means that there is a path in $\mathcal{G}$ between $\rho_{k}$ and $\rho_{k+1}$ which does not pass by $\rho$. As there are an $\iota_{m}\rho_{k}$-simple path, a $\rho_{k}\rho_{k+1}$-simple path and a $\rho_{k+1}o$-simple path such that neither of them passes by $\rho$, we can obtain an $\iota_{m}o$-simple path which does not pass by $\rho$, which is a contradiction. \\
$(b)$ By item $(a)$ above $\rho_{k}, \rho_{k+1}$ must be adjacent $\iota_{m}$-super-simple nodes, and so $\mathcal{L}_{m}(\rho_{k}, \rho_{k+1})$ is well defined. For some $m \in \{1,\ldots,n\}$, consider an $\iota_{m}$-simple node $\rho$ which is between $\rho_{k}$ and $\rho_{k+1}$. By definition, there is an $\iota_{m}o$-simple path $S_{m}$ which passes by 
$$
\iota_{m} \rightarrow \cdots \rightarrow \rho_{k} \rightarrow \cdots \rightarrow \rho \rightarrow \cdots \rightarrow \rho_{k+1} \rightarrow \cdots \rightarrow o
$$ 
in that order. Let's take $j \in \{1,\ldots,n\}$. Consider any $\iota_{j}o$-simple path $S_{j}$. $S_{j}$ and $S_{m}$ have some nodes in common (at least $\rho_{k}, \rho_{k+1}$ and $o$), and these nodes must appear in the same order. Therefore, we can build a path $S_{j}^{*}$ taking the $S_{j}$ stretch from $\iota_{j}$ to $\rho_{k}$ and the $S_{m}$ stretch from $\rho_{k}$ to $o$ passing by $\rho$ (and $\rho_{k+1}$). By the argument above, $S_{j}^{*}$ is an $\iota_{j}o$-simple path, and therefore $\rho$ is an $\iota_{j}$-simple node. As this process may be done to any $j \in \{1,\ldots,n\}$, then $\rho$ is an absolutely simple node. \\
$(c)$ For every $m=1,\ldots,n$, all absolutely simple nodes between $\rho_{k}$ and $\rho_{k+1}$ are $\iota_{m}$-simple nodes, and therefore $\mathcal{L}(\rho_{k}, \rho_{k+1}) \subseteq \mathcal{L}_{m}(\rho_{k}, \rho_{k+1})$. On the other hand, by statement $b)$ of this lemma, every $\iota_{m}$-simple node between $\rho_{k}$ and $\rho_{k+1}$ is an absolutely simple node, which means that $\mathcal{L}_{m}(\rho_{k}, \rho_{k+1}) \subseteq \mathcal{L}(\rho_{k}, \rho_{k+1}) \Rightarrow \mathcal{L}_{m}(\rho_{k}, \rho_{k+1}) = \mathcal{L}(\rho_{k}, \rho_{k+1})$.
\end{proof}

\ignore{
\begin{corollary}
Let $\rho_{1}, \rho_{2}$ be two adjacent absolutely super-simple nodes. Then, every $\iota_{m}$-appendage node which is downstream $\rho_{1}$ and upstream $\rho_{2}$ is an absolutely appendage node.
\end{corollary}

\begin{proof}
Consider an $\iota_{m}$-appendage node $\rho$ which is downstream $\rho_{1}$ and upstream $\rho_{2}$. As $\rho_{1}$ is an absolutely super-simple node, $\rho$ belongs to $\mathcal{G}_{j}$, for all $j \in \mathbb{N} \cap [1, n]$. By statement $b)$ of lemma \ref{basic_properties_super_simple_subnetwork}, if there is $j \in \mathbb{N} \cap [1, n]$ such that $\rho$ is an $\iota_{j}$-simple node, then $\rho$ would be an absolutely simple node, which is a contradiction. Therefore, for all $j \in \mathbb{N} \cap [1, n]$, $\rho$ must be $\iota_{j}$-appendage, i.e., $\rho$ is absolutely appendage. 
\end{proof}
}

To verify that the absolutely super-simple structural subnetworks are well defined, we must partition the appendage subnetwork $\mathcal{A}_{\mathcal{G}}$ in a similar way that we have done to $\mathcal{A}_{\mathcal{G}_{m}}$. Therefore, we partition $\mathcal{A}_{\mathcal{G}}$ in $(p + q + r)$ $\mathcal{A}_{\mathcal{G}}$-path components $\mathcal{A}_{1}, \ldots, \mathcal{A}_{p}, \mathcal{B}_{1}, \ldots, \mathcal{B}_{q}, \mathcal{C}_{1}, \ldots, \mathcal{C}_{r}$, where
\begin{enumerate}[(1)]
\item $\mathcal{A}_{\mathcal{G}}$-path components $\mathcal{A}_{i}$ satisfy the following condition: for all $m = 1, \ldots, n$, for every $\iota_{m}o$-simple path $S_{m}$, nodes in $\mathcal{A}_{i}$ are not $CS_{m}$-path equivalent to any node in $CS_{m} \setminus \mathcal{A}_{i}$.
\item $\mathcal{A}_{\mathcal{G}}$-path components $\mathcal{B}_{i}$ satisfy the following condition: for all $m = 1, \ldots, n$, there is an $\iota_{m}o$-simple path $S_{m}$ such that nodes in $\mathcal{B}_{i}$ are $CS_{m}$-path equivalent to an absolutely simple node in $CS_{m} \setminus \mathcal{B}_{i}$ which belongs to an absolutely super-simple subnetwork $\mathcal{L}(\rho_{k}, \rho_{k+1})$, where $\rho_{k}, \rho_{k+1}$ are adjacent absolutely super-simple nodes.
\item $\mathcal{A}_{\mathcal{G}}$-path components $\mathcal{C}_{i}$ do not satisfy neither of the conditions (a) and (b) above.
\end{enumerate}
Again, we have
\begin{equation}
    \mathcal{A}_{\mathcal{G}} = \left( \mathcal{A}_{1} \dot{\cup} \cdots \dot{\cup} \mathcal{A}_{p}\right) \dot{\cup} \left( \mathcal{B}_{1} \dot{\cup} \cdots \dot{\cup} \mathcal{B}_{q}\right) \dot{\cup} \left( \mathcal{C}_{1} \dot{\cup} \cdots \dot{\cup} \mathcal{C}_{r}\right)
\end{equation}

Figure \ref{partition_A_G_path_components} exemplifies the three types of $\mathcal{A}_{\mathcal{G}}$-path components in a core network $\mathcal{G}$ with multiple input nodes.

\begin{figure}[!ht]
\centering
\includegraphics[width=\linewidth,trim=0cm 1.5cm 0cm 0cm,clip=true]%
{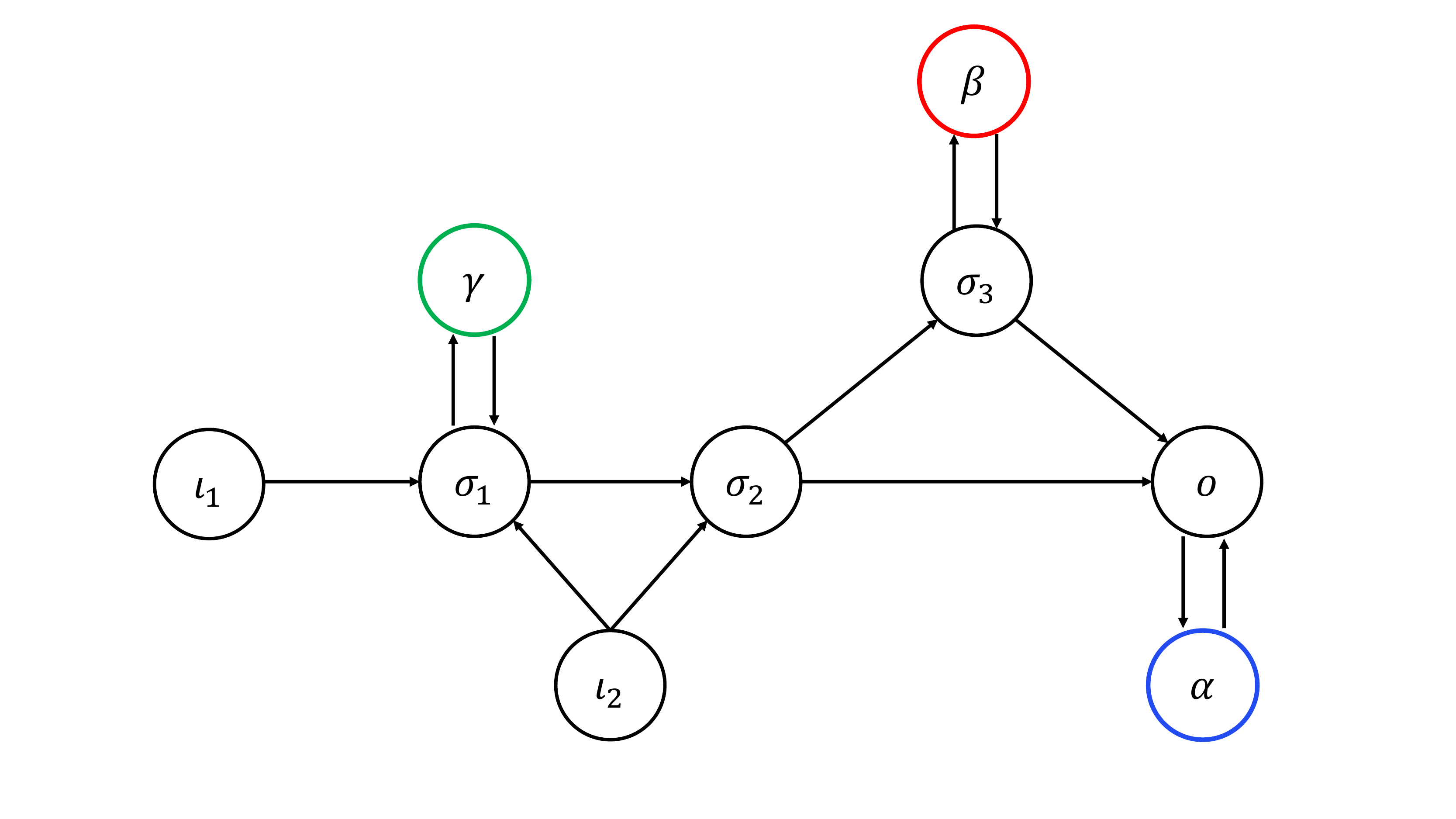}
\renewcommand{\figurename}{Figure}
\caption{A core network with input nodes $\iota_{1}$ and $\iota_{2}$ and output node $o$. Nodes $\sigma_{1}$ and $\sigma_{3}$ are absolutely simple while $\sigma_{2}$ and $o$ are absolutely super-simple. The absolutely super-simple subnetwork $\mathcal{L}(\sigma_{2}, o)$ is composed by nodes $\sigma_{2}, \sigma_{3}$ and $o$. The appendage subnetwork $\mathcal{A}_{\mathcal{G}}$ is composed by $\alpha, \beta$ and $\gamma$, and each of these nodes corresponds to a distinct $\mathcal{A}_{\mathcal{G}}$-path component. Following the nomenclature proposed above, node $\alpha$ (in blue) corresponds to an $\mathcal{A}_{\mathcal{G}}$-path component $\mathcal{A}$, $\beta$ (in red) corresponds to an $\mathcal{A}_{\mathcal{G}}$-path component $\mathcal{B}$ and $\gamma$ (in green) corresponds to an $\mathcal{A}_{\mathcal{G}}$-path component $\mathcal{C}$. According to definition \ref{absolutely_super_simple_structural_subnetwork}, the subnetwork composed by nodes $\sigma_{2}, \sigma_{3}, \beta$ and $o$ is an absolutely super-simple structural subnetwork $\mathcal{L}'(\sigma_{2}, o)$.}
\label{partition_A_G_path_components}
\end{figure}

We now generalise Lemma \ref{A_G_m_path_component_and_i_m_structural_subnetwork} to the $\mathcal{A}_{\mathcal{G}}$-path components $\mathcal{B}_{i}$.

\begin{lemma} \label{unique_correspondence_between_B_i_and_absolutely_super_simple_subnetwork}
Consider the $\mathcal{A}_{\mathcal{G}}$-path component $\mathcal{B}_{i}$ and suppose there is an $\iota_{m}o$-simple path $S_{m}$ such that nodes in $\mathcal{B}_{i}$ are $CS_{m}$-path equivalent to an absolutely simple node $\rho$ in $CS_{m} \setminus \mathcal{B}_{i}$ which belongs to an absolutely super-simple subnetwork $\mathcal{L}(\rho_{k}, \rho_{k+1})$. Then the following statements are valid:
\begin{enumerate}[(a)]
\item Nodes in $CS_{m} \setminus \mathcal{B}_{i}$ that are $CS_{m}$-path equivalent to $\mathcal{B}_{i}$ which are not absolutely appendage, including $\rho$, are absolutely simple nodes contained in $\mathcal{L}(\rho_{k}, \rho_{k+1})$. Furthermore, these nodes are not absolutely super-simple.
\item For every $j=1,\ldots,n$, there is an $\iota_{j}o$-simple path $S_{j}$ such that nodes in $\mathcal{B}_{i}$ are $CS_{j}$-path equivalent to $\rho$ in $CS_{j} \setminus \mathcal{B}_{i}$.
\item Suppose there is another $\iota_{m}o$-simple path $\tilde{S}_{m}$ such that nodes in $\mathcal{B}_{i}$ are $C\tilde{S}_{m}$-path equivalent to nodes in $C\tilde{S}_{m} \setminus \mathcal{B}_{i}$. Then, there is an absolutely simple node $\tau$ between $\rho_{k}$ and $\rho_{k+1}$ which is $C\tilde{S}_{m}$-path equivalent to $\mathcal{B}_{i}$.
\end{enumerate}
\end{lemma}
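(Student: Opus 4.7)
The strategy is to reduce each assertion to the single-input Lemma~\ref{A_G_m_path_component_and_i_m_structural_subnetwork} applied to each core subnetwork $\mathcal{G}_j$ separately, and then to reconcile the resulting $\iota_j$-super-simple structures via Lemmas~\ref{order_absolutely_super_simple_nodes} and~\ref{basic_properties_super_simple_subnetwork}. The crucial input is that $\mathcal{B}_i \subseteq \mathcal{A}_{\mathcal{G}} \subseteq \mathcal{A}_{\mathcal{G}_j}$ for every $j$, so nodes of $\mathcal{B}_i$ are simultaneously $\iota_j$-appendage for every $j$, and that the non-appendage ``mates'' of interest, being absolutely simple, lie in every $\mathcal{G}_j$.

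For part (a), take a non-absolutely-appendage $\tau \in CS_m \setminus \mathcal{B}_i$ which is $CS_m$-path equivalent to some $b \in \mathcal{B}_i$. For each $j$ I translate the $CS_m$-cycle between $\tau$ and $b$ into a cycle inside $\mathcal{G}_j$: both endpoints lie in $\mathcal{G}_j$ by the preceding observation, and intermediate nodes outside $\mathcal{G}_j$ can be bypassed through $\mathcal{A}_{\mathcal{G}}$-equivalent appendage detours inside $\mathcal{B}_i$. Applying Lemma~\ref{A_G_m_path_component_and_i_m_structural_subnetwork} inside $\mathcal{G}_j$ then forces $\tau$ to be $\iota_j$-simple, non-$\iota_j$-super-simple, and contained in some $\iota_j$-super-simple subnetwork $\mathcal{L}_j(\rho_{k(j)},\rho_{k(j)+1})$. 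Lemmas~\ref{order_absolutely_super_simple_nodes} and~\ref{basic_properties_super_simple_subnetwork}(c) show that the endpoints $\rho_{k(j)},\rho_{k(j)+1}$ are actually absolutely super-simple and $j$-independent, so $\tau \in \mathcal{L}(\rho_k,\rho_{k+1})$ and is not absolutely super-simple. Taking $\tau = \rho$ proves (a).

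For part (b), fix $\rho$. The argument for (a) in fact establishes that $\rho$ is not $\iota_j$-super-simple for any $j$, so by Lemma~\ref{order_absolutely_super_simple_nodes} there is an $\iota_j o$-simple path through $\rho_k$ and $\rho_{k+1}$ avoiding $\rho$. The interior nodes of the $CS_m$-cycle realizing $\rho \leftrightarrow b$ are either absolutely appendage (hence on no simple path at all) or $\iota_j$-simple nodes lying between $\rho_k$ and $\rho_{k+1}$; since no absolutely super-simple node sits strictly inside $\mathcal{L}(\rho_k,\rho_{k+1})$, I can choose the stretch of $S_j$ between $\rho_k$ and $\rho_{k+1}$ to avoid these $\iota_j$-simple cycle nodes as well, producing an $\iota_j o$-simple path whose $\mathcal{G}$-complement $CS_j$ contains the entire cycle. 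This gives the $CS_j$-equivalence of $\mathcal{B}_i$ with $\rho$.

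Part (c) is the main obstacle. Applying (a) with $\tilde{S}_m$ in place of $S_m$ supplies an absolutely simple $\tau' \in \mathcal{L}(\rho_{k'},\rho_{k'+1})$ which is $C\tilde{S}_m$-path equivalent to some $b' \in \mathcal{B}_i$. Because $\mathcal{B}_i$ is a single $\mathcal{A}_{\mathcal{G}}$-path component, I can splice the $CS_m$-cycle $\rho \leftrightarrow b \leftrightarrow \rho$ with the $C\tilde{S}_m$-cycle $\tau' \leftrightarrow b' \leftrightarrow \tau'$ through a path inside $\mathcal{A}_{\mathcal{G}}$ connecting $b$ to $b'$, obtaining paths $\rho \to \tau'$ and $\tau' \to \rho$ in $\mathcal{G}$ whose interiors consist entirely of absolutely appendage nodes. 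By Lemma~\ref{order_absolutely_super_simple_nodes}, any path in $\mathcal{G}$ connecting two nodes in distinct $\mathcal{L}$-compartments must traverse one of the intervening absolutely super-simple nodes, and those nodes are absolutely simple, hence not absolutely appendage. This forces $\rho$ and $\tau'$ to inhabit the same compartment, i.e.\ $\mathcal{L}(\rho_{k'},\rho_{k'+1}) = \mathcal{L}(\rho_k,\rho_{k+1})$, and we set $\tau := \tau'$. The most delicate step is this compartment-traversal claim, since ``super-simple'' speaks only about simple paths from inputs, and one must argue that any $\mathcal{G}$-path between compartments can be leveraged to exhibit such a simple-path obstruction -- precisely the structural ingredient absent from~\cite{wang20}.
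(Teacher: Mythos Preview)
Your strategy of reducing to Lemma~\ref{A_G_m_path_component_and_i_m_structural_subnetwork} in each $\mathcal{G}_j$ does not go through as written, and the paper proceeds quite differently.

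The central gap is in part~(a). Lemma~\ref{A_G_m_path_component_and_i_m_structural_subnetwork} is a statement about $C_jS_j$-equivalence for an $\iota_j o$-simple path $S_j$; to invoke it inside $\mathcal{G}_j$ for $j\neq m$ you must first produce such an $S_j$ whose $\mathcal{G}_j$-complement contains the cycle through $\tau$ and $b$. That is precisely the content of part~(b), so your argument for~(a) silently presupposes~(b). Conversely, your argument for~(b) uses~(a) to know that the non-appendage cycle nodes lie between $\rho_k$ and $\rho_{k+1}$. The two parts are thus circular. A second problem in~(a) is the step ``the endpoints $\rho_{k(j)},\rho_{k(j)+1}$ are actually absolutely super-simple and $j$-independent''. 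Lemma~\ref{basic_properties_super_simple_subnetwork} only says that adjacent \emph{absolutely} super-simple nodes are adjacent $\iota_j$-super-simple nodes; it does not give the converse, and $\iota_j$-super-simple nodes need not be absolutely super-simple. Finally, even for $j=m$, Lemma~\ref{A_G_m_path_component_and_i_m_structural_subnetwork} constrains only the $\iota_m$-\emph{simple} nodes equivalent to $\mathcal{B}_{m,i}$; your $\tau$ is merely ``not absolutely appendage'', so it could be $\iota_m$-appendage while $\iota_j$-simple for some other $j$, and then the lemma in $\mathcal{G}_m$ says nothing about it.

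The paper avoids all of this by working directly with the given data $\rho\in\mathcal{L}(\rho_k,\rho_{k+1})$ and the absolutely super-simple nodes $\rho_k,\rho_{k+1}$. For~(a) it picks any $j$ with $\tau$ $\iota_j$-simple, takes an $\iota_j o$-simple path $S_j^*$ through $\tau$, and by a three-way case split on the position of $\tau$ relative to $\rho_k,\rho_{k+1}$ along $S_j^*$ constructs, in the two bad cases, an $\iota_j o$-simple path avoiding $\rho_k$ (resp.\ $\rho_{k+1}$) by splicing with the $CS_m$-cycle and the $\rho_k\to\rho\to\rho_{k+1}$ segment --- contradicting absolute super-simplicity. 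For~(b) it builds $S_j$ explicitly: follow any $\iota_j o$-simple path up to $\rho_k$, then follow $S_m$ from $\rho_k$ to $o$; since by~(a) all non-appendage cycle nodes lie strictly between $\rho_k$ and $\rho_{k+1}$ and are avoided by $S_m$, they are avoided by $S_j$. For~(c) it repeats the case-split argument of~(a) with $\tilde S_m$, now using~(b) to route through $\rho$ via $\mathcal{B}_i$, rather than your ``compartment-traversal'' claim, which you correctly flag as unproven.
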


\begin{proof}
$(a)$ By hypothesis, $\rho$ is an absolutely simple node which belongs to $\mathcal{L}(\rho_{k}, \rho_{k+1})$. Moreover, $\rho$ cannot be an absolutely super-simple node, as $S_{m}$ passes by all absolutely super-simple nodes and $\rho$ belongs to $CS_{m}$. Consider so the $\iota_{m}o$-simple path $S_{m}^{*}$
$$\iota_{m} \rightarrow \cdots \rightarrow \rho_{k} \rightarrow \cdots \rightarrow \rho \rightarrow \cdots \rightarrow \rho_{k+1}  \rightarrow \cdots \rightarrow  o$$
Take now a node $\tau$ of $CS_{m} \setminus \mathcal{B}_{i}$ such that nodes in $\mathcal{B}_{i}$ are $CS_{m}$-path equivalent to $\tau$ and suppose that $\tau$ is not absolutely appendage. As $\tau$ is downstream nodes in $\mathcal{B}_{i}$, then we conclude that $\tau$ is downstream $\iota_{j}$, for every $j=1,\ldots,n$. That means that there is $j \in\{1,\ldots,n\}$ such that $\tau$ is $\iota_{j}$-simple. Consider the $\iota_{j}o$-simple path $S_{j}$ that passes by $\tau$. As $\rho_{k}$ and $\rho_{k+1}$ are absolutely super-simple nodes, $S_{j}$ must pass by $\rho_{k}$ and $\rho_{k+1}$. We shall see that $\tau$ is between $\rho_{k}$ and $\rho_{k+1}$. In fact, suppose that $S_{j}$ passes by these nodes in the following order
$$
\iota_{j} \rightarrow \cdots \rightarrow \tau \rightarrow \cdots \rightarrow \rho_{k} \rightarrow \cdots \rightarrow \rho_{k+1} \rightarrow \cdots \rightarrow o
$$
Then, as $\tau$ and $\rho$ belong to the same $CS_{m}$-path component and, as $\rho_{k}$ and $\rho_{k+1}$ do not belong to $CS_{m}$ (because $\rho_{k}$ and $\rho_{k+1}$ are absolutely super-simple nodes), then there is a path from $\tau \rightarrow \rho$ which does not pass by $\rho_{k}$ or $\rho_{k+1}$. Taking this path together with $S_{m}^{*}$ and $S_{j}$, we can obtain an $\iota_{j}o$-simple path that does not pass by $\rho_{k}$, which is a contradiction. On the other hand, if $S_{j}$ is of the form
$$
\iota_{j} \rightarrow \cdots \rightarrow \rho_{k} \rightarrow \cdots \rightarrow \rho_{k+1}  \rightarrow \cdots \rightarrow \tau \rightarrow \cdots \rightarrow o
$$
then, in analogous manner, we can obtain an $\iota_{j}o$-simple path that does not pass by $\rho_{k+1}$, which is also a contradiction. Therefore, $\tau$ must be between $\rho_{k}$ and $\rho_{k + 1}$, and consequently, by lemma \ref{basic_properties_super_simple_subnetwork}, $\tau$ is an absolutely simple node of $\mathcal{L}(\rho_{k}, \rho_{k+1})$. Moreover, as $\tau$ lies in $CS_{m}$, then $\tau$ must not be absolutely super-simple. \\
$(b)$ For every $j=1,\ldots,n$, we can obtain an $\iota_{j}o$-simple path $S_{j}$ such that the path of $S_{j}$ from $\rho_{k}$ to $o$ coincides with the path of $S_{m}$ from $\rho_{k}$ to $o$. Consider now $CS_{j}$. Clearly $\mathcal{B}_{i} \subset CS_{j}$. On the other hand, the absolutely appendage nodes that are $CS_{m}$-path equivalent to nodes in $\mathcal{B}_{i}$ also belong to $CS_{j}$. Consider now the node $\tau$ which is absolutely simple and $CS_{m}$-path equivalent to $\mathcal{B}_{i}$. We need to verify that $\tau$ also belongs to $CS_{j}$. In fact, by statement $a)$, $\tau$ is an absolutely simple node between $\rho_{k}$ and $\rho_{k+1}$, and by the initial hypothesis, $S_{m}$ and $S_{j}$ share the same path between $\rho_{k}$ and $\rho_{k+1}$, meaning that $\tau$ does not lie in $S_{j}$, and therefore $\tau$ belongs to $CS_{j}$. By item $(a)$ above, nodes that belong to the same $CS_{m}$-path component as $\mathcal{B}_{i}$ are absolutely appendage or absolutely simple, and therefore nodes that are $CS_{m}$-path equivalent to $\mathcal{B}_{i}$ are also $CS_{j}$-path equivalent to $\mathcal{B}_{i}$, including, in particular, $\rho$. \\
$(c)$ Suppose there is an $\iota_{m}o$-simple path $\tilde{S}_{m}$ such that nodes in $\mathcal{B}_{i}$ are $C\tilde{S}_{m}$-path equivalent to nodes in $C\tilde{S}_{m} \setminus \mathcal{B}_{i}$. As $\mathcal{B}_{i}$ is an $\mathcal{A}_{\mathcal{G}}$-path component, there is at least one node $\tau \in C\tilde{S}_{m} \setminus \mathcal{B}_{i}$ such that $\tau$ is not absolutely appendage and $\tau$ is $C\tilde{S}_{m}$-path equivalent to nodes in $\mathcal{B}_{i}$. In fact, if that was not the case, we would find an $\mathcal{A}_{\mathcal{G}}$-path component that contains $\mathcal{B}_{i}$ and which is different from $\mathcal{B}_{i}$, which is a contradiction. Therefore, there is $j=1,\ldots,n$ such that $\tau$ is $\iota_{j}$-simple. By item $(b)$ above, there is an $\iota_{j}o$-simple path $S_{j}$ such that $\rho \not\in S_{j}$ and nodes in $\mathcal{B}_{i}$ are $CS_{j}$-path equivalent to $\rho$. Moreover, as $\tau$ is $\iota_{j}$-simple, there is an $\iota_{j}o$-simple path $S_{j}^{*}$ that passes by $\tau$. As $\rho_{k}$ and $\rho_{k+1}$ are super-simple nodes, they are present in $\tilde{S}_{m}$, $S_{j}$ and $S_{j}^{*}$. Suppose that $S_{j}^{*}$ follows the order
$$
\iota_{j} \rightarrow \cdots \rightarrow \tau \rightarrow \cdots \rightarrow \rho_{k} \rightarrow \cdots \rightarrow \rho_{k+1} \rightarrow \cdots \rightarrow o
$$
In this case, we have the following:
\begin{enumerate}[(1)]
\item By definition of  $S_{j}^{*}$, there an $\iota_{j}\tau$-simple path that does not pass by $\rho_{k}$,
\item As $\tau$ is $C\tilde{S}_{m}$-path equivalent to nodes in $\mathcal{B}_{i}$ and $\tilde{S}_{m}$ contains $\rho_{k}$ and $\rho_{k+1}$, for every node from $\mathcal{B}_{i}$, there is a path from $\tau$ to this node which does not pass by $\rho_{k}$ or $\rho_{k+1}$,
\item By a similar argument, for every node in $\mathcal{B}_{i}$ there is a path from this node to $\rho$ which does not pass by $\rho_{k}$ or $\rho_{k+1}$,
\item As $\rho$ is between $\rho_{k}$ and $\rho_{k+1}$, there is a $\rho o$-simple path which does not pass by $\rho_{k}$,
\end{enumerate}
Now it follows from (1)-(4) above that we can obtain an $\iota_{j}o$-simple path that does not pass by $\rho_{k}$, which is a contradiction. On the other hand, if $S_{j}^{*}$ follows the order
$$\iota_{j} \rightarrow \cdots \rightarrow \rho_{k} \rightarrow \cdots \rightarrow \rho_{k+1} \rightarrow \cdots \rightarrow \tau \rightarrow \cdots \rightarrow o,$$
then, in an analogous manner, we can obtain an $\iota_{j}o$-simple path that does not pass by $\rho_{k+1}$, which is also a contradiction. Therefore, $\tau$ must belong to $\mathcal{L}_{j}(\rho_{k}, \rho_{k+1})$, which means that, by Lemma \ref{basic_properties_super_simple_subnetwork}, $\tau$ is an absolutely simple node that is between $\rho_{k}$ and $\rho_{k+1}$.
\end{proof}

Lemma \ref{unique_correspondence_between_B_i_and_absolutely_super_simple_subnetwork} implies that the correspondence between the $\mathcal{A}_{\mathcal{G}}$-path components $\mathcal{B}_{i}$ and the absolutely super-simple subnetworks is unique, and therefore the absolutely super-simple structural subnetworks are well defined (see Definition \ref{absolutely_super_simple_structural_subnetwork}).

\begin{lemma} \label{basic_properties_absolutely_super_simple_structural_subnetwork}
Let $\rho_{k}>\rho_{k+1}$ be two adjacent absolutely super-simple nodes. Then $\mathcal{L}'_{m}(\rho_{k}, \rho_{k+1}) = \mathcal{L}'(\rho_{k}, \rho_{k+1})$, for every $m=1,\ldots,n$.
\end{lemma}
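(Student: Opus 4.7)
The plan is to establish the equality of nodes first and then observe that the edge sets automatically agree. By Lemma \ref{basic_properties_super_simple_subnetwork}(c), $\mathcal{L}_m(\rho_k,\rho_{k+1})=\mathcal{L}(\rho_k,\rho_{k+1})$, so only the appendage parts $\mathcal{B}_m$ and $\mathcal{B}$ need to be shown to coincide. Once the node sets agree, the arrows agree because every node of $\mathcal{L}'(\rho_k,\rho_{k+1})$ is either an absolutely simple node between two absolutely super-simple nodes or an absolutely appendage node, hence in $\mathcal{G}_m$; so the arrows of $\mathcal{G}$ connecting such nodes are exactly the arrows of $\mathcal{G}_m$ connecting them.

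For the inclusion $\mathcal{B}\subseteq\mathcal{B}_m$, take an absolutely appendage node $\beta\in\mathcal{B}$ that is $CS_{m'}$-path equivalent to some $\rho\in\mathcal{L}(\rho_k,\rho_{k+1})$ for some $m'$ and some $\iota_{m'}o$-simple path $S_{m'}$. By Lemma \ref{unique_correspondence_between_B_i_and_absolutely_super_simple_subnetwork}(b), there exists an $\iota_m o$-simple path $S_m$ such that $\beta$ is $CS_m$-path equivalent to $\rho$. The cycle realising this equivalence consists of a path $\beta\to\cdots\to\rho$ and a path $\rho\to\cdots\to\beta$ in $CS_m$. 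Since $\beta$ is absolutely appendage, $\beta$ lies in $\mathcal{G}_m$; every intermediate node $\tau$ on the $\beta\to\rho$ piece is downstream from $\beta$ (hence from $\iota_m$) and upstream from $\rho$ (hence from $o$), so $\tau\in\mathcal{G}_m$; on the return piece $\rho\to\beta$, each $\tau$ is downstream from $\rho$ (hence from $\iota_m$, as $\rho$ is absolutely simple) and upstream from $\beta$ which is in $\mathcal{G}_m$, so again $\tau\in\mathcal{G}_m$. Thus the whole cycle lies in $\mathcal{G}_m\cap CS_m=C_mS_m$, and $\beta\in\mathcal{B}_m$.

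The main difficulty lies in the reverse inclusion $\mathcal{B}_m\subseteq\mathcal{B}$, where we must upgrade the mere $\iota_m$-appendage status of $\beta\in\mathcal{B}_m$ to absolutely appendage. Suppose $\beta\in\mathcal{B}_m$, with $\beta$ being $C_m S_m$-path equivalent to some $\rho\in\mathcal{L}_m(\rho_k,\rho_{k+1})=\mathcal{L}(\rho_k,\rho_{k+1})$. Fix $j\in\{1,\ldots,n\}$; I will argue $\beta$ is $\iota_j$-appendage. First, $\beta$ is downstream from $\iota_j$: since $\rho$ is absolutely simple and there is a path $\rho\to\beta$ in $C_m S_m$, $\beta$ is downstream from $\iota_j$ via $\rho$. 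Second, suppose for contradiction that $\beta$ were $\iota_j$-simple, lying on an $\iota_j o$-simple path $P$. Because $\rho_k$ and $\rho_{k+1}$ are absolutely super-simple, $P$ passes through both. If $\beta$ lies on $P$ strictly between $\rho_k$ and $\rho_{k+1}$, then by Lemma \ref{basic_properties_super_simple_subnetwork}(b), $\beta$ would be absolutely simple, contradicting $\beta$ being $\iota_m$-appendage. Otherwise $\beta$ lies on $P$ before $\rho_k$ (the case after $\rho_{k+1}$ is symmetric); concatenating the initial segment $\iota_j\to\cdots\to\beta$ of $P$, a path $\beta\to\cdots\to\rho$ inside $C_mS_m$, a segment $\rho\to\cdots\to\rho_{k+1}$ along $S_m$, and the tail $\rho_{k+1}\to\cdots\to o$ of $P$, yields a walk from $\iota_j$ to $o$ that avoids $\rho_k$ (each piece avoids $\rho_k$ for the same reasons used in the proof of Lemma \ref{unique_correspondence_between_B_i_and_absolutely_super_simple_subnetwork}(c)). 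Trimming this walk to a simple path preserves the property of avoiding $\rho_k$, contradicting $\rho_k$ being absolutely super-simple. Hence $\beta$ is $\iota_j$-appendage for every $j$, so $\beta$ is absolutely appendage. Finally, since $C_mS_m\subseteq CS_m$, the equivalence witnessing $\beta\in\mathcal{B}_m$ already shows $\beta$ is $CS_m$-path equivalent to $\rho\in\mathcal{L}(\rho_k,\rho_{k+1})$, so $\beta\in\mathcal{B}$.
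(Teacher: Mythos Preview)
Your proof is correct and follows essentially the same approach as the paper: reduce to $\mathcal{L}_m=\mathcal{L}$ via Lemma~\ref{basic_properties_super_simple_subnetwork}(c), then show $\mathcal{B}\subseteq\mathcal{B}_m$ using Lemma~\ref{unique_correspondence_between_B_i_and_absolutely_super_simple_subnetwork}(b) and the fact that the relevant cycle stays in $\mathcal{G}_m$, and show $\mathcal{B}_m\subseteq\mathcal{B}$ by the three-case contradiction argument on the position of $\beta$ relative to $\rho_k,\rho_{k+1}$.

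There is one small slip to fix: in the ``before $\rho_k$'' case you write ``a segment $\rho\to\cdots\to\rho_{k+1}$ along $S_m$'', but $\rho\in C_mS_m$ means $\rho\notin S_m$, so no such segment of $S_m$ exists. The correct piece is: since $\rho\in\mathcal{L}(\rho_k,\rho_{k+1})$, some $\iota_lo$-simple path passes through $\rho_k,\rho,\rho_{k+1}$ in that order, and its $\rho\to\rho_{k+1}$ segment avoids $\rho_k$. With that replacement the concatenated walk avoids $\rho_k$ as claimed and the argument goes through exactly as in the paper.
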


\begin{proof}
By Lemma \ref{basic_properties_super_simple_subnetwork}, we already know that $\mathcal{L}_{m}(\rho_{k}, \rho_{k+1}) = \mathcal{L}(\rho_{k}, \rho_{k+1})$, for every $m=1,\ldots,n$. By definition, we also know that $\mathcal{L}'(\rho_{k}, \rho_{k+1}) = \mathcal{L}(\rho_{k}, \rho_{k+1}) \cup \mathcal{B}$, where $\mathcal{B}$ consists of all absolutely appendage nodes that are $CS_{m}$-path equivalent to nodes in $\mathcal{L}(\rho_{k}, \rho_{k+1})$ for some $\iota_{m}o$-simple path $S_{m}$, for some $m \in \{1,\ldots,n\}$. Consider an $\mathcal{A}_{\mathcal{G}}$-path component $\mathcal{B}_{i} \subset \mathcal{L}'(\rho_{k}, \rho_{k+1})$. By item $(b)$ of Lemma \ref{unique_correspondence_between_B_i_and_absolutely_super_simple_subnetwork}, for every $m=1,\ldots,n$, there is an $\iota_{m}o$-simple path $S_{m}$ such that nodes in $\mathcal{B}_{i}$ are $CS_{m}$-path equivalent to nodes in $\mathcal{L}(\rho_{k}, \rho_{k+1})$. Moreover, as nodes in $\mathcal{B}_{i}$ are downstream from $\iota_{m}$, then all nodes that are $CS_{m}$-path equivalent to nodes in $\mathcal{B}_{i}$ are downstream from $\iota_{m}$, and therefore nodes in $\mathcal{B}_{i}$ are $C_{m}S_{m}$-path equivalent to nodes in $\mathcal{L}_{m}(\rho_{k}, \rho_{k+1})$, for every $m=1,\ldots,n$. As this property is valid for every $\mathcal{A}_{\mathcal{G}}$-path component $\mathcal{B}_{i} \subset \mathcal{L}'(\rho_{k}, \rho_{k+1})$, we conclude that $\mathcal{L}'(\rho_{k}, \rho_{k+1}) \subseteq \mathcal{L}'_{m}(\rho_{k}, \rho_{k+1})$, for every $m=1,\ldots,n$. On the other hand, consider an $\mathcal{A}_{\mathcal{G}_{m}}$-path component $\mathcal{B}_{m, i} \subset \mathcal{L}'_{m}(\rho_{k}, \rho_{k+1})$. As there is a path between nodes in $\mathcal{L}_{m}(\rho_{k}, \rho_{k+1}) = \mathcal{L}(\rho_{k}, \rho_{k+1})$ and nodes in $\mathcal{B}_{m, i}$, then nodes in $\mathcal{B}_{m, i}$ are downstream from $\iota_{j}$, $j = 1, \ldots, n$. Suppose a node $\tau$ of $\mathcal{B}_{m, i}$ is not an absolutely appendage node, i.e., that there is $j \in\{1,\ldots,n\}$ such that $\tau$ is $\iota_{j}$-simple. Consider the $\iota_{j}o$-simple path $S_{j}$ that passes by $\tau$. If $\tau$ is between $\rho_{k}$ and $\rho_{k+1}$, then, by Lemma \ref{basic_properties_super_simple_subnetwork}, $\tau$ is an absolutely simple node, which is contradiction considering $\tau$ is $\iota_{m}$-appendage. Suppose then that $S_{j}$ follows the order
$$
\iota_{j} \rightarrow \cdots \rightarrow \tau \rightarrow \cdots \rightarrow \rho_{k} \rightarrow \cdots \rightarrow \rho_{k+1} \rightarrow \cdots \rightarrow o
$$
In that case, as there is a path between $\tau \in \mathcal{B}_{m,i}$ and nodes in $\mathcal{L}_{m}(\rho_{k}, \rho_{k+1}) = \mathcal{L}(\rho_{k}, \rho_{k+1})$ which does not pass by $\rho_{k}$ and there is a path between $\iota_{j}$ and $\tau$ which does not pass by $\rho_{k}$, then, we can obtain an $\iota_{j}o$-simple path that does not pass by $\rho_{k}$, contradicting the fact that $\rho_{k}$ is absolutely super-simple. By a similar argument, if $S_{j}$ follows the order
$$
\iota_{j} \rightarrow \cdots \rightarrow \rho_{k} \rightarrow \cdots \rightarrow \rho_{k+1} \rightarrow \cdots \rightarrow \tau \rightarrow \cdots \rightarrow o
$$
then we can obtain an $\iota_{j}o$-simple path that does not pass by $\rho_{k+1}$, which is also a contradiction. Therefore, we conclude that every node in $\mathcal{B}_{m, i}$ is absolutely appendage, and consequently $\mathcal{B}_{m, i}$ is an $\mathcal{A}_{\mathcal{G}}$-path component. We know that there is an $\iota_{m}o$-simple path $S_{m}$ such that $\mathcal{B}_{m, i}$ is $C_{m}S_{m}$-path equivalent to nodes in $\mathcal{L}_{m}(\rho_{k}, \rho_{k+1}) = \mathcal{L}(\rho_{k}, \rho_{k+1})$, and, consequently, as $C_{m}S_{m} \subset CS_{m}$, nodes in $\mathcal{B}_{m, i}$ are $CS_{m}$-path equivalent to nodes in $\mathcal{L}_{m}(\rho_{k}, \rho_{k+1}) = \mathcal{L}(\rho_{k}, \rho_{k+1})$, i.e., $\mathcal{B}_{m, i} \subset \mathcal{L}'(\rho_{k}, \rho_{k+1})$. As this is valid for every $\mathcal{A}_{\mathcal{G}_{m}}$-path component of $\mathcal{L}'_{m}(\rho_{k}, \rho_{k+1})$, we conclude that, for every $m=1,\ldots,n$, one has
$\mathcal{L}'_{m}(\rho_{k}, \rho_{k+1}) \subseteq \mathcal{L}'(\rho_{k}, \rho_{k+1}) \Rightarrow \mathcal{L}'_{m}(\rho_{k}, \rho_{k+1}) = \mathcal{L}'(\rho_{k}, \rho_{k+1})
$.
\end{proof}

Wang \etal \cite{wang20} proved that in networks with only one input node, every irreducible structural homeostasis block corresponds to the homeostasis determinant of a super-simple structural subnetwork $\mathcal{L}'(\tau, \sigma)$, where $\tau$ is the input node and $\sigma$ is the output node of this subnetwork. On the other hand, the homeostasis determinant of each super-simple structural subnetwork $\mathcal{L}'(\tau, \sigma)$ uniquely corresponds to an irreducible structural homeostasis block.

\begin{theorem} \label{structural_homeostasis_G}
Consider the core network $\mathcal{G}$ with multiple input nodes. If there is an irreducible structural homeostasis block $B_{s}$ such that $\det(B_{s})$ is an irreducible factor of $\det\!\big(\langle H \rangle\big)$, then $\mathcal{G}$ has adjacent absolutely super-simple nodes $\rho_{k}$ and $\rho_{k+1}$ such that $\det(B_{s}) = \det\!\big( H(\mathcal{L}'(\rho_{k}, \rho_{k+1}))\big)$
\end{theorem}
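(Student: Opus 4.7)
The plan is to reduce the multiple input case to the single input case of Wang \textit{et al.}~\cite{wang20} by exploiting the factorization of $\det(\langle H \rangle)$ already established. Recall from \eqref{simplification_of_homeostasis_matrix_core_network_multiple_inputs_as_a_sum} that
\[
\det\!\big(\langle H \rangle\big) = \sum_{m=1}^{n} \pm f_{\iota_{m},\mathcal{I}}\,\det(J_{\mathcal{D}_{m}})\,\det(H_{\iota_{m}}^{c}),
\]
and from Proposition \ref{det_J_Dm_does_not_share_common_factors_det_H} that the factors of $\det(J_{\mathcal{D}_{m}})$ never appear as factors of $\det\!\big(\langle H\rangle\big)$. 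A structural block $B_{s}$ is, by Definition \ref{definition_appendage_and_structural}, free of partial derivatives with respect to $\mathcal{I}$; combined with Lemma \ref{lemma_commum_factors_det_H} (used already in the derivation of \eqref{common_factor_H_iota_m}), this forces $\det(B_{s})$ to divide every coefficient $\det(H_{\iota_{m}}^{c})$.

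Fix $m\in\{1,\ldots,n\}$. The subnetwork $\mathcal{G}_{m}$ is a core network with the \emph{single} input node $\iota_{m}$, and $H_{\iota_{m}}^{c}$ is precisely its single input homeostasis matrix. Applying the single input classification~\cite[Thm 6.11]{wang20}, any irreducible structural factor of $\det(H_{\iota_{m}}^{c})$ has the form $\det\!\big(H(\mathcal{L}'_{m}(\rho_{k}^{(m)},\rho_{k+1}^{(m)}))\big)$ for some adjacent $\iota_{m}$-super-simple nodes $\rho_{k}^{(m)}>\rho_{k+1}^{(m)}$. Therefore
\[
\det(B_{s}) \;=\; \det\!\big(H(\mathcal{L}'_{m}(\rho_{k}^{(m)},\rho_{k+1}^{(m)}))\big) \qquad (m=1,\ldots,n).
\]

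The central step is to show that the pair $(\rho_{k}^{(m)},\rho_{k+1}^{(m)})$ does not actually depend on $m$. The entries of $B_{s}$ are fixed symbolic partial derivatives $f_{\tau,x_{\ell}}$, so the set of nodes appearing as row/column indices of $B_{s}$ is intrinsically determined; consequently the node set of each $\mathcal{L}'_{m}(\rho_{k}^{(m)},\rho_{k+1}^{(m)})$ is the same. The two distinguished nodes of a super-simple structural subnetwork can be read off as the unique pair acting as input and output of the subnetwork (every other node is downstream from one and upstream from the other along the simple paths internal to it), so $\rho_{k}^{(m)}$ and $\rho_{k+1}^{(m)}$ are identified across all $m$. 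Calling them $\rho_{k}$ and $\rho_{k+1}$, they are $\iota_{m}$-super-simple for \emph{every} $m$, hence absolutely super-simple by Definition \ref{useful_definitions_structural_multiple_input_nodes}(b). Adjacency in $\mathcal{G}$ follows because any intervening absolutely super-simple node would be $\iota_{m}$-super-simple for every $m$, violating the chosen $\mathcal{G}_{m}$-adjacency. Lemma \ref{basic_properties_absolutely_super_simple_structural_subnetwork} then gives $\mathcal{L}'_{m}(\rho_{k},\rho_{k+1})=\mathcal{L}'(\rho_{k},\rho_{k+1})$, and we conclude $\det(B_{s})=\det\!\big(H(\mathcal{L}'(\rho_{k},\rho_{k+1}))\big)$.

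The main obstacle is the identification step: a priori one could imagine that the single input analyses on the different $\mathcal{G}_{m}$ produce genuinely different super-simple structural subnetworks that happen to have equal (irreducible) homeostasis determinants. Ruling this out relies on the fact that $\det(B_{s})$ is an irreducible polynomial in the symbolic entries $f_{\tau,x_{\ell}}$: two super-simple structural subnetworks whose homeostasis matrices yield the same irreducible polynomial must in fact be built from the same nodes and arrows, so their boundary (super-simple) nodes coincide. Making this argument precise — by reading off the node set from the support of $\det(B_{s})$ as a polynomial in the symbolic variables $f_{\tau,x_{\ell}}$ — is what secures the passage from the per-$m$ statements to the global statement about absolutely super-simple nodes.
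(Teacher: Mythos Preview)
Your proposal is correct and follows essentially the same approach as the paper: reduce to the single input case by showing $\det(B_{s})$ divides each $\det(H_{\iota_{m}}^{c})$, invoke \cite[Thm~6.11]{wang20} on every $\mathcal{G}_{m}$, identify the resulting super-simple structural subnetworks across $m$, and finish with Lemma~\ref{basic_properties_absolutely_super_simple_structural_subnetwork}. The paper's own proof glosses over the identification step (it simply asserts ``all the subnetworks $\mathcal{L}'_{m}(\rho_{k_{m}},\rho_{k_{m}+1})$ must share the same input and output nodes''), whereas you make explicit that the node set is read off from the support of the irreducible polynomial $\det(B_{s})$ in the symbolic entries $f_{\tau,x_{\ell}}$; your treatment of adjacency is likewise more careful than the paper's.
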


\begin{proof}
By the results above, we know that $\det(B_{s})$ is an irreducible factor of $\det \langle H \rangle$ if and only if $\det(B_{s})$ is an irreducible factor of each homeostasis determinant $\det(H_{\iota_{m}}^{c})$. That means, by the results of \cite[Thm 6.11]{wang20}, that for every $m=1,\ldots,n$, there are $\iota_{m}$-super-simple nodes $\rho_{k_{m}}, \rho_{k_{m}+1}$ such that $\det(B_{s}) = \det\!\big(H(\mathcal{L}_{m}'(\rho_{k_{m}}, \rho_{k_{m}+1}))\big)$. In particular, these implies that all the subnetworks $\mathcal{L}'(\rho_{k_{m}}, \rho_{k_{m}+1})$ must share the same input and output nodes, i.e., there are absolutely super-simple nodes $\rho_{k}, \rho_{k+1}$ such that $\det(B_{s}) = \det\!\big(H(\mathcal{L}_{m}'(\rho_{k}, \rho_{k+1}))\big)$, for every $m = 1, \ldots, n$. By Lemma \ref{basic_properties_absolutely_super_simple_structural_subnetwork}, $\det(B_{s}) = \det\!\big(H(\mathcal{L}'(\rho_{k}, \rho_{k+1}))\big)$.
\end{proof}

\begin{corollary} \label{structural_homeostasis_G_reverse}
Consider a core network $\mathcal{G}$ with multiple input nodes. If $\mathcal{G}$ have absolutely super-simple nodes other than the output node, then the homeostasis matrix of each absolutely super-simple structural subnetwork corresponds to an irreducible structural homeostasis block.
\end{corollary}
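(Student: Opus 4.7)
The plan is to prove this corollary by invoking the single-input-node version of the result from Wang \etal~\cite{wang20} for each core subnetwork $\mathcal{G}_m$ and then combining these via the common factor argument developed in Subsection \ref{topological_characterization}. Let $\rho_k>\rho_{k+1}$ be any pair of adjacent absolutely super-simple nodes of $\mathcal{G}$. First, I would apply Lemma \ref{basic_properties_super_simple_subnetwork}(a) to conclude that $\rho_k$ and $\rho_{k+1}$ are adjacent $\iota_m$-super-simple nodes in each core subnetwork $\mathcal{G}_m$. Consequently, the $\iota_m$-super-simple structural subnetwork $\mathcal{L}'_m(\rho_k,\rho_{k+1})$ is well defined inside every $\mathcal{G}_m$, and Lemma \ref{basic_properties_absolutely_super_simple_structural_subnetwork} guarantees the crucial identification $\mathcal{L}'_m(\rho_k,\rho_{k+1})=\mathcal{L}'(\rho_k,\rho_{k+1})$ for every $m=1,\ldots,n$.

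Next, I would apply the single-input-node result \cite[Thm 6.11]{wang20} separately to each $\mathcal{G}_m$. That theorem asserts that the homeostasis matrix of each $\iota_m$-super-simple structural subnetwork yields an irreducible structural factor of $\det(H_{\iota_m}^{c})$. Together with the identification above, this means that the polynomial $\det\!\big(H(\mathcal{L}'(\rho_k,\rho_{k+1}))\big)$ is simultaneously an irreducible structural factor of \emph{every} $\det(H_{\iota_m}^{c})$ for $m=1,\ldots,n$. Therefore it is a common factor of all of these determinants, and by the Frobenius--K\"onig factorization \eqref{common_factor_H_iota_m} it must appear as one of the shared blocks $B_j$ in the decomposition \eqref{factoring_det_weighted_H} of $\det\!\big(\langle H\rangle\big)$. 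Since none of the $f_{\iota_m,\mathcal{I}}$ enters this matrix, it is not the input counterweight block $C$.

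To conclude, I must verify the block obtained is structural and not appendage. For this I would use Definition~\ref{definition_appendage_and_structural} together with the structure of $\mathcal{L}'(\rho_k,\rho_{k+1})$: viewed as a single input-output network with input $\rho_k$ and output $\rho_{k+1}$, the node $\rho_k$ is not self-coupled in the homeostasis matrix (since its row is dropped in the construction of $H$), so the block has exactly one fewer self-coupling than its order, which is precisely the defining condition for a structural block. Irreducibility is inherited from the single-input case applied to any one of the $\mathcal{G}_m$'s. Combined with Theorem~\ref{structural_homeostasis_G}, this establishes a bijective correspondence between adjacent pairs of absolutely super-simple nodes and the irreducible structural homeostasis blocks of $\langle H\rangle$.

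The main obstacle I anticipate is the bookkeeping needed to rule out that the common factor collapses with pieces of other blocks (e.g., with the appendage blocks or with $\det(J_{\mathcal{D}_m})$); however, the former is excluded by the self-coupling count (appendage blocks have full self-coupling, structural blocks have one less) and the latter by Proposition~\ref{det_J_Dm_does_not_share_common_factors_det_H}, which ensures the $\det(J_{\mathcal{D}_m})$ factors do not interfere with the block decomposition of $\det\!\big(\langle H\rangle\big)$. These two facts together make the argument go through cleanly.
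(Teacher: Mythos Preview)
Your proposal is correct and follows essentially the same approach as the paper: identify $\mathcal{L}'_m(\rho_k,\rho_{k+1})=\mathcal{L}'(\rho_k,\rho_{k+1})$ via Lemma~\ref{basic_properties_absolutely_super_simple_structural_subnetwork}, apply the single-input result of Wang \etal\ in each $\mathcal{G}_m$, and conclude that the resulting irreducible factor is common to all $\det(H_{\iota_m}^c)$ and hence a block of $\langle H\rangle$. The paper's proof is terser (three sentences) and invokes \cite[Thm~7.2]{wang20} rather than \cite[Thm~6.11]{wang20} for the reverse direction; your additional bookkeeping about self-coupling counts and non-collision with $\det(J_{\mathcal{D}_m})$ is not needed once the single-input theorem is quoted, since that theorem already delivers irreducibility and the structural classification directly.
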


\begin{proof}
Consider the adjacent absolutely super-simple nodes $\rho_{k}, \rho_{k+1}$ in $\mathcal{G}$. By Lemma \ref{basic_properties_absolutely_super_simple_structural_subnetwork}, for every $m=1,\ldots,n$, we have $\mathcal{L}'_{m}(\rho_{k}, \rho_{k+1}) = \mathcal{L}'(\rho_{k}, \rho_{k+1})$. As proved in \cite[Thm 7.2]{wang20}, this means that the homeostasis matrix of $\mathcal{L}'(\rho_{k}, \rho_{k+1})$ is an irreducible structural homeostasis block of each subnetwork $\mathcal{G}_{m}$, and therefore the homeostasis matrix of $\mathcal{L}'(\rho_{k}, \rho_{k+1})$ is an irreducible structural homeostasis block of $\mathcal{G}$.
\end{proof}

\subsubsection{Input Counterweight Homeostasis}

Recall that for each core network $\mathcal{G}$, the determinant of the input counterweight homeostasis block $C$ is unique up to signal and its explicit formula is given by \eqref{term_input_counter_weight_homeostasis}.

We have already proved that $\det(C)$ is an irreducible factor of $\det\!\big(\langle H \rangle\big)$. It remains to show that $C$ is associated with the input counterweight subnetwork $\mathcal{W}_{\mathcal{G}}$ (see Definition \ref{definition_input_counterweight_subnetwork}). For this purpose, we first verify that in certain sense $\mathcal{G}$ can be divided in subnetworks associated to appendage homeostasis, absolutely super-simple structural subnetworks (structural homeostasis) and $\mathcal{W}_{\mathcal{G}}$. 

\begin{lemma} \label{partition_G_W_G_appendage_subnetworks_structuralk_subnetwork} Consider a core network $\mathcal{G}$ with multiple input nodes, with absolutely super-simple nodes $\rho_{1} > \cdots > \rho_{s} > \rho_{s+1} = o$, and its associated input counterweight subnetwork $\mathcal{W}_{\mathcal{G}}$. Then, the following statements are valid
\begin{enumerate}[(a)]
    \item $\mathcal{W}_{\mathcal{G}}$ does not share nodes with any of the subnetworks $\mathcal{A}_{i}$, where $\mathcal{A}_{i}$ is an $\mathcal{A}_{\mathcal{G}}$- path component that satisfy the following condition: for all $m = 1, \cdots, n$, for every $\iota_{m}o$-simple path $S_{m}$, nodes in $\mathcal{A}_{i}$ are not $CS_{m}$-path equivalent to any node in $CS_{m} \setminus \mathcal{A}_{i}$.
    \item For $k \neq 1$, $\mathcal{W}_{\mathcal{G}}$ does not share nodes with $\mathcal{L}'(\rho_{k}, \rho_{k+1})$. Moreover, the only common node between $\mathcal{W}_{\mathcal{G}}$ and $\mathcal{L}'(\rho_{1}, \rho_{2})$ is $\rho_{1}$.
    \item If a node $\sigma$ in $\mathcal{G}$ is such that it does not belong to any subnetwork $\mathcal{A}_{i}$ defined in item $(a)$, neither to any absolutely super-simple structural subnetwork, then $\sigma$ belongs to $\mathcal{W}_{\mathcal{G}}$.
\end{enumerate}
\end{lemma}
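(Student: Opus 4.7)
The plan is to prove each of (a), (b), (c) by a case analysis on the five categories of nodes making up $\mathcal{W}_\mathcal{G}$ in Definition~\ref{definition_input_counterweight_subnetwork}: (1) input nodes $\iota_m$; (2) the first absolutely super-simple node $\rho_1$; (3) nodes on some $\iota_m o$-simple path passing $\iota_m, \tau, \rho_1$ in that order; (4) nodes that are neither absolutely simple nor absolutely appendage; and (5) absolutely appendage nodes that are $CS_m$-path equivalent, for some $m$, to a non-absolutely-appendage node not between two absolutely super-simple nodes. For part~(a), I would observe that nodes in categories (1)--(4) fail to be absolutely appendage (inputs are not downstream of all input nodes, $\rho_1$ is absolutely simple, category~(3) nodes are $\iota_m$-simple, and category~(4) excludes absolutely appendage by definition), so they lie outside $\mathcal{A}_\mathcal{G} \supset \mathcal{A}_i$; for category~(5), each such $\sigma$ is $CS_m$-path equivalent to a node $\tau \in CS_m \setminus \mathcal{A}_i$ that is not absolutely appendage, so $\sigma \in \mathcal{A}_i$ would contradict the generalized no-cycle condition of Definition~\ref{no_cycle}.

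For part~(b), I would show that for $k \neq 1$ no category contributes a node of $\mathcal{L}'(\rho_k,\rho_{k+1})$, and for $k = 1$ only $\rho_1$ is shared. Categories~(1), (4), and the non-$\rho_1$ portion of~(2) contain no absolutely simple node downstream of $\rho_1$ and no absolutely appendage node, so they lie outside $\mathcal{L}(\rho_k,\rho_{k+1}) \cup \mathcal{B}$. For category~(3), if some $\tau \neq \rho_1$ also lay in $\mathcal{L}(\rho_k,\rho_{k+1})$, then $\tau$ would be simultaneously upstream and downstream of $\rho_1$; splicing the $\iota_m$-to-$\tau$ segment of the category~(3) path with the $\tau$-to-$\rho_{k+1}$-to-$o$ tail of the $\iota_{m'}o$-simple path witnessing $\tau \in \mathcal{L}$ would produce an $\iota_m o$-walk avoiding $\rho_1$, and extracting a simple subpath would contradict absolute super-simplicity of $\rho_1$. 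For category~(5), the $\mathcal{C}_i$-component of $\sigma$ is not a $\mathcal{B}_i$-component, so by the uniqueness of correspondence in Lemma~\ref{unique_correspondence_between_B_i_and_absolutely_super_simple_subnetwork} it contributes no node to any $\mathcal{L}'(\rho_k,\rho_{k+1})$.

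For part~(c), I would run an exhaustive case analysis on $\sigma \in \mathcal{G}$. If $\sigma$ is neither absolutely simple nor absolutely appendage, then $\sigma$ is in category~(4). If $\sigma$ is absolutely appendage, then $\sigma$ lies in some $\mathcal{A}_\mathcal{G}$-path component of type $\mathcal{A}_i$ (contradicts the hypothesis), $\mathcal{B}_i$ (places $\sigma \in \mathcal{L}'(\rho_k,\rho_{k+1})$, contradicts the hypothesis), or $\mathcal{C}_i$ (must show $\sigma \in \mathcal{C}$). If $\sigma$ is absolutely simple, then either $\sigma = \rho_1$ (category~(2)), or $\sigma$ is an input node (category~(1)), or $\sigma$ is an absolutely super-simple node other than $\rho_1$ or lies between two adjacent absolutely super-simple nodes on some $\iota_m o$-simple path (in either case $\sigma \in \mathcal{L}(\rho_k,\rho_{k+1})$, contradicting the hypothesis), or $\sigma$ lies strictly between $\iota_m$ and $\rho_1$ on some $\iota_m o$-simple path (category~(3)).

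The main obstacle will be the $\mathcal{C}_i$-subcase of (c). Failure of class (a) of the partition provides some $m$, some $\iota_m o$-simple path $S_m$, and a node $\tau \in CS_m \setminus \mathcal{C}_i$ that is $CS_m$-path equivalent to $\sigma \in \mathcal{C}_i$. Since $\sigma$ and $\tau$ lie in distinct $\mathcal{A}_\mathcal{G}$-path components, the closed walk in $CS_m$ witnessing $\sigma \sim \tau \sim \sigma$ cannot lie entirely in $\mathcal{A}_\mathcal{G}$ and so must traverse some non-absolutely-appendage node $\eta$. If $\eta$ were absolutely simple and lay in some $\mathcal{L}(\rho_k,\rho_{k+1})$, then the argument of Lemma~\ref{unique_correspondence_between_B_i_and_absolutely_super_simple_subnetwork}(b)---which uses only the hypothesis of equivalence to an absolutely simple node in $\mathcal{L}(\rho_k,\rho_{k+1})$, not that the component already be of type $\mathcal{B}_i$---would upgrade the equivalence to all input indices, forcing $\mathcal{C}_i$ to be of type $\mathcal{B}_i$, a contradiction. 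Hence $\eta$ is not between two absolutely super-simple nodes, and since $\sigma$ is $CS_m$-path equivalent to $\eta$, we conclude $\sigma \in \mathcal{C}$ as required.
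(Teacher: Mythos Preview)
Your proposal is correct and follows essentially the same route as the paper: a case analysis over the five categories of $\mathcal{W}_\mathcal{G}$ for (a) and (b), and over the trichotomy absolutely simple / absolutely appendage / neither for (c). Your treatment of the $\mathcal{C}_i$-subcase in (c) is in fact more careful than the paper's, which simply asserts that a $\mathcal{C}_i$-component satisfies the defining property of $\mathcal{C}$; your argument---extracting a non-absolutely-appendage node $\eta$ on the $CS_m$-cycle and invoking the proof of Lemma~\ref{unique_correspondence_between_B_i_and_absolutely_super_simple_subnetwork}(b) to rule out $\eta \in \mathcal{L}(\rho_k,\rho_{k+1})$---fills that gap correctly.
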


\begin{proof}
$(a)$ The proof is straightforward, as none of the nodes in $\mathcal{W}_{\mathcal{G}}$ are absolutely appendage nodes that satisfy the condition of item $(a)$ above. \\
$(b)$ Consider the subnetwork $\mathcal{S}$ composed by the union of all absolutely structural super-simple subnetworks of $\mathcal{G}$:
$$
\mathcal{S} = \mathcal{L}'(\rho_{1}, \rho_{2}) \cup \mathcal{L}'(\rho_{2}, \rho_{3}) \cup \cdots \cup \mathcal{L}'(\rho_{s}, o)
$$
Therefore, with the exception of $\rho_{1}$, by Lemmas \ref{basic_properties_super_simple_subnetwork} and \ref{unique_correspondence_between_B_i_and_absolutely_super_simple_subnetwork}, every node $\sigma$ in $\mathcal{S}$ satisfies one of the two conditions: $\sigma$ is absolutely simple and for every $m=1,\ldots,n$ there is a simple path $S_{m}$ that passes by $\iota_{m}$, $\rho_{1}$ and $\sigma$ in this order; or $\sigma$ is an absolutely appendage node which belongs to an $\mathcal{A}_{\mathcal{G}}$-path component $\mathcal{B}_{i}$ that $CS_{m}$-path equivalent to an absolutely simple node $\rho$ in $CS_{m} \setminus \mathcal{B}_{i}$ which belongs to an absolutely super-simple subnetwork $\mathcal{L}(\rho_{k}, \rho_{k+1})$. In both cases, by definition, $\sigma$ does not belong to $\mathcal{W}_{\mathcal{G}}$. Whereas $\rho_{1}$ belongs to both $\mathcal{W}_{\mathcal{G}}$ and $\mathcal{S}$. \\
$(c)$ If $\sigma$ is absolutely simple, then for every $\iota_{m}o$-simple path $S_{m}$ that passes by $\sigma$, $\sigma$ must be upstream from $\rho_{1}$ (if this does not happen, then, by Definition \ref{definition_adjacent_absolutely_super_simple_}, $\sigma$ belongs to one absolutely structural super-simple subnetwork, which is a contradiction). This means that there is an $\iota_{m}o$-simple path $S_{m}$ which follows the order $\iota_{m} \rightarrow \cdots \rightarrow \sigma \rightarrow \cdots \rightarrow \rho_{1}$, and therefore $\sigma$ belongs to $\mathcal{W}_{\mathcal{G}}$. On the other hand, if $\sigma$ is absolutely appendage, then, by the partition of $\mathcal{A}_{\mathcal{G}}$ explained above, $\sigma$ must belong to an $\mathcal{A}_{\mathcal{G}}$-path component $\mathcal{C}_{i}$ for which there is $CS_{m}$-path equivalent to nodes that are not absolutely appendage and that are not between two absolutely super-simple nodes, for some $\iota_{m}o$-simple path $S_{m}$, for some $m \in\{1,\ldots,n\}$, i.e., in this case $\sigma$ also belongs to $\mathcal{W}_{\mathcal{G}}$. Finally, if $\sigma$ is not absolutely simple nor absolutely appendage, then $\sigma$ belongs to $\mathcal{W}_{\mathcal{G}}$.
\end{proof}

Lemma \ref{partition_G_W_G_appendage_subnetworks_structuralk_subnetwork} implies that the network $\mathcal{G}$ is basically the union between $\mathcal{W}_{\mathcal{G}}$, the subnetworks associated to appendage homeostasis and the absolutely structural super-simple subnetworks of $\mathcal{G}$. Moreover, these different subnetworks do not share common paths. It is interesting to note that $\mathcal{W}_{\mathcal{G}}$ contains all the vestigial subnetworks, as these subnetworks are composed by nodes which are not either absolutely simple nor absolutely appendage.

\begin{lemma} \label{properties_input_counterweight_subnetwork} Consider a core network $\mathcal{G}$ with multiple input nodes and its associated input counterweight subnetwork $\mathcal{W}_{\mathcal{G}}$. Then, the following statements are valid
\begin{enumerate} [(a)]
\item $\mathcal{W}_{\mathcal{G}}$ is a core network with input nodes $\iota_{1}, \ldots, \iota_{n}$ and output node $\rho_{1}$.
\item $\mathcal{W}_{\mathcal{G}}$ does not support either appendage nor structural homeostasis.
\item $\det\!\big(\langle H \rangle (\mathcal{W}_{\mathcal{G}})\big)$ is a factor of $\det\!\big(\langle H \rangle\big)$.
\end{enumerate}
\end{lemma}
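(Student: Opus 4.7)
The plan is to establish (a) by a direct case analysis using Definition \ref{definition_input_counterweight_subnetwork}, then prove (c) by identifying $\langle H \rangle(\mathcal{W}_{\mathcal{G}})$ with the input counterweight block $C$ occurring in the Frobenius--K\"onig normal form \eqref{weighted_homeostasis_normal_form}, and finally derive (b) from the irreducibility of $\det(C)$ given by Corollary \ref{corollary_irreductible_coefficients_irreductible_polynomial}.

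For (a), I would verify that every node of $\mathcal{W}_{\mathcal{G}}$ is both downstream from at least one input node and upstream from $\rho_1$. The downstream condition is immediate: nodes of types (1)--(3) in Definition \ref{definition_input_counterweight_subnetwork} satisfy it by construction, while nodes of types (4)--(5) satisfy it because $\mathcal{G}$ is a core network. For the upstream condition, the nontrivial case is types (4) and (5). If a type (4) node $\tau$ were strictly downstream from $\rho_1$, then $\tau$ would lie between some pair of adjacent absolutely super-simple nodes, and Lemma \ref{basic_properties_super_simple_subnetwork}(b) would force $\tau$ to be absolutely simple -- contradicting the definition of type (4). For type (5), a node $\alpha$ is absolutely appendage and $CS_m$-path equivalent to a type (4) node $\tau$; composing the path $\alpha \to \tau$ in $\mathcal{G}$ with $\tau \to \rho_1$ yields $\alpha \to \rho_1$, so $\alpha$ is upstream from $\rho_1$.

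For (c), the essential ingredient is Lemma \ref{partition_G_W_G_appendage_subnetworks_structuralk_subnetwork}, which partitions the nodes of $\mathcal{G}$ into the interior of $\mathcal{W}_{\mathcal{G}}$, the $\mathcal{A}_{\mathcal{G}}$-path components $\mathcal{A}_i$ satisfying the generalized no cycle condition, and the nodes of the absolutely super-simple structural subnetworks $\mathcal{L}'(\rho_k, \rho_{k+1})$, with $\rho_1$ shared only between $\mathcal{W}_{\mathcal{G}}$ and $\mathcal{L}'(\rho_1, \rho_2)$. Using the ordering of Lemma \ref{order_absolutely_super_simple_nodes}, I would reorder rows and columns of $\langle H \rangle$ so that the appendage and structural blocks appear as diagonal blocks in the upper-left corner; by Theorems \ref{reverse_theorem_appendage_block_general_G} and \ref{structural_homeostasis_G_reverse} these match the $B_j$'s in \eqref{weighted_homeostasis_normal_form}. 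The remaining diagonal block, indexed by the nodes of $\mathcal{W}_{\mathcal{G}}$ together with the last column carrying the entries $-f_{\iota_m, \mathcal{I}}$, is by direct inspection equal to $\langle H \rangle(\mathcal{W}_{\mathcal{G}})$. Since the Frobenius--K\"onig factorization is unique up to permutation of irreducible factors, this residual block coincides with $C$, so $\det(\langle H \rangle(\mathcal{W}_{\mathcal{G}})) = \pm \det(C)$, which divides $\det(\langle H \rangle)$ by \eqref{factoring_det_weighted_H}.

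Claim (b) then follows at once: since $\det(\langle H \rangle(\mathcal{W}_{\mathcal{G}})) = \pm \det(C)$ is irreducible by Corollary \ref{corollary_irreductible_coefficients_irreductible_polynomial}, the Frobenius--K\"onig normal form of $\langle H \rangle(\mathcal{W}_{\mathcal{G}})$ cannot contain any diagonal block free of the partial derivatives $f_{\iota_m, \mathcal{I}}$; hence $\mathcal{W}_{\mathcal{G}}$ has no appendage or structural homeostasis block. The main obstacle will be the careful row-column bookkeeping in step (c): one must verify that once the appendage and structural diagonal blocks are peeled off, the residual block agrees entry-by-entry with $\langle H \rangle(\mathcal{W}_{\mathcal{G}})$, especially along the interface with the structural subnetwork $\mathcal{L}'(\rho_1, \rho_2)$ that shares the node $\rho_1$ with $\mathcal{W}_{\mathcal{G}}$, and that the off-diagonal couplings from $\mathcal{W}_{\mathcal{G}}$ into the appendage and structural subnetworks lie in the upper-triangular portion of the permuted matrix and therefore do not affect the determinant of the residual block.
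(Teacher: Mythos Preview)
Your plan for (a) and (c) is essentially the paper's argument: a case analysis on the five node types of Definition~\ref{definition_input_counterweight_subnetwork} for (a), and identification of the residual block $C$ with $\langle H\rangle(\mathcal{W}_{\mathcal{G}})$ via Lemma~\ref{partition_G_W_G_appendage_subnetworks_structuralk_subnetwork} for (c). One small gap in your sketch of (a): your type~(4) argument tacitly assumes $\tau$ lies on some $\iota_m o$-simple path, but a type~(4) node may be $\iota_m$-appendage for every $m$ from which it is downstream while failing to be downstream from some other input $\iota_k$ (hence neither absolutely appendage nor absolutely simple, yet on no simple path). The paper handles this sub-case by observing that such a $\tau$ is $\mathcal{G}_m$-path connected to an $\iota_m$-simple node that, by your own argument, must sit upstream of $\rho_1$.

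Your route to (b) is genuinely different from the paper's. The paper proves (b) \emph{before} (c), by direct topological inspection: it observes that $\rho_1$ is the only absolutely super-simple node of $\mathcal{W}_{\mathcal{G}}$ (hence no structural block), and that every $\mathcal{A}_{\mathcal{W}_{\mathcal{G}}}$-path component is one of the $\mathcal{C}_i$ and therefore violates the generalized no-cycle condition inside $\mathcal{W}_{\mathcal{G}}$ (hence no appendage block). You instead prove (c) first and then deduce (b) from the irreducibility of $\det(C)$ established in Corollary~\ref{corollary_irreductible_coefficients_irreductible_polynomial}: an irreducible determinant admits only a single Frobenius--K\"onig block, and since that block carries the $f_{\iota_m,\mathcal{I}}$ entries it is neither appendage nor structural. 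Your argument is shorter and avoids repeating the combinatorics; the paper's argument has the advantage of making the topological content of (b) explicit (in particular, the statement that $\rho_1$ is the unique absolutely super-simple node of $\mathcal{W}_{\mathcal{G}}$ is useful in its own right).
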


\begin{proof}
$(a)$ As $\mathcal{W}_{\mathcal{G}}$ is a subnetwork of $\mathcal{G}$, every node in $\mathcal{W}_{\mathcal{G}}$ is downstream from at least one of the input nodes. We must now verify that every node in $\mathcal{W}_{\mathcal{G}}$ is upstream from $\rho_{1}$. In fact, this is true for the input nodes and for $\rho_{1}$. This is also true for every node $\tau$ for which there is $m \in\{1,\ldots,n\}$ such that there is an $\iota_{m}o$-simple path that passes at $\iota_{m}$, $\tau$ and $\rho_{1}$ in that order. On the other hand, take a node $\sigma$ which cannot be classified as absolutely appendage nor absolutely simple. There are two possibilities for $\sigma$: (i) $\sigma$ is not downstream every input node, or (ii) $\sigma$ is downstream from every input node, but there are $m, j \in \{1,\ldots,n\}$ such that $\sigma$ is $\iota_m$-simple and $\iota_j$-appendage. In the first case, $\sigma$ must be downstream at least one input node $\iota_{m}$. If $\sigma$ is $\iota_{m}$-simple, then if by the $\iota_{m}o$-simple path that passes by $\sigma$, $\sigma$ is downstream $\rho_{1}$, then $\sigma$ would be downstream every input node, which is a contradiction, and therefore $\sigma$ must be upstream $\rho_{1}$ by this $\iota_{m}o$-simple path. If $\sigma$ is $\iota_{m}$-appendage, then $\sigma$ must be $\mathcal{G}_{m}$-path connected to some $\iota_{m}$-simple node which, by a similar argument, must be upstream $\rho_{1}$, and, again, $\sigma$ is upstream $\rho_{1}$. Considering now the second case ($\sigma$ is downstream every input node, but there are $m, j \in\{1,\ldots,n\}$ such that  $\sigma$ is $\iota_m$-simple and $\iota_j$-appendage), then $\sigma$ must be upstream $\rho_{1}$ by the $\iota_{m}o$-simple path that passes by $\sigma$, as if this does not happen, $\sigma$ would be absolutely simple. Finally, take an absolutely appendage node $\sigma$ in $\mathcal{C}$ described in Definition \ref{definition_input_counterweight_subnetwork}. Then, there is a path between $\sigma$ and a node $\tau$ in $\mathcal{W}_{\mathcal{G}}$ such that there is an $\iota_{m}o$-simple path $S_{m}$ passing by $\tau$ and $\tau$ is not between two absolutely super-simple nodes. If by $S_{m}$, $\tau$ is downstream $\rho_{1}$, then that means that $\tau$ must be between two absolutely super-simple nodes, which is a contradiction. Therefore, $S_{m}$ must pass by $\iota_{m}$, $\tau$ and $\rho_{1}$ in that order, and, as there is a path from $\sigma$ to $\tau$, then $\sigma$ is also upstream $\rho_{1}$. As all nodes in $\mathcal{W}_{\mathcal{G}}$, we can see this subnetwork as a core network between with input nodes $\iota_{1}, \ldots, \iota_{n}$ and output node $\rho_{1}$. \\
$(b)$ The only absolutely appendage nodes in $\mathcal{W}_{\mathcal{G}}$ are the absolutely appendage nodes in $\mathcal{C}$. By Lemma \ref{unique_correspondence_between_B_i_and_absolutely_super_simple_subnetwork}, for every $\iota_{m}o$-simple path $S_{m}$ for which an $\mathcal{A}_{\mathcal{G}}$-path component $\mathcal{C}_{i} \subseteq \mathcal{C}$ is $CS_{m}$-path equivalent to nodes in $CS_{m} \setminus \mathcal{C}_{i}$, then not absolutely appendage nodes $CS_{m}$-path equivalent to $\mathcal{C}_{i}$ are not between two absolutely super-simple nodes, meaning that these nodes are also present in $\mathcal{W}_{\mathcal{G}}$. Therefore, all $\mathcal{A}_{\mathcal{W}_{\mathcal{G}}}$-path components do not follow the necessary conditions to present appendage homeostasis, i.e., this kind of homeostasis is not supported by $\mathcal{W}_{\mathcal{G}}$. On the other hand, as the only absolutely super-simple node in $\mathcal{W}_{\mathcal{G}}$ is $\rho_{1}$, then $\mathcal{W}_{\mathcal{G}}$ does not support structural homeostasis neither. \\
$(c)$ We verify this looking at the factorisation of $\det\!\big(\langle H \rangle\big)$. Consider permutation matrices $P$ and $Q$ such that $P \langle H \rangle Q$ is the Frobenius-K\"onig normal form of the matrix $\langle H \rangle$. Recall that each row of $P \langle H \rangle Q$ represents the partial derivatives of a function $f_{j}$ with respect to all other nodes of $\mathcal{G}$, and each column of $P \langle H \rangle Q$ represents the partial derivatives of all the functions that describe the dynamics of node with respect to same node $j$ (with the exception of the column composed by zeros and by the terms $f_{\iota_{m}, \mathcal{I}}$). Consider the $\mathcal{A}_{\mathcal{G}}$-path components $\mathcal{A}_{i}$ which satisfy the following condition: for all $m = 1, \ldots, n$, for every $\iota_{m}o$-simple path $S_{m}$, nodes in $\mathcal{A}_{i}$ are not $CS_{m}$-path equivalent to any node in $CS_{m} \setminus \mathcal{A}_{i}$, and the absolutely super-simple structural subnetworks $\mathcal{L}'(\rho_{1}, \rho_{2}), \mathcal{L}'(\rho_{2}, \rho_{3}), \ldots, \mathcal{L}'(\rho_{p}, o)$. We have already verified that the Jacobian $J_{\mathcal{A}_{i}}$ of each $\mathcal{A}_{\mathcal{G}}$-path components $\mathcal{A}_{i}$ and the homeostasis matrix $H(\mathcal{L}'(\rho_{k}, \rho_{k+1}))$ of each absolutely super-simple structural subnetwork appear as independent irreducible blocks of $P \langle H \rangle Q$. By equation \eqref{factoring_det_weighted_H} and by our results on the characterization of appendage and structural blocks, we get
\begin{equation} \label{product_determinant_blocks_normal_form}
\begin{split}
    \det\!\big( \langle H \rangle\big) = & \det(J_{\mathcal{A}_{1}}) \cdots \det(J_{\mathcal{A}_{r}}) \cdot \\ 
    & \det\!\big( H(\mathcal{L}'(\rho_{1}, \rho_{2}))\big) \cdots \det\!\big( H(\mathcal{L}'(\rho_{p}, o))\big) \cdot \det(C)
\end{split}
\end{equation}
We need to determine which rows and columns of $\langle H \rangle$ appear in $C$. Recall that the matrices that appear in the right-handed side of equation \eqref{product_determinant_blocks_normal_form} are the blocks that appear in the normal form of $\langle H \rangle$. Therefore, $C$ consists of rows and columns that are not present in such matrices. This means, in particular, that $C$ consists of:
\begin{enumerate}[(1)]
\item rows that contain the partial derivatives of the functions $f_{\iota_{m}}$, for every $m=1,\ldots,n$,
\item rows that contain the partial derivatives of $f_{\rho_{1}}$,
\item rows that contain the partial derivatives of the functions $f_{\sigma}$, where $\sigma$ represents nodes in $\mathcal{G}$ which do not belong to any subnetwork $\mathcal{A}_{i}$ neither to any absolutely super-simple structural subnetwork,
\item columns composed by zeros and by the terms $f_{\iota_{m}, \mathcal{I}}$,
\item columns containing the partial derivatives with respect to nodes $\iota_{m}$, for every $m=1,\ldots,n$ and nodes $\sigma$ (described in item (3)) above,
\item columns containing the partial derivatives with respect to $\rho_{1}$ is not a columns of $C$, as this column is present in $H(\mathcal{L}'(\rho_{1}, \rho_{2}))$.
\end{enumerate}
As shown in item $(a)$ above, $\mathcal{W}_{\mathcal{G}}$ is a core network with input nodes $\iota_{1}, \cdots, \iota_{n}$ and output node $\rho_{1}$, then, by Lemma \ref{partition_G_W_G_appendage_subnetworks_structuralk_subnetwork}, we conclude that $C$ is equivalent, up to permutation of rows and/or columns, to the matrix $\langle H \rangle (\mathcal{W}_{\mathcal{G}})$, which means that $\det\!\big(\langle H \rangle (\mathcal{W}_{\mathcal{G}})\big)$ is a factor of $\det\!\big( \langle H \rangle\big)$.
\end{proof}

Now we can finally characterize input counterweight homeostasis in terms of network topology.

\begin{theorem} \label{fully_characterization_input_counterweight_homeostasis}
Consider a core network $\mathcal{G}$ with multiple input nodes and its associated input counterweight subnetwork $\mathcal{W}_{\mathcal{G}}$. Then, $\langle H \rangle (\mathcal{W}_{\mathcal{G}})$ is, up to permutation of rows or columns, the irreducible input counterweight homeostasis block $C$ of $\langle H \rangle$.
\end{theorem}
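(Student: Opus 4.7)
The plan is to build directly on Lemma \ref{properties_input_counterweight_subnetwork}, particularly its item (c), and on the previous characterization theorems for appendage and structural blocks. By Theorems \ref{characterization_appendage_homeostasis_general_G}, \ref{reverse_theorem_appendage_block_general_G}, \ref{structural_homeostasis_G} and Corollary \ref{structural_homeostasis_G_reverse}, every irreducible appendage block appearing in the Frobenius–König normal form of $\langle H \rangle$ is the Jacobian $J_{\mathcal{A}_i}$ of an $\mathcal{A}_{\mathcal{G}}$-path component satisfying the generalized no-cycle condition, and every irreducible structural block is the homeostasis matrix $H(\mathcal{L}'(\rho_k,\rho_{k+1}))$ of an absolutely super-simple structural subnetwork. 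Therefore the remaining irreducible block $C$ occupies exactly the rows and columns of $\langle H \rangle$ that are not consumed by any $J_{\mathcal{A}_i}$ or any $H(\mathcal{L}'(\rho_k,\rho_{k+1}))$.

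The first step is to list explicitly which rows and columns of $\langle H \rangle$ are left for $C$. The rows correspond to node functions $f_\tau$ where $\tau$ is not an absolutely appendage node inside some $\mathcal{A}_i$ and not a node of any $\mathcal{L}'(\rho_k,\rho_{k+1})$ other than its input node $\rho_1$; together with all input-node rows $f_{\iota_m}$. Symmetrically, the columns correspond to partial derivatives with respect to such nodes, together with the distinguished last column $(-f_{\iota_1,\mathcal I},\ldots,-f_{\iota_n,\mathcal I},0,\ldots,0)^t$. The second step is to invoke Lemma \ref{partition_G_W_G_appendage_subnetworks_structuralk_subnetwork}: items (a), (b) and (c) of that lemma together say that a node of $\mathcal{G}$ either lies in some $\mathcal{A}_i$, or lies in an $\mathcal{L}'(\rho_k,\rho_{k+1})$ as a non-input node, or lies in $\mathcal{W}_{\mathcal{G}}$; and the only shared node is $\rho_1\in \mathcal{W}_{\mathcal{G}}\cap\mathcal{L}'(\rho_1,\rho_2)$. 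Consequently the nodes indexing the rows and columns of $C$ are exactly the nodes of $\mathcal{W}_{\mathcal{G}}$.

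The third step is to match entries. By Lemma \ref{properties_input_counterweight_subnetwork}(a), $\mathcal{W}_{\mathcal{G}}$ is a core network with input nodes $\iota_1,\ldots,\iota_n$ and output node $\rho_1$, so its generalized homeostasis matrix $\langle H\rangle(\mathcal{W}_{\mathcal G})$ has precisely: rows $f_{\iota_m}$ (with the $\mathcal{I}$-column $-f_{\iota_m,\mathcal{I}}$), rows for the regulatory nodes of $\mathcal{W}_{\mathcal{G}}$ (with a zero in the last column), and a row for the output $\rho_1$ (with a zero in the last column). Each off-diagonal entry $f_{j,x_\ell}$ is nonzero only when $\ell\to j$ is an arrow of $\mathcal{G}$ between nodes of $\mathcal{W}_{\mathcal{G}}$, which is precisely the arrow set of $\mathcal{W}_{\mathcal{G}}$. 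Hence, up to a reordering of rows and columns induced by the permutations $P,Q$ bringing $\langle H\rangle$ to Frobenius–König normal form, the submatrix $C$ coincides entry by entry with $\langle H\rangle(\mathcal{W}_{\mathcal{G}})$.

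The main obstacle is the bookkeeping in the second and third steps: one must verify that no row or column of $C$ accidentally includes a partial derivative that belongs to an appendage or structural block, and conversely that no entry of $\langle H \rangle(\mathcal{W}_{\mathcal G})$ is missing from $C$. The potentially delicate point is the output node $\rho_1$ of $\mathcal{W}_{\mathcal{G}}$: its row $f_{\rho_1}$ appears in $C$ (since $\rho_1$ is not an internal node of any structural subnetwork), but its column $x_{\rho_1}$ is reserved for $H(\mathcal{L}'(\rho_1,\rho_2))$ --- and this is exactly correct because $\rho_1$ is the \emph{input} node of $\mathcal{L}'(\rho_1,\rho_2)$ and therefore the corresponding column of the structural homeostasis matrix is not the $\rho_1$-column but the one inherited from the input role, mirroring the fact that in $\langle H\rangle(\mathcal{W}_{\mathcal G})$ the role of $\rho_1$ is that of the output and its column is replaced by the $\mathcal I$-column. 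Once this shared-node ambiguity is resolved by tracking the Frobenius–König partition of indices, the identification $C=\langle H\rangle(\mathcal{W}_{\mathcal{G}})$ (up to permutation) follows, and irreducibility of $\det(C)$ is already established by \eqref{term_input_counter_weight_homeostasis} together with Corollary \ref{corollary_irreductible_coefficients_irreductible_polynomial}.
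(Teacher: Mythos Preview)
Your proposal is correct and follows essentially the same approach as the paper. The row/column bookkeeping you carry out in your second and third steps --- identifying which indices remain after the appendage blocks $J_{\mathcal{A}_i}$ and the structural blocks $H(\mathcal{L}'(\rho_k,\rho_{k+1}))$ are removed, and verifying these are exactly the indices of $\langle H\rangle(\mathcal{W}_{\mathcal{G}})$, with the $\rho_1$-row retained but the $\rho_1$-column absorbed by $H(\mathcal{L}'(\rho_1,\rho_2))$ --- is precisely the argument the paper gives inside the proof of Lemma~\ref{properties_input_counterweight_subnetwork}(c). The paper's own proof of Theorem~\ref{fully_characterization_input_counterweight_homeostasis} is more compressed: rather than redoing that bookkeeping, it simply quotes Lemma~\ref{properties_input_counterweight_subnetwork}(c) for divisibility, then uses Lemma~\ref{properties_input_counterweight_subnetwork}(b) to conclude that $\det\langle H\rangle(\mathcal{W}_{\mathcal{G}})$ is irreducible (since $\mathcal{W}_{\mathcal{G}}$ supports neither appendage nor structural homeostasis), and finally observes that $C$ is the unique irreducible block of $\langle H\rangle$ whose determinant is a degree-one homogeneous polynomial in the $f_{\iota_m,\mathcal{I}}$, forcing $C$ and $\langle H\rangle(\mathcal{W}_{\mathcal{G}})$ to coincide up to permutation. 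Your version is thus slightly more self-contained, while the paper's is shorter because the heavy lifting was already packed into the lemma.
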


\begin{proof}
By Lemma \ref{properties_input_counterweight_subnetwork}, $\det\!\big( \langle H \rangle (\mathcal{W}_{\mathcal{G}})\big)$ is a factor of $\det\!\big(\langle H \rangle\big)$. Moreover, as $\mathcal{W}_{\mathcal{G}}$ does not support neither appendage or structural homeostasis, $\det\!\big( \langle H \rangle (\mathcal{W}_{\mathcal{G}})\big)$ is an irreducible homogeneous polynomial of degree $1$ on variables $f_{\iota_{1}, \mathcal{I}}, \ldots, f_{\iota_{n}, \mathcal{I}}$. By Frobenius-K\"onig theory, we conclude that $\langle H \rangle (\mathcal{W}_{\mathcal{G}})$ must be, up to permutation of rows or columns, the irreducible input counterweight homeostasis block $C$ of $\langle H \rangle$.
\end{proof}

\section{Analysis of \textit{Escherichia coli} Chemotaxis}
\label{sec:e_coli}

In order to apply the theory developed in this paper to the model for the \textit{Escherichia coli} chemotaxis of \cite{edgington18}, we first must rewrite the system \eqref{original_e_coli} in the standard form \eqref{admissible_systems_ODE_multiple_input_nodes}.
We make the following correspondence between variables:
$m \leftrightarrow x_{\iota_1}$, $a_p \leftrightarrow x_{\iota_2}$, $b_p \leftrightarrow x_{\sigma}$, $y_p \leftrightarrow x_{o}$, $L \leftrightarrow \mathcal{I}$.
This gives the following system of ODEs
\begin{equation} \label{e_coli_network_notation}
\begin{aligned}
\dot{x}_{\iota_1} & = \gamma_{R}  \, (1 - \phi(x_{\iota_1}, \mathcal{I})) - \gamma_{B} \, x_{\sigma}^{2} \, \phi(x_{\iota_1}, \mathcal{I}) \\
\dot{x}_{\iota_{2}} & = \phi(x_{\iota_1}, \mathcal{I}) \, k_{1} \, (1 - x_{\iota_2}) - k_{2} \, (1 - x_{o}) \, x_{\iota_2} - k_{3} \, (1 - x_{\sigma}) \, x_{\iota_2} \\
\dot{x}_{\sigma} & = \alpha_{2} \, k_{3} \, (1 - x_{\sigma}) \, x_{\iota_2} - k_{5} \, x_{\sigma} \\
\dot{x}_{o} & = \alpha_{1} \, k_{2} \, (1 - x_{o}) \, x_{\iota_2} - k_{4} \, x_{o}
\end{aligned}
\end{equation}
with the function $\phi$ given by \eqref{definition_phi} and the input parameter $\mathcal{I}$.
Note that the ODE system \eqref{e_coli_network_notation} is an admissible system for the abstract input-output network shown in Figure \ref{e_coli_network}(a). In fact, the general form of an admissible system for this network is
\begin{equation} \label{e_coli_general_admissible}
\begin{aligned}
\dot{x}_{\iota_1} & = f_{\iota_1}(x_{\iota_1},x_{\sigma},\mathcal{I}) \\
\dot{x}_{\iota_2} & = f_{\iota_2}(x_{\iota_1}, x_{\iota_2},x_{\sigma},x_{o},\mathcal{I}) \\
\dot{x}_{\sigma} & = f_{\sigma}(x_{\iota_2},x_{\sigma}) \\
\dot{x}_{o} & = f_{o}(x_{\iota_2},x_{o})
\end{aligned}
\end{equation}

\ignore{
\begin{figure}[!ht]
\centering
\includegraphics[scale = 0.33]{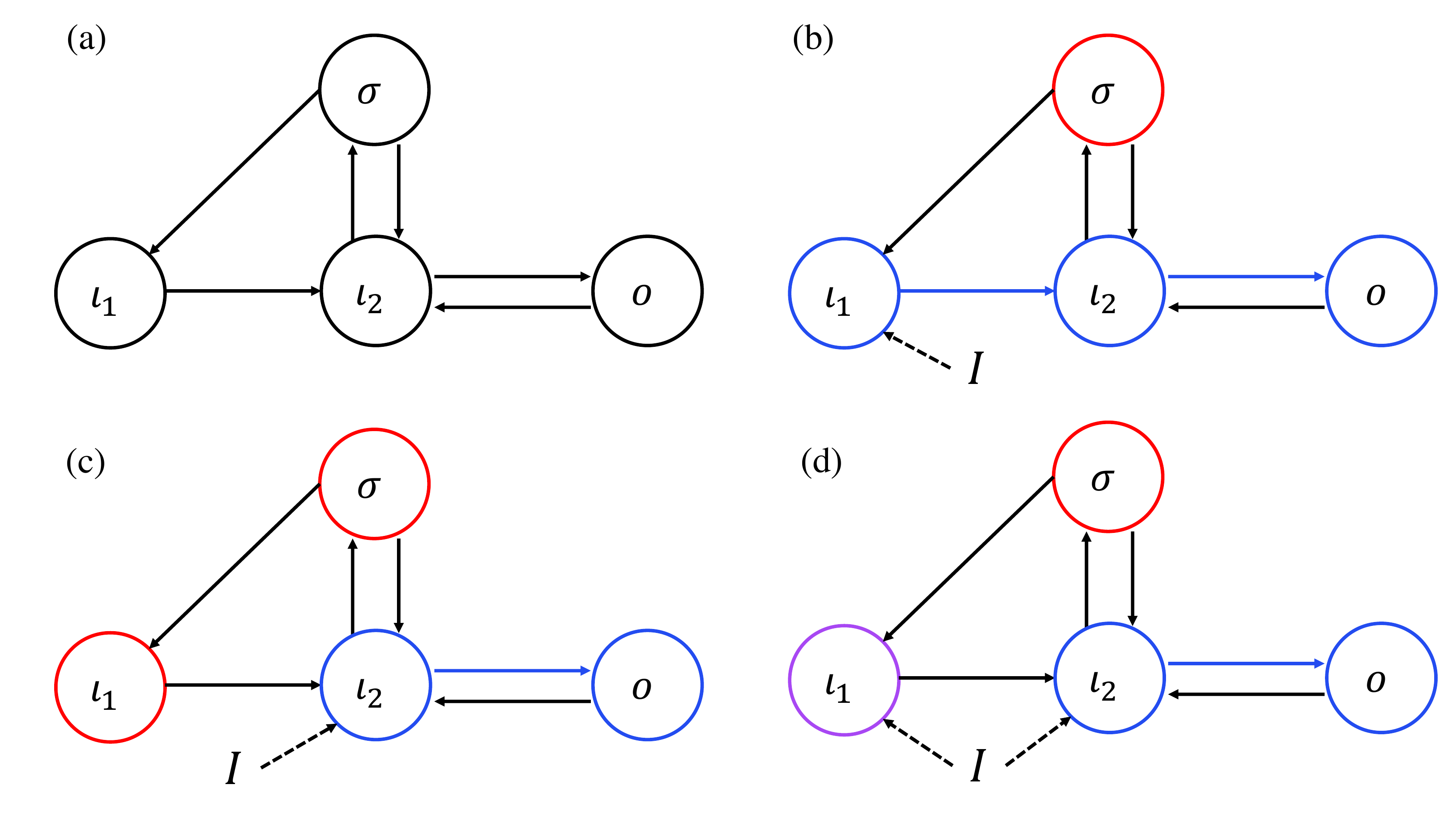}
\caption{\label{e_coli_network}a) Abstract network corresponding to the admissible ODE system in \eqref{e_coli_network_notation}. This abstract network is a core network with input nodes $\iota_{1}$ and $\iota_{2}$ and output node $o$. b) Analysis of the core subnetwork $\mathcal{G}_{1}$ between the input node $\iota_{1}$ and the output node $o$. $\iota_{1}$-simple nodes and paths are highlighted in blue and $\iota_{1}$-appendage nodes are highlighted in red. c) Analysis of the core subnetwork $\mathcal{G}_{2}$ between the input node $\iota_{2}$ and the output node $o$. $\iota_{2}$-simple nodes and paths are highlighted in blue and $\iota_{2}$-appendage nodes are highlighted in red. d) The core network $\mathcal{G}$ is the union between $\mathcal{G}_{1}$ and $\mathcal{G}_{2}$. $\iota_{2}$ and $o$ (in blue) are super-simple nodes, and $\sigma$ (in red) is an absolute appendage node. On the other hand, $\iota_{1}$ (in purple) is neither a simple nor an appendage node.}
\end{figure}
}

\begin{figure}[!ht]
\centering
\includegraphics[width=\linewidth]%
{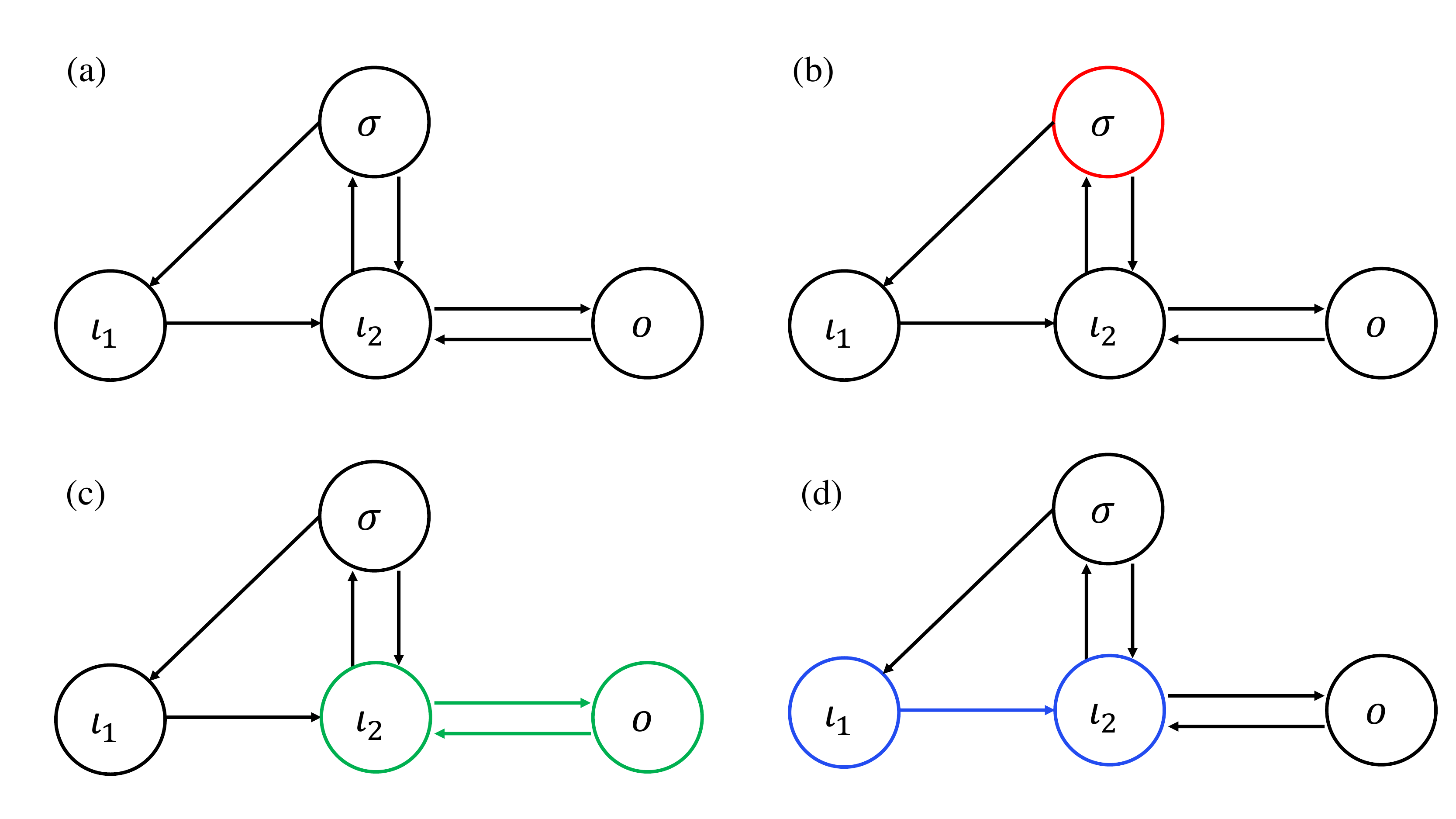}
\caption{\label{e_coli_network} Abstract network and the `subnetwork motifs' associated to the types of homeostasis that this network supports. (a) Abstract network $\mathcal{G}$ corresponding to the admissible ODE system in \eqref{e_coli_network_notation}. It is a core network with input nodes $\iota_{1}$ and $\iota_{2}$ and output node $o$. (b) The only absolutely appendage node in $\mathcal{G}$ is $\sigma$ (in red), and for every $\iota_{m}o$-simple path $S_{m}$, $\sigma$ is not $CS_{m}$-path equivalent to any other node, which means that $\sigma$ is an $\mathcal{A}_{\mathcal{G}}$-path component. (c) The absolutely super-simple nodes of $\mathcal{G}$ are $\iota_{2} > o$. The corresponding absolutely super-simple structural subnetwork is $\mathcal{L}'(\iota_{2},o)$ (in green).  (d) The greatest absolutely super-simple node of $\mathcal{G}$ (according to the natural ordering) is $\iota_{2}$. The input counterweight subnetwork $\mathcal{W}_{\mathcal{G}}$ is composed by nodes $\iota_{1}$ and $\iota_{2}$ (in blue).}
\label{abstract_network_ecoli}
\end{figure}

From the theory developed in this paper, it is clear that the network $\mathcal{G}$ is a core network with input nodes $\iota_{1}$ and $\iota_{2}$ and output node $o$. 
This core network is the union of networks $\mathcal{G}_{1}$ (the core subnetwork between $\iota_{1}$ and $o$) and $\mathcal{G}_{2}$ (the core subnetwork between $\iota_{2}$ and $o$).

The Jacobian matrix of network $\mathcal{G}$ is:
\begin{equation} \label{jacobian_matrix_e_coli}
J = 
\begin{pmatrix} 
f_{\iota_1, x_{\iota_1}} & 0 & f_{\iota_1, x_{\sigma}} & 0 \\
f_{\iota_2, x_{\iota_1}} & f_{\iota_2, x_{\iota_2}} & f_{\iota_2, x_{\sigma}} & f_{\iota_2, x_o} \\
0 & f_{\sigma, x_{\iota_2}} & f_{\sigma, x_{\sigma}} & 0 \\
0 & f_{o, x_{\iota_2}} & 0 & f_{o,x_o}
\end{pmatrix}
\end{equation}
The generalised homeostasis matrix of network $\mathcal{G}$ is:
\begin{equation} \label{homeostasis_matrix_e_coli}
\langle H \rangle = 
\begin{pmatrix} 
f_{\iota_1, x_{\iota_1}} & 0 & f_{\iota_1, x_{\sigma}} & -f_{\iota_1, \mathcal{I}} \\
f_{\iota_2, x_{\iota_1}} & f_{\iota_2, x_{\iota_2}} & f_{\iota_2, x_{\sigma}} & -f_{\iota_2, \mathcal{I}} \\
0 & f_{\sigma, x_{\iota_2}} & f_{\sigma, x_{\sigma}} & 0 \\
0 & f_{o, x_{\iota_2}} & 0 & 0
\end{pmatrix}
\end{equation}
Even though one can easily find the factorisation of matrix $\langle H \rangle$ above, by direct calculation, it is very instructive to apply the algorithm described in Subsection \ref{algorithm_core_network_multiple_input_nodes} to factorise $\det\!\big(\langle H \rangle\big)$ to reveal the structure of the underlying network motifs associated with homeostasis types supported by network $\mathcal{G}$

\ignore{
Its determinant can be decomposed as
\begin{equation} \label{analysis_e_coli_part_i}
\det\big(\langle H \rangle\big) = -f_{\iota_1, \mathcal{I}} \, \det(H_{\iota_1}) + f_{\iota_2, \mathcal{I}} \, \det(H_{\iota_2})
\end{equation}
where
\begin{equation} \label{analysis_e_coli_part_ii}
H_{\iota_1} = \begin{pmatrix} 
f_{\iota_2, x_{\iota_1}} & f_{\iota_2, x_{\iota_2}} & f_{\iota_2, x_{\sigma}} \\
0 & f_{\sigma, x_{\iota_{2}}} & f_{\sigma, x_{\sigma}} \\
0 & f_{o, x_{\iota_2}} & 0 
\end{pmatrix} 
\quad\text{and}\quad
H_{\iota_2} = \begin{pmatrix} 
f_{\iota_1, x_{\iota_1}} & 0 & f_{\iota_1, x_{\sigma}} \\
0 & f_{\sigma, x_{\iota_2}} & f_{\sigma, x_{\sigma}} \\
0 & f_{o, x_{\iota_2}} & 0
\end{pmatrix}
\end{equation}
}

We start by observing that $\sigma$ is the only absolutely appendage node of $\mathcal{G}$. Moreover, for every $\iota_{m}o$-simple path $S_{m}$, $\sigma$ is not $CS_{m}$-path equivalent to any other node. Therefore, $\mathcal{G}$ supports appendage homeostasis at $\mathcal{A}=\{\sigma\}$, and the corresponding irreducible factor is $\det(J_{\mathcal{A}})=f_{\sigma, x_{\sigma}}$ (see Figure \ref{e_coli_network}(b)).

The absolutely super-simple nodes of $\mathcal{G}$ are $\iota_{2}>o$, that define the absolutely super-simple structural subnetwork $\mathcal{L}'(\iota_{2}, o)=\{\iota_{2} \,\smash{{}^\leftarrow_\rightarrow}\, o\}$. As this subnetwork is composed by only $2$ nodes, its homeostasis matrix is a degree $1$ homeostasis block and the corresponding irreducible factor is  $\det\!\big(H(\mathcal{L}'(\iota_{2}, o))\big) = f_{o, x_{\iota_{2}}}$ (see Figure \ref{e_coli_network}(c)).

We have already determined the appendage and structural blocks of $\langle H \rangle$. Now we describe the input counterweight block. In fact, as the absolutely super-simple nodes of $\mathcal{G}$ are $\iota_{2} > o$ then $\mathcal{W}_{\mathcal{G}}=\{\iota_{1}\rightarrow\iota_{2}\}$ (see Figure \ref{e_coli_network}(d)). Furthermore, by Lemma \ref{properties_input_counterweight_subnetwork}, $\mathcal{W}_{\mathcal{G}}$ is a core network with input nodes $\iota_{1}$ and $\iota_{2}$, and output node $\iota_{2}$ and thus
\begin{equation}
    \langle H \rangle \left( \mathcal{W}_{\mathcal{G}}\right) = \begin{pmatrix}
    f_{\iota_{1}, x_{\iota_{1}}} & - f_{\iota_{1}, \mathcal{I}} \\
    f_{\iota_{2}, x_{\iota_{1}}} & - f_{\iota_{2}, \mathcal{I}}
    \end{pmatrix} 
\end{equation}
Hence, the complete factorisation of $\det\!\big(\langle H \rangle \big)$ is
\begin{equation} \label{analysis_e_coli_part_iii}
\begin{split}
\det\!\big(\langle H \rangle \big)
& = f_{\sigma, x_{\sigma}} \, f_{o, x_{\iota_{2}}} \, (-f_{\iota_{1}, \mathcal{I}} \, f_{\iota_{2}, x_{\iota_{1}}} + f_{\iota_{2}, \mathcal{I}} \, f_{\iota_{1}, x_{\iota_{1}}})
\end{split}
\end{equation}
Summarising, network $\mathcal{G}$ (Figure \ref{e_coli_network}) generically supports three types of homeostasis: (1) appendage (null-degradation) homeostasis associated with $\{\sigma\}$, (2) structural (Haldane) homeostasis associated with $\{\iota_{2} \,\smash{{}^\leftarrow_\rightarrow}\, o\}$ and (3) input counterweight homeostasis associated with $\{\iota_{1}\rightarrow\iota_{2}\}$

Now, specializing to the model equations \eqref{e_coli_network_notation}, we 
observe that, although the abstract network supports appendage and structural homeostasis, the model equations \eqref{e_coli_network_notation} cannot exhibit these types of homeostasis.
In fact, at equilibrium we have that
\begin{equation}
f_{\sigma, x_{\sigma}} = -\alpha_{2}k_{3}x_{\iota_{2}} - k_{5} < 0
\end{equation}
\begin{equation}
f_{o, x_{\iota_{2}}} = \alpha_{1}k_{2}(1 - x_{o}) \neq 0
\end{equation}
The first inequality follows from the fact that all parameters are positive and $x_{\iota_{2}}$ is positive at equilibrium.
The last inequality follows from that fact that if $\alpha_{1}k_{2}(1 - x_{o}) = 0 \Rightarrow x_{o} = 1$, at equilibrium, then one would have $\dot{x}_{o} = - k_{4} \neq 0$.

\ignore{
\begin{equation}
\frac{\partial \phi}{\partial x_{\iota_{1}}} = \frac{Ne^{F}}{2(1 + e^{F})^{2}} 
\quad\text{and}\quad
\frac{\partial \phi}{\partial \mathcal{I}} = \frac{-Ne^{F}\left( \frac{1}{K^{\textrm{off}}_{a}} - \frac{1}{K^{\textrm{on}}_{a}}\right)}{(1 + e^{F})^{2}\left(1 + \frac{\mathcal{I}}{K^{\textrm{off}}_{a}} \right)\left(1 + \frac{\mathcal{I}}{K^{\textrm{on}}_{a}} \right)}
\end{equation}
Recall that $K^{\textrm{off}}_{a} \neq K^{\textrm{on}}_{a}$.
}

This leaves the only remaining possibility: input counterweight homeostasis. To verify that the model equations \eqref{e_coli_network_notation} indeed exhibit input counterweight homeostasis we compute, assuming that both $\phi_{x_{\iota_1}}$ and $\phi_{\mathcal{I}}$ (see Remark \ref{rmk:non-zero}) are non-zero:
\begin{equation} \label{analysis_e_coli_part_iiii}
\begin{aligned}
f_{\iota_{1}, \mathcal{I}} & = - \phi_{\mathcal{I}} (x_{\iota_1},\mathcal{I})  \, (\gamma_{R} + \gamma_{B}x_{\sigma}^{2}) \\
f_{\iota_{2}, x_{\iota_1}} & = \phi_{x_{\iota_1}} (x_{\iota_1},\mathcal{I}) \, k_{1} \, (1 - x_{\iota_2}) \\
f_{\iota_{2}, \mathcal{I}} & = \phi_{\mathcal{I}} (x_{\iota_1},\mathcal{I})  \, k_{1} \, (1 - x_{\iota_2}) \\
f_{\iota_{1}, x_{\iota_1}} & = -\phi_{x_{\iota_1}} (x_{\iota_1},\mathcal{I}) \, (\gamma_{R} + \gamma_{B}x_{\sigma}^{2})
\end{aligned}
\end{equation}
Thus, for any $C^1$ function $\phi$, we have
\begin{equation} \label{analysis_e_coli_part_v}
-f_{\iota_{1}, \mathcal{I}} \, f_{\iota_{2}, x_{\iota_{1}}} + f_{\iota_{2}, \mathcal{I}} \, f_{\iota_{1}, x_{\iota_{1}}} \equiv 0
\end{equation}
In particular, $x_o'(\mathcal{I})=0$ for all $\mathcal{I}$ and hence, the model equations \eqref{e_coli_network_notation} exhibits, not only infinitesimal homeostasis (of the input counterweight type), but perfect homeostasis.

A consequence of the above calculation is that the property of perfect homeostasis in the model \eqref{e_coli_network_notation} for \textit{E. coli} chemotaxis is \emph{robust}, in the sense that it does not depend on the values of the parameters of the model. Even more, it holds for a much larger set of perturbations than just parameter change. Consider the space of all sufficiently regular vector fields $\mathcal{X}_{\mathcal{G}}$ given by the right-handed side of \eqref{e_coli_general_admissible}, with an appropriate topology (for example, the $C^1$ vector fields with the $C^1$ topology). Now consider the subspace $\mathcal{Y}\subset\mathcal{X}_{\mathcal{G}}$ of vector fields whose first two components are the same as in \eqref{e_coli_network_notation}, with an arbitrary function $\phi$ (of class $C^1$), and arbitrary last two components. Then it is clear that $\mathcal{Y}$ is a closed infinite dimensional subspace of $\mathcal{X}_{\mathcal{G}}$ that contain the model equations \eqref{e_coli_network_notation} and satisfy \eqref{analysis_e_coli_part_v}. In other words, any perturbation of \eqref{e_coli_network_notation} contained in the infinite dimensional space $\mathcal{Y}$ displays perfect homeostasis.

\begin{remark} \rm \label{rmk:non-zero}
To show that in the model equations \eqref{e_coli_network_notation}, both $\phi_{x_{\iota_1}}$ and $\phi_{\mathcal{I}}$ are non-zero, recall the correspondence between the variables $m \leftrightarrow x_{\iota_{1}}$ and $L \leftrightarrow \mathcal{I}$ and differentiate the equations in \eqref{definition_phi}, obtaining
\begin{equation} \label{derivates_phi}
\begin{aligned}
    \phi_{x_{\iota_1}}(x_{\iota_1}, \mathcal{I}) & = \frac{Ne^{F(x_{\iota_1}, \mathcal{I})}}{2(1 + e^{F(x_{\iota_1}, \mathcal{I})})^{2}} \\
    \phi_{\mathcal{I}}(x_{\iota_1}, \mathcal{I}) & =\frac{Ne^{F(x_{\iota_1}, \mathcal{I})}\left(\frac{1}{K^{\textrm{on}}_{a}} - \frac{1}{K^{\textrm{off}}_{a}} \right)}{(1 + e^{F(x_{\iota_1}, \mathcal{I})})^{2}\left(1 + \frac{\mathcal{I}}{K^{\textrm{on}}_{a}}\right)\left(1 + \frac{\mathcal{I}}{K^{\textrm{off}}_{a}}\right)}
\end{aligned}
\end{equation}
As $K^{\textrm{on}}_{a} \neq K^{\textrm{off}}_{a}$, we conclude by \eqref{derivates_phi} that in the studied model $\phi_{x_{\iota_1}} \neq 0$ and $\phi_{\mathcal{I}} \neq 0$.
\end{remark}

\begin{remark} \rm
As can be seen in Figure \ref{e_coli_plot1} it looks like the time series of all three variables $a_p$, $b_p$ and $y_p$ exhibit homeostatic behavior. We can use our results to show that this is indeed the case. By interchanging the roles of the nodes $\iota_{2}$ and $\sigma$ with the output node $o$, we can compute new generalized homeostasis matrices $H_{(\iota_{2}\leftrightarrow o)}$ and $H_{(\sigma\leftrightarrow o)}$ and show that the corresponding input-output functions have the same irreducible factor associated with input counterweight homeostasis (see \eqref{analysis_e_coli_part_v}) which, in turn, causes all of them to vanish simultaneously. The matrices $H_{(\iota_{2}\leftrightarrow o)}$ and $H_{(\sigma\leftrightarrow o)}$ can be obtained from the Jacobian matrix  \eqref{jacobian_matrix_e_coli} by replacing the second and third columns, respectively, by $(-f_{\iota,\mathcal{I}},0,0)^t$. The corresponding determinants are
\begin{equation*}
\begin{aligned}
    & \det(H_{(\iota_{2}\leftrightarrow o)}) = f_{\sigma, x_{\sigma}}f_{o, x_{o}}(-f_{\iota_{1}, \mathcal{I}}f_{\iota_{2}, x_{\iota_{1}}} + f_{\iota_{2}, \mathcal{I}}f_{\iota_{1}, x_{\iota_{1}}})  \\ 
    & \det(H_{(\sigma\leftrightarrow o)}) = f_{o, x_{o}}f_{o, x_{\iota_{2}}} (-f_{\iota_{1}, \mathcal{I}}f_{\iota_{2}, x_{\iota_{1}}} + f_{\iota_{2}, \mathcal{I}}f_{\iota_{1}, x_{\iota_{1}}}) 
\end{aligned}
\end{equation*}
\end{remark}

We have shown the occurrence of perfect homeostasis for an infinite dimensional `subspace of vector fields' containing \eqref{e_coli_network_notation}. In what follows we show that there is another infinite dimensional `subspace of vector fields' containing \eqref{e_coli_network_notation} where perfect homeostasis does not occur.

Consider the following $1$-parameter family of perturbations of \eqref{e_coli_network_notation}, for $\epsilon \geqslant 0$,
\begin{equation} \label{e_coli_general_perturbation}
\begin{aligned}
\dot{x}_{\iota_1} & = \gamma_{R}  \, (1 - \phi(x_{\iota_1}, \mathcal{I})) - \gamma_{B} \, x_{\sigma}^{2} \, \phi(x_{\iota_1}, \mathcal{I}) -\epsilon \, \psi(x_{\iota_{1}}) \\
\dot{x}_{\iota_{2}} & = \phi(x_{\iota_1}, \mathcal{I}) \, k_{1} \, (1 - x_{\iota_2}) - k_{2} \, (1 - x_{o}) \, x_{\iota_2} - k_{3} \, (1 - x_{\sigma}) \, x_{\iota_2} \\
\dot{x}_{\sigma} & = \alpha_{2} \, k_{3} \, (1 - x_{\sigma}) \, x_{\iota_2} - k_{5} \, x_{\sigma} \\
\dot{x}_{o} & = \alpha_{1} \, k_{2} \, (1 - x_{o}) \, x_{\iota_2} - k_{4} \, x_{o}
\end{aligned}
\end{equation}
where $\psi$ is a $C^1$ function. 
It is clear that \eqref{e_coli_general_perturbation} is admissible for the network $\mathcal{G}$ for all $\epsilon \geqslant 0$ and coincides with \eqref{e_coli_network_notation} when $\epsilon=0$.
The expressions for $f_{\iota_{1}, \mathcal{I}}$, $f_{\iota_{2}, \mathcal{I}}$ and $f_{\iota_{2}, x_{\iota{1}}}$ are independent of $\epsilon$ and thus are the same as in the original system \eqref{analysis_e_coli_part_iiii}. 
The expression for $f_{\iota_{1}, x_{\iota{1}}}$ is
\begin{equation} \label{new_formula_derivative}
    f_{\iota_{1}, x_{\iota_1}} = -\phi_{x_{\iota_1}} (x_{\iota_1},\mathcal{I}) \, (\gamma_{R} + \gamma_{B}x_{\sigma}^{2}) - \epsilon \, \psi_{x_{\iota_{1}}}(x_{\iota_{1}})
\end{equation}
Since the equations of $f_{\sigma, x_{\sigma}}$ and of $f_{o, x_{\iota_{2}}}$ are independent of $\epsilon$, the same argument as before shows that \eqref{e_coli_general_perturbation} do not exhibit appendage or structural homeostasis, for all $\epsilon \geqslant 0$. As for the input counterweight homeostasis factor, we get
\begin{equation} \label{final_analysis_e_coli_not_generic_part_i}
 -f_{\iota_{1}, \mathcal{I}} \, f_{\iota_{2}, x_{\iota_{1}}} + f_{\iota_{2}, \mathcal{I}} \, f_{\iota_{1}, x_{\iota_{1}}} = - \epsilon \, k_{1} \, \phi_{\mathcal{I}} (x_{\iota_1},\mathcal{I}) \, (1 - x_{\iota_2}) 
 \, \psi_{x_{\iota_{1}}}(x_{\iota_{1}})
\end{equation}
Recall that $\phi_{\mathcal{I}} (x_{\iota_1},\mathcal{I}) \neq 0$. Generically, $(1 - x_{\iota_2}) \neq 0$, as well. Otherwise, $x_{\iota_{2}} = 1$ at equilibrium, and so $x_{o}$ and $x_{\sigma}$ must satisfy the over-determined linear system
\begin{equation} \label{final_analysis_e_coli_not_generic_part_ii}
\begin{aligned}
0 & = - k_{2} \, (1 - x_{o}) - k_{3} \, (1 - x_{\sigma}) \\
0 & = \alpha_{2} \, k_{3} \, (1 - x_{\sigma}) - k_{5} \, x_{\sigma} \\
0 & = \alpha_{1} \, k_{2} \, (1 - x_{o}) - k_{4} \, x_{o}
\end{aligned}
\end{equation}
Therefore, if the function $\psi_z(z)$ does not vanish on an interval (say, $\psi(z)=z$), the model \eqref{e_coli_general_perturbation} do not generically, exhibit perfect homeostasis for all $\epsilon > 0$. It is clear that this construction can be carried out for an arbitrary number of independent parameters and functions $(\epsilon_n,\psi_n)_{n}$, showing that the set of perturbations that destroy perfect homeostasis is not contained in any finitely parametric family of vector fields. This is a manifestation of the well-known phenomenon in singularity theory, that an exactly flat function has `infinite codimension' (see Section \ref{sec:discussion}). 

Finally, numerical simulations suggests that, at least when $\epsilon>0$ is small, the model \eqref{e_coli_general_perturbation} displays near perfect homeostasis, see Figure \ref{e_coli_plot2}. Finally, it seems to be possible to show that if $\psi_z(z)$ vanishes at some point $z_0$ (say, $\psi(z)=z^2$) then infinitesimal homeostasis occurs for an open set of parameters in the model \eqref{e_coli_general_perturbation}. 

\begin{figure}[!htp]
\centering
\includegraphics[width=\linewidth,trim=0cm 1cm 0cm 0.5cm,clip=true]{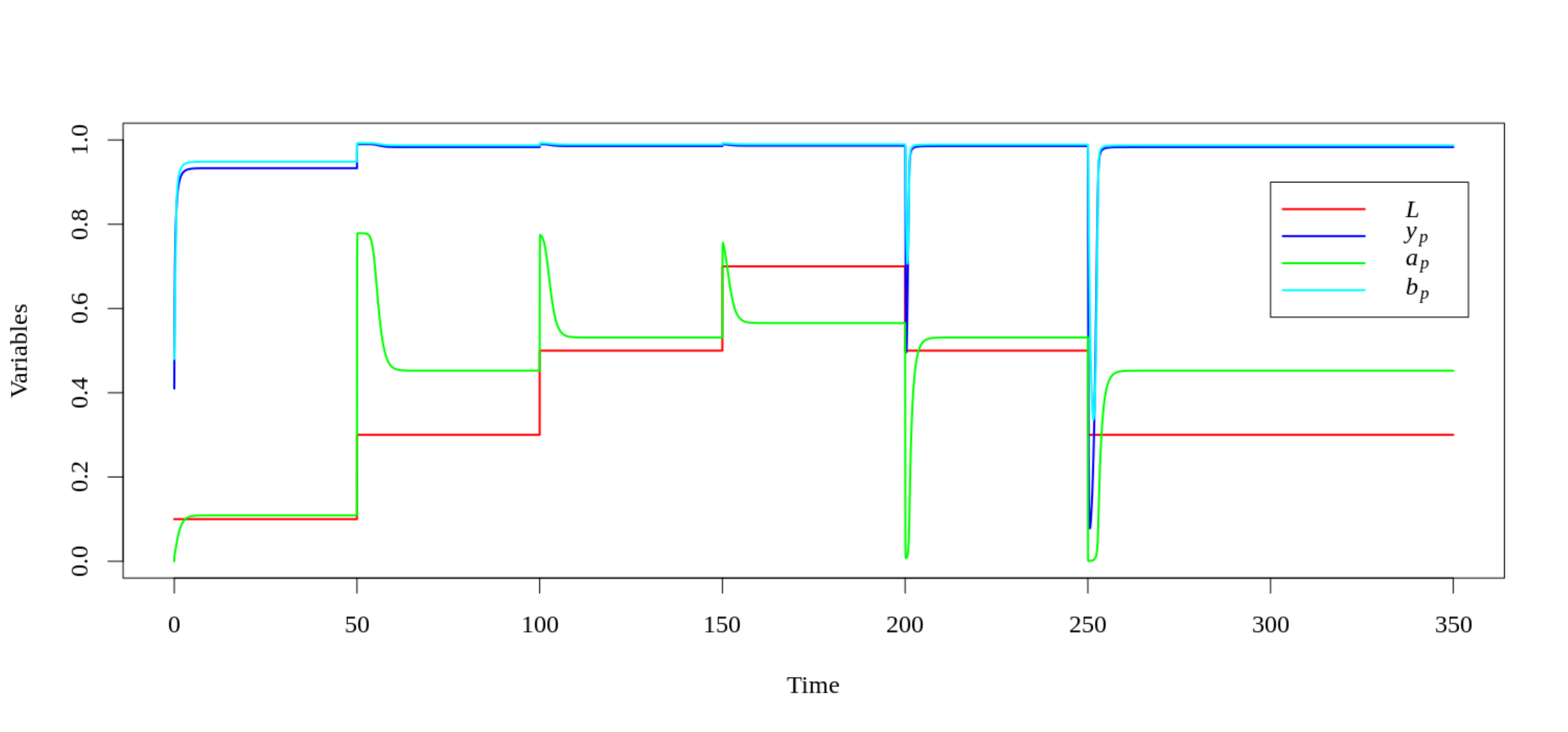} \\ [1ex]
\includegraphics[width=\linewidth,trim=0cm 0.5cm 0cm 1cm,clip=true]{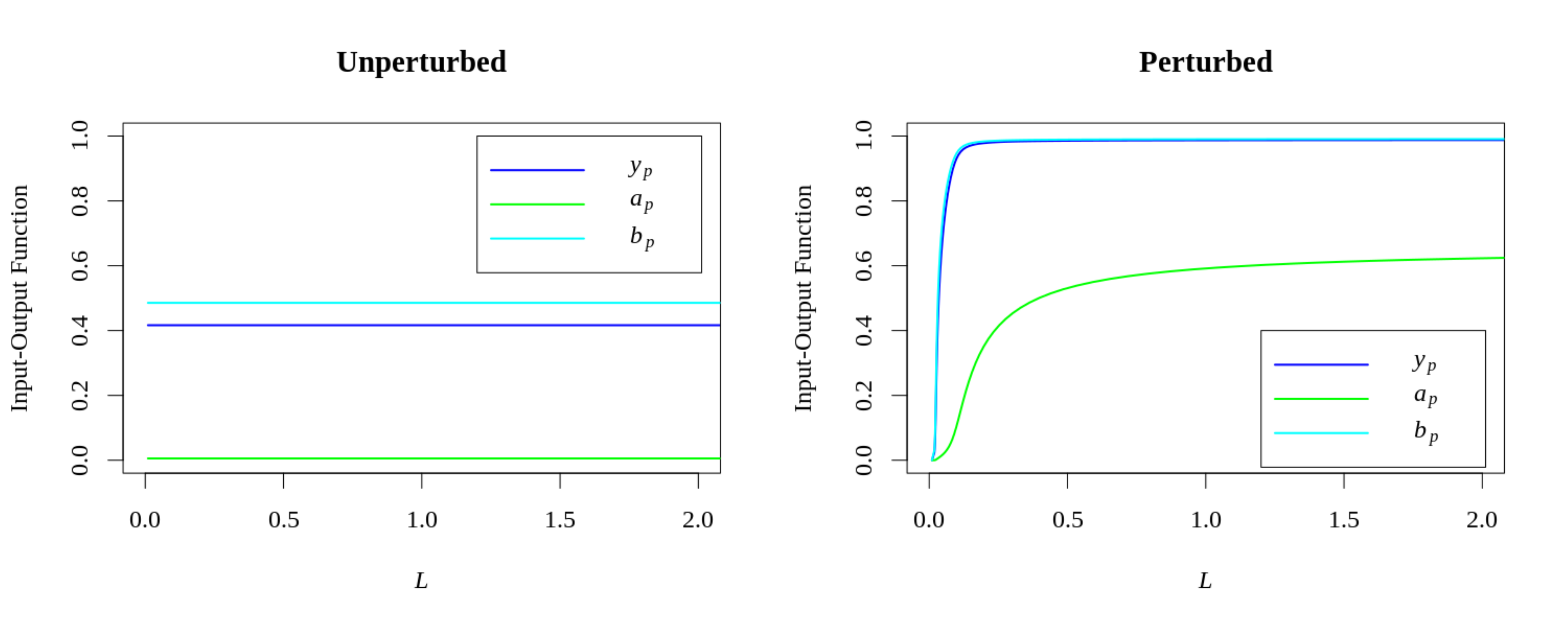}
\caption{\label{e_coli_plot2} (Upper) Time series of the model \eqref{e_coli_general_perturbation}, with $\psi(z)=z$ and $\epsilon=0.05$, showing near perfect homeostasis of the original variables $a_p \leftrightarrow x_{\iota_2}$ (green), $b_p \leftrightarrow x_{\sigma}$ (cyan), $y_p \leftrightarrow x_{o}$ (blue) at the corresponding non-dimensional equilibrium. Input parameter $L \leftrightarrow \mathcal{I}$ is given by a step function (red curve). Other parameters were set to non-dimensional values of \cite[Table~2]{edgington18}.
(Lower) Input-output functions of the original variables $a_p \leftrightarrow x_{\iota_2}$ (green), $b_p\leftrightarrow x_{\sigma}$ (cyan), $y_p \leftrightarrow x_{o}$ (blue), as functions of input parameter $L \leftrightarrow \mathcal{I}$, for the
model \eqref{e_coli_general_perturbation}.
(Left) For $\epsilon=0$, which reduces to the original model \eqref{original_e_coli}, we have perfect homeostasis (constant input-output functions) and (Right) for $\epsilon=0.05$ we have near perfect homeostasis. Time series were computed using the software \textsc{XPPAut} and input-output functions were computed by numerical continuation of an equilibrium point using \textsc{Auto} from \textsc{XPPAut} \cite{bard02}.
}
\end{figure}

\section{Discussion}
\label{sec:discussion}

In this paper, we aimed to generalize the classification results of \cite{wang20} and establish the topological characterization of infinitesimal homeostasis in networks with one input parameter and multiple input nodes.
Using this extended theory, we were able to characterize the homeostasis in a model of bacterial chemotaxis. We notice that our results are only applicable to classify infinitesimal homeostasis in input-output networks with multiple input nodes and a unique input parameter. The classification of infinitesimal homeostasis in networks with multiple input parameters (as defined in \cite{golubitsky2018homeostasis}) and the study of near perfect homeostasis are left for future work.

Wang et al. \cite{wang20} identified that two types of homeostasis may be present in input-output networks with only one input node: appendage and structural homeostasis. We were able to generalize such concepts for networks with multiple input nodes. In particular, we notice that applying theorems \ref{characterization_appendage_homeostasis_general_G} (regarding appendage homeostasis) and \ref{structural_homeostasis_G} (regarding structural homeostasis) to networks with only one input node, we recover the findings of \cite{wang20}. This observation is nice as it shows the theory proposed by Wang \etal. may be generalized to other abstract networks.

Interestingly, we also identified that networks with multiple input nodes support a new class of homeostasis which occurs by the balance of the input nodes. We called it \emph{input counterweight homeostasis}. The input counterweight homeostasis block corresponds to the generalized homeostasis matrix of the subnetwork `between' the input nodes and the first absolutely super-simple node. Unlike the multiple input node case, the determinant of the corresponding block in the single input node is always non-zero by definition. Hence, this class of homeostasis does not occur in networks with a unique input node.

The biological relevance of input counterweight homeostasis may be exemplified by the model of chemotaxis studied in this paper. Indeed, although the abstract network corresponding to this model (see Figure \ref{abstract_network_ecoli}) supports all classes of homeostasis, the model equations \eqref{e_coli_network_notation} exhibit input counterweight rather than appendage or structural homeostasis. Noticeably, this fact (and hence the presence of perfect homeostasis) does not depend on the definition of the function $\phi$. As this function represents the receptor signaling activity \cite{edgington18}, this means from the biological point of view that if one incorporates more (or less) details in the expression of $\phi$ or $F$ (see \eqref{definition_phi}), the property of perfect homeostasis will remain preserved.

Mathematically, this means that there is an infinite dimensional `space of admissible perturbations' of \eqref{original_e_coli} that \emph{exhibit} perfect homeostasis. An `admissible perturbation' is  a vector field that is compatible with the network structure of \eqref{original_e_coli}. This goes beyond `robustness' in the sense of preservation by changes in the parameters.

On the other hand, if we perturb the system in a way to disrupt the symmetry that leads to input counterweight homeostasis (as done in \eqref{e_coli_general_perturbation}), then the perturbed model will not present infinitesimal homeostasis. In other words, there is an infinite dimensional `space of admissible perturbations' of \eqref{original_e_coli} that destroy perfect homeostasis. Again, by `admissible perturbation' we mean a vector field that is compatible with the same network structure of \eqref{original_e_coli}. This is illustrated in Figure \ref{e_coli_plot2}. From the singularity theory point of view, we say that the singular points of the ODE describing the chemotaxis have \emph{infinite codimension}.

In the model equations \eqref{original_e_coli}, it is clear that perfect homeostasis is not a generic property. However, it seems that the less strict properties `infinitesimal homeostasis' and `near perfect homeostasis' are persistent for a wide class of perturbations that destroy perfect homeostasis, as mentioned above (see Figure \ref{e_coli_plot2}).

It seems reasonable to expect that other biological systems with perfect homeostasis harbor similar features. Indeed, considering that homeostasis occurs as an emergent dynamical property of the system, rather than due to the presetting of a target value, it is expected that in such systems the input-output function is only approximately flat. This means that we expect "typical" singularities associated to biological systems to have finite codimension. In such a context, the approach based on singularity theory is potentially useful. In fact, it is shown in \cite{gs17,gs18} that `infinitesimal homeostasis' is a finite codimension property. Based on the above, we formulate the following hypothesis.

\paragraph{Conjecture.} Fix a multiple input nodes input-output network $\mathcal{G}$ and an appropriate functional space of vector fields $\mathcal{X}_{\mathcal{G}}$ compatible with the network $\mathcal{G}$. Consider we define in $\mathcal{X}_{\mathcal{G}}$ a topology $\tau$ that reflects the derivatives of the vector field with respect to the input (e.g., the equivalent of Whitney topology). Then, we would expect the following to occur:
\begin{enumerate}[(1)]
\item Near perfect homeostasis occurs generically in $\mathcal{X}_{\mathcal{G}}$, in the sense that it holds on an open dense subset of $\mathcal{X}_{\mathcal{G}}$.
\item Infinitesimal homeostasis occurs generically in $\mathcal{X}_{\mathcal{G}}$, in the sense that it holds on an open dense subset of finite-parameter families in $\mathcal{X}_{\mathcal{G}}$.
\end{enumerate}

The discussion of this conjecture is relevant, as a precise mathematical description of a biological system through an exact known system of (differential) equations is very rarely plausible. Indeed, in the majority of cases the model equations are known only through approximate empirical formulas (as for example Michaelis-Menten kinetics). This means that in biology there are approximation errors not only on the parameters used to reproduce the behavior of the system, but also on the equations employed to model the system \cite{thom69}. Therefore, to say a property presented by the studied system is robust, it would not be enough to evaluate the preservation of the property under a appropriate (i.e., admissible) perturbation on the parameters' values, but also under a perturbation on the vector field.  The precise definition of such perturbations and the study of their properties is beyond the scope of this paper. Similar remarks have been made in \cite{kitano2004,kitano2007} but in a different context.

Taking altogether, in this paper we extended the theory of homeostasis topological classification to input-output networks with multiple input nodes. Applying our results, we were able to demonstrate that the perfect homeostasis property presented by the studied model of \textit{E. coli} chemotaxis is due to input counterweight homeostasis. 
Further work is necessary to extend our theory to networks with multiple inputs and to study the genericity and robustness of singularities in biological systems.

\ignore{
From the singularity theory point of view, this is expected. If a critical point $\mathcal{I}_0$ is not Morse (i.e., if it is degenerate), two cases may occur. Consider a perturbation of the original function such that it and all its derivatives of any order are small. Either we may get an infinite number of topological types for the perturbed function; or we can get only a finite number of them. In the first case, the singular point is said to be of \emph{infinite codimension}, and in the second, of \emph{finite codimension}.

An exactly flat (constant) function has `infinite codimension’.
For instance, in one variable, the origin is an infinite codimension or ``flat'' critical point of $f(x) = \exp(-1/x^2)$, as one may approximate it by a function displaying an arbitrary high number
of bumps (take for instance $f_n(x) = \exp(-1/x^2)\cos(n\,x)$).

However, in biology, homeostasis is an emergent property of the system, not a preset target value, thus it is expected that the input-output function is only approximately flat. Many of the more recent models of homeostasis do not assume a preset target value; instead, this emerges from the dynamics of the model. This means that we expect typical singularities to have finite codimension, and the approach based on singularity theory is then potentially useful. In fact, it is shown in \cite{gs17,gs18} that `infinitesimal homeostasis' is a finite codimension property.

Despite the widespread belief to the contrary, there are very few natural phenomena which allow a precise mathematical description in terms of an ``exactly'' known system of (differential) equations.
Gravitation and classical electromagnetism are practically the only cases to fulfill this requirement.
In most other cases, the model equations are known only approximately through empirical formulae.
In biology, model equations are obtained by combining a very restricted set of biologically meaningful laws/principles (mass action, energy conservation, Michaelis-Menten kinetics, etc.) and so the functional form of the vector field is quite constrained, therefore non-genericity is expected. 

Therefore, it is somewhat expected that finite parametric families of vector fields modeling biological phenomena are not generic for large subsets of parameter values. On the other hand, it is possible to have robust perfect homeostasis for non-negligible subsets of parameter values.
} 

\paragraph{Acknowledgments.} We thank Martin Golubitsky, Yangyang Wang, Zhengyuan Huang, Pedro P. A. C. Andrade and Misaki Yamada for helpful conversations.
The research of JLOM and FA was supported by Funda\c{c}\~ao de Amparo \`a Pesquisa do Estado de S\~ao Paulo (FAPESP) grant 2019/12247-7. JLOM is also supported by a scholarship from the EPSRC Centre for Doctoral Training in Statistical Applied Mathematics at Bath (SAMBa), under the project EP/S022945/1.

\section*{Appendices}

\appendix

\section{Irreducibility of Homogeneous Polynomials}
\label{ap:AppA}

\setcounter{equation}{0}

In this appendix we prove general results about irreducibility of certain homogeneous polynomials that are used in Section \ref{mathematical_arguments_and_proofs} and Appendix~\ref{ap:AppB}.

\begin{lemma} \label{lemma_commum_factors_det_H}
Let $P(y_{1}, y_{2}, \ldots, y_{n})$ be an homogeneous polynomial of degree 1, i.e.,
\begin{equation}
    P(y_{1}, y_{2}, \ldots, y_{n}) \equiv a_{1}y_{1} + \cdots + a_{n}y_{n}
\end{equation}
then, a polynomial $Q(y_{1}, y_{2}, \ldots, y_{n})$ divides $P$ iff either $Q(y_{1}, y_{2}, \ldots, y_{n}) \equiv b_{0}$ and $b_{0}$ divides each term $a_{i}$, for all $i = 1,\ldots,n$, or there is a coefficient $c$ such that
\begin{equation}
    Q(y_{1}, y_{2}, \ldots, y_{n}) \equiv b_{1}y_{1} + \cdots + b_{n}y_{n}
\end{equation}
and $a_{i} = c \cdot b_{i}$, for all $i = 1,\ldots,n$.
\end{lemma}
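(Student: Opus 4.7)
The plan is to prove the lemma by leveraging the fact that $P$ is homogeneous of degree $1$ in the distinguished variables $y_1,\ldots,y_n$, treating the coefficients $a_i$ as elements of an ambient polynomial ring $R$ (a UFD) whose indeterminates are the remaining variables appearing in the application (the partial derivatives other than the $f_{\iota_m,\mathcal{I}}$). The backward direction is immediate by direct multiplication, so the work is in the forward direction.

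For the forward direction, I would start from $P = Q \cdot S$ for some polynomial $S \in R[y_1,\ldots,y_n]$, and decompose each of $Q$ and $S$ into its homogeneous components with respect to the $y$-grading: $Q = Q_0 + Q_1 + Q_2 + \cdots$ and $S = S_0 + S_1 + S_2 + \cdots$, where $Q_j$ and $S_j$ have total $y$-degree $j$. Expanding the product and matching the homogeneous parts to $P$ (which has only a degree-$1$ component), I would argue first that all contributions to degrees $\geq 2$ vanish, which forces $Q$ and $S$ to each be concentrated in a single degree. Since the degrees must sum to $1$, the only possibilities are $(\deg_y Q, \deg_y S) = (0,1)$ or $(1,0)$.

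In the first case, $Q = b_0 \in R$ and $S = c_1 y_1 + \cdots + c_n y_n$; comparing coefficients of $y_i$ in $P = b_0 S$ yields $a_i = b_0 c_i$, so $b_0 \mid a_i$ for every $i$. In the second case, $Q = b_1 y_1 + \cdots + b_n y_n$ and $S = c \in R$; comparing coefficients gives $a_i = c b_i$ for every $i$, which is precisely the proportionality claim with constant $c$. This exhausts the possibilities and completes the characterization.

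The only subtlety — and the step I would spend the most care on — is justifying the vanishing of the higher-degree components cleanly: if $Q_d \neq 0$ is the top nonzero homogeneous piece of $Q$ and $S_e \neq 0$ is the top piece of $S$, then $Q_d S_e$ is the top piece of $Q S$ (using that $R[y_1,\ldots,y_n]$ is a domain, since $R$ is a UFD, so no cancellation occurs at top degree), and this must equal the degree-$(d+e)$ component of $P$. Forcing $d+e = 1$ and iterating downward (or simply observing that any lower-degree nonzero part of $Q$ would produce a nonzero contribution whose total $y$-degree is not $1$, after multiplication by $S_e$) pins down the single-degree structure. Once this is established, the rest is straightforward coefficient comparison.
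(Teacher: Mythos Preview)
Your argument is correct and reaches the same conclusion, but the route differs from the paper's. The paper proceeds by brute force: since $P$ has $y$-degree $1$, both $Q$ and the cofactor $R$ must have $y$-degree at most $1$; it then writes $Q = b_0 + b_1 y_1 + \cdots + b_n y_n$ and $R = c_0 + c_1 y_1 + \cdots + c_n y_n$, expands $QR$, and solves the resulting system of coefficient equations by case analysis on whether $b_0 = 0$ or $c_0 = 0$. You instead exploit the $y$-grading structurally: the top homogeneous pieces $Q_d$ and $S_e$ multiply to the top piece of $P$ (using that $R[y_1,\ldots,y_n]$ is a domain), forcing $d+e=1$, and an analogous bottom-degree argument forces $Q$ and $S$ to each be homogeneous. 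Your approach is cleaner and generalizes immediately to homogeneous $P$ of any degree; the paper's is more hands-on and avoids invoking the domain property abstractly. One small point: your parenthetical ``any lower-degree nonzero part of $Q$ would produce a nonzero contribution'' is correct here but relies implicitly on $S$ being concentrated in a single degree already; stating the symmetric bottom-degree argument (lowest pieces multiply to the lowest piece of $P$, hence also degree $1$) makes the single-degree conclusion for \emph{both} factors immediate and airtight.
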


\begin{proof}
($\Leftarrow$) It is straightforward to see that in both cases $Q$ divides $P$.

\noindent
($\Rightarrow$) Suppose that a polynomial $Q(y_{1}, y_{2}, \ldots, y_{n})$ divides $P$. As $P$ is an homogeneous polynomial of degree 1, then $Q$ must have at most degree $1$, i.e., we can explicitly write $Q$ as
\begin{equation} \label{explicit_formula_Q}
    Q(y_{1}, y_{2}, \ldots, y_{n}) = b_{0} + b_{1}y_{1} + \cdots + b_{n}y_{n}
\end{equation}
Moreover, there is a polynomial $R(y_{1}, y_{2}, \ldots, y_{n})$ such that $P \equiv QR$. By the same argument, we can explicitly write $R$ as
\begin{equation} \label{explicit_formula_R}
    R(y_{1}, y_{2}, \ldots, y_{n}) = c_{0} + c_{1}y_{1} + \cdots + c_{n}y_{n}
\end{equation}
By \eqref{explicit_formula_Q} and \eqref{explicit_formula_R}, we have
\begin{equation} \label{lemma_homogeneous_polynomials_part_i}
\begin{aligned}
    P(y_{1}, y_{2}, \ldots, y_{n}) & \equiv Q(y_{1}, y_{2}, \ldots, y_{n})R(y_{1}, y_{2}, \ldots, y_{n}) \\[2mm]
    a_{1}y_{1} + \cdots + a_{n}y_{n} & \equiv (b_{0} + b_{1}y_{1} + \cdots + b_{n}y_{n}) \cdot (c_{0} + c_{1}y_{1} + \cdots + c_{n}y_{n}) \\
    a_{1}y_{1} + \cdots + a_{n}y_{n} & \equiv b_{0}c_{0} 
    + \sum_{i = 1}^{n}(b_{i}c_{0} + b_{0}c_{i})y_{i}
    + \sum_{1 \leq i < j \leq n}(b_{i}c_{j} + b_{j}c_{i})y_{i}y_{j} \\
    & \qquad + \sum_{i = 1}^{n}b_{i}c_{i}y_{i}^{2}
\end{aligned}
\end{equation}
From \eqref{lemma_homogeneous_polynomials_part_i}, we conclude:
\begin{equation}
\begin{aligned} \label{lemma_homogeneous_polynomials_part_ii}
    & b_{0}c_{0} = 0 \\
    & b_{i}c_{0} + b_{0}c_{i} = a_{i}, \quad\text{for all}\; i = 1,\ldots,n \\
    & b_{i}c_{j} + b_{j}c_{i} = 0, \quad 1 \leq i < j \leq n \\
    & b_{i}c_{i} = 0, \qquad\qquad\text{for all}\; i = 1,\ldots,n
\end{aligned}
\end{equation}
From \eqref{lemma_homogeneous_polynomials_part_ii}, we have $b_{0} = 0$ or $c_{0} = 0$. Suppose that $b_{0} = 0$. Therefore, by \eqref{explicit_formula_Q}, we get
\begin{equation}
    Q(y_{1}, y_{2}, \ldots, y_{n}) = b_{1}y_{1} + \cdots + b_{n}y_{n}
\end{equation}
Taking $b_{0} = 0$ in \eqref{lemma_homogeneous_polynomials_part_ii}, we get
\begin{equation}
    b_{i}c_{0} = a_{i}, \quad\text{for all}\; i = 1,\ldots,n
\end{equation}
Considering $c_{0} \equiv c$, one of the cases of the lemma is obtained. Suppose now that $c_{0} = 0$. In that case, by \eqref{lemma_homogeneous_polynomials_part_ii}, we get
\begin{equation} \label{lemma_homogeneous_polynomials_part_iii}
    b_{0}c_{i} = a_{i}, \quad\text{for all}\; i = 1,\ldots,n
\end{equation}
and therefore, as $a_{i} \neq 0$, then $c_{i} \neq 0$. However, we must satisfy $b_{i}c_{i} = 0$, for all $i = 1,\ldots,n$. Therefore, we conclude that $b_{i} = 0$, for all $i = 1,\ldots,n$, which means that
\begin{equation*}
    Q(y_{1}, y_{2}, \ldots, y_{n}) \equiv b_{0}
\end{equation*}
From \eqref{lemma_homogeneous_polynomials_part_iii}, we have that $b_{0}$ divides each term $a_{i}$, for all $i = 1,\ldots,n$.
\end{proof}

\begin{corollary} \label{corollary_irreductible_coefficients_irreductible_polynomial}
Let $P(y_{1}, y_{2}, \ldots, y_{n})$ be an homogeneous polynomial of degree 1, i.e.,
\begin{equation}
    P(y_{1}, y_{2}, \ldots, y_{n}) \equiv a_{1}y_{1} + \cdots + a_{n}y_{n}.
\end{equation}
such that $a_{1}, \cdots, a_{n}$ do not share common factors. Then $P$ is irreducible.
\end{corollary}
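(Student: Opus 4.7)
The corollary is a straightforward specialization of Lemma~\ref{lemma_commum_factors_det_H}, so the plan is simply to unpack the two cases that the lemma provides for any divisor of $P$. Suppose, for contradiction, that $P$ factors nontrivially as $P = Q \cdot R$ with neither $Q$ nor $R$ a unit. Since $Q$ divides $P$, Lemma~\ref{lemma_commum_factors_det_H} forces $Q$ into exactly one of the two forms listed there.

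In the first case, $Q \equiv b_0$ is a constant and $b_0$ divides each coefficient $a_1, \ldots, a_n$. By the hypothesis that $a_1, \ldots, a_n$ share no common factor, $b_0$ must be a unit, contradicting the assumption that $Q$ is not a unit. In the second case, $Q \equiv b_1 y_1 + \cdots + b_n y_n$ is homogeneous of degree $1$ and there exists a scalar $c$ with $a_i = c \cdot b_i$ for every $i = 1,\ldots,n$. Then $c$ is a common factor of $a_1, \ldots, a_n$, so again by hypothesis $c$ is a unit. From $P = QR$ and the fact that both $P$ and $Q$ are homogeneous of degree $1$, the cofactor $R$ must be a constant; comparing coefficients yields $R = c$, which we have just shown is a unit, once more contradicting the assumed nontriviality of the factorization.

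Since neither case is possible, no nontrivial factorization of $P$ exists, and $P$ is irreducible. There is no serious obstacle here: the entire content lies in Lemma~\ref{lemma_commum_factors_det_H}, and the corollary just reads off that a degree-$1$ homogeneous polynomial with no common coefficient factor has only trivial divisors (units and associates of $P$ itself). The only minor subtlety worth flagging in the write-up is the degree-counting argument that forces $R$ to be a constant in the second case, which follows at once from the fact that $\deg(P) = \deg(Q) + \deg(R) = 1$ and $\deg(Q) = 1$.
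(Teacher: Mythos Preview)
Your proof is correct and follows essentially the same approach as the paper: both apply Lemma~\ref{lemma_commum_factors_det_H} to an arbitrary divisor $Q$ of $P$ and use the no-common-factor hypothesis on the $a_i$ to force any such divisor to be either a unit or an associate of $P$. Your write-up is a bit more explicit in separating the two cases and in spelling out the degree count that forces $R$ to be constant, but the argument is the same.
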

\begin{proof}
Suppose there is a polynomial $Q(y_{1}, y_{2}, \ldots, y_{n})$ which is a divisor of $P$. By Lemma \ref{lemma_commum_factors_det_H}, there is a coefficient $c$ that divides every term $a_{i}$. As $a_{1}, \ldots, a_{n}$ do not share common factors, $c = \pm 1$, and hence, either $Q(y_{1}, y_{2}, \ldots, y_{n}) \equiv \pm 1$ or $Q(y_{1}, y_{2}, \ldots, y_{n}) \equiv \pm P(y_{1}, y_{2}, \ldots, y_{n})$. That is, $P$ is irreducible.
\end{proof}

\section{Non-triviality of the Homeostasis Determinant}
\label{ap:AppB}

In this appendix we prove the claim that if the self couplings are not identically zero and the output node is downstream from all input nodes (see Remark~\ref{rmk:nontriviality}), then $\det \langle H \rangle$ is not identically zero.

Let $f$ be a smooth admissible vector field associated to a network $\mathcal{G}$ and denote by $J$ its Jacobian matrix. 
Let $\mathcal{G}$ be composed by nodes $\sigma_{1}, \sigma_{2}, \cdots, \sigma_{n}$. Then
\[
    J = \left[\begin{array}{cccc}
         f_{\sigma_{1}, \sigma_{1}} & f_{\sigma_{1}, \sigma_{2}} & \cdots & f_{\sigma_{1}, \sigma_{n}} \\
         f_{\sigma_{2}, \sigma_{1}} & f_{\sigma_{2}, \sigma_{2}} & \cdots & f_{\sigma_{2}, \sigma_{n}} \\
         \vdots & \vdots & \ddots & \vdots \\
         f_{\sigma_{n}, \sigma_{1}} & f_{\sigma_{n}, \sigma_{2}} & \cdots & f_{\sigma_{n}, \sigma_{n}}
    \end{array}\right]
\]
where $f_{\sigma_{i}, \sigma_{j}}$ denote the partial derivatives of $f$.
We say that $f$ or $J$ is \emph{generic} if the following conditions are satisfied:
\begin{enumerate}[(a)]
\item $f_{\sigma_{i}, \sigma_{j}} \equiv 0$ if and only if there is no arrow $\sigma_{j} \rightarrow \sigma_{i}$.
\item there is no polynomial relation among $f_{\sigma_{i}, \sigma_{j}}$.
\end{enumerate}
This means that $f_{\sigma_{i}, \sigma_{j}}$ may be seen as \emph{algebraically independent variables} and the $J$ may be seen as a `decorated' adjacency matrix of $\mathcal{G}$. 
In particular, $\det J$ (or the determinant of any sub-matrix of $J$) may be seen as a polynomial on the algebraically independent variables $f_{\sigma_{i}, \sigma_{j}}$ and thus it vanishes identically if and only if all its summands are identically zero (see~\cite{br91}).
Note that the set of generic admissible vector fields $f$ is an open set in any appropriate topology on the space of all smooth admissible vector fields.

\begin{lemma}\label{jacobian_not_null}
If all the self-couplings of a generic Jacobian matrix $J$ are not identically zero (i.e., $f_{\sigma_{i}, \sigma_{i}} \not\equiv 0$) then $\det J$ is not identically zero.
\end{lemma}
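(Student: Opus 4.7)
The plan is to exploit the generic assumption that the partial derivatives $f_{\sigma_i,\sigma_j}$ are algebraically independent, so that $\det J$, viewed as an element of the polynomial ring $\mathbb{Z}[f_{\sigma_i,\sigma_j}]$, vanishes identically if and only if every monomial in its expansion has coefficient zero. Under this viewpoint no cancellation between distinct monomials can occur, and it suffices to exhibit a single non-vanishing term.

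First, I would write the Leibniz expansion
\[
\det J \;=\; \sum_{\pi \in S_n} \operatorname{sgn}(\pi)\, \prod_{i=1}^{n} f_{\sigma_i,\sigma_{\pi(i)}},
\]
and observe that two distinct permutations $\pi \neq \pi'$ give rise to distinct monomials in the free polynomial ring, because the multisets of index pairs $\{(i,\pi(i))\}_{i=1}^n$ and $\{(i,\pi'(i))\}_{i=1}^n$ differ. Genericity condition (b) then implies that no additive cancellation can occur across permutations.

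Next I would single out the identity permutation $\pi = \mathrm{id}$, whose contribution is the monomial
\[
\prod_{i=1}^{n} f_{\sigma_i,\sigma_i},
\]
i.e.\ the product of all self-couplings. By hypothesis every factor $f_{\sigma_i,\sigma_i}$ is not identically zero, and by genericity condition (a) each such factor corresponds to a legitimate (non-forbidden) entry of the Jacobian. Hence this particular monomial is a non-zero element of $\mathbb{Z}[f_{\sigma_i,\sigma_j}]$, and by the no-cancellation observation above it contributes non-trivially to $\det J$. Therefore $\det J \not\equiv 0$.

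I do not anticipate a serious obstacle here: the argument is purely a consequence of the Leibniz formula together with algebraic independence. The only subtle point, which I would state explicitly, is that the genericity hypothesis guarantees we may treat the partial derivatives as free indeterminates, so that the presence of a single non-zero monomial in the Leibniz sum is enough to conclude non-vanishing as a polynomial --- and hence non-vanishing on an open dense subset of admissible vector fields. Note that this argument does not even use the downstream assumption from Remark \ref{rmk:nontriviality}; that hypothesis will be needed only to control the \emph{homeostasis} determinant $\det \langle H \rangle$, where the self-couplings of the input-type row/column structure no longer suffice on their own.
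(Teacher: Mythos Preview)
Your proposal is correct and follows essentially the same approach as the paper: both arguments single out the diagonal product $\prod_i f_{\sigma_i,\sigma_i}$ as a non-zero summand of $\det J$ and invoke genericity (algebraic independence of the partial derivatives) to rule out cancellation. Your version is slightly more explicit in writing out the Leibniz formula and articulating why distinct permutations yield distinct monomials, but the core idea is identical.
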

\begin{proof}
Let $\mathcal{G}$ be a network with $n$ nodes $\sigma_{1}, \sigma_{2}, \cdots, \sigma_{n}$.
Consider the product of diagonal elements of $J$:
\[
 f_{\sigma_{1}, \sigma_{1}} \ldots f_{\sigma_{n}, \sigma_{n}} \not\equiv 0
\]
First of all, it is a summand of $\det J$. 
Moreover, because $f$ (and hence $J$) is generic and all the self-couplings are not identically zero, the product is not identically zero.
Therefore, we conclude that $\det J$ cannot be identically zero.
\end{proof}

\begin{proposition} \label{homeostasis_matrix_one_inpu_node_not_null}
Let $\mathcal{G}$ be an input-output network with input node $\iota$ and output node $o$ such that $o$ is downstream from $\iota$. Suppose that $J$ is generic and all its self-couplings are not identically zero. Then the determinant of the corresponding homeostasis matrix $H$ is not identically zero.
\end{proposition}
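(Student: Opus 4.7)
\smallskip

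\noindent\textbf{Proof proposal.} The plan is to exhibit a single non-vanishing monomial in the Leibniz expansion of $\det H$ and then invoke genericity to conclude that no cancellation can kill it. Since $H$ is obtained from $J$ by removing the row indexed by $\iota$ and the column indexed by $o$, its rows are indexed by $\{\sigma_1,\ldots,\sigma_N,o\}$ and its columns by $\{\iota,\sigma_1,\ldots,\sigma_N\}$. By the Leibniz formula,
\begin{equation*}
\det H = \sum_{\pi} \mathrm{sgn}(\pi)\prod_{r} H_{r,\pi(r)},
\end{equation*}
where the sum ranges over bijections $\pi$ from the set of row-labels to the set of column-labels. Under the genericity hypothesis the entries $f_{i,j}$ may be viewed as algebraically independent indeterminates, so distinct bijections produce distinct monomials (each monomial records precisely which $(i,j)$ entries are multiplied). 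Hence $\det H$ cannot vanish identically as soon as at least one summand is a product of entries that are each not identically zero.

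The next step is to construct such a bijection $\pi$ explicitly using the hypothesis that $o$ is downstream from $\iota$. Choose a \emph{shortest} directed path from $\iota$ to $o$ in $\mathcal{G}$; being shortest, it is a simple path $\iota = v_0 \to v_1 \to v_2 \to \cdots \to v_{k-1} \to v_k = o$, with $v_1,\ldots,v_{k-1}$ distinct regulatory nodes. I define $\pi$ by
\begin{equation*}
\pi(v_j) = v_{j-1} \text{ for } j=1,\ldots,k, \qquad \pi(w) = w \text{ for every other row-label } w.
\end{equation*}
The first rule uses the column-labels $\iota=v_0,v_1,\ldots,v_{k-1}$ exactly once each, and the second rule uses the remaining column-labels, namely $\{\sigma_1,\ldots,\sigma_N\}\setminus\{v_1,\ldots,v_{k-1}\}$, which matches exactly the set of leftover row-labels $\{\sigma_1,\ldots,\sigma_N,o\}\setminus\{v_1,\ldots,v_{k-1},o\}$. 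Hence $\pi$ is a genuine bijection.

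The corresponding summand is, up to sign,
\begin{equation*}
f_{v_1,\iota}\,f_{v_2,v_1}\cdots f_{v_k,v_{k-1}} \cdot \prod_{w\notin\{v_1,\ldots,v_{k-1},o\}} f_{w,w}.
\end{equation*}
Each factor $f_{v_j,v_{j-1}}$ is not identically zero because the arrow $v_{j-1}\to v_j$ belongs to $\mathcal{G}$, and each factor $f_{w,w}$ is not identically zero by the hypothesis on self-couplings. In view of the algebraic independence of the entries, this monomial cannot be cancelled by any other summand, so $\det H\not\equiv 0$. The main technical point to handle carefully is the verification that $\pi$ is a bijection, which is where simpleness of the chosen path is essential; apart from that bookkeeping, the argument is a direct combinatorial unpacking of the Leibniz formula together with the genericity conventions of Appendix~\ref{ap:AppB}.
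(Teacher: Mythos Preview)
Your proof is correct and follows essentially the same route as the paper's: exhibit the monomial coming from a simple $\iota o$-path together with the self-couplings of all off-path nodes, and use genericity to rule out cancellation. The only cosmetic difference is that the paper first reduces to the core subnetwork via \cite[Thm~2.4]{wang20} and Lemma~\ref{jacobian_not_null}, whereas you work directly in $\mathcal{G}$; since your bijection $\pi$ needs only the existence of a simple path and the non-vanishing of all self-couplings, that reduction is not actually required, so your version is marginally more direct.
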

\begin{proof} 
By Lemma \ref{jacobian_not_null} and \cite[Thm 2.4]{wang20}, it is enough to consider the case when $\mathcal{G}$ is a core input-output network with $n$ nodes.
Since the output node $o$ is downstream from the input node $\iota$ there is a directed path $S$ between them
\[
 \iota \to \sigma_1 \to \cdots \to \sigma_k \to o
\]
We can assume, without loss of generality, that the directed path $S$ is $\iota o$-simple.
Let $\sigma_{k+1},\ldots,\sigma_{n-2}$ be the remaining (regulatory) nodes of the network that are not in $S$.
Consider the following product of partial derivatives of a generic admissible vector field $f$:
\[
 \underbrace{f_{\sigma_{1},\iota} \ldots f_{o, \sigma_{k}}}_{k+1} \,
 \underbrace{f_{\sigma_{k+1}, \sigma_{k+1}} \ldots f_{\sigma_{n-2}, \sigma_{n-2}}}_{n-2-k} \not \equiv 0
\]
This product has $n-1$ factors, each one belonging to a unique row and column of the Jacobian matrix $J$. We claim that it is one of the summands of $\det H$. 
Indeed, the homeostasis matrix $H$ is obtained from the Jacobian matrix $J$ by dropping the first row and the last column, that is, elements of the form $f_{\iota,\cdot}$ and $f_{\cdot,o}$, respectively. Since none of these elements are in the product above, it follows that the claim is true.
Because $f$ (and hence $J$) is generic and all the self-couplings are not identically zero, the product is not identically zero.
Therefore, we conclude that $\det H$ cannot be identically zero.
\end{proof}

\begin{proposition} \label{weighted_det_H_not_null}
Let $\mathcal{G}$ be a network with input nodes $\iota_{1}, \cdots, \iota_{n}$ and output node $o$ such that $o$ is downstream from every input node. Suppose that $J$ is generic and all its self-couplings are not identically zero. Then the determinant of the corresponding weighted homeostasis matrix $\langle H \rangle$ is not identically zero.
\end{proposition}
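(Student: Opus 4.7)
The plan is to combine the expansion \eqref{simplification_of_homeostasis_matrix_core_network_multiple_inputs_as_a_sum} with the single input-node result (Proposition~\ref{homeostasis_matrix_one_inpu_node_not_null}) and the Jacobian non-vanishing Lemma~\ref{jacobian_not_null}, and then exploit the fact that the partial derivatives $f_{\iota_m,\mathcal{I}}$ are algebraically independent from all the state-variable partial derivatives.

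First, by \eqref{simplification_of_homeostasis_matrix_core_network_multiple_inputs_as_a_sum}, we may write
\[
\det\!\big(\langle H\rangle\big) \;=\; \sum_{m=1}^{n} \pm\, f_{\iota_{m},\mathcal{I}}\, \det(J_{\mathcal{D}_{m}})\, \det(H_{\iota_{m}}^{c}),
\]
where $\mathcal{G}_{m}$ is the core subnetwork between $\iota_{m}$ and $o$, $H_{\iota_{m}}^{c}$ is its homeostasis matrix, and $\mathcal{D}_{m}=\mathcal{G}\setminus\mathcal{G}_{m}$ is the vestigial subnetwork (using the convention $J_{\mathcal{D}_{m}}\equiv[1]$ when $\mathcal{D}_{m}$ is empty). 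I will show that each of the $n$ summands is itself a not-identically-zero polynomial in the algebraically independent partial derivatives, and then argue that no cancellation can occur among them.

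For each fixed $m$, note that $\mathcal{G}_{m}$ is a single input-node input-output network with input $\iota_{m}$ and output $o$, and by hypothesis $o$ is downstream of $\iota_{m}$. The sub-Jacobian of $\mathcal{G}_{m}$ obtained from $J$ is generic (entries are a subset of the algebraically independent variables $f_{\sigma_{i},\sigma_{j}}$) and its self-couplings are not identically zero (they form a subset of the self-couplings of $J$). Thus Proposition~\ref{homeostasis_matrix_one_inpu_node_not_null} gives $\det(H_{\iota_{m}}^{c})\not\equiv 0$. By the same argument, $J_{\mathcal{D}_{m}}$ is a generic Jacobian matrix whose self-couplings are not identically zero, so Lemma~\ref{jacobian_not_null} yields $\det(J_{\mathcal{D}_{m}})\not\equiv 0$. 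Hence every summand $\pm f_{\iota_{m},\mathcal{I}}\,\det(J_{\mathcal{D}_{m}})\,\det(H_{\iota_{m}}^{c})$ is a non-zero polynomial.

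The key final step is to rule out cancellation. View $\det\!\big(\langle H\rangle\big)$ as a polynomial in the augmented set of algebraically independent variables consisting of all the state-variable partial derivatives $f_{\sigma_{i},\sigma_{j}}$ together with the parameter derivatives $f_{\iota_{1},\mathcal{I}},\ldots,f_{\iota_{n},\mathcal{I}}$. The variable $f_{\iota_{m},\mathcal{I}}$ appears only in the $m$-th summand, since the matrices $J_{\mathcal{D}_{m'}}$ and $H_{\iota_{m'}}^{c}$ contain no parameter derivatives. Consequently $\det\!\big(\langle H\rangle\big)$ is a homogeneous polynomial of degree $1$ in $(f_{\iota_{1},\mathcal{I}},\ldots,f_{\iota_{n},\mathcal{I}})$ whose $m$-th coefficient equals $\pm\det(J_{\mathcal{D}_{m}})\det(H_{\iota_{m}}^{c})\not\equiv 0$. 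A polynomial with at least one non-zero coefficient is not identically zero, which completes the proof.

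The only delicate point is the algebraic-independence bookkeeping in the last paragraph: one must verify that the $n$ coefficients live in a polynomial ring that is disjoint from the ring generated by $\{f_{\iota_{m},\mathcal{I}}\}$, and that they appear linearly. This is exactly the content of genericity in condition (b) of the definition of generic $f$, together with the fact that entries of $J$ do not depend on $\mathcal{I}$ as a variable. No other step presents real difficulty.
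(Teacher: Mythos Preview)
Your argument is correct and follows essentially the same route as the paper: expand $\det\langle H\rangle$ as a degree-$1$ homogeneous polynomial in the algebraically independent variables $f_{\iota_m,\mathcal{I}}$, show each coefficient $\pm\det(J_{\mathcal{D}_m})\det(H_{\iota_m}^{c})$ is non-zero via Lemma~\ref{jacobian_not_null} and Proposition~\ref{homeostasis_matrix_one_inpu_node_not_null}, and conclude. The paper additionally passes through the finer factorisation \eqref{factoring_det_weighted_H} into the blocks $B_j$ and the input counterweight block $C$, but this is not needed for the non-vanishing statement; your direct use of the raw expansion \eqref{simplification_of_homeostasis_matrix_core_network_multiple_inputs_as_a_sum} is cleaner here.

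One small gap: the expansion \eqref{simplification_of_homeostasis_matrix_core_network_multiple_inputs_as_a_sum} was derived under the standing assumption that $\mathcal{G}$ is a \emph{core} network, whereas the proposition is stated for an arbitrary network with $o$ downstream from every input. You should first reduce to the core case, exactly as the paper does, by invoking \eqref{relation_homeostasis_matrix_original_core}, namely $\det\langle H\rangle=\det(f_{d,x_d})\det(f_{u,x_u})\det\langle H_c\rangle$, and observing that the first two factors are non-zero by Lemma~\ref{jacobian_not_null}. After that reduction your argument goes through verbatim.
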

\begin{proof}
By Lemma \ref{jacobian_not_null} and equation \eqref{relation_homeostasis_matrix_original_core}, it is enough to prove the claim when $\mathcal{G}$ is a core network. As shown in Subsection \ref{topological_characterization}, $\det \langle H \rangle$ can be factorized as a product of common factors between each of the determinants $\det H^{c}_{\iota_{m}}$ and the input counterweight term. By Proposition \ref{homeostasis_matrix_one_inpu_node_not_null}, none of these common factors can be identically zero. This means that $\det \langle H \rangle$ is identically zero if, and only if, the input counterweight term is identically zero. Recall that the input counterweight term is irreducible and its expression is given by eq.~\eqref{term_input_counter_weight_homeostasis}.
By hypothesis, $f_{\iota_{m}, \mathcal{I}}$ is always non-zero and Lemma \ref{jacobian_not_null} implies that the determinants $\det (J_{\mathcal{D}_{m}})$ must be all non-zero. Moreover, as each determinant $\det(C_{\iota_{m}})$ is a factor of $\det H^{c}_{\iota_{m}}$, then, by Proposition \ref{homeostasis_matrix_one_inpu_node_not_null}, we conclude that $\det(C_{\iota_{m}})$ cannot be identically zero for all $1 \leq m \leq n$. Hence, as $\det C$ is irreducible, we conclude that it cannot be identically zero. Therefore, $\det \langle H \rangle$ is not identically zero.
\end{proof}

\begin{remark} \normalfont
In the context of biological systems, it is usually expected that all nodes are self-coupled and that the output node is downstream all input nodes. Therefore, the assumptions made in this section are usually satisfied by almost all models. 
\end{remark}

\begin{remark} \normalfont
The results of this appendix cannot be improved. To see this, first consider the case when $\mathcal{G}$ is a network with only one input node $\iota$ and output node $o$. Then, if we drop the requirement that all nodes are self-coupled, there exists a network with identically zero homeostasis determinants (see Figure~\ref{non_null_condition} (a)). Moreover, if $o$ is not downstream from the input node $\iota$, then $\det H $ may also be identically zero (see Figure~\ref{non_null_condition} (b)). 
Now consider the case where $\mathcal{G}$ has multiple input nodes. Then $\det \langle H \rangle$ may be identically zero if we allow some node to be not self-coupled (see Figure~\ref{non_null_condition} (c)). Furthermore, if we require that $o$ be downstream from some, but not all input nodes, then, although $\det \langle H \rangle$ mat be not identically zero, the output node will not depend on the dynamics of all input nodes (see Figure~\ref{non_null_condition} (d)).  
\end{remark}

\begin{remark} \normalfont
The results of this appendix do not contradict the fact that the model for bacterial chemotaxis exhibits perfect homeostasis, that is, $\det \langle H \rangle \equiv 0$. 
This is because the model equations \eqref{original_e_coli} are not given by a generic vector field in the sense defined above.
Indeed, as it is shown in \eqref{analysis_e_coli_part_iiii} and
\eqref{analysis_e_coli_part_v}, there is a polynomial relation between certain partial derivatives of the vector field.
This is another manifestation of the fact that robust perfect homeostasis is a non-generic phenomenon (see Section \ref{sec:discussion}).
\end{remark}

\begin{figure}[!ht]
\centering
\includegraphics[width=\linewidth,trim=0cm 0cm 5cm 0cm,clip=true]%
{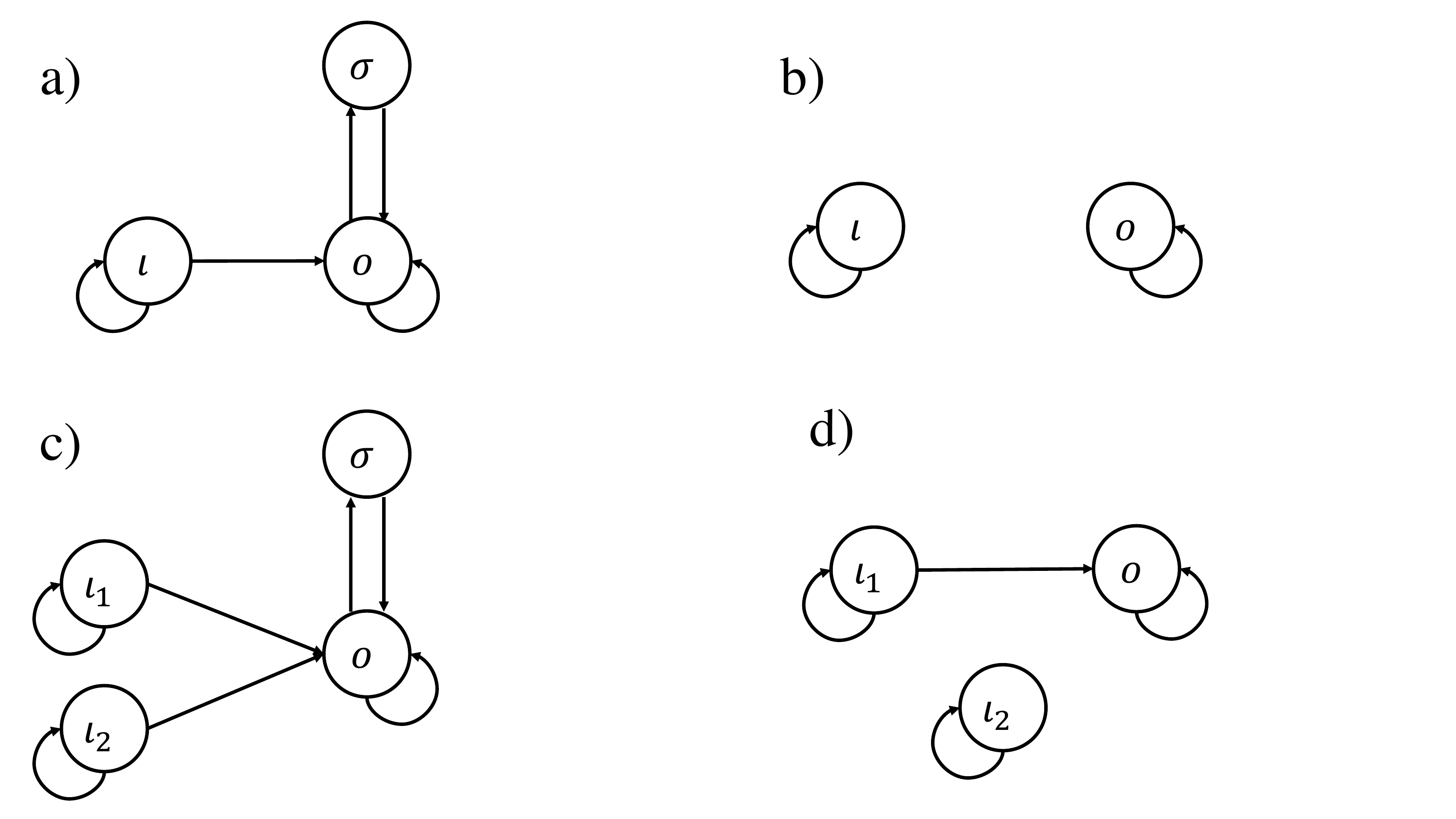}
\renewcommand{\figurename}{Figure}
\caption{\label{non_null_condition}
Examples of networks that do not satisfy the assumptions of non-zero self-couplings and/or the output node is not downstream from the input node(s). With the existence of nodes that are not self-coupled (cases (a) and (c)) the determinant of the (weighted) homeostasis matrix may vanish identically either in the case of only one input node (a) or when there exist multiple input nodes (c). In the case the output node is not downstream from the input node(s), in (b) it is trivial to see that $\det H 0$. In (d), the network has input nodes $\iota_{1}$ and $\iota_{2}$, and the output node $o$ is not downstream from $\iota_{2}$. In that case $\det H \not \equiv 0$, but the dynamics of $o$ does not depend on the dynamics of $\iota_{2}$. Finally, in all these examples the eigenvalues of the Jacobian are non-trivial, and therefore the network may display a linearly stable equilibrium, depending on the specific model that it represents. 
}
\end{figure}


\bibliographystyle{abbrv}
\bibliography{refs}

\end{document}